\documentclass[11pt]{article}
\usepackage{tikz,tikz-cd}
\usepackage{subcaption}
\usepackage{pgfplots}\pgfplotsset{compat=1.18}
\usepackage{cancel}

\usepackage[utf8]{inputenc}
\usepackage[T1]{fontenc}
\usepackage[english]{babel}
\usepackage{amsmath,amssymb,amsthm}
\usepackage{thmtools}
\usepackage{graphicx} 
\usepackage[margin=1.in]{geometry}
\usepackage[shortlabels]{enumitem}
\usepackage{todonotes}
\usepackage[bibstyle=numeric, backend=biber, sorting=none, citestyle=numeric-comp, giveninits,url=false,maxbibnames=5]{biblatex} 
\usepackage{csquotes}\MakeOuterQuote"
\usepackage{soul}\setstcolor{red}
\usepackage[colorlinks,citecolor=blue,linkcolor=blue]{hyperref}
\usepackage[capitalize]{cleveref}
\usepackage{comment}

\DeclareUnicodeCharacter{0141}{\Lbar{}}

\AtEveryBibitem{\clearfield{month}}
\AtEveryBibitem{\clearfield{day}}

\newtheorem{theorem}{Theorem}
\numberwithin{theorem}{section}
\newtheorem{lemma}[theorem]{Lemma}
\newtheorem{corollary}[theorem]{Corollary}
\newtheorem{proposition}[theorem]{Proposition}

\newtheorem{definition}[theorem]{Definition}

\newtheorem{introtheorem}{Theorem} 

\newtheorem{introproposition}[introtheorem]{Proposition}

\theoremstyle{definition}

\newtheorem*{problem*}{Problem}
\newtheorem*{example*}{Example}
\newtheorem*{assumption*}{Assumption}
\newtheorem{example}[theorem]{Example}
\newtheorem{remark}[theorem]{Remark}
\newtheorem*{warning*}{Warning}

\newcommand{\ip}[2]{\langle #1,#2\rangle}

\newcommand{\ketbra}[2]{|#1\rangle\langle#2|}

\newcommand{\kettbra}[1]{\ketbra{#1}{#1}}

\DeclareMathOperator{\supp}{supp}
\newcommand{\norm}[1]{\lVert #1\rVert}
\newcommand{\oo}{\infty}
\newcommand{\ox}{\otimes}
\newcommand{\mc}{\mathcal}
\newcommand{\eps}{\varepsilon}

\newcommand{\abs}[1]{\lvert #1 \rvert}
\newcommand{\up}[1]{^{(#1)}}

\DeclareMathOperator{\tr}{tr}

\newcommand{\hide}[1]{}

\def\CC{{\mathbb C}}

\def\H{{\mc H}}

\def\K{{\mathcal K}}

\def\RR{{\mathbb R}}

\def\NN{{\mathbb N}}

\newcommand{\R}{\mc R}
\renewcommand{\Im}{\mathrm{Im}}
\renewcommand{\Re}{\mathrm{Re}}

\DeclareMathOperator{\lin}{span}
\DeclareMathOperator{\id}{id}

\DeclareMathOperator{\Sp}{Sp}

\newcommand*{\1}{\text{\usefont{U}{bbold}{m}{n}1}}
\newcommand{\placeholder}[0]{{\,\cdot\,}}

\renewcommand{\Re}{\mathrm{Re}}

\newcommand{\Lbar}{\L{}}

\newcommand{\qandq}{\quad\text{and}\quad}

\let\Lbar\L
\def\L{{\mathcal L}}

\def\D{{\mathbb D}}

\def\cptp{\overset{\text{\tiny CPTP}}\longleftrightarrow}
\def\notcptp{\cancel{\overset{\text{\tiny CPTP}}\longleftrightarrow}}
\def\ptp{\overset{\text{\tiny PTP}}\longleftrightarrow}

\def\staralg{\mathrm{\textup{*-}alg}}
\def\Jstaralg{\mathrm{\textup{J*-}alg}}
\def\revJstaralg{\mathrm{\textup{RJ*-}alg}}
\def\Jcong{\mathrel{\cong_{\mathrm{\textup{J*}}}}}

\numberwithin{equation}{section}

\def\conv{\mathrm{conv}}
\newcommand{\proj}[1]{[#1]}

\newcommand{\jp}[2]{\{#1,#2\}}
\newcommand{\trip}[3]{\{#1,#2,#3\}}

\title{\vspace{-2cm}Sufficiency and Petz recovery for positive maps}
\author{Lauritz van Luijk$^{1,2}$, Henrik Wilming$^3$}
\date{\footnotesize{
$^1$Perimeter Institute for Theoretical Physics, Waterloo, Ontario, Canada\\
$^2$Institute for Quantum Computing, Waterloo, Ontario, Canada\\
$^3$Leibniz Universität Hannover, Institut für Theoretische Physik, Appelstraße 2, 30167 Hannover, Germany\\
\today
}}
\begin{document}
\maketitle
\abstract{
	We study the interconversion of families of quantum states (``statistical experiments'') via positive, trace-preserving (PTP) maps and clarify its mathematical structure in terms of minimal sufficient Jordan algebras, which can be seen to generalize the Koashi-Imoto decomposition to the PTP setting. In particular, we show that Neyman-Pearson tests generate the minimal sufficient Jordan algebra, and hence also the minimal sufficient *-algebra corresponding to the Koashi-Imoto decomposition. 
    As applications, we show that a) equality in the data-processing inequality for the relative entropy or the $\alpha$-$z$ quantum R\'enyi divergence implies the existence of a recovery map also in the PTP case and b) that two dichotomies can be interconverted by PTP maps if and only if they can be interconverted by decomposable, trace-preserving maps. 
    We thoroughly review the necessary mathematical background on Jordan algebras.
    As a step beyond the finite-dimensional case, we also prove Frenkel's formula for approximately finite-dimensional von Neumann algebras.
}

\clearpage
\tableofcontents
\noindent

\clearpage

\section{Introduction and overview}

Consider a pair of quantum states $(\rho,\sigma)$, also called a ``dichotomy'', on a finite-dimensional Hilbert space\footnote{We restrict to finite dimensions throughout.} $\H$ representing different preparations of the same system. It is a natural and fundamental question to ask: How different, i.e., distinguishable, are the two states?

Let us discuss two different ways to address this question: The first, and maybe most natural one, is via (Bayesian) hypothesis testing.
That is, assuming that $\rho$ is (assumed to be) prepared with some prior probability $p$ and $\sigma$ with some prior probability $1-p$, we perform a binary measurement represented by a positive operator-valued measure $(x,\1-x)$ with effect operator $0\leq x\leq \1$. If the outcome is $0$ (corresponding to $x$), then we guess that $\rho$ was prepared; if the outcome is $1$ (corresponding to $\1-x$) we guess that $\sigma$ is prepared. 
Then $(\rho,\sigma)$ are highly distinguishable if the success probability is large, and are little distinguishable if the success probability is low.
The optimal test $x$ is essentially unique and given by the projector $\proj{\rho> t \sigma}$ onto the positive part of $\rho - t\sigma$, with $t = (1-p)/p$ \cite{helstromDetectionTheoryQuantum1967}, see also \cref{sec:bayes}.
A particularly important subtask of hypothesis testing is \emph{asymmetric} hypothesis testing, where one tries to minimize the error probability to wrongly guess $\rho$ if the actual state is $\sigma$ under the constraint that the probability to correctly identify $\rho$ if it is prepared is lower bounded by some value $1-\eps$. 
The quantum Stein's Lemma \cite{ogawaStrongConverseSteins2000} shows that in the asymptotic limit of many independent copies, the resulting error probability decreases exponentially with rate given by the quantum relative entropy $D(\rho\|\sigma) = \tr(\rho \log\rho) - \tr(\rho\log\sigma)$.

A different way to think about distinguishability is in terms of physical processes that are applied to the system. 
Any physical process, applied after the respective preparations, should only be able to reduce the distinguishability of $\rho$ and $\sigma$. 
Any distinguishability measure $\D$ should hence fulfill the \emph{data-processing inequality} 
\begin{equation}\label{eq:DPI}
	\D(\rho\|\sigma) \ge \D(T^*\rho \| T^*\sigma),
\end{equation}
for any completely positive, trace-preserving (CPTP) map $T^*$.%
\footnote{Throughout this paper, we denote trace-preserving maps by $T^*,S^*,\ldots$ because our techniques are of algebraic nature, making it more natural to regard unital maps, denoted $T,S,\ldots$ as primary objects.
The two points of view are, of course, equivalent since trace-preserving maps are dual to unital maps.} 
Examples are the success probability in hypothesis testing discussed above, or the quantum relative entropy $D$.
In fact the literature of quantum information theory exhibits a whole zoo of so-called \emph{divergences} $\D$ (see \cite{tomamichelQuantumInformationProcessing2016,hiaiQuantumFDivergencesNeumann2021}): Positive functions on pairs of density matrices, which in addition to \eqref{eq:DPI} also fulfill $\D(\rho\|\sigma) = 0$ if and only if $\rho=\sigma$ (and possibly have additional desirable properties, such as additivity under tensor products). 
The problem of determining when one dichotomy can be converted into another one (but not necessarily the other way around) has a long history dating back to Alberti and Uhlmann \cite{albertiProblemRelatingPositive1980}, see also \cite{matsumotoReverseTestCharacterization2010,heinosaariExtendingQuantumOperations2012,buscemiComparisonQuantumStatistical2012a,buscemiQuantumRelativeLorenz2017,gour_quantum_2018,buscemi_information-theoretic_2019} for examples of recent work. 

Two dichotomies $(\rho_1,\sigma_1)$ and $(\rho_2,\sigma_2)$ are clearly equally distinguishable if there are physical processes that turn each pair of preparation procedures into the other pair. 
In other words, if there are CPTP maps $T^*$ and $S^*$ such that
\begin{align}\label{eq:cptp-equivalence}
		(\rho_2,\sigma_2) = (T^*\rho_1,T^*\sigma_1),\quad (\rho_1,\sigma_1) = (S^* \rho_2,S^*\sigma_2). 
\end{align}
In this case we say that the two dichotomies are \emph{CPTP-equivalent}, also denoted as $(\rho_1,\sigma_1)\cptp(\rho_2,\sigma_2)$. We say that they are \emph{PTP-equivalent} if there are positive, trace-preserving maps $T^*,S^*$ such that \eqref{eq:cptp-equivalence} is true. 
Distinguishability measures are constant on CPTP-equivalence classes by \eqref{eq:DPI}.
Matsumoto showed that a dichotomy being less distinguishable in terms of hypothesis testing (lower success probability for all priors $p>0$) does \emph{not} imply that there exists even a PTP map converting one dichotomy to the other \cite{matsumotoExampleQuantumStatistical2014}. Thus, the success probabilities are not sufficient to decide (C)PTP-equivalence.

This already hints at a problem when thinking about distinguishability soley in terms of CPTP-equivalence classes: 
There are pairs of dichotomies $(\rho_1,\sigma_1)$ and $(\rho_2,\sigma_2)$ which are clearly equally distinguishable, but
are \emph{not} CPTP-equivalent. The following simple example is taken from \cite{galke_sufficiency_2024}. Let $\dim(\H) \geq 3$. Then there are pairs $(\rho,\sigma)$ such that
\begin{equation}\label{introeq:introexample}
	(\rho,\sigma) \quad \notcptp \quad (\rho^t,\sigma^t),
\end{equation}
where $(\placeholder)^t$ denotes transposition in some fixed basis.
However, it is clear that, even though the transpose is not completely positive, but merely positive, for any procedure to distinguish $\rho$ from $\sigma$ there exists (at least in principle) a different procedure that distinguishes $\rho^t$ from $\sigma^t$ just as well: Simply take the transpose of all involved operators describing the procedure (including possible auxiliary systems). Moreover, this remains true when taking independent copies, i.e., considering $\rho^{\ox n}$ and $\sigma^{\ox n}$. 

This is also reflected in the behaviour of divergences: All known quantum divergences are invariant under taking the transpose of both arguments. Moreover, as far as we are aware, for all divergences for which \eqref{eq:DPI} has been shown for CPTP maps \emph{and} it has been clarified whether \eqref{eq:DPI} holds for PTP maps, it has turned out that \eqref{eq:DPI} in fact holds for PTP maps. 
The largest class is given by the $\alpha$-$z$-R\'enyi divergences \cite{audenaertAzRenyiRelativeEntropies2015,katoAzRenyiDivergenceNeumann2024,hiaiazRenyiDivergences2024}, which includes Petz-R\'enyi divergences \cite{petzQuasientropiesStatesNeumann1985,petzQuasientropiesFiniteQuantum1986} as well as the minimal (or sandwiched) R\'enyi divergence \cite{wildeStrongConverseClassical2014,muller-lennertQuantumRenyiEntropies2013,beigiSandwichedRenyiDivergence2013,bertaRenyiDivergencesWeighted2018,jencovaRenyiRelativeEntropies2018a}, see also \cite[Appendix E]{galke_sufficiency_2024} for an overview. 
The case of the quantum relative entropy was first shown in \cite{muller-hermesMonotonicityQuantumRelative2017}, with later, independent proof in \cite{jencovaRenyiRelativeEntropies2018a,frenkel_integral_2023}.
In other words, known distinguishability measures cannot distinguish between CPTP-equivalence and PTP-equivalence. 

Generalizing from dichotomies, these observations motivate us to study PTP-equivalence of general \emph{statistical experiments}, i.e., finite sets $(\rho_\theta)_{\theta\in\Theta}$ of density matrices on a common finite-dimensional Hilbert space $\H$, and how it relates to (Bayesian) hypothesis testing.
In the remainder of this section, we provide an overview of our main results.
Without loss of generality, we may and will assume in the following that statistical experiments are always \emph{faithful}, i.e. if $a\geq 0$, then $\tr(a\rho_\theta) = 0$ for all $\theta$ implies $a=0$ (any statistical experiment is CPTP-equivalent to a faithful one, see \cref{sec:experiments}). 
We note here already that \cref{sec:maths} collects the necessary mathematical background that we need to establish our results and may be outside the usual mathematical scope of quantum information theory.

\subsection{Sufficiency and the structure of equivalent statistical experiments}

A central notion of this work is \emph{sufficiency}. 
Fix a faithful statistical experiment $(\rho_\theta)$ (we suppress the label set $\Theta$) on $L(\H)$.
A unital *-algebra\footnote{Since we work in finite dimensions, there is no distinction between von Neumann algebras, C*-algebras, and unital *-algebras on $\H$.}
$A\subset L(\H)$ is called sufficient for the statistical experiment $(\rho_\theta)$ if there exists a unital, completely-positive (UCP) map 
$T: L(\H) \to A$ such that $T^*\rho_\theta = \rho_\theta$ for all $\theta\in \Theta$ \cite{ohya_quantum_1993,jencova_sufficiency_2006}. 
Here $T^*$ is the Hilbert-Schmidt dual defined via $\tr(T^*(a) b) = \tr(a Tb)$.
This means that the states $\rho_\theta$ can be recovered from their restrictions to the subalgebra $A$.
In other words, all information about the statistical experiment is contained in $A$.\footnote{And $T$, but we will later see that there is a canonical choice for $T$.}
Among all CPTP-sufficient *-algebras, there is a minimal one, which we denote by $A_{(\rho_\theta)}$ and call the \emph{minimal sufficient *-algebra} \cite{kuramochi_minimal_2017}.
The minimal sufficient *-algebra is the same (up to isomorphism) as the algebra obtained from the \emph{Koashi-Imoto decomposition} of a statistical experiment \cite{koashi_operations_2002,hayden_structure_2004,kuramochi_minimal_2017}.
Petz neatly characterized sufficient *-algebras \cite{petz_sufficient_1986,petz_sufficiency_1988}, see also \cite{ohya_quantum_1993,jencova_sufficiency_2006}.

In this work, we generalize from sufficient *-algebras. Let $K\subset L(\H)$ be an operator system, i.e., a unital, complex subspace closed under the adjoint. 
We say that $K$ is (C)PTP-sufficient for $(\rho_\theta)$ if there exists unital (completely) positive (U(C)P) map
\begin{align}\label{introeq:sufficiency-K}
	T : L(\H)\to K,\qquad T^*\rho_\theta = \rho_\theta,\qquad \theta\in \Theta.
\end{align}
An operator system is \emph{minimal (C)PTP-sufficient} if it is contained in all other (C)PTP-sufficient operator systems. A conditional expectation onto a *-algebra $A$ is a UCP map $F:L(\H)\to A$ such that $F^2=F$ (see \cref{sec:CEs}).
From general properties of completely positive maps and Petz's theorem, it follows that:
\begin{equation}\label{introeq:CPTP-sufficiency}
\begin{gathered}
	\textit{There is a minimal CPTP-sufficient operator system $A_{(\rho_\theta)}$, it is a *-algebra,} \\
	\textit{and it admits a conditional expectation $F$ such that $F^*\rho_\theta = \rho_\theta$}.
\end{gathered}
\end{equation}
Our first observation, using an argument by \Lbar{}uzak (cp.\ \cref{thm:luczak}), is:
\begin{equation}\label{introeq:PTP-sufficiency}
\begin{gathered}
	\textit{There is a minimal PTP-sufficient operator system $J_{(\rho_\theta)}$, it is a J*-algebra,} \\
    \textit{and it admits a conditional expectation $F$ such that $F^*\rho_\theta=\rho_\theta$}.
\end{gathered}
\end{equation}
Here, a J*-algebra means an operator system $J\subset L(\H)$ that is closed under the \emph{Jordan product}
\begin{align}
			\jp{a}{b} = \frac{1}{2}(ab+ba).
\end{align}
In \cref{sec:basics} we provide an introduction to J*-algebras. The most important difference to *-algebras is that the Jordan product is commutative, but not associative.\footnote{Jordan algebras recently also appeared in the context of hypothesis testing in general probabilistic theories \cite{sonoda_hypothesis_2025}.} Every *-algebra $A\subset L(\H)$ is of course also a J*-algebra. 
A UP map $F:L(\H)\to J$ is a conditional expectation onto $J$ if and only if $F^2 = F$. If $J$ is a *-algebra, then $F$ is automatically completely positive (see \cref{sec:multiplicative-properties} and \cref{sec:CEs}). It is called $(\rho_\theta)$-preserving (or state-preserving for $(\rho_\theta)$) if $F^* \rho_\theta = \rho_\theta$.

\begin{example*} Suppose $\rho = \tfrac{1}{2}(\1+X),\sigma= \tfrac12(\1+Z)$, where $X,Y,Z$ are the Pauli matrices. 
	The two states are clearly irreducible and their minimal sufficient *-algebra is $A_{(\rho,\sigma)} = L(\CC^2)$. 
	However, the two states are also symmetric under transpose and the map $E:x\mapsto \frac{1}{2}(x+x^t)$ is a $(\rho,\sigma)$-preserving conditional expectation onto the J*-algebra of symmetric matrices in $L(\CC^2)$. 
    The minimal sufficient J*-algebra is $J_{(\rho,\sigma)} = \{ x=x^t : x\in L(\CC^2) \} = \lin \{\1,X,Z\}$. 
\end{example*}

Why are the minimal sufficient (C)PTP-sufficient operator systems important for (C)PTP-equivalence?
Two statistical experiments are CPTP-equivalent if and only if the two minimal sufficient *-algebras are isomorphic and there exists an isomorphism that intertwines the expectation values \cite{galke_sufficiency_2024}. 
Our first main result is to establish the same result, but for PTP maps and J*-algebras:
\begin{introtheorem}[cp.\ \cref{thm:ptp-inter}]
    A faithful statistical experiment $(\rho_\theta)_{\theta\in\Theta}$ on $\H$ is PTP-equivalent to a faithful statistical experiment $(\omega_\theta)_{\theta\in\Theta}$ on $\K$ 
    if and only if there exists a J*-isomorphism $\psi: J_{(\omega_\theta)} \to J_{(\rho_\theta)}$ which intertwines the two families of states:
    \begin{equation}
        \tr(\rho_\theta\, \psi(a)) = \tr(\omega_\theta\, a), \qquad a\in J_{(\omega_\theta)},\ \theta\in\Theta.
    \end{equation}
	If $T:L(\K)\to L(\H)$ and $S:L(\H)\to L(\K)$ are interconverting UP maps, then $T|_{J_{(\omega_\theta)}} = \psi$ and $S|_{(\rho_\theta)} = \psi^{-1}$.
\end{introtheorem}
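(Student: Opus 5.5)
Both directions rest on \eqref{introeq:PTP-sufficiency}. For each faithful experiment fix the minimal sufficient J*-algebra $J_{(\rho_\theta)}$ with its state-preserving conditional expectation $F_\rho: L(\H)\to J_{(\rho_\theta)}$, i.e. $F_\rho^*\rho_\theta=\rho_\theta$, and similarly $J_{(\omega_\theta)}$ with $F_\omega$ on $L(\K)$.

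\emph{If.} Let $\psi$ be an intertwining J*-isomorphism. Set $T := \psi\circ F_\omega : L(\K)\to L(\H)$ and $S := \psi^{-1}\circ F_\rho : L(\H)\to L(\K)$. A J*-isomorphism between J*-algebras is unital and positive, because it preserves squares $\jp{a}{a}=a^2$. Hence $T$ and $S$ are unital positive maps. For $b\in L(\K)$ we get
\[
\tr(\rho_\theta T b)=\tr(\rho_\theta\,\psi(F_\omega b))=\tr(\omega_\theta F_\omega b)=\tr(\omega_\theta b),
\]
so $T^*\rho_\theta=\omega_\theta$. Symmetrically, $S^*\omega_\theta=\rho_\theta$.

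\emph{Only if.} Suppose $T:L(\K)\to L(\H)$ and $S:L(\H)\to L(\K)$ are UP maps with $T^*\rho_\theta=\omega_\theta$ and $S^*\omega_\theta=\rho_\theta$. Then $R:=T\circ S: L(\H)\to L(\H)$ is UP with $R^*\rho_\theta=\rho_\theta$.

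The key step is a rigidity lemma: any UP map $R$ preserving a faithful experiment acts as the identity on $J_{(\rho_\theta)}$. I would prove it by the mean ergodic argument behind \Lbar{}uzak's theorem. The Cesàro means of $R$ converge to an idempotent UP map $E$ with $E^*\rho_\theta=\rho_\theta$, and its range is the fixed-point space $\mathrm{Fix}(R)$. By faithfulness and the Choi--Effros/\Lbar{}uzak-type result recalled in \cref{thm:luczak}, this range is a J*-algebra equipped with a state-preserving conditional expectation. It is therefore PTP-sufficient, and minimality gives $J_{(\rho_\theta)}\subseteq \mathrm{Fix}(R)$.

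Applying the lemma to both experiments gives $TS|_{J_{(\rho_\theta)}}=\id$ and $ST|_{J_{(\omega_\theta)}}=\id$. It remains to show that $S$ maps $J_{(\rho_\theta)}$ into $J_{(\omega_\theta)}$, and $T$ the reverse. For this, consider $S(J_{(\rho_\theta)})$. The map $S\circ F_\rho\circ T:L(\K)\to L(\K)$ is UP, preserves $(\omega_\theta)$, and has range inside $S(J_{(\rho_\theta)})$. Its Cesàro limit yields a state-preserving UP projection onto an operator system contained in $S(J_{(\rho_\theta)})$. That system is PTP-sufficient, so by minimality it contains $J_{(\omega_\theta)}$, giving $J_{(\omega_\theta)}\subseteq S(J_{(\rho_\theta)})$. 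The symmetric argument gives $J_{(\rho_\theta)}\subseteq T(J_{(\omega_\theta)})$. Combining these with $TS=\id$ and $ST=\id$ on the minimal algebras and with finite dimensionality, $S|_{J_{(\rho_\theta)}}$ and $T|_{J_{(\omega_\theta)}}$ are mutually inverse linear bijections between $J_{(\rho_\theta)}$ and $J_{(\omega_\theta)}$. The intertwining identity $\tr(\rho_\theta Ta)=\tr(\omega_\theta a)$ is immediate. Finally, set $\psi:=T|_{J_{(\omega_\theta)}}$.

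\emph{Main obstacle.} The hard part is showing that $\psi$ is a Jordan homomorphism. Unital order isomorphisms between J*-algebras are Jordan isomorphisms; this is Kadison's theorem for JB-algebras and presumably among the background results of \cref{sec:multiplicative-properties}. I would apply it here, since $\psi$ and $\psi^{-1}$ are both positive and unital. If that is unavailable, I would instead use the Jordan–Schwarz inequality $T(a^2)\ge T(a)^2$. Combined with $ST(a)=a$ and $S(x^2)\ge S(x)^2$, it yields $\tr(\rho_\theta[T(a^2)-T(a)^2])=0$ for all $\theta$. Faithfulness then forces equality, i.e. $a$ lies in the Jordan multiplicative domain, so $T$ preserves squares and hence Jordan products. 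The same argument gives the final clause: the restrictions of any interconverting pair equal $\psi$ and $\psi^{-1}$, since the construction above started from an arbitrary such pair.
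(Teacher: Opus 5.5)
Your argument is correct, and in the \emph{only if} direction it takes a genuinely different route from the paper.

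\textbf{How the paper does it.} The paper first composes the interconverting maps with the state-preserving conditional expectations ($F_\rho\circ T$ and $F_\omega\circ S$), so that their ranges lie in the minimal sufficient J*-algebras. It then obtains mutually inverse unital positive maps and applies \cref{lem:order-iso}. To reach the final clause for the original $T,S$, it needs the separate \cref{lem:restricted-iso}, which is a uniqueness argument for conditional expectations.

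\textbf{How you do it.} You work with the uncompressed maps throughout. The map $SF_\rho T$ is an $(\omega_\theta)$-preserving UP map with range in the operator system $S(J_{(\rho_\theta)})$, so minimality gives $J_{(\omega_\theta)}\subseteq S(J_{(\rho_\theta)})$. Symmetrically, $J_{(\rho_\theta)}\subseteq T(J_{(\omega_\theta)})$. Injectivity, coming from $TS|_{J_{(\rho_\theta)}}=\id$ and $ST|_{J_{(\omega_\theta)}}=\id$, then yields $\dim J_{(\rho_\theta)}\ge \dim J_{(\omega_\theta)}\ge\dim J_{(\rho_\theta)}$, so both inclusions are equalities. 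This is shorter and gives the restriction statement for free. The paper's lemma is a more general structural fact that this theorem does not actually require. The Ces\`aro limit in that step is unnecessary, since $SF_\rho T$ itself already witnesses PTP-sufficiency of $S(J_{(\rho_\theta)})$.

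\textbf{The Jordan step.} Your argument via order isomorphisms is exactly \cref{lem:order-iso}. Your Jordan--Schwarz alternative is also valid: for hermitian $a\in J_{(\omega_\theta)}$ one has $\tr(\rho_\theta T(a)^2)=\tr(\omega_\theta S(T(a)^2))\ge\tr(\omega_\theta a^2)=\tr(\rho_\theta T(a^2))$. Together with $T(a^2)\ge T(a)^2$ and faithfulness, this forces equality.

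\textbf{One small addition.} The final clause refers to a single $\psi$, so you should note that an intertwining J*-isomorphism is unique. This follows in one line from your rigidity lemma. For two such $\psi_1,\psi_2$, the map $\psi_2^{-1}\circ\psi_1\circ F_\omega$ is an $(\omega_\theta)$-preserving UP map on $L(\K)$, hence the identity on $J_{(\omega_\theta)}$, so $\psi_1=\psi_2$. Noting only that the construction started from an arbitrary pair does not by itself show that different pairs, or the $\psi$ of the \emph{if} direction, give the same isomorphism.
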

We illustrate the theorem using a further example. 
Let $(\rho,\sigma)$ be a dichotomy with $J_{(\rho,\sigma)}=L(\H)$ (such dichotomies are studied in \cref{sec:more}) such that \eqref{introeq:introexample} holds.
Set $\tilde\rho = \tfrac{1}{2}\rho\oplus\tfrac{1}{2}\rho^t$ and $\tilde\sigma = \tfrac{1}{2}\sigma\oplus\tfrac{1}{2}\sigma^t$. 
Set $\tilde\rho = \tfrac{1}{2}\rho\oplus\tfrac{1}{2}\rho^t$ and $\tilde\sigma = \tfrac{1}{2}\sigma\oplus\tfrac{1}{2}\sigma^t$. 
Then the minimal sufficient *-algebras are given by $A_{(\rho,\sigma)} = L(\H)$ and $A_{(\tilde \rho,\tilde \sigma)} = L(\H)\oplus L(\H)$, which are obviously not isomorphic.
However, we will see that the minimal sufficient J*-algebras fulfill
\begin{align}
	J_{(\tilde \rho,\tilde \sigma)} = \{ a\oplus a^t \ :\  a\in J_{(\rho,\sigma)} \} \Jcong J_{(\rho,\sigma)},
\end{align}
where the isomorphism is a J*-isomorphism that preserves expectation values with respect to the two dichotomies. Explicit interconversion maps are given in \cref{exa:final-trp-doubling}.

One may wonder whether more interesting situations may occur.
For dichotomies, we can show that, in a sense, the transpose is all there is.
Recall that a co-completely positive (coCP) map is a CP map followed by a transposition and that a decomposable map is a sum of a CP and a coCP map. 
\begin{introtheorem}[cp.~\cref{thm:decomposable-interconversion}]
    A dichotomy $(\rho,\sigma)$ on $\H$ and a dichotomy $(\hat\rho,\hat\sigma)$ on $\K$ are PTP-equivalent if and only if they are equivalent via decomposable, trace-preserving maps.
\end{introtheorem}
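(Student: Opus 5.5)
One direction is trivial: decomposable trace-preserving maps are positive, so equivalence via such maps implies PTP-equivalence. For the converse, I would first reduce to faithful dichotomies and then use the first main theorem (\cref{thm:ptp-inter}). PTP-equivalence yields a J*-isomorphism $\psi: J_{(\hat\rho,\hat\sigma)}\to J_{(\rho,\sigma)}$ that intertwines the states. By \eqref{introeq:PTP-sufficiency} there are state-preserving conditional expectations $F:L(\H)\to J_{(\rho,\sigma)}$ and $\hat F: L(\K)\to J_{(\hat\rho,\hat\sigma)}$. The natural candidate interconversion maps are then $T=\psi\circ\hat F$ and $S=\psi^{-1}\circ F$. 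Both are unital and positive, and their duals map the dichotomies into each other. The whole task is therefore to show that $F$ (and $\hat F$) and $\psi$ can be chosen decomposable.

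The key structural step is to identify the minimal sufficient J*-algebra of a dichotomy. I would use the paper's result that Neyman--Pearson tests generate it. This means $J_{(\rho,\sigma)}$ is the J*-algebra generated by the projections $\proj{\rho>t\sigma}$, $t\ge0$, which is a finite family because finitely many spectral changes occur. A J*-algebra generated by projections is special, so no exceptional factor can appear. The harder point is to show that each simple summand is either a full matrix algebra, a spin factor, or the symmetric or quaternionic hermitian part of a matrix algebra. Each of these types sits inside an enveloping *-algebra. The generating projections are the only data, and Jordan algebras generated by finitely many projections can still be complicated in general. So I would instead use the classification of J*-subalgebras of $L(\H)$ from \cref{sec:maths}: every such subalgebra is a direct sum of reversible pieces and spin-factor pieces.

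Next I would show that conditional expectations onto J*-algebras of these types are decomposable. For a reversible summand $\{a\oplus \alpha(a)\}$, where $\alpha$ is a *-isomorphism or a *-anti-isomorphism (for example a transpose), the conditional expectation has the form $x\mapsto \tfrac12(E_1x+\alpha^{\pm}(E_2 x))$ with CP conditional expectations $E_i$. Composing a CP map with a transpose gives a coCP map, so such a map is a sum of a CP and a coCP map. Similarly, a J*-isomorphism $\psi$ between such algebras decomposes, by the structure of Jordan homomorphisms into central projections, as $z\psi$ (a *-homomorphism part) plus $(1-z)\psi$ (a *-anti-homomorphism part). Extending through the conditional expectations then gives a decomposable map.

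The main obstacle is the spin-factor pieces: conditional expectations onto spin factors, and isomorphisms between them, are generally not decomposable. I would therefore try to prove that for dichotomies, spin factors of dimension above three (that is, beyond $M_2(\RR)_{sa}\cong$ the symmetric $2\times2$ matrices, which are reversible) cannot occur in $J_{(\rho,\sigma)}$. The heuristic is that a Jordan algebra generated by projections coming from a single pair $\rho,\sigma$, which lives in the *-algebra generated by two positive operators, should be reversible. Concretely, I would show that $J_{(\rho,\sigma)}$ is contained in the set of elements fixed by the anti-automorphism fixing both $\rho$ and $\sigma$, in analogy with the two-projection (Halmos) theory. This is where I expect most of the work to lie.
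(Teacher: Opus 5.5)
\textbf{The gap.} Excluding the irreversible pieces is the step you flag as the main obstacle, and the route you sketch for it does not work. Knowing that $J_{(\rho,\sigma)}$ is generated by Neyman--Pearson tests and lies inside $\staralg(\rho,\sigma)$ gives no control. Every J*-algebra on $\H$ sits inside $L(\H)$, which is itself generated by two positive operators. Finitely many projections can also generate an irreversible algebra: the projections $(\1+s_i)/2$, $i=1,\dots,4$, generate $V_4\subset L(\CC^4)$.

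The anti-automorphism idea fails as well. When $J_{(\rho,\sigma)}=L(\H)$, no anti-automorphism fixes both $\rho$ and $\sigma$ (\cref{prop:J-symmetries}). Even when one exists, an inclusion $J\subset A^\vartheta$ into a reversible algebra does not make $J$ reversible; again take $V_4\subset L(\CC^4)$. Your earlier worries (specialness, identifying the factor types) are vacuous here, since all concrete J*-algebras are special and the list of types is just \cref{thm:classification}.

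\textbf{The missing idea.} $J_{(\rho,\sigma)}$ is itself 2-generated. By \cref{cor:TPCE-ptp-equivalence}, $J_{(\rho,\sigma)}=\Jstaralg(E\rho,E\sigma)$, where $E$ is the trace-preserving conditional expectation. A J*-algebra generated by at most three elements is reversible (Cohn--Shirshov, \cref{cor:reversible}). Two generators remain two generators in every representation, so $J_{(\rho,\sigma)}$ is universally reversible (\cref{prop:MSJA-uni-rev}). This excludes $V_n$ for $n\ge 4$, and the irreversible representations of $V_5$, in one stroke. Størmer's result (\cref{lem:reversible-decomposable-CE}) then makes every faithful conditional expectation onto $J_{(\rho,\sigma)}$ decomposable, so the Neyman--Pearson detour is unnecessary.

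\textbf{A secondary gap.} Splitting $\psi$ into a *-homomorphism plus a *-antihomomorphism via a central projection is a theorem about Jordan homomorphisms defined on *-algebras. Your $\psi$ is defined only on $J_{(\hat\rho,\hat\sigma)}$. The CP and coCP summands of $\hat F$ take values in $\staralg(J_{(\hat\rho,\hat\sigma)})$, not in $J_{(\hat\rho,\hat\sigma)}$. So to conclude that $\psi\circ\hat F$ is decomposable, you must first extend $\psi$ to a decomposable map on $\staralg(J_{(\hat\rho,\hat\sigma)})$. Once universal reversibility is known, this extension exists, via the universal enveloping *-algebra or case by case from \cref{prop:reps}, but it needs an argument.

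\textbf{How the paper avoids this.} It passes to the standard representation, where $\pi^{-1}$ lifts to a *-homomorphism by the universal property. It takes $T=\pi^{-1}\circ\hat E$, with $\hat E$ the decomposable trace-preserving conditional expectation. For the reverse direction it uses the Petz recovery map $R_{T,\sigma}$, which is decomposable whenever $T$ is. Two PTP-equivalent dichotomies are then linked through their standard representations, which are unitarily equivalent (\cref{prop:standard-rep}).
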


In \cref{sec:more}, we study further aspects of minimal sufficient J*-algebras of dichotomies.
\cref{sec:symmetries} investigates how the minimal sufficient J*-algebra relates to the existence of unitary and anti-unitary symmetries.
Here a (anti-) unitary symmetry of a dichotomy $(\rho,\sigma)$ is an (anti-) unitary $u$ such that $u\rho u^*=\rho$ and $u\sigma u^*=\sigma$. 
The absence of non-trivial unitary symmetries is reflected in the minimal sufficient *-algebra via 
\begin{align}
    \text{no unitary symmetries}\qquad &\iff \quad A_{(\rho,\sigma)}=L(\H).\quad
\intertext{In contrast, the minimal sufficient J*-algebra captures the absence of both kinds of symmetries:}
    \begin{tabular}{c}
         \text{no unitary and no} \\\text{anti-unitary symmetries}
    \end{tabular}\quad &\iff \quad J_{(\rho,\sigma)}=L(\H).\quad
\end{align}
\cref{sec:many-examples} studies which J*-algebras arise as the minimal sufficient J*-algebras of a dichotomy.
For J*-factors, i.e., J*-algebras with trivial center, this is the case if and only if they are generated by two hermitian elements. 
J*-algebras with this property are classified in \cref{sec:2-gen}.

\subsection{Bayes sufficiency and the minimal sufficient algebra}

We now connect (C)PTP-equivalence to binary hypothesis testing. 
Recall that the optimal tests for binary hypothesis testing with priors $(p,1-p)$ are given by the projectors $\proj{\rho > t\sigma}$ onto the positive part of $\rho - t\sigma$ with $t=(1-p)/p$. 
The projectors $\proj{\rho>t\sigma}$ are called Neyman-Pearson tests, in analogy to the classical case. 
One may wonder whether there is a connection between the Neyman-Pearson tests and minimal sufficient (J)*-algebras. 
The following theorem clarifies this. To state it we denote by
\begin{align}
			K_{(\rho,\sigma)} = \lin\{ \proj{\rho>t\sigma} \}_{t>0} +\CC \cdot \1
\end{align}
the \emph{minimal Bayes-sufficient} operator system for $(\rho,\sigma)$ (cp.\ \cref{sec:bayes}).

\begin{introtheorem}[cp.\ \cref{thm:K-generates}]\label{introthm:K-generates}
    Let $(\rho,\sigma)$ be a faithul dichotomy on $\H$. Then 
	\begin{align}
		K_{(\rho,\sigma)} \subset J_{(\rho,\sigma)} \subset A_{(\rho,\sigma)}
	\end{align}
	and both the minimal sufficient J*-algebra and the minimal sufficient *-algebra are generated by the Neyman-Pearson tests:
    \begin{align}
	    J_{(\rho,\sigma)} = \Jstaralg(K_{(\rho,\sigma)}),\quad 
	    A_{(\rho,\sigma)} = \staralg(K_{(\rho,\sigma)}).
    \end{align}
\end{introtheorem}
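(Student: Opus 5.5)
We need to prove three things: that $K_{(\rho,\sigma)}\subset J_{(\rho,\sigma)}\subset A_{(\rho,\sigma)}$; that the J*-algebra generated by the tests is $J_{(\rho,\sigma)}$; and that the *-algebra they generate is $A_{(\rho,\sigma)}$.

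\textbf{Inclusions.} The inclusion $J_{(\rho,\sigma)}\subset A_{(\rho,\sigma)}$ is immediate from minimality. $A_{(\rho,\sigma)}$ is CPTP-sufficient, hence PTP-sufficient, and $J_{(\rho,\sigma)}$ is contained in every PTP-sufficient operator system.

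For $K_{(\rho,\sigma)}\subset J_{(\rho,\sigma)}$, let $F$ be the $(\rho,\sigma)$-preserving conditional expectation onto $J=J_{(\rho,\sigma)}$ from \eqref{introeq:PTP-sufficiency}, and set $p_t=\proj{\rho>t\sigma}$. Since $F$ is unital and positive, $0\le F(p_t)\le \1$. Because $F^*$ preserves $\rho$ and $\sigma$,
\[
\tr\big((\rho-t\sigma)F(p_t)\big)=\tr\big((\rho-t\sigma)p_t\big)=\tr(\rho-t\sigma)_+ .
\]
So $F(p_t)$ is also an optimal test. By essential uniqueness of the Neyman–Pearson test, any optimal effect $x$ satisfies $x=p_t$ on the supports of $(\rho-t\sigma)_\pm$; the freedom lives only on the kernel of $\rho-t\sigma$. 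Faithfulness does not directly remove this freedom, so the argument needs an extra step here.

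To handle the kernel, I would show that $J$ contains the spectral projections of $\rho-t\sigma$ directly. Here $\rho$ and $\sigma$ are understood via the densities in $J$: $F^*\rho=\rho$ means $\rho$ restricted to $J$ determines it. A cleaner route is to use that the state-preserving conditional expectation onto a J*-algebra has $\rho,\sigma$ in the "dual" of $J$. Concretely, $F^*$ maps onto the predual image $F^*(L(\H))$, and $\rho-t\sigma$ lies in it. By the structure results of \cref{sec:CEs}, $F^*(L(\H))$ is a Jordan-type module (a J*-algebra modulo a positive density $\omega$ that commutes with $J$). Then positive/negative parts of elements of $F^*(L(\H))$ stay in it, and their support projections lie in $J$. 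I expect this step, identifying $[\rho>t\sigma]\in J$ via the structure of state-preserving conditional expectations onto J*-algebras, to be the main obstacle. If that structure theorem is unavailable, the fallback is the optimal-test argument above combined with strict optimality on the positive part: take $t$ off the finitely many exceptional values and use right-continuity of $t\mapsto p_t$ together with closedness of $J$.

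\textbf{Generation.} Set $J_0=\Jstaralg(K_{(\rho,\sigma)})\subset J$. I would show $J_0$ is PTP-sufficient; minimality then gives $J\subset J_0$. The key identity is the layer-cake formula
\[
\tr(\rho x)-t\,\tr(\sigma x)\quad\text{recovered from}\quad \rho=\int_0^\infty (\ldots)\,dt ,
\]
or, more usefully, $\tr(\rho\, p_t)$ and $\tr(\sigma\, p_t)$ determine $\rho$ and $\sigma$ modulo $J_0^\perp$. Better still, I would show directly that $\rho$ and $\sigma$ are "in" $J_0$ in the sense that $\sigma^{-1/2}\rho\sigma^{-1/2}$-type data lies there. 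Concretely, consider the Jordan-algebraic conditional expectation $E$ onto $J_0$ that preserves $\sigma$; this exists if $\sigma$'s modular data is compatible, which I would verify using the existence results of \cref{sec:CEs} for J*-subalgebras of $J$. Then check $E^*\rho=\rho$ via a Petz/Frenkel-type criterion: equality $\tr(\rho-t\sigma)_+=\tr(E^*\rho-tE^*\sigma)_+$ for all $t$ (true because $p_t\in J_0$). Equality of all hockey-stick divergences then yields equality of relative entropy through Frenkel's integral formula, and equality in the DPI gives sufficiency. The PTP version of Petz recovery is stated among the paper's applications, so if it is not yet available I would argue directly via \Lbar{}uzak's theorem (\cref{thm:luczak}).

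Finally, for $A_{(\rho,\sigma)}=\staralg(K)$: we have $\staralg(K)=\staralg(J_0)=\staralg(J)$. By \eqref{introeq:CPTP-sufficiency} and the same sufficiency argument with CP conditional expectations, $\staralg(K)$ is CPTP-sufficient, so it contains $A_{(\rho,\sigma)}$. Conversely, $K\subset A_{(\rho,\sigma)}$ gives $\staralg(K)\subset A_{(\rho,\sigma)}$.
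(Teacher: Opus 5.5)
Your inclusions and the identity $A_{(\rho,\sigma)}=\staralg(K_{(\rho,\sigma)})$ are fine and follow the paper. Your ``fallback'' for $K_{(\rho,\sigma)}\subset J_{(\rho,\sigma)}$ is exactly the argument of \cref{lem:Pt-preservation,lem:bayes-suff}: optimality of $F(p_t)$ gives $\proj{\rho>t\sigma}\le F(p_t)\le[\rho\ge t\sigma]$, the bounds coincide off finitely many $t$, and right-continuity handles the rest. So the speculative route through $F^*(L(\H))$ can be dropped. Likewise, Frenkel's formula plus Petz's theorem for the completely positive conditional expectation onto $\staralg(K_{(\rho,\sigma)})$ is the paper's Steps 1--2.

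The gap is the inclusion $J_{(\rho,\sigma)}\subset J_0:=\Jstaralg(K_{(\rho,\sigma)})$. For it you need that, for a faithful conditional expectation $E$ onto the J*-algebra $J_0$, equality $D(E^*\rho\|E^*\sigma)=D(\rho\|\sigma)$ yields a recovery map. But $E$ is merely positive, and Petz's theorem is only available for 2-positive maps. The PTP version (\cref{thm:petz-sufficiency}) is proved in the paper \emph{from} the identity $J_{(\rho,\sigma)}=\Jstaralg(K_{(\rho,\sigma)})$, so invoking it is circular. \Lbar{}uczak's theorem is no substitute: it produces $J_{(\rho,\sigma)}$ and gives $\staralg(J_{(\rho,\sigma)})=A_{(\rho,\sigma)}$, but says nothing about whether a given $J_0$ containing the tests is sufficient. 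Nor does $\staralg(J_0)=A_{(\rho,\sigma)}=\staralg(J_{(\rho,\sigma)})$ settle it, since different J*-algebras can generate the same *-algebra (e.g.\ the symmetric matrices and $L(\CC^n)$).

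The missing idea is how to descend from the *-algebra to the J*-algebra. The paper does this via reversibility and the universal representation:
\begin{itemize}
\item $J_0$ is reversible, because for a faithful conditional expectation $E$ onto it one has $J_0=J_{(E^*\rho,E^*\sigma)}$, the minimal sufficient J*-algebra of a dichotomy (\cref{lem:J-of-K-is-rev}).
\item $J_{(\rho,\sigma)}$ is universally reversible (\cref{prop:MSJA-uni-rev}).
\item In the standard representation $A_{(\rho,\sigma)}$ is the universal enveloping *-algebra, so $J_{(\rho,\sigma)}=A_{(\rho,\sigma)}^\vartheta=\revJstaralg(J_0)=J_0$ (\cref{cor:rev-hull}).
\item The general case follows by transporting along the J*-isomorphism of \cref{thm:ptp-inter}, which maps Neyman--Pearson tests to Neyman--Pearson tests (\cref{cor:restriction-iso-on-K}).
\end{itemize}

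Once reversibility of $J_0$ is in hand, your relative-entropy route could also be repaired. The conditional expectation $E$ is then decomposable (\cref{lem:reversible-decomposable-CE}), so $E=\Phi\circ\iota$ with $\iota(a)=a\oplus a^t$ and $\Phi$ UCP on $L(\H)\oplus L(\H)$. Hence $E^*$ factors as $\Phi^*$, then the $D$-preserving bijection $Y_1\oplus Y_2\mapsto Y_1\oplus Y_2^t$, then the CP map $Z_1\oplus Z_2\mapsto Z_1+Z_2$. Equality of $D$ along this chain forces equality at the two CP steps, where Petz's theorem applies. Composing those recoveries with the inverse bijection gives a PTP recovery for $E^*$. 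Either way, reversibility of $\Jstaralg(K_{(\rho,\sigma)})$ is the ingredient your proposal lacks.
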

In fact the statement generalizes to arbitrary faithful statistical experiments $(\rho_\theta)$ in the following sense. 
If we define $\omega = \frac{1}{|\Theta|}\sum_\theta \rho_\theta$, and set 
\begin{align}
	K_{(\rho_\theta)} = \lin\{\proj{\rho_\theta > t \omega}\}_{t>0,\theta \in \Theta}  + \CC\cdot \1, 
\end{align}
then we also have 
\begin{align}
        J_{(\rho_\theta)} = \Jstaralg(K_{(\rho_\theta)}),\quad A_{(\rho_\theta)} = \staralg(K_{(\rho_\theta)}).
\end{align}
Part of the proof of \cref{introthm:K-generates} shows that for any statistical experiment $(\rho_\theta)$ there is a PTP-equivalent statistical experiment $(\hat\rho_\theta)$ such that $\hat\rho_\theta \in J_{(\rho_\theta)}$. In fact (cp.\ \cref{cor:TPCE-ptp-equivalence}), the $\hat\rho_\theta$, $\theta\in \Theta$, generate the minimal sufficient J*-algebra:
\begin{align}\label{introeq:states-generate-J}
		J_{(\rho_\theta)} = \Jstaralg((\hat\rho_\theta)_{\theta\in\Theta}).
\end{align}
Thus, we can always find a PTP-equivalent version of a statistical experiment, where the minimal sufficient J*-algebra is generated by the density matrices itself. In particular, the density matrices $\hat\rho_\theta$ may be expressed using the Neyman-Pearson tests. 

\subsection{Petz recovery and quantum divergences}
As mentioned above, sufficiency is a statement about the recoverability of a family of states from their marginals on a subalgebra. Petz gave a neat characterization of recoverability in terms of Connes cocycles $\rho^{-it}\sigma^{-it}$, as well as via the Petz recovery map \cite{petz_sufficient_1986,petz_sufficiency_1988,ohya_quantum_1993}.
Suppose $T^*$ is a PTP map $L(\H) \to L(\hat\H)$, $\sigma$ a faithful state on $\H$ and $\hat\sigma  = T^*\sigma$ a faithful state on $\hat\H$. Then the Petz recovery map of $T^*$ relative to $\sigma$ is the PTP map $R_{T,\sigma}^*: L(\hat\H) \to L(\H)$ given by (cp.\ \cref{sec:recovery})
\begin{align}
	R_{T,\sigma}^*(\hat \rho)  = \sigma^{\frac12} T(\hat\sigma^{-\frac12} \hat\rho \hat\sigma^{-\frac12}) \sigma^{\frac12}, 
\end{align}
which always fulfills $R_{T,\sigma}^*\hat\sigma = \sigma$. For the following theorem, we define
\begin{align}
			d_{\rho | \sigma} = \sigma^{-\frac12} \rho \sigma^{-\frac12},
\end{align}
which appeared in the literature before, see e.g. \cite{belavkinCstaralgebraicGeneralization1982,jencova_quantum_2010,liu_layer_2025}. The following theorem provides an algebraic characterization of recoverability.
\begin{introtheorem}[cp.\ \cref{thm:algebraic-recovery}]\label{introthm:algebraic-recovery}
    Let $(\rho,\sigma)$ be a dichotomy on $\H$ with $\sigma$ faithful. Let $T^*:L(\H)\to L(\K)$ be a PTP map and $T$ the corresponding UP map. The following are equivalent:
    \begin{enumerate}[(a)]
        \item There exists some PTP map $S^*:L(\K)\to L(\H)$ such that $S^*T^*\rho = \rho, S^*T^*\sigma=\sigma$.
        \item $T(d_{T^*\rho|T^*\sigma}) = d_{\rho|\sigma}$.
        \item $R_{T,\sigma}^* T^*\rho = \rho$, where $R_{T,\sigma}$ is the Petz recovery map relative to $T$ and $\sigma$.
        \item $T$ restricts to an isomorphism $J_{(T^*\rho,T^*\sigma)}\to J_{(\rho,\sigma)}$.
    \end{enumerate}
\end{introtheorem}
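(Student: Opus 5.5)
We will prove the cycle (c)$\Rightarrow$(a)$\Rightarrow$(d)$\Rightarrow$(b)$\Rightarrow$(c). Throughout, write $\hat\rho=T^*\rho$ and $\hat\sigma=T^*\sigma$. We reduce to the case $\hat\sigma$ faithful by compressing $\K$ to its support; faithfulness of $\sigma$ guarantees that $T$ maps the support projection's complement appropriately.

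\emph{(c)$\Rightarrow$(a).} This is immediate with $S^*=R^*_{T,\sigma}$. The Petz map is positive, being a composition of $T$ with congruences, and it is trace-preserving because $\tr(\sigma^{1/2}T(y)\sigma^{1/2})=\tr(\hat\sigma y)$. In addition $R^*_{T,\sigma}\hat\sigma=\sigma$ holds automatically.

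\emph{(a)$\Rightarrow$(d).} If $S^*T^*$ fixes $\rho$ and $\sigma$, then the two experiments $(\rho,\sigma)$ and $(\hat\rho,\hat\sigma)$ are PTP-equivalent via $T^*$ and $S^*$, after passing to faithful versions. The main interconversion theorem (\cref{thm:ptp-inter}) then says that $T$ restricted to $J_{(\hat\rho,\hat\sigma)}$ is a J*-isomorphism onto $J_{(\rho,\sigma)}$.

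\emph{(d)$\Rightarrow$(b).} This step is the main obstacle. The key input I expect to need is that $d_{\rho|\sigma}\in J_{(\rho,\sigma)}$ for faithful $\sigma$. To establish this, take the $\sigma$-preserving conditional expectation $F$ onto $J=J_{(\rho,\sigma)}$. Its fixed-state structure should imply that $F^*$, and hence the modular structure of $\sigma$, is compatible with $J$. One route is to show that $\sigma^{1/2}J\sigma^{1/2}$ is the image of $F^*$, for instance using the fact that $\sigma^{it}J\sigma^{-it}=J$, as for Takesaki-type statements in the Jordan setting. Alternatively, since $F^*\rho=\rho$, one can write $\rho=\sigma^{1/2}d\sigma^{1/2}$ with $d$ in $J$, via the adjoint relation $\tr(\rho F(x))=\tr(\rho x)$ together with a Petz-type identity $F^*(\sigma^{1/2}y\sigma^{1/2})=\sigma^{1/2}F(y)\sigma^{1/2}$.

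Granting that $d$ lies in $J$, we argue as follows. The isomorphism $\psi=T|_{J_{(\hat\rho,\hat\sigma)}}$ intertwines states, that is, $\tr(\rho\,\psi(a))=\tr(\hat\rho a)$, and likewise for $\sigma$. Moreover, $d_{\rho|\sigma}$ is characterized inside $J$ as the unique element $d$ satisfying
\[
\tr(\sigma\{d,a\})\ \text{-type identities, i.e.}\ \tr(\rho a)=\tr\bigl(\sigma^{1/2}d\,\sigma^{1/2}a\bigr)\quad\text{for all } a\in J.
\]
This is a Jordan-expressible statement, since $\sigma^{1/2}d\sigma^{1/2}$ can be written through triple products $\{\sigma^{1/2},d,\sigma^{1/2}\}$ once $\sigma^{1/2}$ is shown to be in the relevant algebra (or via the map $F^*$). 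A J*-isomorphism preserves triple products, so $\psi(d_{\hat\rho|\hat\sigma})=d_{\rho|\sigma}$.

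\emph{(b)$\Rightarrow$(c).} This is a direct computation:
\[
R^*_{T,\sigma}\hat\rho=\sigma^{1/2}T\bigl(\hat\sigma^{-1/2}\hat\rho\,\hat\sigma^{-1/2}\bigr)\sigma^{1/2}=\sigma^{1/2}\,T(d_{\hat\rho|\hat\sigma})\,\sigma^{1/2}=\sigma^{1/2}\,d_{\rho|\sigma}\,\sigma^{1/2}=\rho.
\]

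For the delicate step (d)$\Rightarrow$(b), I would first try to prove that $\sigma\in J_{(\rho,\sigma)}$ in the PTP-normalized form, via \cref{cor:TPCE-ptp-equivalence}, and then use the Jordan triple product characterization. If that fails, the fallback is to route the argument (a)$\Rightarrow$(b) directly: use a Kadison–Schwarz/Jordan-multiplicative-domain argument for the UP map $TS$ fixing the faithful state $\sigma$, which forces $TS$ to act as the identity on $J$.
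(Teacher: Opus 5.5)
Your steps (c)$\Rightarrow$(a), (a)$\Rightarrow$(d) and (b)$\Rightarrow$(c) are fine and match the paper. The gap is in (d)$\Rightarrow$(b), which you flag yourself.

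What works: your proof that $d_{\rho|\sigma}\in J_{(\rho,\sigma)}$ via $F^*\circ\Gamma_\sigma=\Gamma_\sigma\circ F$ is correct (this identity is \cref{lem:CE-selfadjoint} rewritten). Your characterization of $d_{\rho|\sigma}$ as the unique $d\in J_{(\rho,\sigma)}$ with $\tr(\rho a)=\ip{d}{a}_\sigma$ for all $a\in J_{(\rho,\sigma)}$ is also correct.

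What fails is the transport across $\psi=T|_{J_{(\hat\rho,\hat\sigma)}}$. For that you need $\ip{\psi(x)}{\psi(y)}_\sigma=\ip{x}{y}_{\hat\sigma}$ on $J_{(\hat\rho,\hat\sigma)}$. This does not follow from $\psi$ being a J*-isomorphism that intertwines the states, and your triple-product justification breaks down twice:
\begin{enumerate}[(i)]
\item $\sigma^{1/2}$ is in general not in $J_{(\rho,\sigma)}$. For $\rho=\rho_1\ox\omega$, $\sigma=\sigma_1\ox\omega$ with $\omega$ not maximally mixed, $J_{(\rho,\sigma)}\subset L(\H_1)\ox\1$ does not contain $\sigma$.
\item Even for operators that do lie in the J*-algebras, the Hilbert--Schmidt trace is not preserved by J*-isomorphisms. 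The map $a\mapsto a\oplus a^t$ doubles it, and $a\mapsto a\ox\1_{\CC^l}$ multiplies it by $l$, with different factors on different summands. So an identity of the form ``$\tr$ of a Jordan expression'' is not carried over by $\psi$.
\end{enumerate}
Your normalized-form variant hits the same rescaling: $\psi(\hat E\hat\rho)$ differs from $E\rho$ by a central factor. It also needs $d_{\rho|\sigma}=d_{E\rho|E\sigma}$, and you address neither point. Your fallback does not close the gap either. $TS=\id$ on $J_{(\rho,\sigma)}$ gives $T(S d_{\rho|\sigma})=d_{\rho|\sigma}$, but identifying $S d_{\rho|\sigma}$ with $d_{\hat\rho|\hat\sigma}$ is the same transport problem.

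The missing ingredient is that UP maps are contractions for the KMS inner products (\cref{cor:UP-contraction}). Given (d), $S=\psi^{-1}\circ F$ is a UP map with $S^*\hat\sigma=\sigma$, $S^*\hat\rho=\rho$, and $ST=\id$ on $J_{(\hat\rho,\hat\sigma)}$. Hence $\norm{x}_{2,\hat\sigma}=\norm{STx}_{2,\hat\sigma}\le\norm{Tx}_{2,\sigma}\le\norm{x}_{2,\hat\sigma}$ there. By polarization $\psi$ is a KMS isometry, and then your characterization does yield $\psi(d_{\hat\rho|\hat\sigma})=d_{\rho|\sigma}$.

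The paper uses the same contraction principle but bypasses the J*-algebras entirely for this implication. From any recovery map $S$, the map $P=R_{S,\hat\sigma}$ satisfies $P(d_{\hat\rho|\hat\sigma})=d_{\rho|\sigma}$, and $R_{T,\sigma}(d_{\rho|\sigma})=d_{\hat\rho|\hat\sigma}$ by \eqref{eq:R-of-Delta}. So $P\circ R_{T,\sigma}$ recovers $d_{\rho|\sigma}$. Applying \cref{lem:observable-recovery} to $R_{T,\sigma}$, whose Petz map is $T$, gives $T(d_{\hat\rho|\hat\sigma})=TR_{T,\sigma}(d_{\rho|\sigma})=d_{\rho|\sigma}$. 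The J*-isomorphism condition is then linked to (a) only through \cref{thm:ptp-inter}.
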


In the case of sufficient *-algebras, Petz also showed that there is a simple way to test whether the equivalent conditions of \cref{introthm:algebraic-recovery} hold:
This is the case if and only if the quantum relative entropy remains invariant. The following theorem generalizes this statement to PTP maps.
We state it using the \emph{Hockey-stick divergences} \cite{hirche_quantum_2023,hirche_quantum_2024} 
\begin{align}
E_t(\rho\|\sigma) = \tr((\rho-t\sigma)^+) - (1-t)^+,
\end{align}
where $x^+$ denotes the positive part of a hermitian operator (and $\lambda^+ =\max \{0,\lambda\}$ for $\lambda\in \RR$). Moreover, we write $\rho\ll\sigma$ if the supporting subspace of $\rho$ is contained in that of $\sigma$.
\begin{introtheorem}[cp.\ \cref{thm:petz-sufficiency}]\label{introthm:petz-sufficiency}
    Let $T^*: L(\H) \to L(\K)$ be PTP map and let $\rho\ll\sigma$ be states on $\H$.
    The following are equivalent:
    \begin{enumerate}[(a)]
        \item $D(T^*\rho\|T^*\sigma) = D(\rho\|\sigma)$,
        \item $E_t(T^*\rho\|T^*\sigma) = E_t(\rho\|\sigma)$ for all $t\ge0$,
        \item $(\rho,\sigma)$ can be recovered from $(T^*\rho,T^*\sigma)$ with a PTP map $R^*:L(\K)\to L(\H)$.
    \end{enumerate}
\end{introtheorem}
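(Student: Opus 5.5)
We will prove the cycle (c)$\Rightarrow$(a)$\Rightarrow$(b)$\Rightarrow$(c), relying on \cref{introthm:algebraic-recovery} and \cref{introthm:K-generates}.

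The implication (c)$\Rightarrow$(a) follows from the data-processing inequality for PTP maps. If $R^*$ recovers the dichotomy, then
\begin{align}
D(\rho\|\sigma)=D(R^*T^*\rho\|R^*T^*\sigma)\le D(T^*\rho\|T^*\sigma)\le D(\rho\|\sigma).
\end{align}
Here we use that $D$ is monotone under PTP maps (Müller-Hermes--Reeb). The same argument gives (c)$\Rightarrow$(b), since $E_t(\rho\|\sigma)=\sup_{0\le x\le\1}\tr(x(\rho-t\sigma))-(1-t)^+$ is monotone under PTP maps: the dual $T$ maps effects to effects.

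For (b)$\Rightarrow$(c), first reduce to faithful $\sigma$ by compressing to the support of $\sigma$. This is harmless because $\rho\ll\sigma$ and $T^*$ maps $\supp\sigma$-supported operators to $T^*\sigma$-supported ones. Next, write $\hat\rho=T^*\rho$, $\hat\sigma=T^*\sigma$, $\hat p_t=\proj{\hat\rho>t\hat\sigma}$ and $p_t=\proj{\rho>t\sigma}$. Then $T(\hat p_t)$ is an effect, and
\begin{align}
\tr(T(\hat p_t)(\rho-t\sigma))=\tr(\hat p_t(\hat\rho-t\hat\sigma))=E_t(\hat\rho\|\hat\sigma)+(1-t)^+=\tr(p_t(\rho-t\sigma)).
\end{align}
So $T(\hat p_t)$ is an optimal test for $\rho-t\sigma$. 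Optimal tests are unique up to the kernel of $\rho-t\sigma$, and this kernel is nonzero only for finitely many $t$. Hence $T(\hat p_t)=p_t$ for all but finitely many $t$. By right-continuity of the spectral projections in $t$, the set of such $t$ is still enough to span $K_{(\rho,\sigma)}$.

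Now $T$ is unital and positive and maps projections $\hat p_t$ to projections $p_t$. The $\hat p_t$ therefore lie in the multiplicative domain of $T$ in the Jordan sense: a Kadison--Schwarz equality $T(x^2)=T(x)^2$ on these elements forces Jordan multiplicativity on the J*-algebra they generate. By \cref{introthm:K-generates}, that J*-algebra is $J_{(\hat\rho,\hat\sigma)}=\Jstaralg(K_{(\hat\rho,\hat\sigma)})$, and its image under $T$ is $\Jstaralg(K_{(\rho,\sigma)})=J_{(\rho,\sigma)}$. Injectivity comes from the spectral data: $\hat p_t$ is a spectral projection of $d_{\hat\rho|\hat\sigma}$ in the appropriate sense, and expectation values are preserved. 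Thus $T$ restricts to a J*-isomorphism $J_{(\hat\rho,\hat\sigma)}\to J_{(\rho,\sigma)}$, which is condition (d) of \cref{introthm:algebraic-recovery}. That theorem then supplies the Petz recovery map. The main obstacle here is justifying Jordan multiplicativity from projection-to-projection behaviour on a generating set. I would use the known fact that for UP maps $\{x\in J: T(x^2)=T(x)^2\}$ is a J*-subalgebra on which $T$ is a Jordan homomorphism (Choi/Størmer), and verify injectivity by faithfulness of the restricted states.

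For (a)$\Rightarrow$(b), use Frenkel's integral formula
\begin{align}
D(\rho\|\sigma)=\int_{-\infty}^{\infty}\frac{dt}{|t|(1-t)^2}E_t(\rho\|\sigma),
\end{align}
with the appropriate convention for $t<0$, where $E_t$ is evaluated in the symmetric form using $\sigma$ versus $\rho$. Each term $E_t$ is monotone under PTP maps, the weights are positive, and equality of the integrals therefore forces equality of the integrands for almost every $t$. Continuity of $t\mapsto E_t$ then gives equality for every $t$. For $t<0$, or equivalently for the reversed-order divergences, the same test-uniqueness argument as above only needs the $t\ge0$ branch, so (b) follows.
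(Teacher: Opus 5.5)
Your proposal is correct and is essentially the paper's proof. For (a)$\Rightarrow$(b) you use Frenkel's formula with positive weights plus continuity. For (b)$\Rightarrow$(c) you show $T\proj{\hat\rho>t\hat\sigma}=\proj{\rho>t\sigma}$ (\cref{lem:Pt-preservation}), get a J*-isomorphism $\Jstaralg(K_{(\hat\rho,\hat\sigma)})\to\Jstaralg(K_{(\rho,\sigma)})$ from the multiplicative domain and faithfulness of $T$ (\cref{cor:restriction-iso-on-K}), identify these with the minimal sufficient J*-algebras via \cref{thm:K-generates}, and conclude with \cref{thm:algebraic-recovery}. Two slips need repair: the integral formula as you wrote it diverges at $t=1$, so use $D(\rho\|\sigma)=\int_1^\infty(\frac1tE_t(\rho\|\sigma)+\frac1{t^2}E_t(\sigma\|\rho))\,dt$ and convert the reversed-order terms with $E_t(\sigma\|\rho)=tE_{1/t}(\rho\|\sigma)$ (no test-uniqueness argument is needed); and $\proj{\hat\rho>t\hat\sigma}$ is not a spectral projection of $d_{\hat\rho|\hat\sigma}$ in general, so injectivity must come, as you finally say, from faithfulness of $T$, i.e.\ $\tr(\sigma T(x))=\tr(\hat\sigma x)$ with $\hat\sigma$ faithful.
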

To prove \cref{introthm:petz-sufficiency}, we make use of Petz's theorem for CPTP maps, so our result does not yield an independent proof of Petz's result. 
However, our techniques open avenues to an independent proof. For example, an independent proof would immediately follow from a direct argument showing
\begin{equation}
    d_{\rho|\sigma} \in \Jstaralg(K_{(\rho,\sigma)}).
\end{equation}
This could, for instance, be established through an operator layer cake formula for $d_{\rho|\sigma}$, similar the operator layer cake formula for the logarithmic derivative \cite{liu_layer_2025} (see \cref{remark:layer-cake} for a detailed discussion).


Recently, a family of $f$-divergences was introduced using Hockey-stick divergences \cite{hirche_quantum_2024}, defined via
\begin{align}\label{introeq:hockey-stick-f-div}
	D_f(\rho\|\sigma) &= \int_1^\infty \big(f''(t) E_t(\rho\|\sigma) + t^{-3} f''(1/t) E_t(\sigma\|\rho)\big)\,dt, \\
	&= f(0) + \int_0^\infty f'(t) \tr(\rho \proj{\rho>t\sigma}) \,dt \label{introeq:layercake} 
\end{align}
where $f:(0,\infty)\to \RR$ is a convex, twice-differentiable function with $f(1) = 0$ and the second line is the layer-cake representation from \cite{liu_layer_2025}. The relative entropy is recovered for $f(x)= x\log x$.  
It is immediate from the proof of \cref{thm:petz-sufficiency} that \cref{introthm:petz-sufficiency} holds for any such $f$-divergence instead of the relative entropy as long as $f''(t)>0$ for all $t\geq 1$.

A consequence of \eqref{introeq:layercake} is that the quantum relative entropy can be computed from the restriction of the states $\rho,\sigma$ to the Bayes-sufficient operator system $K_{(\rho,\sigma)}$.
In fact, as far as we know, all quantum divergences which satisfy the data-processing inequality for PTP maps can be computed \emph{within} $J_{(\rho,\sigma)}$ in a sense that we explain now. 
Take, for example, the 
$\alpha$-$z$ quantum R\'enyi divergence \cite{audenaertAzRenyiRelativeEntropies2015}, defined as
\begin{align}
	D_{\alpha,z}(\rho\|\sigma) = \frac{1}{\alpha-1} \log\tr \Big(\rho^{\frac{\alpha}{2z}} \sigma^{\frac{1-\alpha}{z}} \rho^{\frac{\alpha}{2z}}\Big)^z.\label{introeq:alpha-z}
\end{align}
Setting $z=\alpha$ yields the \emph{sandwiched} quantum R\'enyi divergence $\tilde D_\alpha$ \cite{muller-lennertQuantumRenyiEntropies2013,wildeStrongConverseClassical2014}, while setting $z=1$, one obtains the usual \emph{Petz} quantum R\'enyi divergence $D_\alpha$ \cite{petzQuasientropiesStatesNeumann1985,petzQuasientropiesFiniteQuantum1986}.
Set $\omega = \frac12(\rho+\sigma)$. We can rewrite $D_{\alpha,z}$ as (assuming, without loss of generality, that $\omega$ is invertible) 
\begin{equation}
  D_{\alpha,z}(\rho\|\sigma) = \frac{1}{\alpha-1} \log \tr(\omega d^{(\alpha,z)}_{\rho\|\sigma}),\qquad 
  d^{(\alpha,z)}_{\rho\|\sigma} = \omega^{-\frac12} \Big(\rho^{\frac{\alpha}{2z}} \sigma^{\frac{1-\alpha}{z}} \rho^{\frac{\alpha}{2z}}\Big)^z \omega^{-\frac12}.
\end{equation}
As we show (cp.~\cref{lem:Lp-properties}), $d^{(\alpha,z)}_{\rho|\sigma}$ is contained in the minimal sufficient J*-algebra $J_{(\rho,\sigma)}$. 
Indeed, this is true for any operator that is a sum of symmetrized products of $\rho$ and $\sigma$ with total exponent $0$. 
Since J*-algebras are closed under functional calculus, we may, in addition, apply functions to such symmetrized products.

Building upon work by Jen\v{c}ov\'{a} \cite{jencovaPreservationQuantumRenyi2017,jencova_renyi_2018,jencova_renyi_2021}, and Hiai and Jen\v{c}ov\'{a} \cite{hiaiazRenyiDivergences2024}, we show:
\begin{introtheorem}[simplified, see Theorems \ref{thm:sandwiched-sufficiency}, \ref{thm:alpha-z-smaller-1}, and \ref{thm:alpha-z-bigger-1}] Let $\rho,\sigma$ be states on $L(\H)$ with full rank, $T^*:L(\H)\to L(\K)$ be a PTP map. The following are equivalent:
\begin{enumerate}[(a)]
    \item  $\tilde D_\alpha(\rho\|\sigma) = \tilde D_\alpha(T^*\rho\|T^*\sigma)$ for some $\alpha \in (\frac12,1)\cup(1,\infty)$.
    \item $D_{\alpha,z}(\rho\|\sigma) = D_{\alpha,z}(T^*\rho\|T^*\sigma)$ for $\alpha\in (0,1)$ and $\max\{\alpha,1-\alpha\} < z$.
    \item $D_{\alpha,z}(\rho\|\sigma) = D_{\alpha,z}(T^*\rho\|T^*\sigma)$ for $\alpha\in (1,\infty)$ and $\max\{\frac\alpha2,\alpha-1\} \leq z \leq \alpha < z+1$.
    \item $(\rho,\sigma)$ can be recovered from $(T^*\rho,T^*\sigma)$ with a PTP map $R^*:L(\K)\to L(\H)$. 
\end{enumerate}
\end{introtheorem}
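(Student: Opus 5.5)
The implication (d)$\Rightarrow$(a)–(c) is automatic: PTP data processing holds for $\tilde D_\alpha$ with $\alpha\ge\frac12$ and for the stated $(\alpha,z)$ ranges, citing the PTP monotonicity results of Jen\v{c}ov\'a and Hiai--Jen\v{c}ov\'a recalled in the introduction. Applying monotonicity to both $T^*$ and $R^*$ gives equality. The real content is (a), (b) or (c) $\Rightarrow$ (d), and here I would reduce to the algebraic criterion of \cref{introthm:algebraic-recovery}. Equality of the divergence should force
\[
T(d_{T^*\rho|T^*\sigma}) = d_{\rho|\sigma},
\]
or, equivalently, that $T$ restricts to a J*-isomorphism $J_{(T^*\rho,T^*\sigma)}\to J_{(\rho,\sigma)}$.

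The strategy follows Jen\v{c}ov\'a's proof for CPTP maps, which uses interpolation ($L_p$) norms relative to $\sigma$. Write $\tilde D_\alpha$ through the norm $\|\sigma^{\frac{1-\alpha}{2\alpha}}\rho\,\sigma^{\frac{1-\alpha}{2\alpha}}\|_\alpha$. Data processing then comes from $T^*$ acting as a contraction between the $L_\alpha(\sigma)$ and $L_\alpha(T^*\sigma)$ spaces. Its dual is the map $x\mapsto \sigma^{\frac1{2q}}\,T(\hat\sigma^{-\frac1{2q}}x\,\hat\sigma^{-\frac1{2q}})\,\sigma^{\frac1{2q}}$, a Petz-type conjugation. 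The key point is that this contraction argument uses only positivity plus Kadison--Schwarz-type inequalities that hold for positive maps on hermitian elements.

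\begin{itemize}
\item \emph{Norming elements.} Equality in the norm bound means the dual norming element is mapped by the dual of $T^*$ to a norming element. Uniform convexity of $L_p$ for $1<p<\infty$ gives uniqueness of norming elements. This yields an identity stating that $T$ maps a specific element $\hat d$, built from powers of $T^*\rho,T^*\sigma$, to the analogous element $d$ for $(\rho,\sigma)$.
\item \emph{Multiplicative domain.} By \cref{lem:Lp-properties}, $d$ lies in $J_{(\rho,\sigma)}$ and $\hat d$ in $J_{(T^*\rho,T^*\sigma)}$. A Jordan version of the multiplicative-domain argument then shows $T(\hat d^{\,2})=T(\hat d)^2$. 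The tool is Kadison's inequality $T(x^2)\ge T(x)^2$ for unital positive maps and hermitian $x$, together with equality of traces against $\sigma$.
\item \emph{Conclusion.} Hence $\hat d$ lies in the Jordan multiplicative domain of $T$. The Jordan algebra it generates with $\hat\sigma$-type elements contains $d_{T^*\rho|T^*\sigma}$, and condition (b) of \cref{introthm:algebraic-recovery} follows.
\end{itemize}

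For the $\alpha$-$z$ cases (b) and (c), I would imitate Hiai--Jen\v{c}ov\'a's variational characterizations. For $\alpha<1$, $D_{\alpha,z}$ is a supremum or infimum of trace expressions in $\rho^{\alpha/z}$ and $\sigma^{(1-\alpha)/z}$. Under the stated constraints on $z$ these involve operator concave or convex functions, whose Jensen-type inequalities hold for unital positive maps (Choi). Equality forces equality in a Jensen inequality, and operator strict concavity then puts the relevant hermitian element in the Jordan multiplicative domain. The case $\alpha>1$ with $z\le\alpha<z+1$ is handled the same way using the corresponding sandwiched-type norm representation.

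The main obstacle is the multiplicative-domain step. For positive maps there is no Stinespring dilation, and multiplicative domains are only Jordan algebras. So I must check, via Kadison--Schwarz for hermitian elements, that every identity needed is expressible through Jordan products. Such expressions include $\hat\sigma^{s}x\hat\sigma^{s}=2\{\hat\sigma^s,\{\hat\sigma^s,x\}\}-\{\hat\sigma^{2s},x\}$. This is precisely where \cref{lem:Lp-properties}, placing symmetrized products in $J$, is used.
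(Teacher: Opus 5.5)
Your direction (d)$\Rightarrow$(a)--(c) via PTP data processing is fine. For $\alpha>1$ the norming-element step is also sound, since contractivity on the positive cone suffices. It gives $T_q(\hat\mu^{1/q})=\mu^{1/q}$ with $q=\alpha/(\alpha-1)$, i.e.\ $T(\hat e)=e$ for $e=\sigma^{-\frac1{2q}}\mu^{\frac1q}\sigma^{-\frac1{2q}}$. By \cref{lem:Lp-properties}, $e$ indeed lies in $J_{(\rho,\sigma)}$.

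\textbf{Kadison step.} For $\alpha\neq2$ this $e$ is not $d_{\rho|\sigma}$, and your two bridging steps fail. First, $T$ maps $L(\hat\H)$ to $L(\H)$; it is not a self-map with an invariant state. Kadison therefore only gives $T(\hat e^2)\ge e^2$, hence $\tr(\hat\sigma\hat e^2)=\tr(\sigma T(\hat e^2))\ge\tr(\sigma e^2)$. To force $T(\hat e^2)=e^2$ you need the reverse inequality $\tr(\hat\sigma\hat e^2)\le\tr(\sigma e^2)$. That reverse inequality holds when $\hat e=R(e)$ for some UP map $R$ with $R^*\hat\sigma=\sigma$. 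This is true at $\alpha=2$, where $\hat e\propto d_{\hat\rho|\hat\sigma}=R_{T,\sigma}(d_{\rho|\sigma})$. For $\alpha\neq2$ you only know $R_{T,\sigma}(d_{\rho|\sigma})=d_{\hat\rho|\hat\sigma}$, not $R_{T,\sigma}(e)=\hat e$.

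\textbf{Conclusion step.} Even granting $\hat e\in\mathrm{MD}(T)$, $T$ is then a Jordan homomorphism only on $\Jstaralg(\hat e)$, i.e.\ on functions of $\hat e$. Reaching $d_{\hat\rho|\hat\sigma}$ needs powers of $\hat\sigma$. These are neither in $\mathrm{MD}(T)$ nor mapped to powers of $\sigma$: $T^*\sigma=\hat\sigma$ is a Schrödinger-picture relation and says nothing about $T(\hat\sigma^s)$. So the Jordan identity for $\hat\sigma^s x\hat\sigma^s$ cannot be pushed through $T$.

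\textbf{Missing idea.} The norming argument never leaves the conjugate pair $(\alpha,q)$. It converts equality for $T$ at $\alpha$ into equality for $R_{T,\sigma}$ at $q$ for an auxiliary pair (this is the computation in \cref{lem:crucial-sandwiched}), and applying it again brings you back to $\alpha$. To reach the self-dual case $\alpha=2$, the paper uses Jen\v{c}ov\'a's complex interpolation (\cref{lem:jencova1}). The KMS-Hilbert-space argument of \cref{lem:observable-recovery} then gives sufficiency of $T$ for an auxiliary pair $(\omega,\sigma)$.

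Sufficiency is transferred to $(\rho,\sigma)$ not through $\mathrm{MD}(T)$ but as follows:
\begin{enumerate}
\item Show $\rho\in L^1(J_{(\omega,\sigma)},\sigma)$ via \cref{lem:Lp-properties}. This is where universal reversibility and the product form of $\sigma$ relative to the J*-algebra enter.
\item Hence $J_{(\omega,\sigma)}$ is sufficient for $(\rho,\sigma)$, so $d_{\rho|\sigma}\in J_{(\omega,\sigma)}\subset\mathrm{Fix}(T\circ R_{T,\sigma})$ (\cref{lem:L1-sufficient-2}, \cref{cor:fixed-point-recovery}).
\end{enumerate}
Your Kadison-plus-invariant-state idea is legitimate there precisely because $T\circ R_{T,\sigma}$ is a $\sigma$-preserving self-map (\cref{prop:fixpoint}).

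\textbf{Other cases.} For $\alpha\in(\frac12,1)$, $\norm{\cdot}_\alpha$ is only a quasi-norm, so your uniform-convexity step has no analogue. The paper instead imports Jen\v{c}ov\'a's full-rank identity $T_\gamma(\hat\mu^{1/\gamma})=\mu^{1/\gamma}$ with $\gamma>1$ and reduces to the $\alpha>1$ case. Your $\alpha$-$z$ sketch via equality in Choi--Jensen runs into the same conclusion-step gap. The paper takes the Hiai--Jen\v{c}ov\'a identities ($T^*\omega_0=\hat\omega_0$, resp.\ $T_q\hat w=w$, which need no 2-positivity) and turns them into equality of a sandwiched divergence for an auxiliary pair. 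It then applies the same $L^1$ transfer.
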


These observations lend support to the conjecture in \cite{galke_sufficiency_2024} that equality of a sufficiently large set of quantum divergences implies PTP-equivalence, as is true in the commuting case, but wrong when asking for CPTP-equivalence.

\subsection{Beyond finite dimensions}

Our proofs are restricted to finite dimensions, but we believe that most of our results extend to the setting of von Neumann algebras. In fact, we believe that they generalize to the setting of JW*-algebras, the Jordan analog of von Neumann algebras \cite{hanche-olsen_jordan_1984}. 
In particular, it will be interesting to prove the sufficiency result for the quantum relative entropy in this setting. 
We plan to revisit this problem in a future publication. 
A core ingredient for our proof is Frenkel's integral formula, which has not yet been generalized to von Neumann algebras. 
The Hockey stick divergence $E_t(\omega\|\varphi)$ naturally extends to the von Neumann algebraic setting. In \cref{app:frenkel}, we prove:

\begin{introproposition}[see \cref{prop:frenkel-vna}]
    Let $M$ be an approximately finite-dimensional von Neumann algebra, and let $\omega, \varphi$ be normal states on $M$.
    Then
    \begin{equation}
        D(\omega \| \varphi) = \int_1^\oo \left(\frac1t E_t(\omega\|\varphi) + \frac1{t^2} E_t(\varphi\|\omega) \right) dt.
    \end{equation}
\end{introproposition}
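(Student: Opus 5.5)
Frenkel's formula is known in finite dimensions,
\begin{equation*}
D(\rho\|\sigma)=\int_1^\oo\Big(\tfrac1t E_t(\rho\|\sigma)+\tfrac1{t^2}E_t(\sigma\|\rho)\Big)dt .
\end{equation*}
The plan is to transfer it to an AFD von Neumann algebra $M$ by approximating with an increasing net of finite-dimensional subalgebras. Since $M$ is approximately finite-dimensional, we fix an increasing net (a sequence if $M$ has separable predual) of finite-dimensional unital *-subalgebras $M_n\subset M$ whose union is $\sigma$-weakly dense. For each $n$ we write $\omega_n=\omega|_{M_n}$ and $\varphi_n=\varphi|_{M_n}$. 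Each $M_n$ is a finite direct sum of matrix algebras, so Frenkel's formula applies to $(\omega_n,\varphi_n)$: the matrix case extends to direct sums because both sides are additive over blocks once the trace is weighted appropriately.

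\textbf{Step 1 (relative entropy side).} Araki's relative entropy satisfies the martingale convergence theorem: if $M_n\nearrow M$ with $\sigma$-weakly dense union, then $D(\omega_n\|\varphi_n)\nearrow D(\omega\|\varphi)$. This holds including the value $+\oo$, and follows from monotonicity together with lower semicontinuity of $D$ (Kosaki's variational formula / Petz–Ohya). I will cite it.

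\textbf{Step 2 (hockey-stick side).} In the von Neumann setting I define
\begin{equation*}
E_t(\omega\|\varphi)=\sup_{0\le x\le \1,\ x\in M}\,(\omega-t\varphi)(x)-(1-t)^+ .
\end{equation*}
In finite dimensions the supremum is attained at the projection $\proj{\omega>t\varphi}$, so this agrees with the trace formula. Two properties are needed.
\begin{enumerate}[(i)]
\item The map $n\mapsto E_t(\omega_n\|\varphi_n)$ is nondecreasing, since the sup is taken over a growing set.
\item It converges to $E_t(\omega\|\varphi)$. By Kaplansky density, every $0\le x\le\1$ in $M$ is a $\sigma$-strong limit of elements $0\le x_k\le \1$ lying in $\bigcup_n M_n$. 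Because $\omega-t\varphi$ is normal, the supremum over $\bigcup_n M_n$ equals the supremum over $M$.
\end{enumerate}
Property (ii) is the only genuinely infinite-dimensional input. The delicate point is that Kaplansky must be applied to the positive part of the unit ball, e.g.\ via the self-adjoint ball followed by the functional calculus $x\mapsto \min(\max(x,0),1)$, which is strongly continuous on bounded sets.

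\textbf{Step 3 (exchange limit and integral).} The integrand $\frac1tE_t(\omega_n\|\varphi_n)+\frac1{t^2}E_t(\varphi_n\|\omega_n)$ is nonnegative and nondecreasing in $n$, and converges pointwise by Step 2. The monotone convergence theorem therefore gives
\begin{equation*}
\int_1^\oo\cdots\,dt=\lim_n D(\omega_n\|\varphi_n)=D(\omega\|\varphi),
\end{equation*}
with the value $+\oo$ allowed on both sides. If $M$ has non-separable predual, the sequence is replaced by an increasing net. Monotone convergence for nets of nonnegative functions can fail in general, so here I will instead argue directly: the left-hand side is a supremum over $n$ by monotonicity, and a supremum of integrals of an increasing net bounded by the integral of the limit can be handled via the right-continuity/monotonicity of $E_t$ in $t$. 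Alternatively, I will restrict to $\sigma$-finite algebras, where a sequence suffices.

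I expect the main obstacle to be Step 1: citing exactly the right form of martingale convergence for Araki's relative entropy, including the infinite-value case and non-faithful states. If needed, I will prove it from monotonicity plus $\sigma$-weak lower semicontinuity of $D$ on the product of state spaces, applied to $\omega\circ E_n$ with suitable conditional expectations. Those expectations may not exist, so I will use lower semicontinuity via Kosaki's formula restricted to $\bigcup_n M_n$ instead.
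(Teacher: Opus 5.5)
Your proposal is correct and follows the paper's proof essentially step for step. Both arguments use Frenkel's formula on the finite-dimensional subalgebras, the approximation property of Araki's relative entropy (the paper cites the same Ohya--Petz result), monotonicity of $E_t(\omega_\alpha\|\varphi_\alpha)$ by data processing, convergence of $E_t$ by density, and monotone convergence.

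Your concern about monotone convergence for nets is a point the paper passes over without comment. Your monotonicity-in-$t$ fix works: extract an increasing sequence $\alpha_n$ along which the integrand converges at all rational $t$, then use that the integrand is nonincreasing in $t$ with continuous limit.

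Drop the fallback to $\sigma$-finite $M$, however. $\sigma$-finiteness does not give an increasing sequence of finite-dimensional subalgebras (e.g.\ $L^\infty$ of an uncountable product of two-point probability spaces), and restricting to that case would not prove the general statement anyway.
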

In the case $M = L(\H)$ with $\dim \H=\oo$, the statement was shown in \cite{jencova_recoverability_2024}, albeit with a more complicated proof. 
A proof of Frenkel's integral formula for states on general von Neumann algebras will appear in upcoming work \cite{gandalfundco,gereonundhao-chung}.

\subsection{Discussion and Outlook}
Our results support the idea \cite{buscemiComparisonQuantumStatistical2012a,galke_sufficiency_2024} that CPTP maps do not provide the right mathematical framework for thinking about sufficiency and distinguishability of quantum states, even though they certainly provide the right framework for talking about the possible physical processes in quantum theory. 

There are obvious questions left open by our results apart from those mentioned above already. 
Most importantly, in the CPTP-case it is known that \emph{approximate} equality in the data-processing inequality implies \emph{approximate} recoverability \cite{fawziQuantumConditionalMutual2015,wildeRecoverabilityQuantumInformation2015,sutterStrengthenedMonotonicityRelative2016,jungeUniversalRecoveryMaps2018,faulknerApproximateRecoveryRelative2020,faulknerApproximateRecoverabilityRelative2022}, which can be quantified in terms of the fidelity $F(\rho,\sigma) = \norm{\rho^{\frac12}\sigma^{\frac12}}_1$. 
Since the fidelity can be computed on the level of the minimal sufficient J*-algebra (see \cref{introeq:alpha-z} for $\alpha=z=\frac12$), it is only reasonable to expect that the approximate recovery result also generalizes to the PTP setting.  

We have shown that the minimal sufficient J*-algebras of dichotomies are 2-generated and that every 2-generated J*-factor is minimal sufficient for some dichotomy. 
We believe that the latter statement should hold for general J*-algebras.
\\ 

\paragraph{Note added.} 
In the independent work \cite{yamagata_quantum_2026}, the author also considers sufficiency from the point of view of positive maps and Jordan algebras.

\paragraph{Acknowledgements.} We would like to thank Wolfram Bauer, Hao-Chung Cheng, Markus Frembs, Christoph Hirche, Anna Jen\v{c}ov\'{a}, Alexander M\"uller-Hermes, René Schwonnek, and Ole Skodda, for helpful discussions. 
In particular, we thank Anna Jen\v{c}ov\'{a} for suggesting \cref{cor:J-sufficient-CE} and the corrected proof of \cref{cor:general-suff-generated-neyman-pearson}.
We thank Konrad Szyma\'nski for suggesting the argument in Remark \ref{remark:layer-cake}.

Research at Perimeter Institute and the University of Waterloo is supported in part by the Government of Canada through the Department of Innovation, Science and Economic Development and by the Province of Ontario through the Ministry of Colleges and Universities.
This work was supported by the Swedish Research Council under grant no.\ 2021-06594 while the first-named author was at Institut Mittag-Leffler in Djursholm, Sweden, during the 2026 program on Operator Algebras and Quantum Information.

\clearpage

\section{Preliminaries}\label{sec:maths}

\subsection{Basics of J*-algebras}\label{sec:basics}

If $a_1,\ldots a_n$ are operators on a Hilbert space $\H$, we denote their symmetrized product by
\begin{equation}\label{eq:symmetrized-product}
    \trip {a_1}{\ldots}{a_n} = \frac12 \big(a_1a_2\cdots a_n + a_n a_{n-1}\cdots a_1\big)
\end{equation}
While this notation is easy to confuse with that for sets, in the following, it will be clear from the context what is meant.
In the case $n=2$, this gives (half) the anti-commutator, better known as the \emph{Jordan product},
\begin{equation*}
    \jp ab = \frac12 (ab+ba).
\end{equation*}
The Jordan product is commutative, but not associative.
It is fully determined by squares in the sense that\footnote{This follows from adding the equations $(a+b)^2 = a^2 + ab+ba + b^2$ and $-(a-b)^2 = -a^2 + ab+ba - b^2$.}
\begin{equation}\label{eq:square trick}
    \jp ab = \frac14\big( (a+b)^2 - (a-b)^2 \big).
\end{equation}

\begin{definition}\label{def:JSA}
    An \emph{operator system} on a Hilbert space $\H$ is a complex subspace $J\subset L(\H)$, which is is *-invariant and contains the identity.
    A \emph{J*-algebra} is an operator system that is closed under Jordan products.%
    \footnote{\label{foot1} Our notion of a J*-algebra agrees with that of a JC*-algebra or a JW*-algebra on a finite-dimensional Hilbert space (in general, JC*- and JW*-algebra differ in their topological properties similar to C*- and W*-algebras, but the difference disappear in finite-dimensions) \cite{hanche-olsen_jordan_1984}.}
\end{definition}

Thus, a unital *-invariant subspace $J$ is a J*-algebra if and only if it is closed under squares.
In fact, it suffices to check squares of hermitian elements.
Clearly, every *-algebra $A$ on $\H$ is a J*-algebra on $\H$.
In particular, this includes $L(\H)$.
At the end of this subsection, we give examples of J*-algebras that are not *-algebras.

The structure of a J*-algebra is encoded in its hermitian part. 
An operator system $J\subset L(\H)$ is a J*-algebra if and only if its hermitian part $J_h=\{x\in J : x=x^*\}$ is a Jordan algebra of hermitian operators, i.e., a unital real vector space of hermitian operators that is closed under the Jordan product.\footnote{In operator algebra lingo, a Jordan algebra of hermitian operators on a Hilbert space is a concretely represented JC- or a JW-algebra \cite{hanche-olsen_jordan_1984}.}

If $a= a^*$ is a hermitian element in a J*-algeba $J$, then $a^n \in J$ for all $n\in \NN$.
Indeed, this follows inductively because $a^n = \jp{a^{n-1}}a$ and $a\in J$.
Therefore, J*-algebras are closed under the functional calculus.
If $a = \sum_i \lambda_i p_i$ is the spectral decomposition, then $f(a) = \sum_i f(\lambda_i) p_i$ is in $J$ for all functions $f:\Sp(a)\to \CC$.
In particular, the spectral projections lie in $J$.
Thus, a J*-algebra is spanned by its projections
\begin{equation}\label{eq:spanned-by-projections}
    J = \lin \{ p \in J \ :\  p^2 =p = p^*\}
\end{equation}

A J*-algebra $J$ on a Hilbert space $\H$ inherits a positive cone $J^+ = \{ a\in J : a\ge0 \}$ from the positive semi-definite order on $L(\H)$.
It can be characterized algebraically via
\begin{equation}\label{eq:positive-cone-squares}
    J^+ = \{ a^2 \ : \ a=a^*\in J\}.
\end{equation}
Moreover, the positive cone $J^+$ is self-dual with respect to the trace: If $a\in J$, then%
\footnote{To see this, note $\Re(a),\Im(a)\in J$. 
If $\Re(a)=\sum_i \lambda_i p_i$ is the spectral decomposition, then $0\le \tr(a p_i) = \tr(\Re(a)p_i) = \lambda_i$ shows $\Re(a)\ge0$.
A similar argument shows $\Im(a) =0$.}
\begin{equation}\label{eq:self-dual-cone}
    a \in J^+ \qquad \iff\qquad  \tr(ab) \ge0 \quad \forall b\in J^+.
\end{equation}

If $S \subset L(\H)$ is a collection of hermitian operators, we denote by $\Jstaralg(S)$ the linear hull of the identity and elements that can be written as nested Jordan products of elements of $S$.
It can be checked that $\Jstaralg(S)$ is a J*-algebra. In fact, it is the smallest J*-algebra containing $S$:
\begin{equation}
    \Jstaralg(S) = \bigcap \, \{ J \subset L(\H)\ \text{J*-algebra containing $S$.} \}.
\end{equation}
If $S = \{a_1,\ldots, a_n\}$ is a finite set, we write $\Jstaralg(a_1,\ldots,a_n)$ as a short hand.

A unital linear map $T: J_1 \to J_2$ between J*-algebras is a \emph{J*-homomorphism} if it is unital, preserves the adjoint, and the Jordan product.
A J*-homomorphism is necessarily positive.
By \eqref{eq:square trick}, to check whether a hermitian-preserving map $T$ is a J*-homomorphism, it suffices to check that 
$T(a^2) = T(a)^2$ for all $a\in J$.
A \emph{J*-isomorphism} is a bijective J*-homomorphism (the inverse is automatically a J*-homomorphism).
We write $J_1\Jcong J_2$ if a J*-isomorphism $T:J_1\to J_2$.

A useful identity is the \emph{Jordan triple product} formula, which expresses the symmetrized triple product in terms of Jordan products:
\begin{equation}
\label{eq:triple-product}
 \jp{\jp ab}c + \jp{\jp bc}a - \jp{\jp ac} b= \frac{1}{2}(abc + cba) = \trip abc,\qquad a,b,c\in L(\H).
\end{equation}
This shows that J*-algebras are closed under the triple product. 
In particular, a J*-algebra $J$ contains the product $aba=\trip aba$ for all $a,b\in J$.
There exist no corresponding formulae that express symmetrized products of four or more operators in terms of the Jordan product.
A J*-algebra $J\subset L(\H)$ that is closed under higher-order symmetrized products is called \emph{reversible}.
Reversible J*-algebras will play a crucial role in this project.

Next, we consider two classes of examples of J*-algebras.
We will return to these examples on several occasions.

\begin{example}[fixed-points of involutions]\label{exa:fixed-pts}
    Let $A$ be a *-algebra on $\H$ and let $\vartheta$ be an involution on $A$, i.e., a *-antiautomorphism with $\vartheta^2=\id$.
    Then the fixed-point space
    \begin{equation}
       A^\vartheta := \mathrm{Fix}(\vartheta) := \big\{ x \in A  \ : \ \vartheta(x)=x \big\}
    \end{equation}
    is a J*-algebra, in fact, a reversible J*-algebra, on $\H$.%
    \footnote{Indeed, if $\vartheta(a_i)=a_i$, $i=1,\ldots, n$, then 
    \begin{align*}
        \vartheta\big(2\trip{a_1}\ldots{a_n}\big) = \vartheta(a_1\cdots a_n)+\vartheta(a_n \cdots a_1) &= \vartheta(a_n)\cdots\vartheta(a_1)+\vartheta(a_1)\cdots \vartheta(a_n) 
        = 2\trip{a_1}\ldots{a_n}.
    \end{align*}}
    
    We consider concrete examples:
    \begin{enumerate}
	    \item\label{it:fixed-pts1}
            The transposition $t: a \mapsto a^t$ in the standard basis is an involution on $A=L(\CC^n)$.
            The fixed-point J*-algebra is given by the symmetric matrices:
            \begin{equation}\label{eq:symmetric-matrices}
                L(\CC^n)^t = \{ x\in M_n(\CC) \ :\ x= x^t\}.
            \end{equation}
        \item\label{it:fixed-pts2}
		In even dimensions, the direct sum decomposition $\CC^{2n}=\CC^n\oplus\CC^n$, gives rise to a \emph{symplectic involution} on $L(\CC^{2n})$, defined as 
            \begin{equation}\label{eq:symplectic-invo}
                \beta: a \mapsto - \Omega a^t \Omega,
            \end{equation}
            where $\Omega$ is the symplectic matrix, i.e., $\Omega(\xi\oplus\eta) = (-\eta)\oplus \xi$.
            The fixed-point J*-algebra is 
            \begin{equation}\label{eq:symplectic-fixed-pts}
                L(\CC^{2n})^\beta = \left\{ \begin{pmatrix} a_{11} & a_{12} \\ a_{21} & a_{22} \end{pmatrix}
                 \ : \ a_{11} = a_{22}^t,\ a_{12} = - a_{12}^t,\ a_{21}^t = -a_{21}^t \right\}
            \end{equation}
         
    \end{enumerate}
    
\end{example}

\begin{example}[Spin factors]\label{exa:spin-factors}
    Consider a family $s_1, \ldots, s_m$ of hermitian untaries on a Hilbert space $\H$, which are pairwise anti-commuting, i.e., satisfy $\jp{s_j}{s_k} = \delta_{ij}$.
    Then 
    \begin{equation}
        J = \lin \,\{\1,s_1,\ldots, s_m\}
    \end{equation}
    is a J*-algebra. J*-algebras of this form are called \emph{spin factors} (see \cite[Ch.~6]{hanche-olsen_jordan_1984}). 
    
    Let us consider an explicit example on the $n$-qubit Hilbert space $\H = (\CC^2)^{\ox n}$.
    Denoting the Pauli matrices on $\CC^2$ by $X,Y,Z$, we define the Majorana operators
    \begin{equation}\label{eq:spin-factors1}
        s_{2k+1} = Y^{\ox k} \ox X \ox \1^{\ox n-k-1}, \qquad s_{2k} = Y^{\ox k}\ox Z \ox \1^{\ox n-k-1},
    \end{equation}
    for $k=0,\ldots,n-1$.
    For instance, $s_1 = X\ox \1^{\ox n-1}$, $s_2 = Y\ox X\ox \1^{n-2}$, $s_3 = Y\ox Z \ox\1^{n-2}$, and $s_{2n} = Y^{\ox n-1}\ox Z$.
    Then the hermitian unitaries $s_1,\ldots, s_{2n}$ are pairwise anti-commuting.
    Hence, we obtain spin factors
    \begin{equation}\label{eq:spin-factors2}
        V_{2n-1} = \lin \{ \1, s_1,\ldots, s_{2n-1} \}, \qquad V_{2n} =\lin \{ \1, s_1,\ldots, s_{2n} \}.
    \end{equation}
    This construction defines a spin factor $V_n$ for each $n\in\NN$.
    A general spin factor $J$, i.e., the span of a family of pairwise anti-commuting hermitian unitaries, is J*-isomorphic to (exactly) one of these (namely the one with $n =$ the number of non-trivial generators) \cite[Prop.~6.1.5]{hanche-olsen_jordan_1984}.
\end{example}

In \cref{sec:structure-theory}, we will discuss that a general J*-algebra is J*-isomorphic to a direct sum of the examples discussed in this section.

\subsection{Multiplicative properties of positive maps}\label{sec:multiplicative-properties}

In this section we discuss multiplicative properties of positive maps. 
We say that a UP map $T$ admits a faithful invariant state if $T^*\sigma = \sigma$ for some faithful state $\sigma$. 

In contrast to completely positive, unital maps, general UP maps do not fulfill the Kadison-Schwarz inequality. However, they fulfill the Jordan-Schwarz inequality:
\begin{lemma}[Jordan-Schwarz inequality {\cite[Lem.~7.3]{stormer_positive_1963}}]\label{lem:JS-inequality}
     Let $T:L(\H)\to L(\K)$ be a unital, positive map. Then
    \begin{equation}\label{eq:JS-inequality}
        \jp{Ta^*}{Ta} \le T\jp{a^*}a , \qquad a\in L(\H).
    \end{equation}
\end{lemma}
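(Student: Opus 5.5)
The plan is to reduce to the case of a single hermitian operator and then use that a positive map restricted to a commutative C*-algebra is automatically completely positive.

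\emph{Step 1: reduction to hermitian operators.} Write $a = x+iy$ with $x,y$ hermitian. A direct computation gives
\[
\jp{a^*}{a} = x^2+y^2 .
\]
Since $T$ is positive, it preserves the adjoint, so $Ta = Tx + iTy$ with $Tx, Ty$ hermitian. The same computation then gives
\[
\jp{Ta^*}{Ta} = (Tx)^2+(Ty)^2 .
\]
Hence it suffices to prove Kadison's inequality
\[
(Tx)^2 \le T(x^2) \qquad \text{for all hermitian } x,
\]
and then add the two instances for $x$ and $y$.

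\emph{Step 2: Kadison's inequality, first route.} Let $x=\sum_i \lambda_i p_i$ be the spectral decomposition. Then $Tx = \sum_i \lambda_i T(p_i)$ and $T(x^2) = \sum_i \lambda_i^2 T(p_i)$, where $T(p_i)\ge0$ and $\sum_i T(p_i) = \1$ by unitality. Consider the block operator
\[
M = \begin{pmatrix} \sum_i \lambda_i^2 T p_i & \sum_i \lambda_i T p_i \\ \sum_i \lambda_i T p_i & \sum_i T p_i\end{pmatrix} = \sum_i \begin{pmatrix}\lambda_i^2 & \lambda_i\\ \lambda_i & 1\end{pmatrix}\ox T(p_i).
\]
Each summand is a tensor product of a positive $2\times 2$ matrix with a positive operator, so it is positive, and therefore $M\ge 0$. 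Since the lower-right block equals $\1$, positivity of $M$ is equivalent, by the Schur complement, to
\[
T(x^2) - (Tx)(Tx) \ge 0,
\]
which is exactly the desired inequality.

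\emph{Step 2: alternative route.} One can instead note that $T$ restricted to the commutative C*-algebra $C^*(x)$ is completely positive, since positive maps on commutative domains are completely positive. The Kadison–Schwarz inequality for the unital CP map $T|_{C^*(x)}$ then gives the same conclusion.

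\emph{Where the difficulty lies.} The only step requiring care is the positivity of $M$ together with the Schur complement criterion. The spectral-decomposition argument above handles this without invoking any heavy results, so I do not expect a genuine obstacle anywhere in the proof.
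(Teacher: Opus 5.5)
Your proof is correct. The decomposition $a=x+iy$ gives $\jp{a^*}{a}=x^2+y^2$ and $\jp{Ta^*}{Ta}=(Tx)^2+(Ty)^2$, which reduces everything to Kadison's inequality $(Tx)^2\le T(x^2)$. Your block-matrix/Schur-complement argument proves that inequality; it is equivalent to the identity $T(x^2)-(Tx)^2=\sum_i(\lambda_i\1-Tx)\,T(p_i)\,(\lambda_i\1-Tx)\ge 0$.

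The paper gives no proof of its own and simply cites St\o rmer. Your argument is the standard one behind that citation, made self-contained, and your first route also avoids the theorem that positive maps on commutative domains are completely positive, which the alternative route needs.
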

As in the case of a completely positive map, we collect the elements that saturate the Jordan-Schwarz in the \emph{multiplicative domain}:
\begin{definition}[{\cite{stormer_multiplicative_2007}}]
    The \emph{multiplicative domain}\footnote{In the literature, the term multiplicative domain is often reserved for UCP maps. The multiplicative domain of a UP map is called the "definite set" in \cite{stormer_multiplicative_2007,stormer_positive_2013}.} $\mathrm{MD}(T)$ of a UP map $T:L(\H)\to L(\K)$ is defined as
    \begin{equation}
        \mathrm{MD}(T)=\big\{a\in L(\H)\ :\ \jp{Ta^*}{Ta} =T\jp{a^*}a \,\big\}.
    \end{equation}
\end{definition}

If $T:L(\H)\to L(\K)$ is a UCP map, then $\mathrm{MD}(T)$ coincides with the usual multiplicative domain\footnote{By the Scwartz inequality, $T\jp{a^*}a = \jp{Ta^*}{Ta}$ implies both $T(a^*a) = T(a)^*T(a)$ and $T(aa^*) = T(a)T(a)^*$ for UCP maps.}, which is a *-subalgebra of $L(\H)$ (see \cite[Prop.~1.5.7]{brown_ozawa} or \cite[Thm.~5.7]{wolf_quantum_2012}).
The analogous statement holds for the Jordan product if $T$ is merely a UP map:

\begin{proposition}[{\cite[Prop.~2.1.7]{stormer_positive_2013}}]\label{prop:multiplicative_dom}
    Let $T:L(\H)\to L(\K)$ be a UP map.
    Then $\mathrm{MD}(T)$ is a J*-algebra and $T|_{\mathrm{MD}(T)}$ is a J*-homomorphism.
    If $a\in \mathrm{MD}(T)$ and $b\in L(\H)$, then
    \begin{equation}\label{eq:homo} 
        \jp{Ta}{Tb}=T\jp ab, \qquad T(aba) = T(a)T(b)T(a).
    \end{equation}
\end{proposition}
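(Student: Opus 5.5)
The plan is a Schwarz-type perturbation argument built on the Jordan--Schwarz inequality (\cref{lem:JS-inequality}). For $a,b\in L(\H)$ define the hermitian ``defect''
\begin{equation*}
    \Delta(a,b) = T\jp{a^*}{b} - \jp{Ta^*}{Tb}.
\end{equation*}
It is sesquilinear in $(a,b)$ (conjugate-linear in $a$) and satisfies $\Delta(b,a)=\Delta(a,b)^*$. By \cref{lem:JS-inequality}, $\Delta(a,a)\ge0$ for all $a$, and by definition $a\in\mathrm{MD}(T)$ exactly when $\Delta(a,a)=0$.

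The first and main step is to show that $\Delta(a,a)=0$ forces $\Delta(a,b)=0$ for every $b$, i.e.\ the first identity in \eqref{eq:homo}, in the form $T\jp{a^*}{b} = \jp{Ta^*}{Tb}$. Since $\Delta$ is operator-valued, Cauchy--Schwarz cannot be applied directly; instead I would pass to scalars. For any vector $\xi\in\K$, the map $(a,b)\mapsto\langle\xi,\Delta(a,b)\xi\rangle$ is a positive semidefinite sesquilinear form on $L(\H)$. The ordinary Cauchy--Schwarz inequality then gives $|\langle\xi,\Delta(a,b)\xi\rangle|^2\le\langle\xi,\Delta(a,a)\xi\rangle\langle\xi,\Delta(b,b)\xi\rangle=0$. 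Hence $\langle\xi,\Delta(a,b)\xi\rangle=0$ for all $\xi$, and polarization yields $\Delta(a,b)=0$. This is the step I expect to require care: one must check that positivity of the quadratic form $\Delta(c,c)\ge0$ indeed makes $\langle\xi,\Delta(\cdot,\cdot)\xi\rangle$ a positive semidefinite hermitian form. That holds because $\Delta(a+\lambda b,a+\lambda b)$ expands sesquilinearly. Applying the result to $a^*$ in place of $a$ requires $\mathrm{MD}(T)$ to be $*$-invariant; this follows from $\jp{a}{a^*}=\jp{a^*}{a}$, so the defining condition is symmetric under $a\mapsto a^*$. Altogether, for $a\in\mathrm{MD}(T)$ and all $b$ we get $T\jp ab=\jp{Ta}{Tb}$.

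Next, the J*-algebra structure. The set $\mathrm{MD}(T)$ is closed under addition, since $\Delta(a+c,a+c)=\Delta(a,a)+\Delta(a,c)+\Delta(c,a)+\Delta(c,c)=0$ by step one. It is also closed under scalar multiples and adjoints, and contains $\1$ because $T$ is unital. For closure under the Jordan product it suffices, by the remarks after \cref{def:JSA}, to treat a hermitian $a\in\mathrm{MD}(T)$ and show $a^2\in\mathrm{MD}(T)$. Using step one twice,
\begin{equation*}
    T(a^4)=T\jp{a}{a^3}=\jp{Ta}{T(a^3)},\qquad T(a^3)=T\jp{a}{a^2}=\jp{Ta}{T(a^2)}=(Ta)^3,
\end{equation*}
where the last equality uses $T(a^2)=(Ta)^2$. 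Hence $T(a^4)=(Ta)^4=T(a^2)^2$, so $a^2\in\mathrm{MD}(T)$. Step one applied to $b\in\mathrm{MD}(T)$ then says that $T|_{\mathrm{MD}(T)}$ preserves Jordan products; it is unital and $*$-preserving, hence a J*-homomorphism.

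Finally, the triple identity $T(aba)=T(a)T(b)T(a)$. I would use the Jordan triple formula \eqref{eq:triple-product} in the form $aba=2\jp{a}{\jp ab}-\jp{a^2}{b}$. Both $a$ and $a^2$ lie in $\mathrm{MD}(T)$, so step one applies to each term:
\begin{equation*}
    T(aba)=2\jp{Ta}{\jp{Ta}{Tb}}-\jp{(Ta)^2}{Tb}=(Ta)(Tb)(Ta),
\end{equation*}
where the last equality is the same algebraic identity read in reverse. For non-hermitian $a$, one uses $a^2\in\mathrm{MD}(T)$, which holds because $\mathrm{MD}(T)$ is closed under Jordan products and $a^2=\jp aa$.
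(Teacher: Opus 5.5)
Your proof is correct. The paper does not prove the first part of this proposition itself; it cites St\o rmer. Its only argument is the remark that the second identity in \eqref{eq:homo} follows from the first via $aba = 2\jp{\jp ab}a - \jp{a^2}b$, which is exactly your final step.

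What your proposal adds is a self-contained derivation of the cited part from \cref{lem:JS-inequality} alone. Compressing the operator-valued defect $\Delta$ to the scalar forms $\langle\xi,\Delta(\cdot,\cdot)\xi\rangle$ lets you use the ordinary Cauchy--Schwarz inequality. This shows in one stroke that $\mathrm{MD}(T)=\{a : \Delta(a,\cdot)=0\}$ is a subspace and that the bimodule identity $T\jp ab=\jp{Ta}{Tb}$ holds. Because $\Delta$ is complex sesquilinear, the argument also handles non-hermitian $a$ directly, matching the paper's definition of $\mathrm{MD}(T)$ over all of $L(\H)$. Expanding the Jordan--Schwarz inequality for $a+tb$ and letting the parameter $t\to 0$ is an equivalent way to reach the same conclusion. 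Your closure-under-squares computation $T(a^4)=(Ta)^4=T(a^2)^2$ is the standard one.

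The only point left implicit is in the triple identity for non-hermitian $a$: there you also use $T(a^2)=(Ta)^2$. This is just your step one with $b=a$.
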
%
Note that the right-hand side in \eqref{eq:homo} follows from the left-hand side since $aba = 2\jp{\jp ab}a - \jp{a^2}b$. The following simple corollary will prove to be extremely important for us.
\begin{corollary}\label{cor:projections-homo}
    Let $T:L(\H)\to L(\K)$ be a UP map and let $\{p_i\}_{i\in I}$ be a family of projections on $\H$. Suppose that $T$ maps each $p_i$ to a projection on $\K$.
    Then $T$ restricts to a surjective J*-homomorphism
    \begin{align}
        T : \Jstaralg((p_i)_{i\in I}) \to \Jstaralg((Tp_i)_{i\in I}).
    \end{align}
\end{corollary}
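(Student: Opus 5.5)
The plan is to show that each $p_i$ lies in the multiplicative domain $\mathrm{MD}(T)$, and then to use \cref{prop:multiplicative_dom}. Since $p_i$ is a hermitian projection, $\jp{p_i^*}{p_i} = p_i^2 = p_i$, so $T\jp{p_i^*}{p_i} = Tp_i$. On the other side, $T$ is positive and hence hermitian-preserving, so $T(p_i^*) = (Tp_i)^*$. Because $Tp_i$ is a projection by assumption, we get $\jp{(Tp_i)^*}{Tp_i} = (Tp_i)^2 = Tp_i$. The two sides agree, and therefore $p_i \in \mathrm{MD}(T)$ for every $i\in I$.

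By \cref{prop:multiplicative_dom}, $\mathrm{MD}(T)$ is a J*-algebra. It contains all $p_i$, so it contains the smallest J*-algebra generated by them: $\Jstaralg((p_i)_{i\in I}) \subset \mathrm{MD}(T)$. The same proposition says $T|_{\mathrm{MD}(T)}$ is a J*-homomorphism, so its restriction to this subalgebra is one as well.

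It remains to identify the image. Let $J := \Jstaralg((p_i))$. It is spanned by $\1$ and nested Jordan products of the $p_i$. Applying the J*-homomorphism $T$ to such a nested product gives the corresponding nested Jordan product of the $Tp_i$. Since $T\1=\1$, we get
\begin{equation*}
T(J) = \lin\big(\{\1\}\cup\{\text{nested Jordan products of the } Tp_i\}\big) = \Jstaralg((Tp_i)_{i\in I}).
\end{equation*}
This uses the description of $\Jstaralg(S)$ as the linear hull of $\1$ and nested Jordan products given in \cref{sec:basics}. Hence $T$ maps $J$ onto $\Jstaralg((Tp_i))$, which proves surjectivity.

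There is no serious obstacle. The only point needing care is that the definition of $\mathrm{MD}(T)$ is checked with $a^* = a$, which relies on $T$ being hermitian-preserving, and this follows from positivity.
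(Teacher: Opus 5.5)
Your proof is correct and follows the paper's argument. You use $T(p_i)^2 = T(p_i) = T(p_i^2)$ to place each $p_i$ in $\mathrm{MD}(T)$, invoke \cref{prop:multiplicative_dom} to get a J*-homomorphism on $\Jstaralg((p_i)_{i\in I})$, and identify the range by sending generators to generators; your explicit treatment of nested Jordan products and linearity spells out what the paper asserts in one line.
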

\begin{proof}
    $T(p_i)^2 = T(p_i) = T(p_i^2)$ shows $p_i\in \mathrm{MD}(T)$, which implies $\Jstaralg((p_i)_{i\in I})\subset \mathrm{MD}(T)$. 
    Thus, the restriction of $T$ to $\Jstaralg((p_i)_{i\in I})$ is a J*-homomorphism whose range must be $\Jstaralg((Tp_i))_{i\in I}$.
\end{proof}

Next, we consider fixed-point spaces of UP maps $T:L(\H)\to L(\H)$:
\begin{align}
    \mathrm{Fix}(T) := \{ a\in L(\H) : Ta = a \}.
\end{align}
We say that a state $\sigma$ is invariant under a UP map $T:L(\H)\to L(\H)$ or that $T$ is $\sigma$-preserving, if $T^*\sigma=\sigma$, i.e., if $\sigma$ is a fixed-point of the channel in the Schrödinger picture.

\begin{proposition}[{\cite[]{stormer_positive_2013}}]\label{prop:fixpoint}
    Let $T:L(\H)\to L(\H)$ be a UP map with a faithful invariant state $\sigma$.
    Then the fixed-point space $\mathrm{Fix}(T)$ is a J*-subalgebra of the multiplicative domain $\mathrm{MD}(T)$. 
\end{proposition}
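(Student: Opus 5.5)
The plan is to show first that $\mathrm{Fix}(T)\subset\mathrm{MD}(T)$. Then \cref{prop:multiplicative_dom} lets us conclude that $\mathrm{Fix}(T)$ is closed under Jordan products. Clearly $\mathrm{Fix}(T)$ is a unital subspace. It is $*$-invariant because positive maps are hermitian-preserving, so $T(a^*)=T(a)^*$.

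For the inclusion, let $a\in\mathrm{Fix}(T)$. By the Jordan-Schwarz inequality (\cref{lem:JS-inequality}),
\begin{equation*}
    c := T\jp{a^*}{a} - \jp{Ta^*}{Ta} = T\jp{a^*}a - \jp{a^*}a \ge 0 .
\end{equation*}
Pairing with the invariant state gives
\begin{equation*}
    \tr(\sigma c) = \tr(T^*\sigma\,\jp{a^*}a) - \tr(\sigma\jp{a^*}a) = 0 .
\end{equation*}
Since $\sigma$ is faithful and $c\ge0$, this forces $c=0$. Hence $T\jp{a^*}a=\jp{Ta^*}{Ta}$, which means $a\in\mathrm{MD}(T)$.

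Now take $a,b\in\mathrm{Fix}(T)$. Since $a\in\mathrm{MD}(T)$, the identity \eqref{eq:homo} gives
\begin{equation*}
    T\jp ab=\jp{Ta}{Tb}=\jp ab ,
\end{equation*}
so $\jp ab\in\mathrm{Fix}(T)$. Thus $\mathrm{Fix}(T)$ is a J*-algebra contained in the J*-algebra $\mathrm{MD}(T)$, and it is therefore a J*-subalgebra.

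The only real step is the faithfulness argument that upgrades the Jordan-Schwarz inequality to an equality on fixed points. Everything else is a direct appeal to \cref{prop:multiplicative_dom}.
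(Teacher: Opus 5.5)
Your proof is correct and follows the paper's argument essentially verbatim: you combine the Jordan--Schwarz inequality with the faithfulness of the invariant state to get $\mathrm{Fix}(T)\subset\mathrm{MD}(T)$, and then \cref{prop:multiplicative_dom} gives closure under Jordan products. The one addition is that you explicitly verify unitality and $*$-invariance of $\mathrm{Fix}(T)$, which the paper leaves implicit.
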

\begin{proof}
    Let $a\in \mathrm{Fix}(T)$. From the Jordan-Schwarz inequality we have $T\jp{a^*}a-\jp{Ta^*}{Ta} = T\jp{a^*}a - \jp{a^*}a\ge0$.
    But
    \begin{align}
        0\leq \tr(\sigma\,(T\jp{a^*}a - \jp{a^*}a)) = \tr(\sigma(\jp{a^*}a - \jp{a^*}a)) = 0.
    \end{align}
    Since $\sigma$ is faithful, we find $T\jp{a^*}a = \jp{Ta^*}{Ta} = \jp{a^*}a$. Thus, $\mathrm{Fix}(T)$ is a subset of $\mathrm{MD}(T)$, and it follows from \cref{prop:multiplicative_dom} that it is also closed under Jordan products.
\end{proof}

Next, we need the fact that the Cesaro means of a UP map converge to a UP projection onto the fixed-point space.
The strongest version of this statement was obtained in the master's thesis \cite[Prop.~3.2]{rauber}.%
\footnote{The slightly weaker version of the statement in the proof of \cite[Thm.~2.2.11]{stormer_positive_2013}, which asserts convergence only along a subnet (but works for general von Neumann algebras), will be sufficient for our purposes.
Alternatively, the statement can be shown by dualizing the corresponding statement for PTP maps in \cite[Prop.~6.3]{wolf_quantum_2012}.}

\begin{lemma}[{\cite[Prop.~3.2]{rauber}, see also \cite[Thm.~2.2.11]{stormer_positive_2013} and \cite[Prop.~6.3]{wolf_quantum_2012}}]\label{lem:cesaro-mean}
    Let $T:L(\H) \to L(\H)$ be a UP map. Then
    \begin{align}
        P = \lim_{n\to \infty} \frac{1}{n}\sum_{k=1}^n T^n
    \end{align}
    converges pointwise and
    defines an idempotent UP map onto the fixed-point space $\mathrm{Fix}(T)$. 
\end{lemma}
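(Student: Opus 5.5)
We prove that the Cesàro means converge (the sum in the statement is meant as $\sum_{k=1}^n T^k$) by a compactness-plus-uniqueness argument in the finite-dimensional space of linear maps on $L(\H)$. Put $P_n = \frac1n\sum_{k=1}^n T^k$.

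\emph{Compactness.} Each $P_n$ is unital and positive, being a convex combination of the UP maps $T^k$. A UP map satisfies $\norm{T}=\norm{T(\1)}=1$ in the operator norm by the Russo--Dye theorem. Hence the powers $T^k$ are uniformly bounded, and so is $(P_n)$. Since $L(\H)$ is finite-dimensional, $(P_n)$ lies in a compact set, and every subsequence has a convergent sub-subsequence. Every limit point is unital and positive, because these properties are closed conditions.

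\emph{Asymptotic invariance.} The key identity is
\begin{equation*}
    TP_n - P_n = P_nT - P_n = \tfrac1n\,(T^{n+1}-T).
\end{equation*}
Its norm is at most $2/n$, so it tends to $0$. Consequently every limit point $Q$ satisfies $TQ=Q=QT$. It also satisfies $QT^k=Q$ for all $k$, hence $QP_m=Q$ for every $m$. Passing to the limit along the subsequence defining another limit point $Q'$ gives $QQ'=Q$. Symmetrically $Q'Q=Q'$.

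\emph{Range of a limit point.} From $TQ=Q$ we get $\mathrm{ran}(Q)\subset \mathrm{Fix}(T)$. Conversely, if $Ta=a$, then $P_na=a$ for all $n$, so $Qa=a$. Thus $Q$ acts as the identity on $\mathrm{Fix}(T)$ and maps into $\mathrm{Fix}(T)$. Therefore $Q^2=Q$, and $Q$ is an idempotent UP map onto $\mathrm{Fix}(T)$.

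\emph{Uniqueness.} Let $Q,Q'$ be two limit points. Since $Q'x\in\mathrm{Fix}(T)$ and $Q$ is the identity there, $QQ'=Q'$. Combined with $QQ'=Q$ from the invariance step, this gives $Q=Q'$. So the bounded sequence $(P_n)$ has exactly one limit point and therefore converges, in norm and in particular pointwise.

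No deep step is needed. The main point to watch is the uniform bound $\sup_k\norm{T^k}<\infty$, which is where positivity and unitality enter through Russo--Dye.
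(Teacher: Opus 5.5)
Your proof is correct and complete, and you are right to read the summand as $T^k$ rather than $T^n$.

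The paper gives no proof of this lemma; it cites the literature instead. It notes that St\o rmer's compactness argument suffices for its purposes: that argument gives convergence only along a subnet, but it works in arbitrary von Neumann algebras. As an alternative, the paper suggests dualizing the finite-dimensional statement for PTP maps. Your argument is the finite-dimensional mean ergodic theorem, and its ingredients are:
\begin{enumerate}[(i)]
\item compactness of the bounded sequence $(P_n)$;
\item the identity $TP_n-P_n=P_nT-P_n=\frac1n(T^{n+1}-T)$;
\item uniqueness of limit points, from combining $QQ'=Q$ (via $QP_m=Q$) with $QQ'=Q'$ (via $\mathrm{ran}(Q')\subset\mathrm{Fix}(T)$).
\end{enumerate}
Step (iii) is exactly what upgrades the subnet statement to genuine convergence of $(P_n)$. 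It relies on composition being jointly continuous, which is automatic in finite dimensions but not available for point-weak$^*$ limits in the general von Neumann algebra setting.

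The spectral route to the finite-dimensional statement works differently. It takes the Jordan decomposition of $T$ and uses power-boundedness to show that eigenvalues of modulus one have trivial Jordan blocks. The Ces\`aro averages then tend to zero on every generalized eigenspace except the one at eigenvalue $1$. This identifies $P$ as the Riesz projection of $T$ at $1$ and yields the rate $\norm{P_n-P}=O(1/n)$. That extra information costs spectral theory, which your argument avoids entirely: you use only $\sup_k\norm{T^k}<\infty$ and the closedness of the set of UP maps.

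A minor remark: you do not need the sharp Russo--Dye constant. Since each $T^k$ is itself UP, any estimate $\norm{S}\le C\norm{S(\1)}$ for positive $S$ already gives the uniform bound. For example, $C=2$ follows from $-\norm{h}\1\le h\le\norm{h}\1$ applied to the real and imaginary parts of an operator.
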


\subsection{The KMS inner product and Woronowicz's maximum principle}
\label{sec:KMS}

It will be useful to consider UP maps as operators on a Hilbert space. 

Consider a faithful state $\sigma$ on $\H$. Then we can turn $L(\H)$ into a Hilbert-space using the KMS inner product%
\footnote{Note that $\ip{a}{b}_\sigma = \langle \Omega_\sigma| \bar a\otimes b |\Omega_\sigma\rangle$, where $\Omega_\sigma$ is the canonical purification of $\sigma$ on $\bar \H\otimes \H$.}
\begin{align}
    \ip{a}{b}_\sigma = \tr(\sigma^{\frac12} a^* \sigma^{\frac12} b). 
\end{align}
The KMS inner product is useful because it respects the order interval of the reference state $\sigma$. 
To see this, we define $\Gamma_\sigma =\sigma^{\frac12}(\placeholder)\sigma^{\frac12}: L(\H)\to L(\H)$.
Then $\Gamma_\sigma$ is an isomorphism between the order intervals $[0,\sigma]_{L(\H)}$ and $[0,\1]_{L(\H)}$, where $[0,a]_{L(\H)}= \{ b \in L(\H) : 0\leq b\leq a\}$, and we have
\begin{equation}
    \ip ab_\sigma = \tr(a^* \Gamma_\sigma(b)) =\tr(\Gamma_\sigma(a^*)b), \qquad a,b\in L(\H).
\end{equation}
From the above discussion, it is clear that $\ip{a}{b}_\sigma \geq 0$ whenever $a,b\geq 0$. 
Woronowicz's maximum principle shows that the above properties uniquely single out $\ip{a}{b}_\sigma$:
\begin{theorem}[{\cite[Thms.~1.1 \& 1.2]{woronowicz_selfpolar_1974}}]\label{thm:woronowicz} 
    Let $s:L(\H)\times L(\H) \to \CC$ be a sesquilinear form such that (i) $s(a,a)\geq 0$ for all $a\in L(\H)$, (ii) if $a,b\geq 0$, then $s(a,b)\geq 0$, and (iii) $s(\1,a) = \tr(\sigma a)$. Then
    \begin{align}
        s(a,a) \leq \ip{a}{a}_\sigma, \qquad a\in L(\H). 
    \end{align}
    If, in addition, for every $\omega\in [0,\sigma]$, there is an $a\in L(\H)$ with $\tr(\omega b)= s(a,b)$ for all $b\in L(\H)$, then $s(a,b) = \ip ab_\sigma$.
\end{theorem}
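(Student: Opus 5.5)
We prove the two parts of \cref{thm:woronowicz} separately. The first is a Cauchy--Schwarz argument for the form $s$ combined with the order structure. The second follows from the first by a polarization argument.

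For the inequality, write $a=\sigma^{-\frac12}\,\Gamma_\sigma(a)\,\sigma^{-\frac12}$. We would first reduce to hermitian $a$. By (ii) and sesquilinearity, $s$ is real on hermitian pairs and hermitian symmetric. Then $s(a,a)=s(x,x)+s(y,y)$ for $a=x+iy$, since the cross terms cancel. The same decomposition holds for $\ip\cdot\cdot_\sigma$.

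For hermitian $a$ we aim to use positivity (ii) in a ``$2\times 2$'' form. For $b,c\ge0$ and real $\lambda$, positivity (i) gives $s(b-\lambda c,b-\lambda c)\ge0$, and (ii) gives $s(b,c)\ge0$. The key estimate we want is
\begin{equation*}
s(a,a)\le s(\1,\,\phi(a)) = \tr(\sigma\,\phi(a))
\end{equation*}
for a suitable positive operator $\phi(a)$ that realizes $\ip aa_\sigma$. The natural route is Woronowicz's self-polar idea. Consider the positive linear functional $\omega_a:b\mapsto s(a,b)$ for $a\ge0$. Cauchy--Schwarz gives $\abs{s(a,b)}^2\le s(a,a)s(b,b)$. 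Combined with (ii) and (iii), this shows that $b\mapsto s(a,b)$ is dominated by $\tr(\sigma\,\cdot)$ in the sense $0\le s(a,b)\le\norm a\tr(\sigma b)$ for $b\ge0$. Hence $s(a,\cdot)=\tr(\omega\,\cdot)$ with $0\le\omega\le\norm a\sigma$.

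This gives a positive map $\Phi:L(\H)\to L(\H)$, $a\mapsto\omega_a$, with $\Phi(\1)=\sigma$ and $s(a,b)=\tr(\Phi(a)^* b)$. Positivity of the form then says that $\Phi\circ\Gamma_\sigma^{-1}$ is a positive, unital, self-adjoint contraction with respect to $\ip\cdot\cdot_\sigma$. A Kadison/Jordan--Schwarz type estimate (\cref{lem:JS-inequality}) should then give $s(a,a)\le\ip aa_\sigma$.

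The main obstacle is this last contraction step. Concretely, we need $\ip{a}{\Phi'(a)}_\sigma\le\ip aa_\sigma$ with $\Phi'=\Gamma_\sigma^{-1}\Phi$, where $\Phi'$ is positive and unital. We would first try to show that $\Phi'$ is self-adjoint for the KMS inner product and has operator norm at most $1$ there. The norm bound should follow from unitality and positivity via the standard interpolation between $L^\infty$ and $L^1(\sigma)$, using the symmetric (KMS) embedding. If that fails, we would fall back on Woronowicz's original self-polar argument.

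For the equality statement, the additional hypothesis says that $\Phi$ maps onto the whole order interval $[0,\sigma]$. Combined with the maximality just shown, this forces $\Phi'$ to be an isometry on $L(\H)$ with respect to $\ip\cdot\cdot_\sigma$. A positive, unital contraction that saturates the bound on every element is the identity, by comparing $s(a,a)=\ip aa_\sigma$ and polarizing. Hence $s=\ip\cdot\cdot_\sigma$.
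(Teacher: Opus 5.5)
The paper gives no proof of this statement; it cites Woronowicz. Your proposal therefore has to stand on its own, and as written it has a gap in each part.

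\textbf{First part.} Your reduction is sound up to the crux. With $T=\Gamma_\sigma^{-1}\circ\Phi$ (note the order: $\Phi\circ\Gamma_\sigma^{-1}$ is not unital) you get a UP map with $s(a,b)=\ip{Ta}{b}_\sigma$. Hermiticity of $s$ makes $T$ self-adjoint for $\ip\cdot\cdot_\sigma$, (i) makes it positive semidefinite there, and $T\1=\1$ together with self-adjointness gives $T^*\sigma=\sigma$.

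However, the sentence ``positivity of the form then says that \dots\ is a contraction'' asserts exactly what has to be proved. Indeed, the paper derives KMS-contractivity of $\sigma$-preserving UP maps (\cref{cor:UP-contraction}) from this theorem. Applying \cref{lem:JS-inequality} directly to a hermitian $a$ only gives the GNS-type bound $\tr(\sigma (Ta)^2)\le\tr(\sigma a^2)$, which does not compare $\ip{a}{Ta}_\sigma$ with $\ip aa_\sigma$.

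The missing idea is to use self-adjointness spectrally:
\begin{itemize}
\item Since $T(x^*)=T(x)^*$ and the eigenvalues are real, each eigenspace is $*$-invariant. Hence it contains a hermitian $h\neq0$ with $Th=\lambda h$.
\item Jordan--Schwarz gives $\lambda^2h^2=(Th)^2\le T(h^2)$.
\item Pairing with $\sigma$ gives $\lambda^2\tr(\sigma h^2)\le\tr(\sigma T(h^2))=\tr(\sigma h^2)$, so $\lambda\le1$.
\end{itemize}
Together with (i), $0\le T\le\id$ on the KMS space, hence $s(a,a)=\ip{a}{Ta}_\sigma\le\ip aa_\sigma$.

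Your interpolation route would also work, using Russo--Dye at $p=\infty$ and the fact that $\Gamma_\sigma T\Gamma_\sigma^{-1}$ is a trace-preserving positive map at $p=1$. But you did not carry it out, and it is much heavier.

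\textbf{Equality part.} This does not go through. Surjectivity of $\Phi$ onto $[0,\sigma]$ with \emph{unrestricted} preimages $a\in L(\H)$ only shows that $T$ is invertible, not that it is an isometry.

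In fact the statement read literally is false for $\dim\H\ge2$. The form $s(a,b)=\tfrac12\ip ab_\sigma+\tfrac12\overline{\tr(\sigma a)}\tr(\sigma b)$, i.e.\ $T=\tfrac12(\id+\tr(\sigma\,\cdot\,)\1)$, satisfies (i)--(iii). Every $\omega\in[0,\sigma]$ equals $s(a,\cdot)$ with $a=2\sigma^{-1/2}\omega\sigma^{-1/2}-\tr(\omega)\1$. Yet $s(a,a)=\tfrac12\ip aa_\sigma$ whenever $\tr(\sigma a)=0$, so $s\neq\ip\cdot\cdot_\sigma$.

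The hypothesis needs Woronowicz's self-polarity: the preimage can be chosen with $0\le a\le\1$. With that, the argument runs as follows.
\begin{itemize}
\item Write $\omega=\Gamma_\sigma(h)$ with $0\le h\le\1$. Then $T(a)=h$, so $T$ maps the unit interval $[0,\1]$ onto itself.
\item Since $[0,\1]$ spans $L(\H)$, $T$ is bijective. Its inverse $T^{-1}$ is again unital positive, $\sigma$-preserving, KMS-self-adjoint and positive semidefinite.
\item The first part applied to $T^{-1}$ gives $T^{-1}\le\id$. Together with $T\le\id$ this forces $T=\id$, i.e.\ $s=\ip\cdot\cdot_\sigma$.
\end{itemize}
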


We note that Woronowicz's theorem holds in much more generality than stated here (see also \cite{haagerup_tomita-takesaki_1984}). 
A direct consequence is that UP maps are contractions relative to the KMS inner product:

\begin{corollary}\label{cor:UP-contraction}
    Let $\sigma, \hat\sigma$ be faithful states on $\H$ and $\hat \H$, respectively, and $T:L(\hat \H) \to L(\H)$ be a UP map such that $T^*\sigma = \hat \sigma$.
    Then 
    \begin{align}
        \ip{Ta}{Ta}_\sigma \leq \ip{a}{a}_{\hat\sigma}, \qquad a\in L(\hat\H).
    \end{align} 
\end{corollary}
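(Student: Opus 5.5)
We will derive this from Woronowicz's maximum principle (\cref{thm:woronowicz}) by pulling the KMS form of $\sigma$ back along $T$. Concretely, we define the sesquilinear form
\begin{equation*}
    s(a,b) := \ip{Ta}{Tb}_\sigma = \tr\big(\sigma^{\frac12} (Ta)^* \sigma^{\frac12}\, Tb\big), \qquad a,b\in L(\hat\H).
\end{equation*}
Sesquilinearity is clear, since $T$ is linear and preserves adjoints (positive maps are hermitian-preserving). The goal is then to check hypotheses (i)--(iii) of \cref{thm:woronowicz} for $s$, with $\hat\sigma$ playing the role of the reference state on $\hat\H$. The conclusion $s(a,a)\le \ip aa_{\hat\sigma}$ is exactly the claim.

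Next we verify the hypotheses in order.
\begin{enumerate}[(i)]
    \item $s(a,a) = \ip{Ta}{Ta}_\sigma \ge 0$, because the KMS form is an inner product.
    \item If $a,b\ge0$, then $Ta,Tb\ge0$ by positivity of $T$. Hence $s(a,b) = \tr(\sigma^{\frac12}Ta\,\sigma^{\frac12}\,Tb)\ge0$, since $\sigma^{\frac12}Ta\,\sigma^{\frac12}\ge0$ and the trace of a product of two positive operators is nonnegative.
    \item By unitality, $T\1=\1$. Therefore
    \begin{equation*}
        s(\1,b) = \tr(\sigma^{\frac12}\sigma^{\frac12}\,Tb) = \tr(\sigma\, Tb) = \tr(T^*\sigma\, b) = \tr(\hat\sigma b).
    \end{equation*}
\end{enumerate}
Applying the first part of \cref{thm:woronowicz} on $L(\hat\H)$ with reference state $\hat\sigma$ then gives $\ip{Ta}{Ta}_\sigma\le\ip aa_{\hat\sigma}$.

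There is no real obstacle. The only point to watch is bookkeeping: Woronowicz's theorem must be applied on the domain algebra $L(\hat\H)$ with the faithful state $\hat\sigma = T^*\sigma$, not on $L(\H)$. Faithfulness of $\hat\sigma$ is part of the assumptions and is needed for $\ip{\cdot}{\cdot}_{\hat\sigma}$ to be the KMS inner product to which the theorem refers. We only need the inequality part of the theorem, not the uniqueness clause.
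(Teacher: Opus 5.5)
Your proposal is correct and is essentially the paper's proof. The paper also sets $s(a,b)=\ip{Ta}{Tb}_\sigma$ and applies \cref{thm:woronowicz} on $L(\hat\H)$ with reference state $\hat\sigma$, stating only the check $s(\1,a)=\tr(\sigma Ta)=\tr(\hat\sigma a)$; your verification of hypotheses (i) and (ii) fills in what the paper leaves implicit.
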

\begin{proof}
    Set $s(a,b) = \ip{Ta}{Tb}_\sigma$. Then the properties required for \cref{thm:woronowicz} are fulfilled since $s(\1,a) = \tr(\sigma Ta) = \tr(\hat \sigma a)$. 
\end{proof}

\subsection{Conditional expectations. I}
\label{sec:CEs}

Consider a UP map $E:L(\H)\to L(\H)$ that is idempotent, i.e., $E^2 = E$, and has a faithful invariant state $\sigma$: $E^*\sigma = \sigma$.
Since $E$ is idempotent, its range $J = \mathrm{Ran}(T)$ coincides with its fixed-point space $\mathrm{Fix}(E)$. 
Thus, by \cref{prop:fixpoint}, the range is a J*-algebra.
Generally, we define:

\begin{definition}
    Let $J \subset L(\H)$ be a J*-algebra.
    A \emph{conditional expectation} onto $J$ is an idempotent UP map $E:L(\H)\to J$ with $J = \mathrm{Ran}(E)$. If $\sigma$ is a state on $\H$, we call $E$ $\sigma$-preserving and $\sigma$ $E$-invariant if $E^*\sigma = \sigma$.
\end{definition}

Clearly, for a conditional expectation $E:L(\H)\to J$, the fixed-point space $J = \mathrm{Fix}(E)$ is contained in the multiplicative domain. 
It then follows from \cref{prop:multiplicative_dom} that 
\begin{align}\label{eq:bimodule-CE}
    E\jp ab = \jp{Ea}b ,\quad E(bab) = bE(a)b, \quad a\in L(\H),\  b\in J. 
\end{align}
If $E$ admits a faithful invariant state $\sigma$, then it must be faithful.
Indeed, if $a\geq 0$ and $Ea=0$, then $0 = \tr(\sigma Ea) = \tr(\sigma a)$ gives $a=0$.
Conversely, a faithful conditional expectation $E$ admits a faithful invariant state. 
Since we assume the Hilbert space to be finite-dimensional, we can simply choose $\sigma = E^*\tau$, where $\tau = \frac{\1}{\tr(\1)}$ denotes the maximally mixed state.

A conditional expectation is called \emph{trace-preserving} if $\tr(Ea) = \tr(a)$, $a\in L(\H)$.
This is the case if and only if $E^*\1=\1$, or, equivalently, if $E$ is $\tau$-preserving.

\begin{example}[continues=exa:fixed-pts]
    A trace-preserving conditional expectation onto the fixed-point J*-algebra $A^\vartheta$ is given by
    \begin{equation}
        E = \frac12 (\id +\, \vartheta) : A \to A^\theta.
    \end{equation}
\end{example}

Trace-preserving conditional expectations onto J*-algebras always exist (see \cite{edwards_conditional_1986} for an infinite-dimensional version):
\begin{lemma}
    If $J$ is a J*-algebra on $\H$, then there exists a unique trace-preserving conditional expectation $E$ onto it.
    For $a\in L(\H)$, $E(a)$ is the element of $J$ defined by
    \begin{equation}\label{eq:TPCE}
        \tr(E(a)b) = \tr(ab), \qquad b\in J.
    \end{equation}
    If $J_0\subset J$ is a J*-subalgebra with trace-preserving conditional expectation $E_0$, then
    \begin{equation}\label{eq:TPCE-factorizes}
        E_0 \circ E = E_0.
    \end{equation}
\end{lemma}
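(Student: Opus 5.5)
The natural candidate for $E$ is the orthogonal projection onto $J$ with respect to the Hilbert--Schmidt inner product $\ip ab_{HS}=\tr(a^*b)$. Since $J$ is $*$-invariant, this projection is $*$-preserving. It is characterized exactly by \eqref{eq:TPCE}: $E(a)\in J$ and $\tr((a-E(a))b)=0$ for all $b\in J$. Here we use that $J=J^*$, so orthogonality against $b^*$ for all $b\in J$ is the same as the bilinear condition. Existence and uniqueness of $E(a)$ satisfying \eqref{eq:TPCE} is therefore just the orthogonal decomposition $L(\H)=J\oplus J^\perp$. Idempotency and $\mathrm{Ran}(E)=J$ are immediate. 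Unitality holds because $\1\in J$. Trace preservation follows from taking $b=\1$ in \eqref{eq:TPCE}.

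The only non-trivial property is positivity. Let $a\ge0$. Then for every $b\in J^+$ we have $\tr(E(a)b)=\tr(ab)\ge0$. By the self-duality of the cone \eqref{eq:self-dual-cone}, this gives $E(a)\in J^+$. This is the key step, and it is where the J*-algebra structure enters, since self-duality rests on $J$ being spanned by its spectral projections. Thus $E$ is a trace-preserving conditional expectation.

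For uniqueness, let $E'$ be any trace-preserving conditional expectation onto $J$. Then $J=\mathrm{Fix}(E')$, and $E'$ is faithful with the invariant faithful state $\tau$. So \eqref{eq:bimodule-CE} holds: $E'\jp ab=\jp{E'a}b$ for $b\in J$. Taking traces and using trace preservation gives
\[
\tr(ab)=\tr(\jp ab)=\tr(E'\jp ab)=\tr(\jp{E'a}b)=\tr(E'(a)b).
\]
Hence $E'(a)$ satisfies \eqref{eq:TPCE}, and $E'=E$ by the uniqueness of the orthogonal projection.

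Finally, for \eqref{eq:TPCE-factorizes}, fix $b\in J_0\subset J$. Applying the defining relation twice,
\[
\tr(E_0(E(a))b)=\tr(E(a)b)=\tr(ab).
\]
So $E_0(E(a))\in J_0$ satisfies the defining property of $E_0(a)$, and uniqueness gives $E_0\circ E=E_0$.

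The main obstacle is the positivity of $E$. It is handled entirely by self-duality \eqref{eq:self-dual-cone}, and the remaining steps are linear algebra.
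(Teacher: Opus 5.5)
Your proof is correct and follows the paper's argument essentially step for step: existence from the trace inner product on $J$, positivity from self-duality of $J^+$, uniqueness from the bimodule property \eqref{eq:bimodule-CE} combined with trace preservation, and $E_0\circ E=E_0$ from the defining relation. Your appeal to faithfulness in the uniqueness step is harmless but unnecessary, since $J=\mathrm{Fix}(E')\subset\mathrm{MD}(E')$ holds for any conditional expectation onto a J*-algebra.
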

\begin{proof}
    We define $E$ through \eqref{eq:TPCE}.
    This is well-defined because the trace induces an inner product on $J$.
    By definition, $E$ is a linear map $E:L(\H)\to J$ with $E |_J = \id_J$.
    We check positivity. If $0\le a\in L(\H)$, then \eqref{eq:TPCE} gives $\tr(E(a)b) \ge0$ for $0\le b\in J$.
    By \eqref{eq:self-dual-cone}, this entails $E(a)\ge0$.
    Therefore, $E$ is a conditional expectation.
    We check uniqueness. If $F$ is a trace-preserving conditional expectation onto $J$, $a\in L(\H)$, and $b\in J$, then \eqref{eq:bimodule-CE} implies $\tr(F(a)b) = \tr(\jp{Fa}b) = \tr(F\jp ab) = \tr(\jp ab) = \tr(ab) = \tr(E(a)b)$.
    Hence, we have $E=F$.
    It remains to check \eqref{eq:TPCE-factorizes}.
    If $a\in L(\H)$, then  $\tr( E_0(E(a))b) = \tr(E(a)b) = \tr(ab)=\tr(E_0(a)b)$ for all $b\in J_0$.
\end{proof}
We will later frequently use the fact that trace-preserving conditional expectations are self-dual, i.e., $E=E^*$:
\begin{align}
    \tr(E(a)b) = \tr(E(E(a) b)) = \tr(E(a) E(b)) = \tr(E(a E(b) )) = \tr(a E(b)), \quad a,b\in L(\H),
\end{align}
where we used \eqref{eq:bimodule-CE}.
In particular, if $\sigma$ is a state on $\H$, then $E^*\sigma = E\sigma \in J$ is also a state on $\H$, which assigns the same expectation values to operators in $J$.

The following lemma follows an argument by Haagerup and St\o rmer \cite{haagerup_positive_1995}.
\begin{lemma}\label{lem:CE-selfadjoint}
    Let $\sigma$ be a faithful state, $J\subset L(\H)$ a J*-algebra with $\sigma$-preserving conditional expectation $F:L(\H)\to J$. Then 
    \begin{equation}
        \ip{Fa}{b}_\sigma = \ip{a}{Fb}_\sigma, \qquad a,b\in L(\H).
    \end{equation}
\end{lemma}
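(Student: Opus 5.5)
We show that $F$ is a contraction for the KMS norm and then use that an idempotent contraction on a Hilbert space is an orthogonal projection, hence self-adjoint. Throughout we regard $L(\H)$ as a Hilbert space with the KMS inner product $\ip{\cdot}{\cdot}_\sigma$; self-adjointness of $F$ there is exactly the claim.

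\emph{Step 1 (contraction).} Since $F$ is UP with $F^*\sigma=\sigma$, \cref{cor:UP-contraction} applies with $\hat\sigma=\sigma$ and $T=F$. It gives $\ip{Fa}{Fa}_\sigma\le\ip aa_\sigma$ for all $a$, so $\norm{F}\le 1$ in the KMS norm.

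\emph{Step 2 (idempotent contraction is orthogonal).} We use the standard fact that if $P^2=P$ and $\norm P\le1$ on a Hilbert space, then $\ker P\perp\mathrm{Ran}\,P$. Take $x\in\mathrm{Ran}\,P$, $y\in\ker P$ and $\lambda\in\CC$. Then $P(x+\lambda y)=x$, so $\norm x^2\le\norm{x+\lambda y}^2$. Expanding gives
\begin{equation*}
    0\le 2\Re\big(\bar\lambda\ip yx\big)+\abs\lambda^2\norm y^2 .
\end{equation*}
Choosing $\lambda=-\varepsilon\ip yx$ and letting $\varepsilon\to0^+$ forces $\ip yx=0$. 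Hence $P$ is the orthogonal projection onto its range, and in particular $P$ is self-adjoint.

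\emph{Step 3 (conclusion).} Applying Step 2 to $P=F$ yields
\begin{equation*}
    \ip{Fa}{b}_\sigma=\ip{Fa}{Fb}_\sigma=\ip a{Fb}_\sigma, \qquad a,b\in L(\H),
\end{equation*}
which is the statement.

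The only step with real content is Step 1, and it is supplied entirely by Woronowicz's maximum principle through \cref{cor:UP-contraction}. There is no genuine obstacle; the one thing to check is that the corollary's hypotheses are met, namely $\hat\H=\H$ and $F^*\sigma=\sigma$ with $\sigma$ faithful, and both hold here.
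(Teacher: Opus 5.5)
Your proposal is correct and matches the paper's proof: both use \cref{cor:UP-contraction} (with $\hat\sigma=F^*\sigma=\sigma$) to show that $F$ is a KMS-contraction, and then observe that an idempotent contraction on a Hilbert space has range orthogonal to its kernel and is therefore self-adjoint. The only difference is cosmetic. The paper's footnote picks $\lambda=\ip ba_\sigma/\ip bb_\sigma$ directly, whereas you take $\lambda=-\varepsilon\ip yx$ and let $\varepsilon\to0^+$.
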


\begin{proof}       
    By \cref{cor:UP-contraction}, we have $\ip{Fa}{Fa}_{\sigma} \leq \ip{a}{a}_{F^* \sigma} = \ip{a}{a}_\sigma$, $a\in L(\H)$.
    Thus $F$ is a contraction for the KMS inner product. 
    But any idempotent contraction on a Hilbert space is hermitian.\footnote{We check that the range of $F$ is orthogonal to its null space, which shows the claim since $F$ is idempotent. Let $a=Fa$ be in the range of $F$ and $0\neq b$ be in the null space of $F$.  
    Since $F$ is a contraction, we have $\norm{a}_\sigma = \norm{F(a-\lambda b)}_\sigma \leq \norm{a-\lambda b}_\sigma$, where $\norm{a}_\sigma^2 := \ip{a}{a}_\sigma$. Set $\lambda = \ip{b}{a}_\sigma/\ip{b}{b}_\sigma$. Then
    \begin{align*}
        \norm{a}_\sigma^2 \leq \norm{a- \tfrac{\ip{b}{a}_\sigma}{\ip{b}{b}_\sigma}b}_\sigma^2 = \norm{a}_\sigma^2 - \tfrac{|\ip{a}{b}_\sigma|^2}{\ip{b}{b}_\sigma}.
    \end{align*}
    }
\end{proof}

Using this result, we can express a state-preserving conditional expectation in terms of the trace-preserving conditional expectation. 
The formula below was also noted by Jen\v{c}ov\'{a} \cite{jencova_quantum_2010} for sufficient *-algebras (see also \cite{yamagata_quantum_2026}).

\begin{corollary}\label{cor:uniquenes-CE}
     Let $\sigma$ be a faithful state, and $J\subset L(\H)$ a J*-algebra with $\sigma$-preserving conditional expectation $F:L(\H)\to J$. Then $F$ is unique and can be expressed as
	\begin{align}\label{eq:state-preserving-CE}
         F(a) = (E\sigma)^{-\frac12}\, E(\sigma^{\frac12} a \sigma^{\frac12})\, (E\sigma)^{-\frac12},
     \end{align}
     where $E:L(\H)\to J$ is the trace-preserving conditional expectation.
\end{corollary}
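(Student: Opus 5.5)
The plan is to combine the KMS self-adjointness of $F$ from \cref{lem:CE-selfadjoint} with the trace-duality characterization \eqref{eq:TPCE} of the trace-preserving conditional expectation $E$. The candidate formula will come out of a direct computation.

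First, I will rewrite $\ip{a}{b}_\sigma = \tr(\Gamma_\sigma(a^*)\,b)$. Then I apply \cref{lem:CE-selfadjoint} with $b\in J$, so that $Fb=b$. For arbitrary $a\in L(\H)$ this gives
\begin{equation*}
  \tr\big(\Gamma_\sigma((Fa)^*)\,b\big) = \ip{Fa}{b}_\sigma = \ip{a}{b}_\sigma = \tr\big(\Gamma_\sigma(a^*)\,b\big), \qquad b\in J .
\end{equation*}
Replacing $a$ by $a^*$ and using that $F$ is hermitian-preserving (it is positive), I obtain $\tr(\Gamma_\sigma(Fa)\,b) = \tr(\Gamma_\sigma(a)\,b)$ for all $b\in J$. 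By \eqref{eq:TPCE}, this says exactly that
\begin{equation*}
  E\big(\sigma^{\frac12}(Fa)\sigma^{\frac12}\big) = E\big(\sigma^{\frac12}a\sigma^{\frac12}\big).
\end{equation*}

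The key step, and the main obstacle, is to evaluate the left side. Since $Fa\in J$, I need $E(\sigma^{\frac12} x \sigma^{\frac12})$ for $x\in J$ in terms of $E\sigma$ and $x$. The bimodule property \eqref{eq:bimodule-CE} only gives $E(bcb)=bE(c)b$ for $b\in J$, and $\sigma^{\frac12}$ need not lie in $J$, so this does not apply directly. My first attempt is to use $\tr(E(\sigma^{\frac12}x\sigma^{\frac12})b) = \tr(\sigma^{\frac12}x\sigma^{\frac12}b) = \ip{x^*}{b}_\sigma$. I would then try to show that, restricted to $J$, the KMS form of $\sigma$ coincides with the KMS form of $E\sigma$, i.e.\ $\ip{x}{b}_\sigma = \ip{x}{b}_{E\sigma}$ for $x,b\in J$. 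For this I apply Woronowicz's \cref{thm:woronowicz} on the algebra level: the form $s(x,b)=\ip{Fx}{Fb}_\sigma$ on $L(\H)$ satisfies (i)–(iii) with reference state $\sigma$. A cleaner route is to compare both forms on $J$ using the identity $\tr(E\sigma\, y)=\tr(\sigma y)$ for $y\in J$, together with $(E\sigma)^{\frac12}\in J$ and the triple-product closure of $J$. That yields $\tr((E\sigma)^{\frac12} x (E\sigma)^{\frac12} b)=\tr(E\sigma\,\{(E\sigma)^{-\frac12}\ldots\})$-type identities. If this comparison fails, the fallback is to use the uniqueness clause of \cref{thm:woronowicz} applied to $s(x,b)=\ip{F\cdot}{F\cdot}$-type forms.

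Granting $E(\sigma^{\frac12}x\sigma^{\frac12}) = (E\sigma)^{\frac12}x(E\sigma)^{\frac12}$ for $x\in J$, I finish by conjugating with $(E\sigma)^{-\frac12}$, which lies in $J$ by functional calculus and exists because $E$ is faithful, so $E\sigma$ is invertible. This yields \eqref{eq:state-preserving-CE}. Uniqueness is then immediate, since the right side of \eqref{eq:state-preserving-CE} does not depend on $F$.
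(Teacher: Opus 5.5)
\textbf{There is a genuine gap: the key identity is granted, not proved.} Your reduction is sound. \cref{lem:CE-selfadjoint} together with \eqref{eq:TPCE} gives $E(\sigma^{\frac12}(Fa)\sigma^{\frac12})=E(\sigma^{\frac12}a\sigma^{\frac12})$. Since $(E\sigma)^{\frac12}x(E\sigma)^{\frac12}\in J$ for $x\in J$ (triple-product closure), your remaining claim $E(\sigma^{\frac12}x\sigma^{\frac12})=(E\sigma)^{\frac12}x(E\sigma)^{\frac12}$ is exactly equivalent to $\ip{x}{b}_\sigma=\ip{x}{b}_{E\sigma}$ for all $x,b\in J$. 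This is a slightly different second half from the paper's: the paper uses KMS-self-adjointness of $E$ relative to $E\sigma$, while you use trace duality. Both are fine. However, you never prove this identity, and it is the entire content of the corollary.

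\textbf{Your suggested routes do not reach it.} The ``cleaner route'' cannot work in principle. It uses only $\tr(E\sigma\,y)=\tr(\sigma y)$ for $y\in J$, $(E\sigma)^{\frac12}\in J$, and triple-product closure, and these hold for every faithful $\sigma$. The identity, by contrast, fails when no $\sigma$-preserving $F$ exists. Take $J$ the diagonal matrices in $L(\CC^2)$, $\sigma$ non-diagonal, and $e_{11}$ the first diagonal matrix unit. Then $\ip{e_{11}}{e_{11}}_\sigma=((\sigma^{\frac12})_{11})^2<\sigma_{11}=\ip{e_{11}}{e_{11}}_{E\sigma}$. Your Woronowicz attempt with $s(x,b)=\ip{Fx}{Fb}_\sigma$ keeps the reference state $\sigma$ on both sides, so it only reproduces the contraction property of $F$ and never compares $\sigma$ with $E\sigma$.

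\textbf{The missing idea.} Apply \cref{cor:UP-contraction} twice while switching reference states. First, $E^*\sigma=E\sigma$ by self-duality of $E$. Second, $F^*(E\sigma)=(EF)^*\sigma=F^*\sigma=\sigma$, because $EF=F$. Also, $E\sigma$ is faithful since $E$ is trace-preserving. For $x\in J$, where $x=Ex=Fx$, this gives
\begin{equation*}
\ip{x}{x}_\sigma=\ip{Ex}{Ex}_\sigma\le\ip{x}{x}_{E\sigma}=\ip{Fx}{Fx}_{E\sigma}\le\ip{x}{x}_{F^*E\sigma}=\ip{x}{x}_\sigma .
\end{equation*}
Equality therefore holds throughout, and polarization yields $\ip{x}{b}_\sigma=\ip{x}{b}_{E\sigma}$ on $J$. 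This is precisely where the existence of $F$ enters. With this inserted, your argument is complete: conjugate by $(E\sigma)^{-\frac12}$, and uniqueness follows as you say.
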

\begin{proof}
        We first show $\ip{a}{b}_{E\sigma} = \ip{a}{b}_\sigma$ for all $a,b\in J$. 
        By polarization, it suffices to show this for $a=b$.  Let $a\in J$. Then $a=Ea=Fa$. Using \cref{cor:UP-contraction} twice we get
        \begin{align*}
            \ip{a}{a}_\sigma = \ip{Ea}{Ea}_\sigma \leq \ip{a}{a}_{E\sigma} = \ip{Fa}{Fa}_{E\sigma} \leq \ip{a}{a}_{F^*E\sigma} = \ip{a}{a}_{\sigma}. 
        \end{align*}
        Thus, equality holds everywhere, and we may conclude $\ip{a}{b}_{E\sigma} = \ip{a}{b}_\sigma$ for all $a,b\in J$.
        Using that $F$ is hermitian for the KMS inner product, we have $\ip{Fa}{b}_{E\sigma} = \ip{Fa}{b}_\sigma = \ip{a}{Fb}_\sigma= \ip{a}{b}_\sigma$ for $a\in L(\H), b\in J$. 
        Therefore,
        \begin{align*}
            \ip{Fa}{b}_{E\sigma}=\ip{EFa}{b}_{E\sigma} = \ip{Fa}{Eb}_{E\sigma} = \ip{Fa}{Eb}_\sigma = \ip{a}{Eb}_\sigma, \qquad a,b\in L(\H).
        \end{align*}
        But 
        \begin{align*}
            \ip{Fa}{b}_{E\sigma} = \tr((E\sigma)^{\frac12}F(a) (E\sigma)^{\frac12} b)
        \quad\qandq\quad
            \ip{a}{Eb}_\sigma = \tr(E(\sigma^{\frac12} a \sigma^{\frac12})b).
        \end{align*}
        Since $b\in L(\H)$ was arbitrary, we find $(E\sigma)^{\frac12}F(a)(E\sigma)^{\frac12} = E(\sigma^{\frac12} a \sigma^{\frac12})$.
\end{proof}

If a conditional expectation $F$ preserves two faithful states $\rho$ and $\sigma$, it follows that
\begin{align}
    (E\sigma)^{-\frac12}\, E(\sigma^{\frac12} a \sigma^{\frac12})\, (E\sigma)^{-\frac12} = (E\rho)^{-\frac12}\, E(\rho^{\frac12} a \rho^{\frac12})\, (E\rho)^{-\frac12}, \quad a\in L(\H).
\end{align}

A technical lemma that we will need later shows that the range of a faithful conditional expectation, in fact, coincides with the multiplicative domain:

\begin{lemma}\label{lem:faithful-CE-multi-domain}
    Let $E: L(\H)\to J\subset L(\H)$ be a faithful conditional expectation. Then $\mathrm{MD}(E) = \mathrm{Fix}(E) = J$. 
\end{lemma}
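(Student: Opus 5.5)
We will prove the chain of inclusions $J=\mathrm{Fix}(E)\subset \mathrm{MD}(E)\subset J$. The first equality $\mathrm{Fix}(E)=J$ is immediate from idempotency: $J=\mathrm{Ran}(E)$ and $E^2=E$ give $E|_J=\id_J$, and every fixed point lies in the range. Since $E$ is faithful, it admits a faithful invariant state, namely $\sigma=E^*\tau$ with $\tau=\1/\tr(\1)$, as discussed after the definition of conditional expectations. Then \cref{prop:fixpoint} yields $\mathrm{Fix}(E)\subset\mathrm{MD}(E)$.

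It remains to show $\mathrm{MD}(E)\subset J$, which is the main step. Let $a\in\mathrm{MD}(E)$, so that $E\jp{a^*}a=\jp{Ea^*}{Ea}$. The plan is to show that $c:=a-Ea$ vanishes, using faithfulness applied to the positive element $\jp{c^*}c$. Expanding gives
\begin{equation*}
\jp{c^*}{c}=\jp{a^*}{a}-\jp{(Ea)^*}{a}-\jp{a^*}{Ea}+\jp{(Ea)^*}{Ea}.
\end{equation*}
We then apply $E$ term by term. The bimodule property \eqref{eq:bimodule-CE}, namely $E\jp xb=\jp{Ex}b$ for $b\in J$, applies because $Ea,(Ea)^*\in J$ (recall that $E$ preserves adjoints, being positive). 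It gives $E\jp{(Ea)^*}{a}=\jp{(Ea)^*}{Ea}$, and likewise for the other mixed term. The last term is fixed by $E$ since $J$ is a J*-algebra. The first term equals $\jp{Ea^*}{Ea}$ by the assumption $a\in\mathrm{MD}(E)$. Adding these up, $E\jp{c^*}c=0$.

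Since $\jp{c^*}c=\frac12(c^*c+cc^*)\ge0$, faithfulness of $E$ forces $\jp{c^*}c=0$, and hence $c^*c=0$, i.e.\ $c=0$. Therefore $a=Ea\in J$.

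The only point needing care is the precise form of the bimodule identity, in particular which argument must lie in $J$ and that adjoints commute with $E$. Both follow from \cref{prop:multiplicative_dom} applied with $J\subset\mathrm{MD}(E)$, which is exactly the inclusion established in the first step.
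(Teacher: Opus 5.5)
Your proof is correct, but it takes a somewhat different route from the paper.

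\textbf{The paper's route.} The paper argues structurally. By \cref{prop:multiplicative_dom}, $E$ restricted to $\mathrm{MD}(E)$ is a J*-homomorphism. Faithfulness makes it injective: if $x\in\mathrm{MD}(E)$ and $Ex=0$, then $E\jp{x^*}{x}=\jp{Ex^*}{Ex}=0$, so $x=0$. Since $E$ maps $\mathrm{MD}(E)$ into $J\subset\mathrm{MD}(E)$ and is the identity on $J$, the two spaces must coincide, either by a dimension count or by applying injectivity to $a-Ea\in\mathrm{MD}(E)$.

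\textbf{Your route.} You prove the identity $E\jp{a^*}{a}-\jp{Ea^*}{Ea}=E\jp{c^*}{c}$ with $c=a-Ea$. This holds for every $a\in L(\H)$ and uses only the bimodule identity \eqref{eq:bimodule-CE} for elements of $J$. You then apply faithfulness once.

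\textbf{Comparison.} Your argument does not rely on $\mathrm{MD}(E)$ being a linear subspace or J*-algebra, and needs no dimension count. It also exhibits the Jordan--Schwarz defect of a conditional expectation explicitly as $E$ of a positive element. This is the Jordan analogue of the familiar identity $E(x^*x)-E(x)^*E(x)=E((x-Ex)^*(x-Ex))$ for *-algebraic conditional expectations. The paper's version is shorter once the homomorphism property on $\mathrm{MD}(E)$ is available.

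\textbf{A minor simplification.} For $J\subset\mathrm{MD}(E)$ you do not need \cref{prop:fixpoint} or a faithful invariant state. Since $J$ is a J*-algebra by definition and $E|_J=\id$, you get directly $E\jp{a^*}{a}=\jp{a^*}{a}=\jp{Ea^*}{Ea}$ for $a\in J$.
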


\begin{proof}
    It is clear that $\mathrm{Fix}(E) = J \subset \mathrm{MD}(E)$. 
    The converse is argued as follows: $E$ is a J*-homomorphism on its multiplicative domain. 
    Since $E$ is assumed faithful, it must be injective. Hence, the image of $E$ is isomorphic to the multiplicative domain. But since $E$ is the identity on its image, this shows that the multiplicative domain coincides with the image.
\end{proof}

Next, we consider the relationship between conditional expectations onto a J*-algebra and conditional expectations onto the *-algebras that it generates.

\begin{lemma}\label{lem:CE-extension}
	Let $J\subset L(\H)$ be a J*-algebra with faithful conditional expectation $F$. Then $A = \staralg(J)$ admits a conditional expectation $\tilde F$ such that $F \circ \tilde F = F$. 
    In particular, this gives $\tilde F^*\rho = \rho$ whenever $F^*\rho = \rho$.
\end{lemma}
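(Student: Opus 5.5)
The plan is to build $\tilde F$ as the Ces\`aro limit of a Petz-type UCP map into $A$, chosen so that it fixes $J$ pointwise.

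\emph{Reference state and candidate map.} Since $F$ is faithful, $\sigma := F^*\tau$ is a faithful state, and $F^*\sigma = F^*F^*\tau = \sigma$. Let $E_A$ be the trace-preserving conditional expectation onto $A$; it is a *-algebra projection, hence UCP. Define
\begin{equation*}
    R(a) = (E_A\sigma)^{-\frac12}\, E_A(\sigma^{\frac12} a \sigma^{\frac12})\, (E_A\sigma)^{-\frac12}, \qquad a\in L(\H).
\end{equation*}
This $R$ is a UCP map $L(\H)\to A$, and using $E_A=E_A^*$ one checks $R^*\sigma=\sigma$. By \cref{lem:cesaro-mean}, the Ces\`aro means of $R$ converge to an idempotent UP map $\tilde F$ onto $\mathrm{Fix}(R)$. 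It is UCP because it is a limit of UCP maps, and it is $\sigma$-preserving. By \cref{prop:fixpoint}, together with the Kadison--Schwarz version of the same argument, $\mathrm{Fix}(R)$ is a *-subalgebra of $A$.

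\emph{Reducing both claims to $J\subset\mathrm{Fix}(R)$.} If $J\subset \mathrm{Fix}(R)$, then $A=\staralg(J)\subset \mathrm{Fix}(R)\subset A$, so $\tilde F$ is a conditional expectation onto $A$. Moreover $R|_J=\id$ gives $\tilde F|_J = \id$. To get $F\circ\tilde F=F$, I would show $F\circ R=F$: then $F$ composed with every Ces\`aro mean equals $F$, and passing to the limit gives the identity. The clause about states follows by dualizing: if $F^*\rho=\rho$, then $\tilde F^*\rho=\tilde F^*F^*\rho=(F\tilde F)^*\rho=F^*\rho=\rho$.

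\emph{Main obstacle: $F\circ R=F$ and $J\subset\mathrm{Fix}(R)$.} Suppose first that $F\circ R = F$. For $x\in J$ we have $x=F x$, and since $F$ and $R$ both preserve $\sigma$, \cref{cor:UP-contraction} gives
\begin{equation*}
    \norm{x}_\sigma=\norm{FRx}_\sigma\le\norm{Rx}_\sigma\le\norm{x}_\sigma .
\end{equation*}
By \cref{lem:CE-selfadjoint}, $F$ is an orthogonal projection for the KMS inner product. Equality $\norm{F(Rx)}_\sigma=\norm{Rx}_\sigma$ therefore forces $Rx\in J$, and then $Rx=FRx=x$.

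So everything hinges on the identity $R^*F^*=F^*$. By \cref{cor:uniquenes-CE} and duality, states in the range of $F^*$ have the form $\sigma^{\frac12}y\sigma^{\frac12}$ with $y\in J$. Concretely, $F^*\omega=\sigma^{\frac12}E_J\big((E_J\sigma)^{-\frac12}\omega(E_J\sigma)^{-\frac12}\big)\sigma^{\frac12}$, where $E_J$ is the trace-preserving conditional expectation onto $J$. Plugging in $\omega=\tau$ and using $F^*\tau=\sigma$ gives $E_J\sigma=\tau$. On the other hand, by duality,
\begin{equation*}
    R^*\omega=\sigma^{\frac12}(E_A\sigma)^{-\frac12}E_A(\omega)(E_A\sigma)^{-\frac12}\sigma^{\frac12}.
\end{equation*}
Thus what must be shown is
\begin{equation*}
    (E_A\sigma)^{-\frac12}E_A(\sigma^{\frac12}y\sigma^{\frac12})(E_A\sigma)^{-\frac12}=y, \qquad y\in J .
\end{equation*}
I would attack this using $E_J=E_J\circ E_A$ from \eqref{eq:TPCE-factorizes}, the relation $E_J\sigma=\tau$, and the fact that $F$-invariant states are exactly those of the form $\sigma^{\frac12}y\sigma^{\frac12}$. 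The idea is to exploit the bimodule property of $F$ over $J$ to show that $\sigma^{\frac12}y\sigma^{\frac12}$ and $\sigma$ are simultaneously "compatible" with $A$. This is the step I expect to be the real difficulty: it amounts to a modular-invariance (Takesaki-type) statement for $A=\staralg(J)$ relative to $\sigma$, and if the direct computation stalls, I would instead try to prove $\sigma^{it}J\sigma^{-it}=J$ from the KMS-self-adjointness of $F$ and then invoke Takesaki's theorem for $A$.
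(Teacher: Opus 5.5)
Most of your scaffolding is correct, but the step everything rests on is left unproved.

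\textbf{What is correct.} $R$ is a $\sigma$-preserving UCP map into $A$. Its Ces\`aro limit is a UCP idempotent onto the *-algebra $\mathrm{Fix}(R)\subset A$. The implication $F\circ R=F\Rightarrow J\subset\mathrm{Fix}(R)$, via \cref{cor:UP-contraction} and \cref{lem:CE-selfadjoint}, is valid. So is the reduction of $F\circ R=F$ to the identity $(E_A\sigma)^{-\frac12}E_A(\sigma^{\frac12}y\sigma^{\frac12})(E_A\sigma)^{-\frac12}=y$ for $y\in J$.

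\textbf{The gap.} That identity carries the whole content of the lemma, and you do not prove it. Your fallback also cannot work as stated. KMS-self-adjointness of $F$ holds for \emph{every} faithful $F$-invariant state (\cref{lem:CE-selfadjoint}), but modular invariance of $J$ fails for some of them. Take $J=L(\CC^2)^t$, $F=\frac12(\id+(\placeholder)^t)$ and $\sigma=\mathrm{diag}(\lambda_1,\lambda_2)$ with $\lambda_1\neq\lambda_2$. Then $F^*\sigma=\sigma$, yet $\sigma^{it}X\sigma^{-it}$ is not symmetric for generic $t$. So the argument must genuinely use the choice $\sigma=F^*\tau$, i.e.\ the traciality of $\tau$, and your sketch does not say how.

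\textbf{The missing idea.} The state $\sigma=F^*\tau$ actually commutes with $J$. Take a projection $p\in J$ and the symmetry $s=2p-\1\in J$. By \eqref{eq:bimodule-CE}, $F(sbs)=sF(b)s$. Hence $\tr(s\sigma s\, b)=\tr(\tau\, sF(b)s)=\tr(\tau F(b))=\tr(\sigma b)$ for all $b\in L(\H)$, i.e.\ $s\sigma s=\sigma$. Since $J$ is spanned by its projections, $\sigma\in J'$, and therefore $\sigma\in A'$.

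This is exactly the paper's argument. The paper then invokes Takesaki's theorem to obtain a $\sigma$-preserving conditional expectation $\tilde F$ onto $A$. It observes that $F\tilde F$ is a $\sigma$-preserving conditional expectation onto $J$, and concludes $F\tilde F=F$ from \cref{cor:uniquenes-CE}.

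\textbf{How this closes your argument.} In your framework the same fact finishes the proof directly:
\begin{itemize}
\item $\sigma^{\frac12}\in A'$, and $E_A\sigma\in A\cap A'$ because $E_A(u\sigma u^*)=uE_A(\sigma)u^*$ for unitaries $u\in A$.
\item Hence $E_A(\sigma^{\frac12}a\sigma^{\frac12})=aE_A(\sigma)$, so $R(a)=a$ for all $a\in A$.
\item So $R$ is already a conditional expectation onto $A$, and the Ces\`aro limit is unnecessary.
\item Then $RF=F$, so $FR$ is a $\sigma$-preserving conditional expectation onto $J$, and $FR=F$ by uniqueness.
\end{itemize}
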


\begin{proof}
	The following argument is taken from the proof of \cite[Thm.~2]{luczak_aspects_2021}, see also \cite[Lem.~2.2]{haagerup_positive_1995}: Let $\tau = \frac{\1}{\tr(\1)}$ be the maximally mixed state and set $\omega = F^*\tau$. Then $\omega$ is $F$-invariant and faithful, since $F$ is faithful. 
    Let $p\in J$ be a projection and $s= 2p-\1$, i.e., $s$ is a hermitian unitary. 
    Then $F(sbs) = sF(b)s$ for any $b\in L(\H)$, and hence $F(sb) = F(sbss) = sF(bs)s$.
    Therefore
    \begin{align}
    \tr(\omega bs) = \tr(\tau F(bs)) = \tr(\tau sF(sb)s) = \tr(\tau F(sb)) = \tr(\omega sb), \quad b\in L(\H).
    \end{align}
	Since $p$ was an arbitrary projection, we find $\tr(\omega pb) = \tr(\omega bp)$ for every projection $p\in J$ and any $b\in L(\H)$. Since $J$ is spanned by its projections, we find $\tr(\omega ab)=\tr(a\omega b)$ for all $a\in J$ and $b\in L(\H)$. Hence $[\omega,a]=0$ for all $a\in J$. 
    Since $A$ is generated by $J$, it follows that  $[\omega,a]=0$ for all $a \in A$.
    By Takesaki's theorem \cite{takesakiConditionalExpectationsNeumann1972}, this implies that there exists a unique $\omega$-preserving conditional expectation $\tilde F$ onto $A$.%
    \footnote{Since we work in finite dimensions, one can explicitly write down $\tilde F$ using the decomposition $\H = \oplus_j \H_j \otimes \K_j$ and $A = \oplus_j L(\H_j) \otimes \1$. Then $\omega = \oplus_j \tfrac{p_j}{\dim(\H_j)}\1 \ox \omega_j$ for some faithful probability distribution $(p_j)$ and faithful states $\omega_j$ and hence $\tilde F = \oplus_j \id_j \otimes \tr(\omega_j (\placeholder))\1$.} 
    Since $J=\mathrm{Ran}(F)$ is contained in $A$, we have $\tilde F F = F$. 
    Hence $(F\tilde F)^2 = F\tilde F F\tilde F = FF\tilde F = F\tilde F$, so that $F\tilde F$ is a conditional expectation onto $J$ that leaves $\omega$ invariant.
    Thus, the uniqueness statement in \cref{cor:uniquenes-CE} implies $F\tilde F= F$.
    Now suppose $F^*\rho = \rho$.
	It follows that $\tilde F^*\rho = \tilde F^*F^*\rho = (F\tilde F)^*\rho = F^*\rho = \rho$. Thus 
	$\tilde F$ is $\rho$-preserving. 
\end{proof}
Note that by \cref{lem:CE-extension} any $\sigma$-preserving conditional expectation $F:L(\H)\to J$ factorizes as
\begin{align}\label{eq:CE-factorization}
	F: L(\H) \overset{\tilde F}{\longrightarrow} \staralg(J) \overset{\hat F}{\longrightarrow} J,
\end{align}
where $\hat F = F|_{\staralg(J)}$ is simply the restriction of $F$ onto $\staralg(J)$.

For any two J*-algebras $J_1,J_2\subset L(\H)$, their intersection $J_1\cap J_2$ is again a J*-algebra. 
If $F_i:L(\H) \to J_i$ are conditional expectations that have a common, faithful invariant state $\sigma$ (i.e., are $\sigma$-preserving), then both are hermitian projections when we consider $L(\H)$ as a Hilbert-space with the KMS inner product.
Von Neumann's projection theorem \cite[Thm.~13.7]{neumannFunctionalOperatorsVolume1951} then implies that 
\begin{align}\label{eq:E12}
    F = \lim_{n\to \infty} (F_1F_2)^n  = \lim_{n\to\infty} (F_2 F_1)^n
\end{align}
is a $\sigma$-preserving conditional expectation onto the intersection $J_1\cap J_2$ such that
$F_i \circ F = F = F \circ F_i$.
Thus, $F$ is the unique $\sigma$-preserving conditional expectation onto $J_1\cap J_2$.
Using Halperin's projection theorem \cite{halperin1962product}, the argument generalizes to any finite number of $\sigma$-preserving conditional expectations $F_1,\ldots, F_m$ onto J*-algebras $J_1,\ldots,J_m$. Then
\begin{align}
    F = \lim_{n\to\infty} (F_1\cdots F_m)^n
\end{align}
is the unique $\sigma$-preserving conditional expectation onto $\cap_{i=1}^m J_i$.

\begin{lemma}
    Let $F_1 : L(\H)\to J_1$ and $F_2:L(\H)\to J_2$ be conditional expectations onto J*-algebras $J_1,J_2$ with common, faithful invariant state and such that $F_1  F_2 = F_1$ and $F_2  F_1 = F_2$. Then $J_1=J_2$ and $F_1 = F_2$.
    \begin{proof}
       The relations $F_1  F_2 = F_1$ and $F_2  F_1 = F_2$ imply $F = F_1 = F_2$.
    \end{proof}
\end{lemma}

\subsection{Structure theory of abstract J*-algebras}\label{sec:structure-theory}

So far, we have considered J*-algebras as collections of operators on a given Hilbert space $\H$.
It is sometimes better to regard J*-algebras as abstract objects and to regard the action on a given Hilbert space as a representation.
We make the following non-standard definition:%
\footnote{Our notion of an abstract J*-algebra is known in the mathematics literature as a finite-dimensional JC*- or JW*-algebra (cp.\ \cref{foot1}).
JC*- and JW*-algebras are a special case of the more general notion of JB*- or JBW*-algebra, which do not necessarily admit representations as operators on a Hilbert space \cite{hanche-olsen_jordan_1984}.}

\begin{definition}
    We define an \emph{abstract J*-algebra} 
    as a finite-dimensional *-vector space $J$, equipped with a map $\jp\placeholder\placeholder: J\times J \to J$, and a special element $1=1^*\in J$, such that there exists a Hilbert space $\H$ and an injective *-preserving linear map $\pi : J \to L(\H)$, which is unital, i.e., satisfies $\pi(1)=\1$, and takes the Jordan product on $J$ to the Jordan product of $L(\H)$, i.e.,
        \begin{equation}
        \pi(\jp ab) = \jp{\pi(a)}{\pi(b)}, \qquad a,b\in L(\H).
    \end{equation}
    A tuple $(\pi,\H)$ with these properties is called a \emph{(J*-)representation}.
\end{definition}

If $(\pi,\H)$ is a representation of an abstract J*-algebra $J$, the range $\pi(J)$ is a J*-algebra on $\H$.
Thus, the Jordan product $\jp\placeholder\placeholder$ on an abstract J*-algebra satisfies all algebraic properties enjoyed by the Jordan product on $L(\H)$.
If $(\H_i,\pi_i)$, $i=1,2$, are J*-representation of an abstract J*-algebra $J$, then $\pi_1\circ \pi_2^{-1}$ is an J*-isomorphism between the J*-algebras $\pi_i(J)\subset L(\H_i)$, $i=1,2$ (note that the $\pi_j$ are injective).
Consequently, all properties of J*-algebras that are preserved by J*-isomorphisms can be understood as properties of abstract J*-algebras.

Every finite-dimensional *-algebra $A$ is an abstract J*-algebra in the obvious way, and every concrete J*-algebra $J\subset L(\H)$ on a Hilbert space $\H$ is an abstract J*-algebra by taking $\pi$ as the embedding $J \hookrightarrow L(\H)$.
A simple example of a J*-representation of the *-algebra $L(\CC^n)$, which is not a *-representation, is the following:

\begin{example}\label{exa:J*-rep}
    The map $a \mapsto a\oplus a^t$ is a J*-representation of $L(\CC^n)$ on $\CC^{2n}=\CC^n\oplus\CC^n$.
\end{example}

It makes sense to speak of J*-homomorphisms between abstract J*-algebras.
More generally, UP maps between abstract J*-algebras make sense because the latter have a well-defined positive cone $J^+$ (cp.\ \eqref{eq:positive-cone-squares}) and unit element $1\in J$. 

We will often drop the adjective "abstract" in cases where the distinction between abstract and concretely represented J*-algebras is either irrelevant or clear from the context.
In the following, we discuss the structure of abstract J*-algebras.
We begin by discussing the center and direct sum decomposition into so-called J*-factors. Then, we describe the classification of J*-factors.

The center $Z(J)$ of an abstract J*-algebra $J$ is the subspace of elements $z\in J$ such that $uzu = z$ for all hermitian unitaries $u\in J$, see \cite[Paragraph 2.5.1 \& Lem.~4.3.2]{hanche-olsen_jordan_1984}.
This definition makes sense for abstract J*-algebra because the equation $uzu = z$ can be expressed in terms of the Jordan product, see \cref{sec:basics}.
The following Lemma summarizes several characterizations of central elements for a representation $J\subset L(\H)$.

\begin{lemma}\label{lem:center}
    Let $J\subset L(\H)$ be a J*-algebra and $z\in J$. The following are equivalent ($J'$ denotes the commutant of $J$ in $L(\H)$):
	\begin{enumerate}[(a)]
		\item\label{it:center1} $z\in Z(J)$.
		\item\label{it:center2} $[z,a]=0$ for all $a\in J$, i.e., $z\in J\cap J'$.
		\item\label{it:center3} $z \in Z(\staralg(J)) = J''\cap J'$.
		\item\label{it:center4} $\jp{a}{\jp{z}{b}} = \jp{z}{\jp{a}{b}}$ for all $b,c\in J$.
	\end{enumerate}
\end{lemma}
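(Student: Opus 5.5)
The plan is to prove the cycle of implications (a)$\Rightarrow$(b)$\Rightarrow$(c)$\Rightarrow$(a), and to handle (d) separately against (b). The definition of $Z(J)$ uses the condition $uzu=z$ for hermitian unitaries $u\in J$. The point to exploit is that $J$ is spanned by its projections (see \eqref{eq:spanned-by-projections}). Every projection $p\in J$ gives a hermitian unitary $u=2p-\1\in J$. Conversely, every hermitian unitary $u\in J$ has spectral projections in $J$, so $u=2p-\1$ for some projection $p\in J$.

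\textbf{(a)$\Leftrightarrow$(b).} Let $u=2p-\1$ with $p\in J$ a projection. Expanding $uzu=z$ gives $4pzp-2pz-2zp=0$. Multiplying on the left by $p$ yields $pzp=pz$. Multiplying on the right by $p$ yields $pzp=zp$. Hence $[p,z]=0$. Since the projections span $J$, we get $[z,a]=0$ for all $a\in J$. The converse is immediate: if $z$ commutes with $u$, then $uzu=zu^2=z$.

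\textbf{(b)$\Leftrightarrow$(c).} If $z$ commutes with every $a\in J$, then $z\in J'$. Since $J'$ is a *-algebra (the commutant of a *-invariant set), it follows that $J'=\staralg(J)'$. Also $z\in J\subset\staralg(J)=J''$, by the bicommutant theorem in finite dimensions. Therefore $z\in J''\cap J'=Z(\staralg(J))$. Conversely, if $z\in Z(\staralg(J))$, then $z$ commutes with all of $J$, and $z\in J$ holds by hypothesis.

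\textbf{(b)$\Leftrightarrow$(d).} Read (d) as $\jp a{\jp zb}=\jp z{\jp ab}$ for all $a,b\in J$, i.e.\ the Jordan multiplication operators $L_a$ and $L_z$ commute. If $z$ commutes with $J$, expand both sides: each becomes $\tfrac14(azb+abz+zba+bza)$ after moving $z$ past $a$ and $b$, so they agree. For the converse, specialise to $a=p$, a projection, and $b=\1$. This gives $\jp p z=\jp zp$, which carries no information. So instead take $b=p$ and $a=p$, obtaining $\jp p{\jp zp}=\jp z p$. Expanding, $\tfrac14(pzp+zp+pz+pzp)=\tfrac12(zp+pz)$, which simplifies to $2pzp=pz+zp$. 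Multiplying by $p$ on the left gives $pzp=pz$, and on the right gives $pzp=zp$, so $[p,z]=0$. The projections span $J$, so (b) follows.

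\textbf{Main obstacle.} The argument is mostly mechanical. The place needing care is the converse in (d): one must choose test elements such as $a=b=p$ that actually force commutation, rather than trivial specialisations. The (b)$\Leftrightarrow$(c) step needs only the finite-dimensional bicommutant theorem.
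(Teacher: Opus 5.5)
Your proof is correct. For (a)$\Leftrightarrow$(b) and (b)$\Leftrightarrow$(c) it is essentially the paper's argument: the paper simply notes $u=\1-2p$ and that $uzu=z$ is the same as $[u,z]=0$ because $u^2=\1$, which your expansion reproduces.

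The real difference is in (d)$\Rightarrow$(b).
\begin{itemize}
\item The paper keeps $a,b\in J$ arbitrary and rewrites (d) as $[[a,z],b]=0$. It concludes that $[a,z]$ lies in $J'\cap\staralg(J)$, i.e.\ is a central commutator in $\staralg(J)$. It then kills it with the trace identity $\tr([a,z]^*[a,z])=\tr([[a,z]^*,a]\,z)=0$.
\item You instead specialise to $a=b=p$ and obtain $2pzp=pz+zp$. This says precisely that the off-diagonal (Peirce-$\tfrac12$) part $pz(\1-p)+(\1-p)zp$ of $z$ vanishes. You then use that the projections span $J$.
\end{itemize}

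Your route is more elementary and purely local. It never leaves $J$ and does not use the trace, so it needs nothing from finite dimensions beyond $J$ being (densely) spanned by its projections. The paper's final step, by contrast, is genuinely finite-dimensional; in general one would need something like Kleinecke--Shirokov. In return, the paper's route uses (d) as a single operator identity and never invokes spectral projections.

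One minor slip. $J'=\staralg(J)'$ holds because anything commuting with $J$ commutes with all noncommutative polynomials in elements of $J$, and these exhaust $\staralg(J)$ since $J^*=J$. That $J'$ is a *-algebra is not the reason, though your conclusion is right.
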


We remark that \cref{it:center4} is often taken as the definition of the center \cite{hanche-olsen_jordan_1984}.
The Lemma shows that the center is given by $Z(J) = J\cap J'$.
Therefore, the above definition generalizes the definition of the center of a *-algebra.

\begin{proof}
 \ref{it:center1} $\Leftrightarrow$ \ref{it:center2}: $[z,a]=0$ for all $a\in J$ if and only if $0=[z,p]$ for all projection $p\in J$, if and only if $uzu=z$ for all hermitian unitaries $u\in J$ (a hermitian unitary is of the form $u=\1-2p$ for a projection $p$).

	\ref{it:center2} $\Rightarrow$ \ref{it:center3}: Obvious.

	\ref{it:center3} $\Rightarrow$ \ref{it:center4}: Since $z\in J'$ we have $\jp{a}{\jp{z}{b}} = \jp{a}{zb} = \jp{a}{b}z = z\jp{a}{b} = \jp{z}{\jp{a}{b}}$.

	\ref{it:center4} $\Rightarrow$ \ref{it:center2}: If $\ref{it:center4}$ holds we get:
	\begin{align}
	0&= a(zb+bz) + (zb+bz)a - z(ab+ba) - (ab+ba) z = [a,z]b - b[a,z] = [[a,z],b],\quad a,b\in J.\nonumber	
	\end{align}
	Hence $[a,z]\in J'$ and hence $[a,z]\in Z(\staralg(J))$. Since a commutator can only be central in a finite-dimensional *-algebra if it vanishes\footnote{Indeed, if $[a,b]$ is central for elements $a,b$ of a *-algebra $A$, then $\tr([a,b]^*[a,b]) = \tr([[a,b]^*,a]b) = 0$ shows $[a,b]=0$.}, this implies $[a,z]=0$.
\end{proof}

By applying \cref{lem:center} in some J*-representation, we see that the Jordan product is associative on the center $Z(J)$ of an abstract J*-algebra $J$.
Thus, $Z(J)$ is an abelian *-algebra.
Therefore, the center is of form $Z(J) = \lin\,\{p_j\}_{j=1}^n$, for a family of pairwise orthogonal projections $p_i\in J$ such that $\sum_j p_j = \1$.
A J*-algebra $J$ is a \emph{J*-factor} if its center consists of scalars $Z(J)=\CC$ \cite[Sec.~4.6]{hanche-olsen_jordan_1984}.

We need to understand direct sums of J*-algebras.
If $J_i\subset L(\H_i)$, $i=1,\ldots, n$, are J*-algebras, then
\begin{equation}
    \oplus_i J_i := \big\{ \oplus_i a_i \ : \ a_i \in J_i \big\} \ \subset \ L(\oplus_i \H_i)
\end{equation}
is a J*-algebra on $\oplus_i\H_i$.
Up to J*-isomorphism, $\oplus_i J_i$ is independent of the representations of the $J_i$.
Therefore, the direct sum makes sense on the level of abstract J*-algebras:
If $(\pi,\H_i)$ are J*-representations of abstract J*-algebras $J_i$, then the abstract J*-algebra $\oplus_i J_i$ is defined through the representation $(\oplus_i \pi_i,\oplus_i\H_i)$.
It is easy to see that the center of the direct sum is $Z(\oplus_i J_i) = \oplus_i Z(J_i)$.

The key fact underlying the classification is that an abstract J*-algebra $J$ decomposes as a direct sum 
\begin{equation}
    J=\oplus_i J_i
\end{equation}
of J*-factors $J_i$ and this decomposition is unique up to J*-isomorphism \cite[Thm.~5.3.5]{hanche-olsen_jordan_1984}.
The above reduces the classification of abstract J*-algebras to the classification of J*-factors, which we consider next.

The classification of J*-factors is into four infinite families, all of which we have encountered in \cref{sec:basics}:

\begin{theorem}[{\cite{jordan_algebraic_1934}, see also \cite{hanche-olsen_jordan_1984,jacobson_structure_1968}}]\label{thm:classification}
    A J*-factor $J$ is J*-isomorphic to one of the following:
    \begin{enumerate}
        \item \label{it:classifiaction1} 
            the full matrix algebra $L(\CC^n)$, $n\in\NN$;
        \item \label{it:classifiaction2} 
            the J*-algebra $L(\CC^{n})^t$ of symmetric $n\times n$-matrices, $n\in \NN$ (see item \ref{it:fixed-pts1} of \cref{exa:fixed-pts});
        \item \label{it:classifiaction3}
            the J*-algebra $L(\CC^{2n})^\beta$ of $2n\times 2n$-matrices invariant under the symplectic involution $\beta$, $n\in \NN$ (see item \ref{it:fixed-pts2} of \cref{exa:fixed-pts});
        \item\label{it:classifiaction4} 
            A spin factor $V_n$, $n\in \NN$, as constructed in \cref{exa:spin-factors}.
    \end{enumerate}
    Apart from the trivial identification $L(\CC^1) = L(\CC^1)^t \Jcong V_1 \Jcong L(\CC^2)^\beta$, the only overlaps of these families are $V_2 = L(\CC^2)^t$, $V_3 \Jcong L(\CC^2)$, and $V_5 \Jcong L(\CC^4)^\beta$.
\end{theorem}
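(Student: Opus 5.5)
The classification is a classical result (Jordan--von Neumann--Wigner), so the plan is to reduce it to the standard coordinatization theory rather than reprove everything from scratch.

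\textbf{Step 1: a Jordan frame.} Since a J*-factor $J$ is finite-dimensional and spanned by its projections (cf.\ \eqref{eq:spanned-by-projections}), I pick a maximal family of pairwise orthogonal minimal projections $p_1,\ldots,p_n\in J$ with $\sum_i p_i=\1$. I then form the Peirce decomposition
\[
J=\bigoplus_{i\le j}J_{ij},\qquad J_{ii}=p_iJp_i,\qquad J_{ij}=\{x\in J: \jp{p_i}{x}=\tfrac12 x=\jp{p_j}{x}\},
\]
which makes sense because triple products $p_ixp_j+p_jxp_i$ lie in $J$ by \eqref{eq:triple-product}.

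\textbf{Step 2: the cases $n=1$ and $n=2$.} Minimality of $p_i$ together with functional calculus gives $J_{ii}=\CC p_i$. The factor property (trivial center, via \cref{lem:center}) forces every $J_{ij}\neq0$; this is the Peirce-connectedness step. If $n=1$, then $J=\CC$. If $n=2$, every hermitian $s\in J_{12}$ satisfies $s^2\in J_{11}+J_{22}$, and in fact $s^2=\lambda\1$. Hence $p_1-p_2$ together with a real orthonormal basis of $(J_{12})_h$ gives pairwise anticommuting hermitian unitaries. This yields a spin factor, which is identified with some $V_m$ by \cite[Prop.~6.1.5]{hanche-olsen_jordan_1984}.

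\textbf{Step 3: the case $n\ge3$.} Here I would coordinatize. Fixing symmetries in $J_{12}$ and $J_{1k}$, the spaces $J_{ij}$ are all identified with a single real algebra $D$, and the product $J_{ij}\times J_{jk}\to J_{ik}$ makes $D$ an alternative composition algebra. By Hurwitz, $D\in\{\RR,\CC,\mathbb H,\mathbb O\}$. The octonion case must be excluded for $n\ge3$. I would do this by using that $J$ is special, i.e.\ sits in $L(\H)$: associativity of $L(\H)$ forces $D$ to be associative. The Albert algebra is exceptional and has no such embedding; I expect to cite this rather than prove it.

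\textbf{Step 4: identification and overlaps.} The remaining coordinate algebras give $H_n(\RR)\cong L(\CC^n)^t$, $H_n(\CC)\cong L(\CC^n)$ and $H_n(\mathbb H)\cong L(\CC^{2n})^\beta$, after complexifying from hermitian parts. The overlaps then follow by comparing dimensions and ranks (the size of a Jordan frame): $V_2$ has dimension $3$ and rank $2$, matching $L(\CC^2)^t$; $V_3$ has dimension $4$, matching $L(\CC^2)$; and $V_5$ has dimension $6$, matching $L(\CC^4)^\beta$. For each case I would exhibit the explicit isomorphism, using the Pauli matrices for $V_3\Jcong L(\CC^2)$.

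\textbf{Main obstacle.} The hardest part is Step 3: building the coordinate algebra and excluding the exceptional case. In practice I would cite \cite{jordan_algebraic_1934,hanche-olsen_jordan_1984} for this step and only verify the concrete identifications myself.
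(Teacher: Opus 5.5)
Your outline is correct and is the standard Peirce-decomposition and coordinatization proof of Jordan--von Neumann--Wigner, as presented in Hanche-Olsen--St\o rmer. The paper gives no argument of its own beyond citing exactly these sources, so your approach, including deferring the coordinatization and the exclusion of the octonionic case to them, is essentially the paper's. One small remark: your rank-two analysis only ever produces $V_m$ with $m\ge 2$, which is right, because with the paper's indexing $V_1=\lin\{\1,s_1\}\cong\CC\oplus\CC$ is not a factor, so the identification $V_1\Jcong L(\CC^1)$ in the statement does not hold and $V_1$ should simply be dropped from the list.
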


The classification of J*-factors is usually presented in an alternate form (see \cite{jacobson_structure_1968,hanche-olsen_jordan_1984,jordan_algebraic_1934}):
As mentioned earlier, the literature is mostly concerned with Jordan algebras of hermitian matrices, which are related to the J*-algebras that we consider through complexification (see \cref{sec:basics}).
The four families are then the symmetric matrices over $\RR$, corresponding to item \ref{it:classifiaction2} of \cref{thm:classification}, the hermitian matrices over $\CC$, corresponding to item \ref{it:classifiaction1}, the hermitian matrices over the quaternions $\mathbb H$, corresponding to item \ref{it:classifiaction3}, and the spin factors, defined as in \cref{exa:spin-factors} with the span taken over $\RR$.
The only non-obvious identification is that of the hermitian quaternionic matrices with the hermitian part of the J*-algebra in item \ref{it:classifiaction3} of \cref{thm:classification}.
We explain this connection in \cref{rem:quaternion-stuff} below.

Any *-invariant unital subspace $J\subset A$ of a finite-dimensional *-algebra $A$, which is closed under the Jordan product, is an abstract J*-algebra.
Indeed, this holds because each finite-dimensional *-algebra is *-isomorphic to a *-algebra of operators on a finite-dimensional Hilbert space.
Next, we shall consider a canonical way to embed an abstract J*-algebra into a (finite-dimensional) *-algebra:
By modding out a suitable ideal of the tensor algebra over a J*-algebra (see  \cite[Thm.~7.1.8]{hanche-olsen_jordan_1984} or \cite{jacobson_structure_1968}), one can construct a pair $(\iota,A)$ of a *-algebra $A$ and a J*-embedding, i.e., an injective J*-homomorphism, $\iota : J \to A$ enjoying the following property:
For every J*-homomorphism $\phi:J\to B$ into a *-algebra $B$, there exists a *-homomorphism $\hat\phi:A\to B$ extending $\phi$, i.e., the following diagram commutes:
\begin{equation}\label{eq:universal-prop}
    \begin{tikzcd}
        &J\arrow{rr}{\phi}\arrow{dr}{\iota} & & B &  \\
        & & A \arrow[dotted]{ur}{\exists\hat\phi}
    \end{tikzcd}
\end{equation}

This universal property determines $(\iota,A)$ uniquely up to *-isomorphism.
We refer to $(\iota,A)$ as a \emph{universal enveloping *-algebra}.
If $(\iota,A)$ is a universal enveloping *-algebra, then so is $(\iota,A^{\mathrm{op}})$, where $A^{\mathrm{op}}$ denotes the opposite *-algebra.
Thus, the uniqueness of the enveloping *-algebra up to *-isomorphism implies the existence of a *-antiautomorphism $\vartheta$ on $A$, which is the same as a *-isomorphism $\vartheta:A\to A^{\mathrm{op}}$, with  
\begin{equation}\label{eq:iota-theta-invariant}
    \vartheta \circ \iota = \iota.
\end{equation}
As explained in \cite[Sec.~II.1]{jacobson_structure_1968}, this *-antiautomorphism is an involution, i.e., satisfies $\vartheta^2 = \id$, known as the \emph{canonical involution} on $A$.

We will discuss explicitly the universal enveloping *-algebras of the various kinds of J*-factors in the next subsection, where we use universal enveloping *-algebras to construct a useful class of J*-representations.

\begin{remark}\label{rem:quaternion-stuff}
    We follow \cite[Sec.~3]{idel} to explain how the J*-algebra of symplectically invariant $2n\times 2n$-matrices relates to the Jordan algebra $\mathrm{Herm}_n(\mathbb H)$ of hermitian matrices over the quaternions.
    We denote the three imaginary units of $\mathbb H$ as $i$, $j$, and $k$.
    A matrix $a$ over $\mathbb H$ has a unique decomposition $a = a_r + ia_i + ja_j + ka_k$, where $a_r,a_i,a_j,a_l$ are real matrices.
    The adjoint on $M_n(\mathbb H)$ acts as a transpose and entry-wise conjugation (conjugation on $\mathbb H$ is the real linear map that sends $1,i,j,k$ to $1,-i,-j,-k$.).
    A matrix $a\in M_n(\mathbb H)$ is hermitian if and only if $a_r$ is symmetric and $a_i,a_j,a_k$ are skew-symmetric. 
    The real vector space $\mathrm{Herm}_n(\mathbb H)$ is closed under the Jordan product $\jp ab = \frac12(ab+ba)$, where the product is simply matrix multiplication.
    We claim that the map $\varphi: \mathrm{Herm}_n(\mathbb H) \to L(\CC^{2n})_\mathrm{sa}$ given by
    \begin{equation}
        \varphi(a) = 
        \begin{pmatrix}
            a_r +i a_i & a_j + i a_k \\ a_j - i a_k & a_r - i a_i
        \end{pmatrix} 
    \end{equation}
    is a real linear bijection onto the hermitian part of the fixed-point J*-algebra $L(\CC^{2n})^\beta$ that takes the Jordan product on $\mathrm{Herm}_n(\mathbb H)$ to the Jordan product on $L(\CC^{2n})$.
    The characterization in \eqref{eq:symplectic-fixed-pts} shows that the range of $\varphi$ is contained in $L(\CC^{2n})_h^\beta$.
    It is clear that $\varphi$ is injective, and it can be checked through an explicit calculation that $\varphi$ respects the Jordan product.
    It follows readily from \eqref{eq:symplectic-fixed-pts} that the range of $\varphi$ is $L(\CC^{2n})^\beta_\mathrm{sa}$.
    This shows that the complexification of  $\mathrm{Herm}_n(\mathbb H)$ is an abstract J*-algebra from which $L(\CC^{2n})^\beta$ arises via a representation on $\CC^{2n}$.
\end{remark}

\subsection{Representations of J*-algebras}
\label{sec:reps}

In the previous subsection, we considered properties of J*-algebras that are independent of the chosen representation. 
We also need to understand representation-dependent properties.
In the following we describe part of the representation theory of J*-algebras.

As is well-known, any (finite-dimensional) *-algebra $A$ is *-isomorphic to $A\cong \oplus_j L(\H_j)$ and any *-representation of $A$ on a finite-dimensional Hilbert space is of the form
\begin{equation}
    \H \cong \oplus_j \H_j\ox \L_j, \quad A \cong \oplus_j L(\H_j) \ox \1.
\end{equation}
Thus, a *-representation is, up to unitary equivalence, determined by the multiplicities $(l_j)$, $l_j = \dim \L_j$.
We will see that the corresponding statement for J*-algebras is false.
For instance, the defining representation of $L(\CC^n)$ on $\CC^n$ is not unitarily equivalent to the J*-representation $a\mapsto a\oplus a^t$ on $\CC^{2n}$ (see \cref{exa:J*-rep}) although both J*-representations are multiplicity free.
In this case, the non-equivalence of the two J*-representations is clear from the different Hilbert space dimensions.
A more useful invariant that we can associate with a J*-representation $(\pi,\H)$ is the isomorphism class of the *-algebra $\staralg(\pi(J))$ generated by the representation.
This invariant is insensitive to changes in the multiplicity of the representations.

Let us consider a J*-representation $(\pi,\H)$ of a J*-algebra $J$. 
Let $J =\oplus_i J_i$ be the direct sum decomposition into J*-factors.
Let $p_i$ be the unit of $J_i$ as an element of $J$.
Then $\pi(p_i)$ is a family of projections on $\H$ with $\sum_i p_i=1$, and $(\pi,\H)$ is evidently a direct sum of J*-representations $(\pi_i,\H_i)$, where $\H_i = \pi(p_i)\H$.
Therefore, to understand J*-representations of general J*-algebras, it is sufficient to understand those of J*-factors.

Next, we consider a particularly useful representation, derived from the universal enveloping *-algebra, which we call the \emph{universal representation} (following \cite{idel}).%
\footnote{Our definition of the universal representation of a J*-algebra is not related to the concept of a universal representation of a C*-algebra.}

\begin{definition}\label{def:univ-rep}
    A representation $(\pi,\H)$ of an abstract J*-algebra $J$ is \emph{universal} if $(\pi, A= \staralg(\pi(J))$ is the universal enveloping *-algebra of $J$ and if $A\subset L(\H)$ is a multiplicity-free representation of $A$.
\end{definition}
The uniqueness of the universal enveloping *-algebra and the uniqueness of the multiplicity-free representation of a (finite-dimensional) *-algebra up to unitary equivalence imply the following:

\begin{lemma}\label{lem:unitary-equivalence-univ}
    If $(\pi_i,\H_i)$, $i=1,2$, are universal representations of a J*-algebra $J$, then $\pi_1$ and $\pi_2$ are unitarily equivalent.
\end{lemma}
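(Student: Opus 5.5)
Let $A_i=\staralg(\pi_i(J))\subset L(\H_i)$. The natural route is to build a *-isomorphism $A_1\to A_2$ that intertwines $\pi_1$ and $\pi_2$, and then implement it by a unitary using multiplicity-freeness.

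First, by \cref{def:univ-rep}, both $(\pi_1,A_1)$ and $(\pi_2,A_2)$ are universal enveloping *-algebras of $J$. Apply the universal property \eqref{eq:universal-prop} with $\iota=\pi_1$ and $\phi=\pi_2:J\to A_2$, which is a J*-homomorphism. This gives a *-homomorphism $\Phi:A_1\to A_2$ with $\Phi\circ\pi_1=\pi_2$. Symmetrically, there is a $\Psi:A_2\to A_1$ with $\Psi\circ\pi_2=\pi_1$.

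Then $\Psi\circ\Phi$ is a *-homomorphism $A_1\to A_1$ that fixes $\pi_1(J)$ pointwise. Since $A_1$ is generated as a *-algebra by $\pi_1(J)$, and a *-homomorphism is determined by its values on generators, $\Psi\circ\Phi=\id_{A_1}$. Likewise $\Phi\circ\Psi=\id_{A_2}$. Hence $\Phi$ is a *-isomorphism $A_1\to A_2$ with $\Phi\circ\pi_1=\pi_2$.

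Next, $\mathrm{id}_{A_2}\circ\Phi$ and the inclusion $A_1\subset L(\H_1)$ are two *-representations of the abstract *-algebra $A_1$, namely on $\H_2$ and on $\H_1$. Both are multiplicity-free by assumption. The structure theory recalled in \cref{sec:reps}, where $\H\cong\oplus_j\H_j\ox\L_j$ and the representation is determined up to unitary equivalence by the multiplicities $\dim\L_j$, then gives a unitary $u:\H_1\to\H_2$ with $u\,a\,u^*=\Phi(a)$ for all $a\in A_1$. One point needs care. Both representations must be unital, so that no zero summand appears, and every simple block of $A_1$ must occur exactly once in each. Unitality holds because $\pi_i(1)=\1$. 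Multiplicity-freeness means every simple summand of $A_i\cong\oplus_j L(\H_j)$ occurs with multiplicity exactly one, and the summands match under $\Phi$. So the multiplicity data agree.

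Restricting to $\pi_1(J)$ yields $u\,\pi_1(a)\,u^*=\Phi(\pi_1(a))=\pi_2(a)$ for all $a\in J$, which is the claimed unitary equivalence. The only step needing attention is the identification in the previous paragraph of "multiplicity-free" with "each irreducible summand occurs exactly once", which follows from the standard finite-dimensional decomposition. The uniqueness part of the universal property is not needed separately, because the generation argument supplies it directly.
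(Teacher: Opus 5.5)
Your proof is correct and takes the same route as the paper, which simply cites two facts: the universal enveloping *-algebra is unique up to a *-isomorphism intertwining the embeddings, and multiplicity-free (faithful, unital) representations of a finite-dimensional *-algebra are unique up to unitary equivalence. You spell both facts out: you build the intertwining *-isomorphism $\Phi\colon A_1\to A_2$ from the universal property, using that each $A_i$ is generated by $\pi_i(J)$, and then implement $\Phi$ by a unitary.
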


\begin{example}\label{exa:univ-reps}
    We describe the universal representations of J*-factors.
    The proofs of these claims are given in \cite{jacobson_classification_1949} or \cite[Sec.~3.3]{idel}
    \begin{enumerate}
        \item 
            The universal representation of $L(\CC^n)$ for $n\geq 2$ is the representation $a\mapsto a\oplus a^t$ on $\CC^{2n}$, 
            and the universal enveloping *-algebra is $\staralg(\{a\oplus a^t : a\in L(\CC^n)) = L(\CC^n)\oplus L(\CC^n)$.
        \item 
            The universal representation of the J*-factor $L(\CC^n)^t$ of symmetric matrices on $\CC^n$ for $n\geq 2$ (see \eqref{eq:symmetric-matrices}), is the defining representation on $\CC^n$, and the universal enveloping *-algebra is $\staralg(L(\CC^n)^t) = L(\CC^n)$. 

        \item 
            For $n\geq 3$, the universal representation of the J*-factor $L(\CC^{2n})^\beta$ of $2n\times 2n$ matrices invariant under the symplectic involution $\beta$ (see \eqref{eq:symplectic-fixed-pts}) is the defining representation on $L(\CC^{2n})$, and the universal enveloping *-algebra is $\staralg(L(\CC^{2n})^\beta) = L(\CC^{2n})$.
        
        \item 
            The universal representation of a spin factor $V_n$, $n\in\NN$, is the representation described in \cref{exa:spin-factors}, the universal enveloping *-algebra is $\staralg(V_n) = L((\CC^2)^{\otimes \frac n2})$ if $n$ is even and $\staralg(V_n) = L((\CC^2)^{\otimes \frac{n-1}2})\oplus L((\CC^2)^{\otimes \frac{n-1}2})$ if $n$ is odd.
    \end{enumerate}
\end{example}

The universal enveloping *-algebra and, therefore, the universal representation are compatible with direct sums: If $J_k \subset L(\H_k)$ are J*-algebras with universal enveloping *-algebras $(A_k,\iota_k)$ and universal representations $(\pi_k,\hat\H_k)$, then $(\oplus_k \iota_k, \oplus_k A_k)$ is a universal enveloping *-algebra and $(\oplus_k \pi_k,\oplus_k \hat\H_k)$ is a universal representation of $\oplus_k J_k$.
Therefore, \cref{exa:univ-reps} describes the universal representations of general J*-algebras (see \cref{sec:structure-theory}).

Universal enveloping *-algebras are not only interesting because they give rise to useful representations.
More importantly, they can be used to reduce the representation theory of J*-algebras to that of *-algebras.
If $(\pi,\H)$ is a J*-representation of a J*-algebra $J$ and if $(\pi_u,\H_u)$ is the universal representation, then there is a *-homomorphism $\hat\pi:A\to L(\H)$ with $\hat\pi \circ \iota = \pi$, where $A=\staralg(\pi_u(J))$ is the universal enveloping *-algebra.

The representation theory becomes particularly simple if the universal enveloping *-algebra $A$ is a matrix factor.
In this case, every J*-representation is unitarily equivalent to the universal representation with additional multiplicity:

\begin{corollary}\label{cor:reps-if-A-factor}
    Let $J$ be an abstract J*-algebra whose universal enveloping *-algebra $A$ is a matrix factor. 
    If $(\pi,\H)$ is a J*-representation and if $(\pi_u,\H_u)$ denotes the universal representation, then there exists a unitary $u:\H\to \H_u\ox \L$ for some Hilbert space $\L$ such that
    \begin{equation}
        \pi = u^* (\pi_u \ox \1_\L) u.
    \end{equation}
\end{corollary}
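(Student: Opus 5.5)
The strategy is to pass from the J*-representation to a *-representation of the universal enveloping *-algebra, and then use the standard structure of *-representations of a matrix factor. By the universal property \eqref{eq:universal-prop}, applied to $\phi=\pi:J\to L(\H)$ with $B=L(\H)$, there is a *-homomorphism $\hat\pi:A\to L(\H)$ with $\hat\pi\circ\iota=\pi$. Here we take $(\iota,A)$ with $A=\staralg(\pi_u(J))\subset L(\H_u)$ and $\iota=\pi_u$, which is legitimate by \cref{def:univ-rep}. Since $\pi$ is unital and $\iota(1)=\1_{\H_u}$ is the unit of $A$, $\hat\pi$ is unital. Moreover, $A$ is a matrix factor and $\hat\pi$ is nonzero (because $\H\neq 0$), so $\hat\pi$ is injective: its kernel is a two-sided ideal in a simple algebra.

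The next step is to use the structure of representations of a factor. Since $A$ is a matrix factor acting multiplicity-free on $\H_u$ (by \cref{def:univ-rep}), we have $A=L(\H_u)$. A unital *-representation $\hat\pi:L(\H_u)\to L(\H)$ is unitarily equivalent to $a\mapsto a\ox\1_\L$ for some Hilbert space $\L$. This is the standard decomposition $\H\cong\oplus_j\H_j\ox\L_j$ recalled in \cref{sec:reps}, specialized to a single block. It can be seen directly as well: choose matrix units $e_{kl}$ of $L(\H_u)$, set $\L=\hat\pi(e_{11})\H$, and define $u^*(\xi_k\ox\eta)=\hat\pi(e_{k1})\eta$. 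This yields a unitary $u:\H\to\H_u\ox\L$ with $\hat\pi(a)=u^*(a\ox\1_\L)u$.

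Finally, composing the two steps gives $\pi=\hat\pi\circ\pi_u=u^*(\pi_u\ox\1_\L)u$, which is the claim.

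The only delicate point is the first step: making sure the universal enveloping *-algebra may be realized concretely as $\staralg(\pi_u(J))$ with $\iota=\pi_u$, so that the factor $A$ really is all of $L(\H_u)$. This is exactly what \cref{def:univ-rep} provides, together with the fact that a multiplicity-free representation of a matrix factor is its defining representation. Everything else is routine.
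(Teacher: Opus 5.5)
Your proof is correct and follows the paper's argument: you realize the universal enveloping *-algebra as $(\pi_u, L(\H_u))$, lift $\pi$ to a *-homomorphism $\hat\pi$ via the universal property, and use that every unital *-representation of $L(\H_u)$ has the form $u^*(\placeholder\ox\1_\L)u$. Your additional remarks on the unitality of $\hat\pi$ and the explicit matrix-unit construction of $u$ fill in details the paper leaves implicit; the injectivity of $\hat\pi$ is true but not needed.
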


\begin{proof}
    As the universal enveloping *-algebra we can take $(\iota,A) = (\pi_u,L(\H_u))$.
    By the universal property \eqref{eq:universal-prop}, there is a *-homomorphism $\hat \pi :L(\H_u)\to L(\H)$ with $\hat\pi \circ \pi_u = \pi$.
    Thus, there is unitary $u:\H\to \H_u \ox \L$ with $\hat\pi = u^*(\id \ox \1)u$, and we have $\pi = \hat\pi \circ \pi_u = u^*(\pi_u \ox \1)u$.
\end{proof}

Next, we discuss in which representations a J*-algebra is reversible.
Recall that a concretely represented J*-algebra $J\subset L(\H)$ is reversible if it is closed under higher-order symmetrized products, i.e., if for all $a_1,\ldots,a_n\in J$ the operator $\trip{a_1}{\ldots}{a_n}$ is also in $J$.
It is easy to verify that $J$ is reversible if and only if each J*-factor $J_i \subset L(\H_i)$ in the direct sum decomposition $J = \oplus_i J_i$, $\H = \oplus_i \H_i$, is reversible.
Thus, we only have to understand the reversibility of J*-factors.

\begin{proposition}[{\cite[Thms.~5.3.10 \& 6.2.5]{hanche-olsen_jordan_1984}}]\label{prop:reversible}
    Let $J$ be a J*-factor on $\H$. 
    Then, if $J$ is not a spin factor, then it is reversible. 
    If $J \Jcong V_n$ is a spin factor, then:
    \begin{enumerate}[(i)]
        \item If $n=2$ or $n=3$, then it is reversible;
        \item If $n=4$ or $n \ge 6$, then $J$ is irreversible.
    \end{enumerate}
    The spin factor $V_5 \Jcong L(\CC^4)^\beta$ has both reversible and irreversible representations.
\end{proposition}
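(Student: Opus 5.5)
The plan is to reduce everything to the universal representation and its canonical involution, and to treat spin factors by a direct computation with anticommuting generators.

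\emph{Step 1: transfer from the universal representation.} Let $(\pi,\H)$ be a J*-representation of a J*-factor $J$, let $(\iota,A)$ be the universal enveloping *-algebra, and let $\hat\pi:A\to L(\H)$ be the *-homomorphism with $\hat\pi\circ\iota=\pi$ from \eqref{eq:universal-prop}. Since $\hat\pi$ is multiplicative, it maps symmetrized products to symmetrized products:
\begin{equation*}
\hat\pi\big(\trip{x_1}{\ldots}{x_n}\big)=\trip{\hat\pi x_1}{\ldots}{\hat\pi x_n}.
\end{equation*}
Consequently, if $\iota(J)\subset A$ is closed under symmetrized products, then so is $\pi(J)=\hat\pi(\iota(J))$ for \emph{every} representation $\pi$. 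Let $\vartheta$ be the canonical involution. By \eqref{eq:iota-theta-invariant} we have $\iota(J)\subset A^\vartheta$, and by the footnote in \cref{exa:fixed-pts} the fixed-point space $A^\vartheta$ is reversible. It therefore suffices to show $\iota(J)=A^\vartheta$.

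\emph{Step 2: the key identification $\iota(J)=A^\vartheta$.} I expect this to be the main obstacle. I would verify it case by case using \cref{exa:univ-reps}. An antiautomorphism is determined by its values on generators, and $\vartheta$ fixes $\iota(J)$, which generates $A$. Hence $\vartheta$ coincides with any involution that fixes $\iota(J)$, and we can identify it in each case.
\begin{itemize}
\item For $L(\CC^n)$, $n\ge2$, we have $A=L(\CC^n)\oplus L(\CC^n)$ and $\vartheta(a\oplus b)=b^t\oplus a^t$. The fixed points are exactly $\{a\oplus a^t\}=\iota(J)$.
\item For symmetric matrices, $\vartheta$ is the transpose, and $A^\vartheta=L(\CC^n)^t=J$.
\item For $L(\CC^{2n})^\beta$ with $n\ge3$, $\vartheta=\beta$ and again $A^\vartheta=J$.
\end{itemize}
The small leftover cases reduce to spin factors via the overlaps listed in \cref{thm:classification}: $L(\CC^2)^\beta\Jcong V_3$ and $L(\CC^4)^\beta\Jcong V_5$.
\begin{itemize}
\item For $V_2=L(\CC^2)^t$, the transpose argument applies directly.
\item For $V_3$, we have $A=L(\CC^2)\oplus L(\CC^2)$. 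Its image is $\{a\oplus a^t\}$ up to conjugation, which is again a fixed-point algebra of the swap-transpose involution; alternatively one can note $V_3\Jcong L(\CC^2)$ and use the first case.
\end{itemize}
This proves that all non-spin factors, as well as $V_2$ and $V_3$, are reversible in every representation.

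\emph{Step 3: irreversibility for $n=4$ and $n\ge6$.} Take pairwise anticommuting hermitian unitaries $s_1,\ldots,s_n$ spanning $J$ together with $\1$, and set $x=s_1s_2s_3s_4$. Reversing the order of four anticommuting factors costs six transpositions, so
\begin{equation*}
\trip{s_1}{s_2,s_3}{s_4}=x,
\end{equation*}
and $x$ is a hermitian unitary, in particular $x\neq0$. Moreover, $x$ anticommutes with $s_1,\ldots,s_4$ and commutes with $s_i$ for $i\ge5$.

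Suppose $x=c\1+\sum_i\lambda_i s_i$. Linear independence of $\1,s_1,\ldots,s_n$ lets us read off coefficients from commutation relations.
\begin{itemize}
\item Anticommutation with $s_1,\ldots,s_4$ gives $c=0$ and $\lambda_1=\cdots=\lambda_4=0$. For $n=4$ this already forces $x=0$, a contradiction.
\item For $n\ge6$, the relation $[s_5,x]=2\sum_{i\ge6}\lambda_i s_5s_i=0$ forces $\lambda_i=0$ for $i\ge6$. Likewise, $[s_6,x]=0$ forces $\lambda_5=0$. Again $x=0$, a contradiction.
\end{itemize}
So $x\notin J$, and $J$ is irreversible in every representation.

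\emph{Step 4: $V_5$ has both kinds of representations.} The same computation for $V_5$ only yields $x=\lambda_5s_5$. The element $s_1\cdots s_5$ is central in $A$.
\begin{itemize}
\item In the defining representation of $L(\CC^4)^\beta$, this central element is a scalar, $x=\pm s_5$ (up to a phase), and the representation is reversible by Step 2's fixed-point argument with $\vartheta=\beta$.
\item In the universal representation, $A=L(\CC^4)\oplus L(\CC^4)$ and the central element equals $\1\oplus(-\1)$ up to phase. Then $x$ is proportional to $s_5\oplus(-s_5)$, which is not in $J$, so this representation is irreversible.
\end{itemize}
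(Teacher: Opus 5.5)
Your proof is correct, and it takes a different route from the paper, which gives no argument of its own and only quotes Hanche-Olsen--St\o rmer.

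\emph{The reversible cases.} You use the mechanism of \cref{lem:rev-hull1}. By the universal property, every representation is a $*$-homomorphic image of the universal envelope. So it suffices that $\iota(J)$ is the fixed-point set of some involution of $A$, and \cref{exa:univ-reps} makes this immediate for all non-spin factors and for $V_2$, $V_3$. This is short and fits the paper's framework. It is also not circular: the paper's \cref{prop:univ-rev} relies on \cref{prop:reversible}, but you never invoke \cref{prop:univ-rev}. The price is that the real work moves into \cref{exa:univ-reps}, i.e.\ Jacobson's determination of the universal enveloping algebras. That is itself a cited, nontrivial input, whereas the Hanche-Olsen--St\o rmer citation covers everything at once.

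\emph{The spin factors.} Your argument here is elementary and self-contained: the element $s_1s_2s_3s_4$ handles $n=4$ and $n\ge 6$, and the central element $s_1\cdots s_5$ handles $V_5$.

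Three small points:
\begin{itemize}
\item You do not need $\vartheta$ to be the canonical involution. Any involution of $A$ whose fixed-point set is exactly $\iota(J)$ suffices.
\item $L(\CC^2)^\beta=\CC\1$; the paper lists $L(\CC^2)^\beta\Jcong V_1\Jcong L(\CC^1)$. It is not $V_3$; the overlap with $V_3$ is $L(\CC^2)$. This case is trivially reversible, so nothing breaks.
\item In Step 4 you should say why $s_1\cdots s_5$ is not $\pm\1$ in the universal representation of $V_5$. If it were, then $s_5=\pm s_1s_2s_3s_4$, so $A$ would be generated by $s_1,\dots,s_4$. That gives $\dim A\le 16<32=\dim\big(L(\CC^4)\oplus L(\CC^4)\big)$, a contradiction. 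Hence the central hermitian unitary must take opposite signs on the two summands, as you claim.
\end{itemize}
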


We have seen that some J*-algebras are reversible in every representation. 
We call such J*-algebras \emph{universally reversible}.
This class of J*-algebras will be important later on.

\begin{proposition}\label{prop:univ-rev}
    Let $J$ be a J*-algebra in its universal representation. 
    Let $\vartheta$ be the canonical involution on the universal enveloping *-algebra $A = \staralg(J)$. 
    The following are equivalent:
    \begin{enumerate}[(a)]
        \item\label{it:univ-rev1} $J$ is reversible as a subalgebra of $A$;
        \item\label{it:univ-rev2} $J$ is universally reversible;
        \item\label{it:univ-rev3} $J = A^\vartheta$.
        \item\label{it:univ-rev4} $J$ contains no direct summands that are J*-isomorphic to spin factors $V_n$, $n\ge 4$. 
    \end{enumerate}
\end{proposition}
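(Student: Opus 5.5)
We prove the cycle \ref{it:univ-rev3} $\Rightarrow$ \ref{it:univ-rev2} $\Rightarrow$ \ref{it:univ-rev1} $\Rightarrow$ \ref{it:univ-rev3}, and separately \ref{it:univ-rev3} $\Leftrightarrow$ \ref{it:univ-rev4} using the classification.

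For \ref{it:univ-rev3} $\Rightarrow$ \ref{it:univ-rev2}, let $(\pi,\H)$ be an arbitrary J*-representation. By the universal property \eqref{eq:universal-prop} there is a *-homomorphism $\hat\pi:A\to L(\H)$ with $\hat\pi\circ\iota=\pi$. Take $a_1,\ldots,a_n\in J$. Since $J=A^\vartheta$ is reversible (\cref{exa:fixed-pts}), the product $\trip{a_1}{\ldots}{a_n}$ computed in $A$ lies in $J$. Because $\hat\pi$ is multiplicative,
\[
\trip{\pi(a_1)}{\ldots}{\pi(a_n)}=\hat\pi\big(\trip{a_1}{\ldots}{a_n}\big)=\pi\big(\trip{a_1}{\ldots}{a_n}\big)\in\pi(J).
\]
The implication \ref{it:univ-rev2} $\Rightarrow$ \ref{it:univ-rev1} is trivial, since the universal representation is one particular representation.

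For \ref{it:univ-rev1} $\Rightarrow$ \ref{it:univ-rev3}, note first that $J\subset A^\vartheta$ by \eqref{eq:iota-theta-invariant}. For the reverse inclusion, observe that $A=\staralg(J)$ is spanned by words $a_1\cdots a_n$ with $a_i\in J$ hermitian. Since $\vartheta$ is an anti-automorphism fixing $J$, we have $\vartheta(a_1\cdots a_n)=a_n\cdots a_1$. The projection $\tfrac12(\id+\vartheta)$ maps $A$ onto $A^\vartheta$ and sends each word to $\trip{a_1}{\ldots}{a_n}$. Hence $A^\vartheta$ is spanned by symmetrized products of elements of $J$, and reversibility gives $A^\vartheta\subset J$.

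For \ref{it:univ-rev3} $\Leftrightarrow$ \ref{it:univ-rev4}, everything is compatible with direct sums: universal representations, enveloping algebras and canonical involutions all decompose summandwise. It therefore suffices to treat J*-factors, using \cref{exa:univ-reps} and \cref{prop:reversible}. Through the already proven equivalence \ref{it:univ-rev1} $\Leftrightarrow$ \ref{it:univ-rev3}, the non-spin factors and $V_2,V_3$ are reversible in every representation, in particular in the universal one, so they satisfy \ref{it:univ-rev3}. The spin factors $V_n$ with $n=4$ or $n\ge6$ are irreversible in every representation, so they fail \ref{it:univ-rev1}. This leaves $V_5$, which I expect to be the main obstacle, because \cref{prop:reversible} only says it has both reversible and irreversible representations. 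Here I would argue directly. By \cref{exa:univ-reps} its universal representation is the Majorana representation on $(\CC^2)^{\ox 2}$ with $A=L(\CC^4)$. The triple product $\trip{s_1}{s_2}{s_3}=s_1s_2s_3$ is not in $\lin\{\1,s_1,\ldots,s_5\}$, because the ten products $s_is_j$ ($i<j$) together with $\1$ and the $s_i$ are linearly independent in $L(\CC^4)$ and $s_1s_2s_3\propto s_4s_5$. So $V_5$ is irreversible in its universal representation and violates \ref{it:univ-rev1}, hence \ref{it:univ-rev3}, consistent with \ref{it:univ-rev4}. The same explicit check, $s_1s_2s_3\notin V_n$, also independently confirms irreversibility for $n=4$ and $n\ge6$.
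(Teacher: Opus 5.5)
\textbf{Gap in the $V_5$ case of (c)$\Rightarrow$(d).} Your treatment of (a), (b), (c) is correct and is essentially the paper's argument. The paper calls (a)$\Leftrightarrow$(b)$\Leftarrow$(c) clear and obtains (b)$\Rightarrow$(c) from \cref{lem:rev-hull1}, which is your symmetrization argument with $\tfrac12(\id+\vartheta)$. The problem is your witness of irreversibility for $V_5$, which is wrong in two ways.

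\emph{The computation is wrong.} For pairwise anticommuting $s_i$ one has $s_3s_2s_1=-s_1s_2s_3$, so $\trip{s_1}{s_2}{s_3}=\tfrac12(s_1s_2s_3+s_3s_2s_1)=0\in V_5$. More fundamentally, a triple product can never witness irreversibility, because every J*-algebra is closed under $\trip abc$ by \eqref{eq:triple-product}. The first symmetrized product of distinct generators that survives is the fourfold one: reversing four anticommuting factors gives the sign $+1$, so $\{s_1,s_2,s_3,s_4\}=s_1s_2s_3s_4$.

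\emph{The universal representation is misidentified.} By \cref{exa:univ-reps}, for odd $n$ the enveloping algebra of $V_n$ is a direct sum of two matrix algebras. For $V_5$ it is $L(\CC^4)\oplus L(\CC^4)$ acting on $\CC^8$, not $L(\CC^4)$ acting on $\CC^4$. In your $\CC^4$ representation you have $s_5\propto s_1s_2s_3s_4$, so even the fourfold products fall back into $V_5$. That representation is, up to unitary equivalence, $L(\CC^4)^\beta\subset L(\CC^4)$, the fixed points of an involution. It is therefore precisely the \emph{reversible} representation of $V_5$ mentioned in \cref{prop:reversible}, and a correct check there would point the wrong way. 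The same objection applies to your ``independent confirmation'' for $n=4$ and $n\ge6$. That slip is harmless, since \cref{prop:reversible} already covers those cases.

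\textbf{The fix.} The obstacle you anticipated is not real, because you had already proved (b)$\Leftrightarrow$(c). By \cref{prop:reversible}, $V_5$ has an irreversible representation. So it is not universally reversible, it fails (b), and therefore it fails (c). This is how the paper argues: it links (d) to (b) rather than to (c). It deduces (b)$\Leftrightarrow$(d) directly from \cref{prop:reversible}, with reversibility checked summandwise, and never needs to know what the universal representation of $V_5$ is.

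If you want a direct check instead, do it in the genuine universal representation on $\CC^8$. There $\dim A=32$, so the $32$ ordered monomials $s_{i_1}\cdots s_{i_k}$ ($i_1<\dots<i_k$) form a basis of $A$. Hence $\{s_1,s_2,s_3,s_4\}=s_1s_2s_3s_4\notin\lin\{\1,s_1,\ldots,s_5\}$.
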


For the proof, we need some preparation.
If $K\subset L(\H)$ is an operator system, we define 
\begin{equation}\label{eq:def-RJSA}
	\revJstaralg(K) = \bigcap_{J\supset K} J
\end{equation}
where the intersection is over \emph{reversible} J*-algebras $J\subset L(\H)$ containing $K$. 
We refer to $\revJstaralg(K)$ as the reversible J*-algebra \emph{generated by} $K$.
By definition, it is the smallest reversible J*-algebra containing $K$.

\begin{lemma}\label{lem:rev-hull1}
    Let $J$ be a J*-algebra on a Hilbert space $\H$.
	Let $A=\staralg(J) \subset L(\H)$ be the *-algebra generated by $J$ and suppose that $\vartheta$ is an involution on $A$, which leaves $J$ pointwise fixed.
    Then the fixed-point set $A^\vartheta$ is the reversible J*-algebra generated by $J$.
    \begin{equation}
        A^\vartheta = \revJstaralg(J).
    \end{equation}
\end{lemma}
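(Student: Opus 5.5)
We show the two inclusions $\revJstaralg(J)\subset A^\vartheta$ and $A^\vartheta\subset\revJstaralg(J)$ separately.

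For the first inclusion, note that $A^\vartheta$ is a reversible J*-algebra on $\H$ by \cref{exa:fixed-pts}: it is unital and *-invariant, since $\vartheta$ is a *-antiautomorphism with $\vartheta(\1)=\1$. Because $\vartheta$ fixes $J$ pointwise, $A^\vartheta$ contains $J$. By the definition \eqref{eq:def-RJSA}, $\revJstaralg(J)$ is the intersection of all reversible J*-algebras containing $J$, so $\revJstaralg(J)\subset A^\vartheta$.

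For the reverse inclusion, the key observation is that $A=\staralg(J)$ is spanned by finite products $a_1a_2\cdots a_n$ with $a_i\in J$. We may even take the $a_i$ hermitian, since $J$ is spanned by its hermitian part (indeed by its projections, \eqref{eq:spanned-by-projections}). Given $x\in A^\vartheta$, write $x=\sum_k c_k\, a^{(k)}_1\cdots a^{(k)}_{n_k}$ and apply the idempotent $\frac12(\id+\vartheta)$, which fixes $x$. Since $\vartheta$ reverses products and fixes each $a_i\in J$,
\begin{equation*}
    \vartheta(a_1\cdots a_n)=\vartheta(a_n)\cdots\vartheta(a_1)=a_n\cdots a_1,
\end{equation*}
and therefore
\begin{equation*}
    x=\tfrac12\big(x+\vartheta(x)\big)=\sum_k c_k\,\trip{a^{(k)}_1}{\ldots}{a^{(k)}_{n_k}}.
\end{equation*}
Each symmetrized product $\trip{a_1}{\ldots}{a_n}$ with $a_i\in J\subset\revJstaralg(J)$ lies in $\revJstaralg(J)$, because that algebra is reversible, being an intersection of reversible J*-algebras. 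The intersection of reversible J*-algebras is again a reversible J*-algebra, since closure under symmetrized products passes to intersections. The scalar term $\1$ also lies in $\revJstaralg(J)$. Hence $x\in\revJstaralg(J)$, which proves $A^\vartheta\subset\revJstaralg(J)$.

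Nothing here is a serious obstacle. The only points needing care are that $\vartheta$ is complex-linear, so that $\frac12(\id+\vartheta)$ is linear and fixes $A^\vartheta$, and that the empty product is $\1$, which is handled by unitality.
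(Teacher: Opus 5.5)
Your proof is correct and essentially the same as the paper's. Both show that $A^\vartheta$ is a reversible J*-algebra containing $J$: the paper does this via the symmetrization map $\frac12(\id+\vartheta)$, you by citing the footnote to \cref{exa:fixed-pts}. Both then observe that applying $\frac12(\id+\vartheta)$ to a noncommutative polynomial in $J$ yields a linear combination of symmetrized products, which lies in every reversible J*-algebra containing $J$.
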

\begin{proof}
    The inclusion $J\subset A^\vartheta$ holds by assumption.
    We check that $A^\vartheta$ is reversible.
    To see this, note that the symmetrization map $E = \frac12(\id + \vartheta)$ is a conditional expectation of $A$ onto $A^\vartheta$.
    Now, if $a_1,\ldots,a_n\in A^\vartheta$, then $E(a_1\cdots a_n+a_n\cdots a_1)= \frac 12(a_1\cdots a_n + a_n\cdots a_1 +a_n\cdots a_1+a_1\cdots a_n) =  a_1\cdots a_n+a_n\cdots a_1$ proves the reversibility of $A^\vartheta$.
    Thus, we have $A^\vartheta \supset \revJstaralg(J)$.
	Since every element of $A$ is a noncommutative polynomial in $J$, every element of $A^\vartheta=E(A)$ is a symmetrized noncommutative polynomial in $J$.
    Clearly, all symmetrized noncommutative polynomials in $J$ are in every reversible J*-algebra that contains $J$.
    Thus, $A^\vartheta \subset \revJstaralg(J)$.
\end{proof}

\begin{proof}[Proof of \cref{prop:univ-rev}]
    \ref{it:univ-rev1} $\Leftrightarrow$ \ref{it:univ-rev2} $\Leftarrow$ \ref{it:univ-rev3} are clear.
    \ref{it:univ-rev2} $\Rightarrow$ \ref{it:univ-rev3} is shown in \cref{lem:rev-hull1}.
    \ref{it:univ-rev2} $\Leftrightarrow$ \ref{it:univ-rev4} is shown in \cref{prop:reversible}.
\end{proof}

We note the following consequence of \cref{prop:univ-rev}, which will be useful for us later on:

\begin{corollary}\label{cor:rev-hull}
    Let $J$ be a universally reversible J*-algebra and let $A$ be its universal enveloping *-algebra.
    Suppose that a J*-subalgebra $J_0\subset J$ generates $A$ as a *-algebra.
    Then, in any representation, $J \subset L(\H)$ is the reversible J*-algebra generated by $J_0$:
    \begin{equation*}
        J = \revJstaralg(J_0).
    \end{equation*}
\end{corollary}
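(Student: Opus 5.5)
The plan is to reduce the statement to \cref{lem:rev-hull1}. Fix a representation $J\subset L(\H)$ and set $B=\staralg(J)\subset L(\H)$. Since $J_0\subset J$, we have $\revJstaralg(J_0)\subset\revJstaralg(J)$. Moreover $J$ is universally reversible, so $J$ itself is reversible in this representation and $\revJstaralg(J)=J$. This gives $\revJstaralg(J_0)\subset J$, and what remains is the reverse inclusion $J\subset\revJstaralg(J_0)$.

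For the reverse inclusion I want an involution on $B$ fixing $J$ (hence $J_0$) pointwise. Let $(\iota,A)$ be the universal enveloping *-algebra with canonical involution $\vartheta$, and let $\hat\pi:A\to B$ be the *-homomorphism extending the inclusion $J\hookrightarrow B$ given by \eqref{eq:universal-prop}. It is surjective because $B$ is generated by $J$, so $B\cong A/\ker\hat\pi$. The natural candidate is to push $\vartheta$ down to $B$. For this I need $\vartheta(\ker\hat\pi)=\ker\hat\pi$. The kernel is a two-sided ideal, i.e.\ a sum of central summands $zA$ with $z$ a central projection. I expect $\vartheta$ to permute the minimal central projections of $A$, and the question is whether it preserves the kernel.

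This is the main obstacle. The cleanest route I see is to avoid it by applying the same universal-property argument to $B^{\mathrm{op}}$. The map $J\to B^{\mathrm{op}}$, $a\mapsto a$, is a J*-homomorphism, since Jordan products are symmetric. Hence it extends to a *-homomorphism $A\to B^{\mathrm{op}}$, which is $\hat\pi\circ\vartheta$ viewed as an anti-homomorphism into $B$. Its kernel is $\vartheta(\ker\hat\pi)$. I then need this kernel to coincide with $\ker\hat\pi$. Both ideals are determined by which central summands of $A$ act nontrivially on $\H$, and I would argue this by noting that $z\in Z(A)$ with $z\in\ker\hat\pi$ iff the corresponding summand of $J$ (in the factor decomposition, using compatibility of universal enveloping algebras with direct sums) is represented in a way involving only the other block. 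For the J*-factors listed in \cref{exa:univ-reps}, $\vartheta$ swaps the two blocks exactly in the cases $L(\CC^n)$ and odd spin factors (e.g.\ $a\oplus a^t$). In those cases one must check that a non-faithful $\hat\pi$ killing one block still admits an involution, namely the transpose-type one. I would handle this case-by-case via \cref{exa:univ-reps}, constructing the involution on each block of $B$ directly (for $L(\CC^n)\to L(\CC^n)\ox\1$, use the transpose-type involution fixing the image of $J$ in that block).

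Once an involution $\vartheta_B$ on $B$ fixing $J$ pointwise is obtained, we apply the hypothesis on $J_0$. Since $J_0$ generates $A$, $\hat\pi(\iota(J_0))=J_0$ generates $B$, so $\staralg(J_0)=B$ in $L(\H)$. \cref{lem:rev-hull1}, applied to $J_0$ with $\vartheta_B$, then gives $B^{\vartheta_B}=\revJstaralg(J_0)$. Since $J\subset B^{\vartheta_B}$, we obtain $J\subset\revJstaralg(J_0)$, completing the proof.
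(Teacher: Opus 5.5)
\textbf{There is a genuine gap.} Your reduction to \cref{lem:rev-hull1} inside $B=\staralg(J)$ needs an involution $\vartheta_B$ on $B$ with $J\subset B^{\vartheta_B}$. In exactly the case you flag as the obstacle, no such involution exists.

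Suppose $J$ has a summand $J_k\Jcong L(\CC^n)$, $n\ge 2$, represented as in case~\ref{it:reps1} of \cref{prop:reps}, and let $e_k$ be its unit. Then $e_k$ is central in $B$ and $e_kB=J_k=u^*(L(\CC^n)\ox\1)u$. An involution fixing $J$ pointwise fixes $e_k$, so it maps the block $e_kB$ to itself. On that block it must be the identity, and the identity is not an anti-automorphism of a noncommutative algebra. The transpose-type involution you propose for this block does not fix $L(\CC^n)$ pointwise, so \cref{lem:rev-hull1} cannot be applied in $B$ here.

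Nor can this be repaired by an argument internal to $B$ that uses only $\staralg(J_0)=B$, which is all your last paragraph retains. Take $J=L(\CC^n)$ in its defining representation and $J_0=L(\CC^n)^t$. Then $J_0$ generates $B=L(\CC^n)$, yet $\revJstaralg(J_0)=J_0\neq J$. Consistently with the statement, $\iota(J_0)=\{a\oplus a: a=a^t\}$ does not generate $A=L(\CC^n)\oplus L(\CC^n)$. The hypothesis that $J_0$ generates the \emph{universal} algebra $A$ has to be exploited in $A$ itself.

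\textbf{The fix} is to work in $A$, as the paper does, and transport identities rather than involutions. By \cref{prop:univ-rev}, $\iota(J)=A^\vartheta=\tfrac12(\id+\vartheta)(A)$. Since $A$ is spanned by products of elements of $\iota(J_0)$ and $\vartheta\circ\iota=\iota$, every $\iota(x)$ with $x\in J$ is a linear combination of symmetrized products $\trip{\iota(a_1)}{\ldots}{\iota(a_m)}$ with $a_i\in J_0$.

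Now apply the *-homomorphism $\hat\pi$, for which $\hat\pi\circ\iota$ is the inclusion of $J$ into $L(\H)$. This turns the expression into the same linear combination of $\trip{a_1}{\ldots}{a_m}$ in $L(\H)$, so $J\subset\revJstaralg(J_0)$ in every representation. Combined with your first paragraph, this gives equality.

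Your involution-based construction is fine whenever $\ker\hat\pi$ is $\vartheta$-invariant, for example in the universal representation, which is the only case used in the proof of \cref{thm:K-generates}. It does not, however, cover all representations as the statement claims.
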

\begin{proof}
    If we show that $J$ is the reversible subalgebra of $A$ that is generated by $J_0\subset J\subset A$, then the universal property of the universal enveloping *-algebra implies the claim in all representations.
    Let $\vartheta$ be the canonical involution on $A$.
    Since $J$ is universally reversible, by \cref{prop:univ-rev}, we have $J=A^\vartheta$.
    By \cref{lem:rev-hull1}, it follows that $J = A^\vartheta = \revJstaralg(J_0)$.
\end{proof}

Finally, we give an explicit description of general representations of universally reversible J*-algebras.
As discussed above, it suffices to consider J*-factors whose representations are classified by the following:
\begin{proposition}\label{prop:reps}
    Let $J\subset L(\H)$ be a universally reversible J*-factor and set $A=\staralg(J)$. Then exactly one of the following cases holds: 
    \begin{enumerate}[(i)]
        \item\label{it:reps1} There is a unitary $u :  \H \to \CC^n \ox \CC^l$, $n\in \NN$, $l\in \NN$, such that 
            \begin{equation}
                J = u^* ( L(\CC^n) \ox \1) u,\qquad A = u^*(L(\CC^n) \ox \1) u.
            \end{equation}

        \item\label{it:reps2}
        There is a unitary $u :  \H \to (\CC^n \ox \CC^{l_1}) \oplus (\CC^n\ox \CC^{l_2}$), $2\le n\in \NN$, $l_1,l_2\in \NN$, such that
        \begin{align}
            J &= u^* \, \big\{( a \ox \1) \oplus( a^t\ox \1) \ :\ a\in L(\CC^d) \big\} \, u ,\\
            A &=  u^* \, \big\{( a \ox \1) \oplus( b \ox \1) \ :\ a,b\in L(\CC^d) \big\} \, u 
        \end{align}

        \item\label{it:reps3} There is a unitary $u: \H \to \CC^n \ox \CC^l$, $2\le n\in \NN$, $l\in \NN$, such that 
        \begin{equation}
            J = u^*( L(\CC^n)^t \ox \1 ) u,\qquad A = u^*( L(\CC^n) \ox \1 ) u.
        \end{equation}

        \item\label{it:reps4} There is a unitary $u: \H \to \CC^n \ox \CC^l$, $3\le n\in\NN$, $l\in \NN$, such that 
        \begin{equation}
            J = u^*( L(\CC^{2n})^\beta \ox \1 ) u, \qquad A = u^*( L(\CC^{2n}) \ox \1) u.
        \end{equation}
    \end{enumerate}
       Moreover, in cases \ref{it:reps2} -- \ref{it:reps4} $A$ is the universal enveloping *-algebra. In case \ref{it:reps1}, the enveloping *-algebra $A$ is not universal. 
\end{proposition}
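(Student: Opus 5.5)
The plan is to reduce to the representation theory of the universal enveloping *-algebra. By \cref{prop:univ-rev}, a universally reversible J*-factor $J$ is not J*-isomorphic to a spin factor $V_n$ with $n\ge4$. Since $V_1,V_2,V_3$ coincide with $L(\CC^1)$, $L(\CC^2)^t$, $L(\CC^2)$, and $V_5\Jcong L(\CC^4)^\beta$, \cref{thm:classification} shows that, as an abstract J*-algebra, $J$ is isomorphic to one of $L(\CC^n)$, $L(\CC^n)^t$ (with $n\ge2$), or $L(\CC^{2n})^\beta$ (with $n\ge2$). Here $n=2$ in the symplectic case is $V_5$, and the trivial algebra $\CC$ falls under case \ref{it:reps1}.

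Let $\iota: J\to A_u$ denote the universal enveloping *-algebra and $\pi$ the given representation (the inclusion). By the universal property \eqref{eq:universal-prop} there is a *-homomorphism $\hat\pi:A_u\to L(\H)$ with $\hat\pi\circ\iota=\pi$, and its range is $\staralg(J)=A$. I would then go through the types.

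\textbf{Symmetric and symplectic types ($n\ge3$).} Here $A_u$ is a matrix factor by \cref{exa:univ-reps}, so \cref{cor:reps-if-A-factor} gives cases \ref{it:reps3} and \ref{it:reps4} directly. Since $\hat\pi$ is injective on a factor, $A\cong A_u$ is universal.

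\textbf{Type $L(\CC^n)$, $n\ge2$.} Here $A_u=L(\CC^n)\oplus L(\CC^n)$ with $\iota(a)=a\oplus a^t$. A *-representation of $A_u$ has the form $(\id\ox\1_{l_1})\oplus(\id\ox\1_{l_2})$ up to a unitary, so $J=\{(a\ox\1)\oplus(a^t\ox\1)\}$. If both $l_1,l_2\ge1$, $\hat\pi$ is faithful and we are in case \ref{it:reps2} with $A$ universal. If one multiplicity vanishes, then up to composing with the transpose (itself a unitary-implemented identification after relabeling $a\mapsto a^t$ as an abstract isomorphism) $J=u^*(L(\CC^n)\ox\1)u=A$, which is case \ref{it:reps1}. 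There $A\cong L(\CC^n)\not\cong A_u$, so $A$ is not universal.

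\textbf{Remaining small cases.} $J\cong\CC$ gives case \ref{it:reps1} with $n=1$.

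The main obstacle is the symplectic case $n=2$, i.e.\ $V_5\Jcong L(\CC^4)^\beta$. By \cref{exa:univ-reps}, $V_5$ has universal enveloping *-algebra $L(\CC^4)\oplus L(\CC^4)$, which is not a factor, and \cref{prop:reversible} shows that some of its representations are irreversible. Since $J$ is assumed universally reversible, this case must be excluded or absorbed. \cref{prop:univ-rev}\ref{it:univ-rev4} excludes only $V_n$ with $n\ge4$ summands, so $V_5$ is ruled out and the symplectic case only arises for $n\ge3$, consistent with the statement. Similarly, for the symmetric case I must check that $L(\CC^2)^t=V_2$ has factor enveloping algebra $L(\CC^2)$, which holds by \cref{exa:univ-reps}, and that $L(\CC^2)=V_3$ is handled by the $L(\CC^n)$ analysis (its enveloping algebra $L(\CC^2)\oplus L(\CC^2)$ matches). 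Finally, "exactly one" follows because the four cases are distinguished by the abstract J*-isomorphism type together with whether $J=A$.
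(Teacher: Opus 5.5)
Your proposal is correct and follows the paper's proof. You classify via \cref{thm:classification} and discard the spin factors $V_n$ with $n\ge4$, including $V_5\Jcong L(\CC^4)^\beta$. You treat $J\Jcong L(\CC^n)$ through the enveloping algebra $L(\CC^n)\oplus L(\CC^n)$ and the multiplicities $(l_1,l_2)$, and you get the symmetric and symplectic cases from \cref{cor:reps-if-A-factor}. The only slips are cosmetic: the transpose is not unitarily implemented, but you only need $\{a^t : a\in L(\CC^n)\}=L(\CC^n)$ as sets when one multiplicity vanishes, and your first paragraph briefly lists $L(\CC^4)^\beta$ as a surviving type before you correctly exclude it later.
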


\begin{proof}
    \emph{Case 1: $J \Jcong L(\CC^n)$.}
    The claim holds trivially if $n=1$. Let $n\ge 2$. 
    Let $\psi:J\to L(\CC^n)$ be a J*-isomorphism.
    By \cref{exa:univ-reps}, the universal enveloping *-algebra of $J$ is $(L(\CC^n)\oplus L(\CC^n),\iota)$ with $\iota:a \mapsto \psi(a)\oplus \psi(a)^t $.
    By the universal property \eqref{eq:universal-prop}, there is a *-homomorphism $\phi: L(\CC^n)\oplus L(\CC^n) \to L(\H)$ with $\phi \circ \iota |_J = \id_J$.
    Standard facts from the representation theory of *-algebras entail that a (unital) *-homomorphism of $L(\CC^n)\oplus L(\CC^n) \to L(\H)$ is of the form
    \begin{equation*}
        \phi(a\oplus b) = u^*( a \ox \1 \oplus b\ox \1) u, \qquad u :\H \to \CC^n \ox \CC^{l_1} \oplus \CC^{n}\ox \CC^{l_2},
    \end{equation*}
    where $l_1,l_2\in \NN_0$, $u$ is a unitary and where we employ the convention that $\CC^n\ox \CC^0 = 0$ to take care of the case where one of $l_1,l_2$ is zero (e.g., for the case $\phi(a\oplus b) = u^* a u$).
    Thus, we have
    \begin{equation*}
        J = \phi(\iota(J)) = \phi( \{a\oplus a^t \ : \ a\in L(\CC^n) \}) = u^* \big\{ (a\ox\1) \oplus (a^t\ox \1) \ : \ a\in L(\CC^n)\big\} u.
    \end{equation*}
    If both $l_1,l_2$ are nonzero, this entail case \ref{it:reps2}.
    If $l_1$ or $l_2$ are zero, we have $J = u^*(L(\CC^n\ox \1) u$, which is case \ref{it:reps1}.

    \emph{Case 2: $J \Jcong L(\CC^n)^t$, $n\ge 2$, or $J \Jcong L(\CC^{2n})^\beta$, $n\ge 3$.}
    In this case, the universal enveloping *-algebra is a matrix factor (see \cref{exa:univ-reps}).
    By \cref{cor:reps-if-A-factor}, this implies that we are either in case \ref{it:reps3} or in case \ref{it:reps4}.
\end{proof}

\begin{remark}\label{remark:reps-involution} 
    Note that in cases \ref{it:reps3} and \ref{it:reps4} we have (neglecting the unitary rotation $u$) $J = \hat J \ox \1$ with $\hat J \subset L(\L)$ for a Hilbert space $\L$ and $A = L(\L)\ox \1$. Moreover, $L(\L)$ is the universal enveloping *-algebra of $\hat J$, and $\hat J = \hat F(L(\L))$ with $\hat F = \frac{1}{2}(\id +\vartheta)$ for an involution $\vartheta$ on $L(\L)$. Hence $J = \frac{1}{2}(\id + \vartheta) \ox \id (A)$. 
\end{remark}

\subsection{Conditional expectations. II}\label{sec:CEs2}

We now come back to conditional expectations, but in the specific setting of universally reversible J*-algebras. 
Recall from \cref{lem:CE-extension} that every faithful conditional expectation onto a J*-algebra $J$ factorizes through $A = \staralg(J)$. It follows from the direct-sum decomposition $J = \oplus J_i$ into factors, discussed in \cref{sec:reps}, that the conditional expectation onto $A$ also decomposes into a direct sum of conditional expectations $E_i$, each of which factorizes through $A_i = \staralg(J_i)$. 
It thus suffices to consider J*-factors.

\begin{proposition}\label{prop:CE-univ-rev}
    Let $E$ be a faithful conditional expectation onto a universally reversible J*-factor $J\subset L(\H)$. Set $A = \staralg(J)$.
    Then the conditional expectation factors as
    \begin{equation}
        E: L(\H) \xrightarrow{\tilde E} A \xrightarrow{\hat E} J,
    \end{equation}
    where $\tilde E$ is a conditional expectation of $L(\H)$ onto $A$ and $\hat E=E|_A$ is a conditional expectation of $A$ onto $J$.\footnote{We only defined conditional expectations of $L(\H)$ onto J*-subalgebras of $L(\H)$, however, the definition makes sames more broadly: A conditional expectation $E$ of a J*-algebra $J$ onto a J*-subalgebra $J_0$ is a UP map $E:J\to J_0$ with $E|_{J_0}=\id$.}
    Depending on the cases listed in \cref{prop:reps}, we have:
    \begin{itemize}
        \item 
            In case \ref{it:reps1}, the Hilbert space decomposes as $\H =  \L\ox \R$ such that $J=L(\L)\ox \1$ and there is a faithful state $\omega$ on $\R$ such that the conditional expectation is of the form 
            \begin{equation}
                E = \id \ox \tr(\omega\placeholder)\1.
            \end{equation}
            Any $E$-invariant state takes the form $\sigma\ox \omega$ with $\sigma \in L(\L)$.
        \item
            In case \ref{it:reps2}, the Hilbert space decomposes as $\H = (\L\ox\R\up1)\oplus(\L\ox\R\up2)$ such that $J= \{ (a \ox \1)\oplus (a^t\ox \1) : a\in L(\L) \} $, and $A = (L(\L)\ox\1)\oplus(L(\L)\ox \1)$. There exists a $\lambda\in(0,1)$ and two faithful states $\omega\up{i}$ on $\R\up{i}$ such that the conditional expectations take the form
            \begin{equation}
                \tilde E = (\id \ox \tr(\omega\up1 \placeholder)\1)\oplus(\id \ox \tr(\omega\up2 \placeholder)\1
            \end{equation}
            and (for $a,b\in L(\L)$)
            \begin{equation}
                \hat E \big((a\ox \1)\oplus(b\ox \1)\big) = \big((\lambda a+ (1-\lambda)b^t) \ox \1\big)\oplus\big((\lambda a^t + (1-\lambda)b) \ox \1\big).
            \end{equation}
            Any $E$-invariant state takes the form $(\sigma \ox \lambda \omega\up1)\oplus(\sigma^t \ox (1-\lambda)\omega\up 2)$ with $\sigma \in L(\L)$.
        \item 
           In cases \ref{it:reps3} or \ref{it:reps4}, the Hilbert space decomposes as $\H = \L\ox \R$ such that $A=L(\L)\ox\1$ and $J = L(\L)^\vartheta\ox \1$ for an involution $\vartheta$ on $L(\L)$.
           The conditional expectations $\tilde E$ and $\hat E$ take the form
            \begin{align}
                \tilde E = \id \ox \tr(\omega\placeholder)\1,\qquad \hat E = \frac{1}{2}(\id+\vartheta)\ox \id
            \end{align}
            for a faithful state $\omega$ on $\R$. 
            Any $E$-invariant state takes the form $\sigma\ox \omega$, with $\sigma = \vartheta^* \sigma \in L(\L)$.
    \end{itemize}
\end{proposition}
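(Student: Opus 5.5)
The factorization $E = \hat E\circ\tilde E$ comes directly from \cref{lem:CE-extension}. With $\tau$ the maximally mixed state and $\omega_0 = E^*\tau$ faithful, that lemma provides the $\omega_0$-preserving conditional expectation $\tilde E$ onto $A=\staralg(J)$ with $E\tilde E = E$. On $A$ we have $\hat E := E|_A$, which is unital, positive, idempotent and has range $J$. Hence $E = E\tilde E = \hat E\tilde E$.

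For the explicit forms I will work case by case from \cref{prop:reps}, using a unitary to identify $\H$ with the model space.

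\emph{Cases \ref{it:reps3}, \ref{it:reps4}.} Here $A = L(\L)\ox\1$ on $\L\ox\R$. A conditional expectation onto $L(\L)\ox\1$ is a $A$-bimodule map by \eqref{eq:bimodule-CE} applied to the *-algebra $A$ (a UP idempotent onto a *-algebra is CP and a bimodule map). It therefore commutes with $L(\L)\ox\1$ and maps $\1\ox L(\R)$ into the center $\CC\1$. This forces $\tilde E = \id\ox\tr(\omega\placeholder)\1$ for some state $\omega$, and $\omega$ is faithful because $E$ is. For $\hat E$, I use that $J$ is universally reversible, so \cref{prop:univ-rev} and \cref{remark:reps-involution} give $J = \hat F(L(\L))\ox\1$ with $\hat F=\frac12(\id+\vartheta)$. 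Now $\hat E$, viewed as a map on $L(\L)$, is a conditional expectation onto $L(\L)^\vartheta$. It is faithful, so it preserves some faithful state. I want to show it equals $\hat F$, and I will do this with the uniqueness result \cref{cor:uniquenes-CE}. First, the proof of \cref{lem:CE-extension} shows that any faithful invariant state commutes with $J$, hence with $A=L(\L)$, hence is $\tau$. So $\hat E$ is trace-preserving. The trace-preserving conditional expectation onto $L(\L)^\vartheta$ is unique, and $\hat F$ is trace-preserving (as $\vartheta$ is a *-antiautomorphism of a full matrix algebra, $\tr\circ\vartheta=\tr$), so $\hat E = \hat F$. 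The invariant states follow by dualizing. $\tilde E^*$ maps any state to $\sigma\ox\omega$, and $\hat E^*=\hat E$ forces $\vartheta^*\sigma=\sigma$. Conversely, such states are invariant.

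\emph{Case \ref{it:reps1}.} Here $J=A=L(\L)\ox\1$, so $\hat E=\id$ and the same argument as for $\tilde E$ above applies.

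\emph{Case \ref{it:reps2}.} This is the main step. $A=(L(\L)\ox\1)\oplus(L(\L)\ox\1)$, and the same bimodule argument applied blockwise gives $\tilde E = \bigoplus_i \id\ox\tr(\omega\up i\placeholder)\1$. I use that $\tilde E$ maps into $A$ and commutes with the central projections of $A$. The latter holds because $\tilde E$ is an $A$-bimodule map, so $\tilde E(p_i x p_j) = p_i\tilde E(x)p_j$, and off-diagonal blocks go to $0$. For $\hat E$ on $L(\L)\oplus L(\L)$ onto $\{a\oplus a^t\}$, I write $\hat E(a\oplus b) = \phi(a,b)\oplus\phi(a,b)^t$. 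By the bimodule property \eqref{eq:bimodule-CE} applied to $J$, $\phi$ satisfies $\phi(uau\oplus u^tbu^t)=u\phi(a,b)u$ for $u$ hermitian unitary. This relation extends to $x(\placeholder)x$ for all $x\in L(\L)$ by spanning. The map $a\mapsto\phi(a,0)$ is then covariant under $a\mapsto xax$ and unital on projections in the appropriate sense. I expect this to force $\phi(a,0)=\lambda a$ and, symmetrically, $\phi(0,b)=\mu b^t$. Unitality gives $\lambda+\mu=1$, and faithfulness gives $\lambda\in(0,1)$.

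The covariance argument is the main obstacle. My first attempt would instead use the invariant-state route. Every faithful invariant state commutes with $J$, hence with $A$, so it has the form $\lambda\tau\oplus(1-\lambda)\tau$. I would then check that the displayed $\hat E$ is a conditional expectation preserving this state, and invoke the uniqueness in \cref{cor:uniquenes-CE}. The invariant states then follow by dualizing the explicit formulas.
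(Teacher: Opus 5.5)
Your strategy works and, once the misstatement flagged in the second paragraph is repaired, gives a complete proof. For the $\hat E$ parts it takes a different route from the paper.

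\textbf{Comparison with the paper.} The paper obtains case \ref{it:reps1} and the form of $\tilde E$ just as you do, from the structure of conditional expectations onto *-algebras. For $\hat E$, however, it simply cites St\o rmer. In cases \ref{it:reps3}--\ref{it:reps4} it uses that there is only one conditional expectation of $L(\L)$ onto $L(\L)^\vartheta$, so it must be $\frac12(\id+\vartheta)$. In case \ref{it:reps2} it quotes St\o rmer's description of faithful conditional expectations onto $\{a\oplus a^t\}$.

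You instead derive both facts from the tracial computation in the proof of \cref{lem:CE-extension} together with uniqueness of state-preserving conditional expectations (\cref{cor:uniquenes-CE}). This keeps the argument inside the paper's own toolkit. In effect it reproves the cited uniqueness: every conditional expectation onto $L(\L)^\vartheta$ is trace-preserving, hence equals $\frac12(\id+\vartheta)$.

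Your invariant-state route for case \ref{it:reps2} is complete. The displayed $\hat E_\lambda$ is unital, positive and idempotent with range $J$. With $d=\dim\L$ one computes $\tr\big((\tfrac{\lambda}{d}\1\oplus\tfrac{1-\lambda}{d}\1)\,\hat E_\lambda(a\oplus b)\big)=\tfrac1d\big(\lambda\tr a+(1-\lambda)\tr b\big)$, so $\hat E_\lambda$ preserves $\tfrac{\lambda}{d}\1\oplus\tfrac{1-\lambda}{d}\1$. The unfinished covariance argument can therefore be dropped. Its extension step ``by spanning'' is also not justified as written, since the relation is quadratic in $u$; one would use \eqref{eq:bimodule-CE} for all $b\in J$ instead.

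\textbf{The claim that must be corrected.} You assert in both cases that \emph{every} faithful invariant state commutes with $J$, and this is false.
\begin{itemize}
\item With $\vartheta$ the transpose, any faithful real symmetric density matrix is invariant under $\frac12(\id+\vartheta)$, yet it need not commute with the symmetric matrices.
\item The claim would also contradict the descriptions of invariant states that you are proving, namely $\sigma\ox\omega$, resp.\ $(\sigma\ox\lambda\omega\up{1})\oplus(\sigma^t\ox(1-\lambda)\omega\up{2})$.
\end{itemize}
The computation in the proof of \cref{lem:CE-extension} applies only to the specific state $F^*\tau$ with $\tau$ tracial, because it uses $\tr(\tau\,sXs)=\tr(\tau X)$.

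That single state is all you need:
\begin{itemize}
\item In cases \ref{it:reps3}--\ref{it:reps4}, $\hat E^*\tau$ commutes with $J$, hence with $\staralg(J)\cong L(\L)$. So $\hat E^*\tau=\tau$ and $\hat E$ is trace-preserving.
\item In case \ref{it:reps2}, $\hat E^*\tau$ lies in the center of $A$. This gives $\tfrac{\lambda}{d}\1\oplus\tfrac{1-\lambda}{d}\1$, with $\lambda\in(0,1)$ by faithfulness.
\end{itemize}

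\textbf{Applying uniqueness in case \ref{it:reps2}.} \cref{cor:uniquenes-CE} is formulated for conditional expectations defined on all of $L(\H)$, whereas in case \ref{it:reps2} $A$ is not a full matrix algebra. The cleanest fix is to compare $E$ and $\hat E_\lambda\circ\tilde E$ directly on $L(\H)$. Take $\tau$ to be the maximally mixed state on $\H$. By the same computation $E^*\tau$ commutes with $A$, and it is $\tilde E$-invariant, so $E^*\tau=(\tfrac{\lambda}{d}\1\ox\omega\up{1})\oplus(\tfrac{1-\lambda}{d}\1\ox\omega\up{2})$. This faithful state is preserved by both maps, so they coincide.
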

\begin{proof}
    The form of $E$ in case \ref{it:reps1} and of $\tilde E$ in cases \ref{it:reps2} to \ref{it:reps4} follows directly from the form of the generated *-algebras discussed in  \cref{prop:reps} and the standard structure of conditional expectations onto *-algebras. In case \ref{it:reps1} this shows the claim since $A=J$. 
    It remains to clarify the structure of $\hat E$ in the remaining cases.

    Case \ref{it:reps2}: It was shown in \cite[Prop.~6.4,6.5]{stormerDecompositionPositiveProjections1980} (see also \cite[Lem.~4.24]{idel}) that any faithful conditional expectation must be of the given form.

    Cases \ref{it:reps2} and \ref{it:reps3}: We have $J = \hat J \ox \1$ and $A = L(\L) \ox \1$, with $L(\L) = \staralg(\hat J)$.  By \cite[Prop.~6.1]{stormerDecompositionPositiveProjections1980} there is only one conditional expectation $L(\L) \to \hat J$. Clearly, the map $\frac{1}{2}(\id +\vartheta) : L(\L) \to \hat J$ is a faithful conditional expectation, hence the unique one. Thus $\hat E$ must be of the form $\frac{1}{2}(\id + \vartheta) \ox \id$.

    The formula for invariant states follows directly from that of the conditional expectations.
\end{proof}

\subsection{\texorpdfstring{$L^p$}{Lp}-spaces}\label{sec:Lp-spaces}

For our later discussions of various quantum divergences, we develop a minimal amount of $L^p$-space theory for Jordan algebras. 
We refer to \cite{beigiSandwichedRenyiDivergence2013,jencovaPreservationQuantumRenyi2017,muller-hermesMonotonicityQuantumRelative2017,jencovaRenyiRelativeEntropies2018a,hiaiazRenyiDivergences2024} for discussions on how $L^p$-spaces relate to quantum divergences.

In the following, we only cover the absolute minimum that we require for our application. In \cite{arhancet_nonassociative_2024}, a systematic study of non-associative $L^p$-spaces for JBW-algebras has been initiated, but does not seem to cover the crucial Lemma that we need. 
We therefore restrict to the setting of universally reversible J*-algebras and make use of their explicit representation theory, leaving a general discussion of the relevant $L^p$-space theory for future work. 

Let $J\subset L(\H)$ be a J*-algebra which admits a $\sigma$-preserving conditional expectation $F$ for some faithful state $\sigma$.
Consider the map $\Gamma_\sigma: L(\H) \to L(\H)$ defined by
\begin{align}
	\Gamma_\sigma a = \sigma^{\frac12} a \sigma^{\frac12}. 
\end{align}
For $p\in \RR \setminus\{0\}$, we define
\begin{align}
	L^p(J,\sigma) := \Gamma_\sigma^{1/p}(J),\quad \norm{a}_{p,\sigma} = \norm{\Gamma_\sigma^{1/p}a}_p,\quad a\in L(\H),
\end{align}
where $\norm{x}_p = \tr(|x|^p)^{\frac{1}{p}}$ denotes the Schatten $p$-norm. Since $\sigma$ is faithful, we have $L^p(L(\H),\sigma) = L(\H)$. We extend the definition also to $p = \infty$ via $J = L^\infty(J,\sigma)$ (note that $\sigma^0 = \1$).

For $p\ge 1$, $\norm{\placeholder}_{p,\sigma}$ is a norm and a quasi-norm for $p\in(0,1)$ \cite{KALTON20031099}, but this will not be essential for us.
For $p=2$, this norm is precisely the norm that is induced by the KMS inner product (see \cref{sec:KMS}): 
\begin{align}
	\norm{a}_{2,\sigma}^2 = \tr(a^* \sigma^{\frac12} a \sigma^{\frac12}) = \langle a ,a\rangle_\sigma,   \qquad a\in L(\H).
\end{align}
Moreover, for $a,b\in L(\H)$ and $\frac{1}{p} + \frac{1}{q} =1$, we have
\begin{align}\label{eq:pq-pairing}
    \tr( (\Gamma_\sigma^{\frac{1}{q}}a)^*(\Gamma_\sigma^{\frac{1}{p}} b)) = \ip{a}{b}_\sigma. 
\end{align}
The conditional expectation $F: L(\H)\to J$ defines a positive map $L(\H) \to L^p(J,\sigma)$ via
\begin{align}
    F_p := \Gamma^{\frac{1}{p}}_\sigma \circ F \circ \Gamma^{-\frac{1}{p}}_\sigma.
\end{align}
The map $F_p$ fulfills $F_p^2  = F_p$, and $F_p(\sigma^{\frac1p}) = \sigma^{\frac1p}$.
In particular, observe that 
\begin{align}
    L^p(J,\sigma) = \mathrm{Fix}(F_p).
\end{align}
Consider another Hilbert space $\hat\H$. Let $\hat J\subset L(\hat\H)$ be a J*-algebra and let $\hat\sigma$ be a faithful state on $\hat\H$ such that $\hat J$ admits a $\hat\sigma$-preserving conditional expectation $\hat F:L(\hat\H)\to\hat J$.
Then, if $T:L(\hat \H)\to L(\H)$ is a UP map such that $T(\hat J)\subset J$ and $T^*\sigma = \hat \sigma$, the positive map 
\begin{align}
        T_p := \Gamma^{\frac{1}{p}}_\sigma \circ T \circ \Gamma_{\hat\sigma}^{-\frac{1}{p}}
\end{align}
restricts to a  map $L^p(\hat J,\hat\sigma) \to L^p(J,\sigma)$.
In the following, we denote by $R_{T,\sigma}:L(\H)\to L(\hat \H)$ the adjoint with respect to the KMS inner product of  $T$, so that
\begin{align}
    \ip{a}{T\hat b}_\sigma = \ip{R_{T,\sigma}a}{\hat b}_\sigma,\quad a\in L(\H),\ \hat b\in L(\hat \H).
\end{align}
$R_{T,\sigma}$ is known as the Petz recovery map and will be discussed in detail in \cref{sec:recovery}. 
Its (Hilbert-Schmidt) adjoint can be explicitly expressed as 
\begin{align}
    R^*_{T,\sigma} = \Gamma_\sigma \circ T \circ \Gamma_{\hat\sigma}^{-1} = \Gamma^{\frac{1}{q}}_\sigma \circ T_p\circ  \Gamma_{\hat \sigma}^{-\frac{1}{q}} = T_1
\end{align}
for any $p,q\in\RR$ with  $\frac{1}{p}+\frac{1}{q}=1$. Since $(\Gamma^{\frac{1}{p}}_\sigma)^* = \Gamma^{\frac{1}{p}}_\sigma$ it follows that
\begin{align}
    T_p^* = (R_{T,\sigma})_q.
\end{align}
Since $F$ is hermitian with respect to the KMS inner product (cp.\ \cref{lem:CE-selfadjoint}), it follows that $F_1 = F^*$. 
Hence, $L^1(J,\sigma)$ is precisely the fixed-point space of $F^*$.

The following lemma will be essential for us. 
\begin{lemma}\label{lem:Lp-properties}
	Let $J\subset L(\H)$ be a universally reversible J*-algebra admitting a $\sigma$-preserving conditional expectation for a faithful state $\sigma$.  Let $0\neq p,q \in \RR$ and $\tfrac{1}{p}+\tfrac{1}{q} = \tfrac{1}{r}$.
	Then the following properties hold:
    \begin{enumerate}
        \item\label{it:Lp-powers} $a \in L^p(J,\sigma)$ if and only if $a^p \in L^1(J,\sigma)$.
        \item\label{it:Lp-module} If $a\in L^p(J,\sigma), b\in L^q(J,\sigma)$, then $\jp{a}{b} \in L^r(J,\sigma)$.
        \item\label{it:Lp-triple} If $a \in L^p(J,\sigma), b\in L^{2q}(J,\sigma)$ then $aba \in L^{r}(J,\sigma)$.
    \end{enumerate}
\end{lemma}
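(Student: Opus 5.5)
The plan is to reduce everything to explicit block forms using the structure theory of \cref{sec:CEs2}. By the direct-sum decomposition into factors, and since $F$ (being $\sigma$-preserving and faithful) respects the central decomposition, I will reduce to the case of a universally reversible J*-factor. The central projections commute with $\sigma$ (as in the proof of \cref{lem:CE-extension}, $F$-invariant faithful states commute with $\staralg(J)$ up to the $\tilde F$-structure; more precisely the invariant states have the block forms listed in \cref{prop:CE-univ-rev}), so $\Gamma_\sigma^{1/p}$ acts blockwise and all three claims split over factors. For a factor, \cref{prop:CE-univ-rev} tells me the invariant state exactly in each case:
\begin{itemize}
\item in case (i), $\sigma=\sigma_0\ox\omega$;
\item in case (ii), $\sigma=(\sigma_0\ox\lambda\omega^{(1)})\oplus(\sigma_0^t\ox(1-\lambda)\omega^{(2)})$;
\item in cases (iii) and (iv), $\sigma=\sigma_0\ox\omega$ with $\vartheta^*\sigma_0=\sigma_0$.
\end{itemize}

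The key step is an explicit description of $L^p(J,\sigma)$. In all cases, I will show that $L^p(J,\sigma)=\{x\ox\omega^{1/p} : x\in \Gamma_{\sigma_0}^{1/p}(\hat J)\}$, with the obvious two-block analogue in case (ii). The point is that $\Gamma_{\sigma_0}^{1/p}(\hat J)$ coincides with a reversible fixed-point space in the same representation:
\begin{itemize}
\item In case (i), $\hat J=L(\L)$ and $\Gamma^{1/p}_{\sigma_0}$ is a bijection of $L(\L)$, so $L^p=L(\L)\ox\omega^{1/p}$.
\item In cases (iii) and (iv), I use that $\vartheta$ is an antiautomorphism fixing $\sigma_0$ (so $\vartheta(\sigma_0^s)=\sigma_0^s$). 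Then $\vartheta(\sigma_0^{1/2p}x\sigma_0^{1/2p})=\sigma_0^{1/2p}\vartheta(x)\sigma_0^{1/2p}$, and $\Gamma^{1/p}_{\sigma_0}(L(\L)^\vartheta)=L(\L)^\vartheta$.
\item In case (ii), with the antiautomorphism $(a\oplus b)\mapsto(b^t\oplus a^t)$ on $A$, the same computation shows $L^p=\{(x\ox\lambda^{1/p}\omega^{(1)1/p})\oplus(x^t\ox(1-\lambda)^{1/p}\omega^{(2)1/p})\}$.
\end{itemize}
So in every case $L^p(J,\sigma)=\{\phi(x)\ox \omega_p\}$, where $\phi$ ranges over the same reversible J*-algebra $\hat J=A_0^\vartheta$ independently of $p$, and the "$\omega$-part" is $\omega^{1/p}$ (blockwise weighted).

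Given this description, the three claims become statements about $\hat J$ together with exponent bookkeeping on commuting tensor factors.
\begin{itemize}
\item Claim \ref{it:Lp-powers}: $(x\ox\omega^{1/p})^p=x^p\ox\omega$ for hermitian $x$, and $x\in\hat J$ iff $x^p\in\hat J$ by functional calculus (for positive $a$, or via $p$-th roots when $a^p$ is meant in the positive/functional-calculus sense). I expect the statement is intended for hermitian or positive $a$, and I would restrict to that.
\item Claim \ref{it:Lp-module}: $\jp{x\ox\omega^{1/p}}{y\ox\omega^{1/q}}=\jp xy\ox\omega^{1/r}$, with $\jp xy\in\hat J$.
\item Claim \ref{it:Lp-triple}: $aba=xyx\ox\omega^{2/p+1/2q}=xyx\ox\omega^{1/r'}$ with the appropriate exponent, and $xyx\in\hat J$ by the triple product \eqref{eq:triple-product}.
\end{itemize}
In case (ii) the transposed block is consistent because $(xy+yx)^t=x^ty^t+y^tx^t$ and $(xyx)^t=x^ty^tx^t$. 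Universal reversibility is what guarantees $\hat J=A_0^\vartheta$ is closed under all these symmetrized products. Exponent bookkeeping in \ref{it:Lp-triple} must be checked against the normalization $b\in L^{2q}$.

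The main obstacle is the identification $\Gamma^{1/p}_{\sigma_0}(\hat J)=\hat J$, i.e. that $\sigma_0$ is $\vartheta$-invariant and that $\vartheta$ commutes with $\Gamma^{1/p}$. For an antiautomorphism, $\vartheta(\sigma_0^{s}x\sigma_0^{s})=\vartheta(\sigma_0^s)\vartheta(x)\vartheta(\sigma_0^s)$, and it is exactly the symmetric sandwich that makes this work. The invariance of $\sigma_0$ is supplied by \cref{prop:CE-univ-rev}.
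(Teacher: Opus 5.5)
Your approach is the paper's. You reduce to a factor along the central decomposition, which is justified because $\sigma$ commutes with every central projection $e$ of $J$: for $s=2e-1$ one has $F(sbs)=sF(b)s=F(b)$. You then read off from \cref{prop:CE-univ-rev} that $L^p(J,\sigma)=\{\hat a\ox\omega^{1/p}:\hat a\in\hat J\}$, or the two-block version with weights $(\lambda\omega\up1)^{1/p}$ and $((1-\lambda)\omega\up2)^{1/p}$, and compute. The paper gets $\Gamma^{1/p}_{\sigma_0}(\hat J)=\hat J$ directly from $\sigma_0\in\hat J$. That is the same as your $\vartheta(\sigma_0)=\sigma_0$, since a trace-preserving involution satisfies $\vartheta^*=\vartheta$. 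One small misattribution: closure of $\hat J$ under $\jp xy$ and $xyx$ holds in any J*-algebra. Universal reversibility is needed only to make \cref{prop:CE-univ-rev} available.

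The real issue is that the exponent check you deferred in item 3 fails. With $a=x\ox\omega^{1/p}$ and $b=y\ox\omega^{1/(2q)}$ you get $aba=xyx\ox\omega^{2/p+1/(2q)}$, and $\frac2p+\frac1{2q}=\frac1p+\frac1q$ only when $p=2q$. So item 3 as printed is false whenever the weight is not tracial. A concrete counterexample: take $J=\CC\1$, whose $\sigma$-preserving conditional expectation is $x\mapsto\tr(\sigma x)\1$, so $L^p(J,\sigma)=\CC\sigma^{1/p}$. With $p=q=1$ and $r=\frac12$, the elements $a=\sigma\in L^1$ and $b=\sigma^{1/2}\in L^2$ give $aba=\sigma^{5/2}$. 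This is not in $\CC\sigma^2=L^{1/2}(J,\sigma)$ unless $\sigma$ is maximally mixed.

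The intended statement is: $a\in L^{2p}(J,\sigma)$ and $b\in L^q(J,\sigma)$ imply $aba\in L^r(J,\sigma)$. Your computation proves exactly this, since the weight is $2\cdot\frac1{2p}+\frac1q=\frac1r$, and the second block works because $(xyx)^t=x^ty^tx^t$. This corrected form is also the one the paper actually relies on, for instance to get $\rho^{\alpha/2z}\sigma^{(1-\alpha)/z}\rho^{\alpha/2z}\in L^z$ for the introduction's claim that $d^{(\alpha,z)}_{\rho|\sigma}\in J_{(\rho,\sigma)}$. The paper's \emph{explicit calculation} does not catch the misprint. You should state and prove the corrected version rather than leaving the exponent open.
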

\begin{proof}
	All claims follow from the explicit description of $\sigma$ and $J$ in \cref{prop:CE-univ-rev}: First, by the direct-sum decomposition of $J$ and $\sigma$, it suffices to consider the case where $J$ is a J*-factor. We are left with two cases: In the first case $\H=\L\ox \R$, $J = \hat J\otimes 1$, and $\sigma = \hat\sigma \otimes \omega$ with $\hat\sigma\in \hat J$, so that
	\begin{align}
		L^p(J,\sigma) = \{ \hat a\otimes \omega^{\frac{1}{p}}\ :\ \hat a \in \hat J\}.
	\end{align}
	In the second case, $\H = (\L\ox \R\up1)\oplus(\L\ox \R\up2)$, $J\Jcong L(\L)$, and
	\begin{equation}
    \begin{aligned}
		J &= \{ ( (\hat a \ox \1) \oplus ( \hat a^t \ox \1 )\ :\ \hat a\in L(\L) \},\\ \sigma &= (\hat\sigma \ox \lambda \omega\up1)\oplus \big(\hat\sigma^t \ox (1-\lambda)\omega\up2 \big)
    \end{aligned}
	\end{equation}
	for some $\lambda\in(0,1)$ and $\hat\sigma \in L(\L)$. Hence 
	\begin{align}
		L^p(J,\sigma) = \{\big(\hat a \otimes (\lambda\omega\up1)^{\frac{1}{p}}\big) \oplus \big( \hat a^t \otimes ((1-\lambda)\omega\up2)^{\frac{1}{p}} \ :\ a\in L(\L)\}.
	\end{align}
    In each of the two cases, all items follow by explicit calculation.
\end{proof}

\section{Faithful statistical experiments}
\label{sec:experiments}

We begin by introducing some terminology regarding statistical experiments. 
A \emph{statistical experiment} $(\rho_\theta)_{\theta\in \Theta}$ is simply a collection of states on a Hilbert space $\H$. Here $\Theta$ is some index set, which we assume to be finite (purely for simplicity, all our results can be adapted to the case where $\Theta$ is a general measure space). 
The interpretation of $\Theta$ is as a space of outcomes of an experiment which results in the preparation of state $\rho_\theta$.  The experiment is a \emph{dichotomy} if $|\Theta|=2$.
We will typically denote dichotomies as $(\rho,\sigma)$.

We say that two statistical experiments $(\rho_\theta)$ and $(\sigma_\theta)$ with the same outcome space $\Theta$, but on possibly different Hilbert spaces, are \emph{PTP-equivalent}, written $(\rho_\theta) \ptp (\sigma_\theta)$, if there exists UP maps $S,T$ such that
\begin{align}
        T^*\rho_\theta = \sigma_\theta,\quad S^*\sigma_\theta = \rho_\theta,\qquad \theta\in\Theta.
\end{align}
We also say that the two experiments can be \emph{interconverted} via PTP maps. 
Similarly, they are CPTP-equivalent if they can be interconverted by CPTP maps. In \cite{galke_sufficiency_2024}, it was shown that CPTP-equivalence can be determined from the Koashi-Imoto decomposition \cite{koashi_operations_2002} of a statistical experiment.

Suppose $(\rho_\theta) \ptp (\sigma_\theta)$ with interconverting maps $T$ and $S$. 
Then, by linearity, we have
\begin{align}
    S^* T^* \rho = \rho, \quad \rho\in \conv\{\rho_\theta\}_{\theta\in\Theta},\quad T^*S^* \sigma =\sigma,\quad \sigma\in \conv\{\sigma_\theta\}_{\theta\in\Theta},
\end{align}
where "$\conv$" denotes the convex hull. 
This shows that PTP-equivalence is really a question about the convex hull of a statistical experiment. 

In particular, we can find an enumeration of the extremal points $\rho_{\theta_i}$, $i\in I$, of $\conv\{\rho_\theta\}$ and the extremal points $\sigma_{\theta_i}$, $i \in I$ of $\conv\{\sigma_\theta\}$ such that
\begin{align}
    T^* \rho_{\theta_i} = \sigma_{\theta_i},\quad S^*\sigma_{\theta_i} = \rho_{\theta_i},\quad i \in I.
\end{align}
Set $\overline{\rho_\Theta} = \frac{1}{|I|} \sum_{i\in I}  \rho_{\theta i}$. 
Since 
\begin{align}
    \rho_{\theta_{|I|}} = n \overline{\rho_\Theta} - \sum_{i=1}^{|I|-1} \rho_{\theta i},
\end{align}
it follows that $(\rho_\theta)_{\theta\in \Theta}$ and $(\sigma_\theta)_{\theta\in \Theta}$ are PTP-equivalent if and only if
\begin{align}
    (\rho_{\theta_1},\ldots,\rho_{\theta_{|I|-1}},\overline{\rho_\Theta}) \ \ptp\    ( \sigma_{\theta_1},\ldots,\sigma_{\theta_{|I|-1}},\overline{\sigma_\theta}). 
\end{align}
Note that $ \rho_{\theta_i} \ll \overline{\rho_\Theta}$. 
Hence, for questions regarding PTP-equivalence, one can, without loss of generality, assume that any statistical experiment contains a state whose support contains the support of all the other states.

\begin{definition} 
    We call $(\rho_\theta)_{\theta\in \Theta}$ \emph{faithful} if
    for any $a\in L(\H)$, $\tr(\rho_\theta a^*a) = 0$ for all $\theta\in \Theta$ implies $a=0$.
\end{definition}

Let us define the support of a statistical experiment as
\begin{align}
    \supp \,(\rho_\theta)_{\theta\in\Theta} = \bigvee_{\theta\in\Theta} \supp \rho_\theta,
\end{align}
where, for a collection $p_i$, $i\in I$, of projections, the projection $\vee_i p_i$ is the projection onto the linear subspace spanned by the ranges of the projections $p_i$.

\begin{lemma}\label{lem:faithfulness}
    The following are equivalent:
    \begin{enumerate}[(a)]
        \item The statistical experiment $(\rho_\theta)$ is faithful;
        \item $\supp(\rho_\theta)_{\theta\in\Theta} = \1$;
        \item $\conv\{\rho_\theta\}_{\theta\in\Theta}$ contains a faithful state;
        \item The state $\sum_\theta \mu(\theta) \rho_\theta$ is faithful for any faithful probability measure $\mu$ on $\Theta$.
    \end{enumerate}
\end{lemma}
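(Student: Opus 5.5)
I will prove the cycle (a) $\Rightarrow$ (b) $\Rightarrow$ (d) $\Rightarrow$ (c) $\Rightarrow$ (a). The whole argument reduces to one observation. For a positive operator $x$ and $a\in L(\H)$, $\tr(x\,a^*a)=0$ holds if and only if $a\,x^{1/2}=0$, that is, if and only if $a$ vanishes on $\supp x$. This follows from $\tr(x a^*a)=\norm{a x^{1/2}}_2^2$.

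\emph{(a) $\Rightarrow$ (b).} Let $p=\supp(\rho_\theta)_{\theta\in\Theta}$ and put $a=\1-p$. Since $\supp\rho_\theta\le p$ for every $\theta$, we have $a\rho_\theta^{1/2}=0$. Hence $\tr(\rho_\theta a^*a)=0$ for all $\theta$, and faithfulness gives $\1-p=0$.

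\emph{(b) $\Rightarrow$ (d).} Let $\mu$ be faithful and $\omega=\sum_\theta\mu(\theta)\rho_\theta$. For a vector $\xi$ we have $\ip{\xi}{\omega\xi}=\sum_\theta\mu(\theta)\ip{\xi}{\rho_\theta\xi}$. Since every $\mu(\theta)>0$, this vanishes exactly when $\xi\in\ker\rho_\theta$ for all $\theta$. Thus $\ker\omega=\bigcap_\theta\ker\rho_\theta=\big(\bigvee_\theta\supp\rho_\theta\big)^\perp=0$ by (b), so $\omega$ is faithful.

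\emph{(d) $\Rightarrow$ (c).} This is immediate: take $\mu$ uniform, so the resulting faithful state lies in $\conv\{\rho_\theta\}$.

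\emph{(c) $\Rightarrow$ (a).} Let $\omega=\sum_\theta\mu(\theta)\rho_\theta$ be faithful, where $\mu$ may now have zeros. Suppose $\tr(\rho_\theta a^*a)=0$ for all $\theta$. Then $\tr(\omega a^*a)=0$, hence $a\omega^{1/2}=0$, and invertibility of $\omega$ gives $a=0$.

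None of these steps is a real obstacle. The only point needing care is the kernel computation in (b) $\Rightarrow$ (d), namely identifying $\bigcap_\theta\ker\rho_\theta$ with the orthogonal complement of the join of the supports. This holds because $\ker\rho_\theta=(\supp\rho_\theta)^\perp$ and the complement of a span of subspaces is the intersection of their complements.
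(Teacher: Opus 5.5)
Your proof is correct. The paper offers no argument beyond ``Immediate'', and your cycle (a)$\Rightarrow$(b)$\Rightarrow$(d)$\Rightarrow$(c)$\Rightarrow$(a), built on $\tr(x\,a^*a)=\norm{a x^{1/2}}_2^2$ and $\ker\rho_\theta=(\supp\rho_\theta)^\perp$, is exactly the routine verification the authors left to the reader.
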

\begin{proof} 
    Immediate. 
\end{proof}

The following lemma shows that for questions regarding PTP-equivalence, we can always assume faithfulness. 
\begin{lemma}\label{lem:inter-to-faithful}
    Any statistical experiment $(\rho_\theta)$ is CPTP-equivalent to the faithful statistical experiment resulting from restricting $(\rho_\theta)$ to the subspace $\K = \supp(\rho_\theta)_\theta\H$.
\end{lemma}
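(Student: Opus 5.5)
We will construct the two interconverting CPTP maps explicitly. Let $p = \supp(\rho_\theta)_{\theta\in\Theta}$ be the support projection and let $\K = p\H$, with $V:\K\to\H$ the isometric inclusion, so that $VV^* = p$ and $V^*V = \1_\K$. Since $\supp\rho_\theta \le p$ for every $\theta$, each state satisfies $\rho_\theta = p\rho_\theta p$. Hence the restricted operators $\hat\rho_\theta := V^*\rho_\theta V$ are density matrices on $\K$ (positivity is clear, and $\tr\hat\rho_\theta = \tr(p\rho_\theta) = 1$). Moreover $V\hat\rho_\theta V^* = p\rho_\theta p = \rho_\theta$.

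For the map from $\H$ to $\K$, fix any state $\tau_0$ on $\K$ and set
\begin{equation*}
    T^*(x) = V^* x V + \tr\big((\1-p)x\big)\,\tau_0 ,\qquad x\in L(\H).
\end{equation*}
This is a sum of two CP maps. It is trace-preserving because $\tr(V^*xV) + \tr((\1-p)x) = \tr(px)+\tr((\1-p)x) = \tr x$. Since $(\1-p)\rho_\theta = 0$, we get $T^*\rho_\theta = \hat\rho_\theta$.

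For the map in the other direction, set $S^*(y) = VyV^*$ for $y\in L(\K)$. It is CP, and it is trace-preserving because $V^*V = \1_\K$. We have $S^*\hat\rho_\theta = p\rho_\theta p = \rho_\theta$. Thus $(\rho_\theta)\cptp(\hat\rho_\theta)$.

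It remains to show that $(\hat\rho_\theta)$ is faithful. We use \cref{lem:faithfulness}(b) and check that the supports of the $\hat\rho_\theta$ span $\K$. Since $\rho_\theta = V\hat\rho_\theta V^*$ and $V$ is an isometry, $\supp\rho_\theta = V(\supp\hat\rho_\theta)V^*$. Taking joins, $p = V\big(\bigvee_\theta\supp\hat\rho_\theta\big)V^*$. Compressing both sides with $V^*(\placeholder)V$ gives $\bigvee_\theta\supp\hat\rho_\theta = V^*pV = \1_\K$.

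Alternatively, one can argue with the definition directly. Suppose $b\in L(\K)$ satisfies $\tr(\hat\rho_\theta b^*b) = 0$ for all $\theta$. Put $a = VbV^*$. Then $\tr(\rho_\theta a^*a) = \tr(\hat\rho_\theta b^*b) = 0$, so $\tr(\omega a^*a) = 0$ for the state $\omega = \frac1{|\Theta|}\sum_\theta\rho_\theta$. This forces $a\omega^{1/2} = 0$, so $a$ vanishes on $\supp\omega = p$ (supports of a sum of positive operators join). Hence $b = V^*aV = 0$.

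There is no real obstacle. The only point needing care is the identity $\rho_\theta = p\rho_\theta p$ together with the support bookkeeping, which is exactly where the definition of $\K$ enters.
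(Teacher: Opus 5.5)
Your proof is correct and is essentially the paper's argument: your $T^*$ and $S^*$ are precisely the Hilbert--Schmidt duals of the paper's UCP maps $T(x)=\iota(x)+(\1-s)\tr(\sigma\iota(x))$ and $S(y)=\iota^{-1}(sys)$, with your $\tau_0$ playing the role of $\sigma$. The only addition is your explicit check that $(\hat\rho_\theta)$ is faithful, which the paper leaves implicit.
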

\begin{proof}
        Write $s=\supp(\rho_\theta)_{\theta}$ and denote by $\iota:L(\K)\to L(\H)$ the natural embedding, so that $\iota(\1)=s$. Then $T(x) = \iota(x) + (\1-s)\tr(\sigma \iota(x))$ for $x\in L(\K)$ and $S(y) = \iota^{-1}(sys)$ for $y\in L(\H)$ define UP maps that interconvert the two statistical experiments. Both maps are clearly UCP maps.
\end{proof}
Combined with the discussion above, we conclude that we can always assume that a statistical experiment contains a faithful state.

\section{CPTP and PTP sufficiency}
\label{sec:sufficiency}
Sufficiency encodes the idea that a strict subset of all possible measurements already encodes all the relevant information of a statistical experiment $(\rho_\theta)_{\theta\in \Theta}$. 
As discussed in the introduction, there are different ways to formalize this notion.
If $T$ is a $(\rho_\theta)$-preserving UCP map on $L(\H)$, i.e., a UCP map such that
\begin{equation}\label{eq:T-preserves-states}
    T^*\rho_\theta = \rho_\theta,\qquad\theta\in \Theta,
\end{equation}
then, for every observable $x \in L(\H)$ and $\theta\in \Theta$, the observable $Tx$ has the same expectation value as $x$:
\begin{equation*}
    \tr(\rho_\theta\, Tx) = \tr(T^*\rho_\theta\,x) = \tr(\rho_\theta x),\qquad \theta\in\Theta.
\end{equation*}
Thus, if one only looks at expectation values, it is sufficient to consider observables in the range of $T$.

The range $K=\mathrm{Ran}(T)$ of a UP map $T$ is an \emph{operator system} (see \cref{def:JSA}).
An operator system is linearly spanned by the effect observables it contains, i.e., $K = \lin\, [0,\1]_K$, where $[0,\1]_K = \{ x\in K : 0\le x\le \1\}$.
We make the following definition:

\begin{definition}\label{def:sufficiency}
    An operator system $K\subset L(\H)$ is \emph{PTP-sufficient} (resp.\ \emph{CPTP-sufficient}) for a statistical experiment $(\rho_\theta)$ if there exists a $(\rho_\theta)$-preserving UP (resp.\ UCP) map $T$ on $L(\H)$ whose range is contained in $K$.
\end{definition}

We will later consider a third notion of sufficiency for operator systems, defined in terms of Bayesian hypothesis testing.
Clearly, CPTP-sufficiency implies PTP-sufficiency. 
Unsurprisingly, the converse is false in general.
However, as we will discuss shortly, the two notions become equivalent if $K$ is a *-algebra \cite{luczak_aspects_2021}.

Before we continue our discussion of sufficiency, let us recall that any statistical experiment is equivalent to a faithful one via CPTP maps. 
Hence, we can make the following assumption:

\begin{assumption*}
    Throughout this section, we assume (without loss of generality) that statistical experiments are faithful.
\end{assumption*}

If $A\subset L(\H)$ is a *-algebra, our notion of CPTP-sufficiency agrees with the notion of sufficiency studied by Petz and others \cite{petz_sufficient_1986,petz_sufficiency_1988,,ohya_quantum_1993,jencova_sufficiency_2006,kuramochi_minimal_2017}.
Petz proved that among all CPTP-sufficient *-algebras, there is a (necessarily unique) CPTP-sufficient *-algebra $A_{(\rho_\theta)}$ that is contained in all other CPTP-sufficient *-algebras.
We refer to $A_{(\rho_\theta)}$ as the \emph{minimal sufficient *-algebra}.
It is given by
\begin{equation}\label{eq:MSA}
    A_{(\rho_\theta)} \,= \bigcap \,\big\{ \,\mathrm{Fix}(T) \ : \ \text{$T$ is a $(\rho_\theta)$-preserving UCP map on $L(\H)$}\, \big\}.
\end{equation}
It is clear from \eqref{eq:MSA} that $A_{(\rho_\theta)}$ is not only minimal among CPTP-sufficient *-algebras, but also among all CPTP-sufficient operator systems.
If $T: L(\H) \to A_{(\rho_\theta)}$ is a $(\rho_\theta)$-preserving UCP map, then \eqref{eq:MSA} implies that $T$ restricts to the identity on $A_{(\rho_\theta)}$, so that $T$ must be a conditional expectation onto.\footnote{Such a conditional expectation is necessarily unique (see \cref{cor:uniquenes-CE}).}
Thus, there exists a $(\rho_\theta)$-preserving conditional expectation onto $A_{(\rho_\theta)}$, which is the unique $(\rho_\theta)$-preserving UCP map into $A_{(\rho_\theta)}$.
Let us summarize the above discussion:
\begin{equation}\label{eq:CPTP-sufficiency}
\begin{gathered}
    \textit{There is a minimal CPTP-sufficient operator system, it is a *-algebra,} \\
    \textit{and it admits a state-preserving conditional expectation}.
\end{gathered}
\end{equation}

For the PTP case, we will establish the following Theorem, which is essentially contained in the works \cite{luczak_quantum_2014,luczak_aspects_2021} of A.~\Lbar{}uczak, and partly contained in the Bachelor thesis of O.~Skodda \cite{Ole}:

\begin{theorem}\label{thm:luczak}
    Let $(\rho_\theta)_{\theta\in\Theta}$ be a faithful statistical experiment. Then:
    \begin{enumerate}[(1)]
        \item\label{it:luczak1} 
            There is a minimal PTP-sufficient operator system $J_{(\rho_\theta)}$.
            It is the J*-algebra given by
            \begin{align}\label{eq:min-suff-J}
                    J_{(\rho_\theta)} \,= \bigcap \,\big\{ \mathrm{Fix}(T) \ : \ \text{$T$ is a $(\rho_\theta)$-preserving UP map on $L(\H)$}\, \big\}.
            \end{align} 
            where the intersection is over $(\rho_\theta)$-preserving UP maps $T:L(\H)\to L(\H)$ 
        \item\label{it:luczak2} 
        There is a unique $(\rho_\theta)$-preserving UP map $F$ of $L(\H)$ into $J_{(\rho_\theta)}$, which is a conditional expectation onto $J_{(\rho_\theta)}$.
        \item\label{it:luczak3} 
        The *-algebra generated by $J_{(\rho_\theta)}$ is the minimal sufficient one, i.e.,
        \begin{equation}\label{eq:J-generates-A}
            \staralg(J_{(\rho_\theta)}) = A_{(\rho_\theta)}
        \end{equation}
    \end{enumerate}  
\end{theorem}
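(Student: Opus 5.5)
The strategy is to build one ``maximally mixing'' $(\rho_\theta)$-preserving UP map and show that its fixed-point space is exactly the intersection in \eqref{eq:min-suff-J}. Throughout, fix the faithful state $\omega = \frac{1}{|\Theta|}\sum_\theta \rho_\theta$ (faithful by \cref{lem:faithfulness}); every $(\rho_\theta)$-preserving UP map preserves $\omega$.

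\emph{Step 1 (fixed-point spaces and their expectations).} If $T$ is a $(\rho_\theta)$-preserving UP map on $L(\H)$, then $\mathrm{Fix}(T)$ is a J*-algebra by \cref{prop:fixpoint}. By \cref{lem:cesaro-mean}, the Cesaro means $P_T$ of $T$ form an idempotent UP map onto $\mathrm{Fix}(T)$. Because each mean preserves the $\rho_\theta$, so does $P_T$. Hence $P_T$ is a $(\rho_\theta)$-preserving, and in particular $\omega$-preserving, conditional expectation onto $\mathrm{Fix}(T)$.

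\emph{Step 2 (reduction to a finite intersection).} Let $J$ denote the intersection in \eqref{eq:min-suff-J}. Since $L(\H)$ is finite-dimensional, $J$ is already the intersection of finitely many spaces $\mathrm{Fix}(T_1),\dots,\mathrm{Fix}(T_m)$. The conditional expectations $P_{T_i}$ all preserve the faithful state $\omega$. Halperin's theorem, in the form discussed after \cref{lem:CE-extension}, then shows that $F=\lim_n (P_{T_1}\cdots P_{T_m})^n$ is an $\omega$-preserving conditional expectation onto $J$. Each factor preserves every $\rho_\theta$, and so does the limit, so $F$ is $(\rho_\theta)$-preserving. Consequently $J$ is a PTP-sufficient J*-algebra.

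\emph{Step 3 (minimality and uniqueness).} Let $K$ be PTP-sufficient, witnessed by a $(\rho_\theta)$-preserving UP map $T$ with $\mathrm{Ran}(T)\subset K$. Then $J\subset\mathrm{Fix}(T)\subset \mathrm{Ran}(T)\subset K$, which proves minimality and hence (1). For (2), let $G:L(\H)\to J$ be any $(\rho_\theta)$-preserving UP map. Since $G$ is $(\rho_\theta)$-preserving, $J\subset \mathrm{Fix}(G)$, so $G|_J=\id$ and $G$ is idempotent, i.e., a conditional expectation onto $J$. This $G$ is $\omega$-preserving, so \cref{cor:uniquenes-CE} gives $G=F$.

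\emph{Step 4 (the generated *-algebra).} First, $J\subset A_{(\rho_\theta)}$, because the $(\rho_\theta)$-preserving UCP conditional expectation onto $A_{(\rho_\theta)}$ is in particular UP. Since $A_{(\rho_\theta)}$ is a *-algebra, this gives $\staralg(J)\subset A_{(\rho_\theta)}$. For the reverse inclusion, apply \cref{lem:CE-extension} to $F$, which is faithful because it preserves $\omega$. It yields a conditional expectation $\tilde F$ onto $\staralg(J)$ that preserves every $\rho_\theta$. A conditional expectation onto a *-algebra is automatically UCP. Hence $\staralg(J)$ is CPTP-sufficient, and minimality of $A_{(\rho_\theta)}$ forces $A_{(\rho_\theta)}\subset\staralg(J)$, giving (3).

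The main obstacle is Step 2. One must check that the limit of the alternating products exists and is a positive idempotent onto the intersection, and this requires viewing the $P_{T_i}$ as orthogonal projections in the KMS Hilbert space. That view needs \cref{lem:CE-selfadjoint}, applied to each $P_{T_i}$, which is exactly why a common faithful invariant state $\omega$ is essential. The rest of the argument is bookkeeping.
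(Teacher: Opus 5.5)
Your proposal is correct and follows essentially the same route as the paper. The paper likewise takes Cesaro means onto each $\mathrm{Fix}(T_j)$, reduces to finitely many maps, and takes the Halperin limit $\lim_n (F_1\cdots F_m)^n$, justified by KMS self-adjointness under a common faithful invariant state. It then gets minimality from $\mathrm{Fix}(T)\subset\mathrm{Ran}(T)$, uniqueness via \cref{cor:uniquenes-CE}, and item (3) from \cref{lem:CE-extension}.
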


Since the statements in the theorem are not explicit in \cite{luczak_quantum_2014,luczak_aspects_2021}
(\Lbar{}uczak's work is concerned with W*-algebras, Jordan algebras only appear as a proving tool), we provide a self-contained proof adapted to our setting below. 

Before we come to the proof, let us discuss some important consequences.
Based on \cref{thm:luczak}, we have the following PTP-analog of \eqref{eq:CPTP-sufficiency}:
\begin{equation}\label{eq:PTP-sufficiency}
\begin{gathered}
    \textit{There is a minimal PTP-sufficient operator system, it is a J*-algebra,} \\
    \textit{and it admits a state-preserving conditional expectation}.
\end{gathered}
\end{equation}

We know that minimal (C)PTP-sufficient operator systems exist.
Thus, in principle, we have a complete understanding of (C)PTP-sufficiency: An operator system is (C)PTP-sufficient if and only if it contains the minimal (C)PTP-sufficient one.
Hence, \cref{thm:luczak} implies:

\begin{corollary}
    A *-algebra $A\subset L(\H)$ is CPTP-sufficient for $(\rho_\theta)$ if and only if it is PTP sufficient for $(\rho_\theta)$.
\end{corollary}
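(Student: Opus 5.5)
The plan is to reduce both directions of the corollary to minimality statements. CPTP-sufficiency is stronger than PTP-sufficiency, so one direction is immediate: a $(\rho_\theta)$-preserving UCP map with range in $A$ is in particular UP.

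For the converse, suppose the *-algebra $A$ is PTP-sufficient.
\begin{enumerate}[(1)]
\item Let $T$ be a $(\rho_\theta)$-preserving UP map with $\mathrm{Ran}(T)\subset A$. By \cref{thm:luczak}\ref{it:luczak1}, $J_{(\rho_\theta)}\subset \mathrm{Fix}(T)\subset \mathrm{Ran}(T)\subset A$.
\item Since $A$ is a *-algebra containing $J_{(\rho_\theta)}$, it also contains the *-algebra generated by it. By \eqref{eq:J-generates-A} this gives $A_{(\rho_\theta)}=\staralg(J_{(\rho_\theta)})\subset A$.
\item Let $F_0$ be the $(\rho_\theta)$-preserving conditional expectation onto $A_{(\rho_\theta)}$ from \eqref{eq:CPTP-sufficiency}. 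It is UCP and its range $A_{(\rho_\theta)}$ lies in $A$, so $A$ is CPTP-sufficient.
\end{enumerate}

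All of the real content sits in \cref{thm:luczak}\ref{it:luczak3}, which we are allowed to assume. The only point to double-check is step (1): any element fixed by $T$ lies in $\mathrm{Ran}(T)$, which is what places $J_{(\rho_\theta)}$ inside $A$.

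Without \eqref{eq:J-generates-A}, the natural fallback would be to build the CP map explicitly. We would take the Cesàro mean of $T$ (\cref{lem:cesaro-mean}) to obtain a $(\rho_\theta)$-preserving conditional expectation onto a J*-subalgebra of $A$. We would then extend it via \cref{lem:CE-extension} to a state-preserving conditional expectation onto the generated *-algebra, which is automatically completely positive and still has range in $A$.
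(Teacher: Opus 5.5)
Your proposal is correct and is essentially the paper's argument. The paper also notes that an operator system is (C)PTP-sufficient iff it contains the minimal (C)PTP-sufficient one, and that a *-algebra contains $J_{(\rho_\theta)}$ iff it contains $\staralg(J_{(\rho_\theta)}) = A_{(\rho_\theta)}$ by \cref{thm:luczak}\ref{it:luczak3}; like the paper, you implicitly rely on the section's standing assumption that the experiment is faithful.
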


Thus, we can and will, in the following, drop the prefix CPTP when discussing sufficient *-algebras.

\begin{proof}[Proof of \cref{thm:luczak}]
    \Cref{it:luczak1,it:luczak2}: 
    Parts of the following argument are a PTP-version of the argument used in \cite[App.~A]{hayden_structure_2004}.
    It is clear from the definition of PTP-sufficiency that $J_{(\rho_\theta)}$, as defined in \eqref{eq:min-suff-J}, is an operator system that is contained in every other PTP-sufficient operator system.
    Since $(\rho_\theta)$ is faithful, each UP map $T$ appearing in \eqref{eq:min-suff-J} has a faithful invariant state.
    Hence, by \cref{prop:fixpoint}, each $\mathrm{Fix}(T)$ in the intersection is a J*-algebra.
    Thus, $J_{(\rho_\theta)}$ is an intersection of J*-algebras, and, hence, itself a J*-algebra.
    Next, we show that $J_{(\rho_\theta)}$ admits a $(\rho_\theta)$-preserving conditional expectation.
    Since we work in finite dimensions, we can restrict the intersection \eqref{eq:min-suff-J} to a finite set of $(\rho_\theta)$-preserving UP maps $T_1,\ldots, T_m$:
    \begin{equation*}
        J_{(\rho_\theta)} = \mathrm{Fix}(T_1)\cap \ldots \cap \mathrm{Fix}(T_m).
    \end{equation*}
    Let $F_j$ be the conditional expectation onto $\mathrm{Fix}(T_j)$ obtained from $T_j$ as in \cref{lem:cesaro-mean}. 
    Then $F_j$ is $(\rho_\theta)$-preserving, and
    $F = \lim_{n\to \infty}(F_1\cdots F_m)^n$ yields a  $(\rho_\theta)$-preserving conditional expectation onto $J_{(\rho_\theta)}$. 
    Thus, $J_{(\rho_\theta)}$ is PTP-sufficient and, therefore, the minimal PTP-sufficient operator system.
    If $T: L(\H)\to J_{(\rho_\theta)}$ is a $(\rho_\theta)$-preserving UP map, then, by \eqref{eq:min-suff-J}, the restriction of $T$ to $J_{(\rho_\theta)}$ is the identity on $J_{(\rho_\theta)}$, so that $T$ is a conditional expectation onto $J_{(\rho_\theta)}$, which is unique by \cref{cor:uniquenes-CE}.
    This finishes the proof of the first two items.
    As noted by \Lbar{}uzak in \cite{luczak_quantum_2014}, this part of the proof can also be directly deduced by applying the ergodic theorem for von Neumann algebras \cite{thomsen_invariant_1985} to the semigroup of $(\rho_\theta)$-preserving UP maps. 
    Thus, the statements can be generalized to von Neumann algebras.

    \cref{it:luczak3}: 
    Is is clear that every CPTP-sufficient *-algebra is PTP-sufficient and, hence, contains $\staralg(J_{(\rho_\theta)})$.
    We have to show that $A = \staralg(J_{(\rho_\theta)})$ is CPTP-sufficient.
	This is shown by \cref{lem:CE-extension}.    
\end{proof}

In the following, we wish to study the structure of (minimal) sufficient J*-algebras in detail. 

\begin{lemma}\label{lem:Erho}
    Let $E$ be a conditional expectation onto a J*-algebra $J\subset L(\H)$ that is sufficient for $(\rho_\theta)$, and let $T:L(\H)\to J$ be a $(\rho_\theta)$-preserving UP map. Then
    \begin{equation}
        T^*E^*\rho_\theta=\rho_\theta,\qquad \theta\in\Theta.
    \end{equation}
    In particular, we have $(\rho_\theta)\ptp (E^*\rho_\theta)$.
\end{lemma}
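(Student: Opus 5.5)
The identity is best checked in the Heisenberg picture. We will show that $\tr(\rho_\theta\, E T x)=\tr(\rho_\theta x)$ for all $x\in L(\H)$ and all $\theta$. The key observation is that $T$ takes values in $J$, and $E$ is a conditional expectation onto $J$, so $E|_J=\id_J$. Hence $E\circ T = T$ as maps $L(\H)\to J$.

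The computation is then a single chain of equalities:
\begin{equation*}
    \tr(T^*E^*\rho_\theta\, x) = \tr(\rho_\theta\, E(Tx)) = \tr(\rho_\theta\, Tx) = \tr(T^*\rho_\theta\, x) = \tr(\rho_\theta x),
\end{equation*}
where the last step uses that $T$ is $(\rho_\theta)$-preserving. Since $x$ is arbitrary, $T^*E^*\rho_\theta=\rho_\theta$. Equivalently, in the Schrödinger picture, $(ET)^*=T^*E^*$ and $ET=T$ give $T^*E^*=T^*$.

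For the PTP-equivalence, both $E$ and $T$ are UP maps, so $E^*$ and $T^*$ are PTP maps. The map $E^*$ sends $\rho_\theta$ to $E^*\rho_\theta$. By the identity just proved, $T^*$ sends $E^*\rho_\theta$ back to $\rho_\theta$. This gives $(\rho_\theta)\ptp(E^*\rho_\theta)$.

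There is no real obstacle. The only point requiring care is the direction of composition under duality, $(ET)^*=T^*E^*$. Note also that sufficiency of $J$ is used only to guarantee that a $(\rho_\theta)$-preserving $T$ with range in $J$ exists; the statement already supplies such a $T$.
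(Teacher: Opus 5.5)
Your proposal is correct and follows the paper's proof: both rest on $E\circ T=T$ (since $T$ maps into $J$ and $E|_J=\id_J$) combined with $T^*\rho_\theta=\rho_\theta$. The PTP-equivalence then follows, as you say, from the PTP maps $E^*$ and $T^*$ converting the two families into each other.
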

\begin{proof}
    This is immediate from $ET=T$ and $T^*\rho_\theta=\rho_\theta$.
\end{proof}

\begin{corollary}\label{cor:J-sufficient-CE}
Let $J\subset L(\H)$ be a J*-algebra, $E:L(\H) \to J$  a conditional expectation and $(\rho_\theta)_\theta$ a statistical experiment on $\H$. 
Then $J$ is sufficient if and only if $E:L(\H) \to J$ admits a UP map $S:L(\H) \to L(\H)$ such that
$S^* E^* \rho_\theta = \rho_\theta$.
\end{corollary}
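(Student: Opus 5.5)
The plan is to prove the two implications separately, relying on \cref{lem:Erho} for one direction and on the definition of sufficiency for the other.

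($\Rightarrow$) Suppose $J$ is sufficient for $(\rho_\theta)$. Then, by \cref{def:sufficiency}, there is a $(\rho_\theta)$-preserving UP map $T:L(\H)\to L(\H)$ whose range lies in $J$. Since $E$ is a conditional expectation onto $J$, it restricts to the identity on $J = \mathrm{Fix}(E)$, so $ET=T$. We then take $S:=T$ and compute
\begin{equation*}
    S^*E^*\rho_\theta = (ET)^*\rho_\theta = T^*\rho_\theta = \rho_\theta,
\end{equation*}
which is exactly the argument of \cref{lem:Erho}. Note that this direction does not require $E$ to be state-preserving.

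($\Leftarrow$) Suppose there is a UP map $S$ with $S^*E^*\rho_\theta=\rho_\theta$ for all $\theta$. Set $T:=ES$. This map is UP, since it is a composition of UP maps, and its range lies in $\mathrm{Ran}(E)=J$. We then have to check that $T^*\rho_\theta=\rho_\theta$. Here the order of composition is the main obstacle. Taking $T = SE$ instead would be $(\rho_\theta)$-preserving, since $(SE)^*=E^*S^*$, but its range need not lie in $J$. Conversely, $(ES)^*\rho_\theta = S^*E^*\rho_\theta = \rho_\theta$ holds precisely because $(ES)^* = S^*E^*$. So the correct choice is $T=ES$, which maps into $J$ and satisfies
\begin{equation*}
    T^*\rho_\theta = S^*E^*\rho_\theta=\rho_\theta.
\end{equation*}
Hence $J$ is PTP-sufficient.

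With the Hilbert–Schmidt dual reversing composition, $(ES)^*=S^*E^*$, both directions are immediate. The only point to verify is that $ES$ maps into $J$, which holds because $E$ is surjective onto $J$ and $\mathrm{Ran}(ES)\subset\mathrm{Ran}(E)=J$. No faithfulness or invariance assumption on $E$ is needed.
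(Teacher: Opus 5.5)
Your proof is correct and follows essentially the same route as the paper: for ($\Rightarrow$) you take $S=T$ via \cref{lem:Erho} (using $ET=T$), and for ($\Leftarrow$) you set $T=ES$ and use $(ES)^*=S^*E^*$. Your remark that no faithfulness or invariance of $E$ is needed is also accurate.
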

\begin{proof}
    If $J$ is sufficient for $(\rho_\theta)$, then there exists a $(\rho_\theta)$-preserving UP map $T:L(\H) \to J\subset L(\H)$. By \cref{lem:Erho} we can set $S=T$. 
    Conversely, suppose that $E$ admits a UP map $S$ such that $S^* E^* \rho_\theta = \rho_\theta$. Then $T=ES$ is a UP map $L(\H) \to J$ such that $T^*\rho_\theta = \rho_\theta$. Hence $J$ is sufficient for $(\rho_\theta)$. 
\end{proof}

Recall that any J*-subalgebra $J\subset L(\H)$ admits a unique trace-preserving conditional expectation $E:L(\H) \to J\subset L(\H)$.
Note also that for a trace-preserving conditional expectation $E=E^*$.
 
\begin{lemma}\label{lem:CE-sufficient}
    Let $J\subset L(\H)$ be a sufficient J*-algebra for $(\rho_\theta)$ and $E:L(\H)\to J$ the trace-preserving conditional expectation. Set $\hat \rho_\theta = E\rho_\theta$. 
    Then 
    \begin{equation}\label{eq:tmp}
        \Jstaralg((\hat\rho_\theta)_{\theta\in\Theta})\subset J 
    \end{equation}
    is sufficient for $(\hat \rho_\theta)$ and $(\rho_\theta)$.
\end{lemma}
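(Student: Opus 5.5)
Set $J_0 := \Jstaralg((\hat\rho_\theta)_{\theta\in\Theta})$. Since $E$ maps into $J$, each $\hat\rho_\theta = E\rho_\theta$ lies in $J$. Because $J$ is a J*-algebra, the inclusion $J_0\subset J$ in \eqref{eq:tmp} follows at once. Note also that $E$ is trace-preserving, so $E=E^*$ and $\hat\rho_\theta = E^*\rho_\theta$ are states. By \cref{lem:faithfulness}, $(\hat\rho_\theta)$ is faithful: if $\omega$ is a faithful convex combination of the $\rho_\theta$, then $E\omega$ is faithful because $\tr(E\omega\, a)=\tr(\omega\, Ea)$ and $E$ is faithful.

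\textbf{Step 1: sufficiency for $(\hat\rho_\theta)$.} Let $E_0:L(\H)\to J_0$ be the trace-preserving conditional expectation. For $x\in L(\H)$ and every $\theta$,
\[
\tr(\hat\rho_\theta\, E_0 x) = \tr(E_0\hat\rho_\theta\, x) = \tr(\hat\rho_\theta x),
\]
because $E_0=E_0^*$ and $\hat\rho_\theta\in J_0$ is fixed by $E_0$. Hence $E_0^*\hat\rho_\theta=\hat\rho_\theta$. So $E_0$ is a $(\hat\rho_\theta)$-preserving UP map with range in $J_0$, and $J_0$ is sufficient for $(\hat\rho_\theta)$.

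\textbf{Step 2: sufficiency for $(\rho_\theta)$.} This is the main step. Since $J$ is sufficient for $(\rho_\theta)$, \cref{lem:Erho} gives a UP map $T:L(\H)\to J$ with $T^*E^*\rho_\theta=\rho_\theta$, i.e.\ $T^*\hat\rho_\theta=\rho_\theta$. I would take the candidate
\[
S := E_0\, T .
\]
This is UP with range in $J_0$. For its preservation property, using $E_0^*=E_0$ and $T^*\hat\rho_\theta=\rho_\theta$,
\[
S^*\rho_\theta = T^*E_0\rho_\theta .
\]
So everything reduces to showing $E_0\rho_\theta=\hat\rho_\theta$. Since $J_0\subset J$, \eqref{eq:TPCE-factorizes} gives $E_0 = E_0E$. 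Therefore
\[
E_0\rho_\theta = E_0 E\rho_\theta = E_0\hat\rho_\theta = \hat\rho_\theta,
\]
where the last equality uses $\hat\rho_\theta\in J_0$. Consequently $S^*\rho_\theta = T^*\hat\rho_\theta=\rho_\theta$, so $J_0$ is sufficient for $(\rho_\theta)$.

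The only point needing care is the factorization $E_0=E_0E$, which is exactly why the trace-preserving conditional expectation is used. Equivalently, the argument can be phrased through \cref{cor:J-sufficient-CE} applied to $E_0$, with recovery map $T$.
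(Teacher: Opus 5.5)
Your proposal is correct and matches the paper's proof essentially verbatim. The key step is the same: take $S=E_0\circ T$, use $E_0\rho_\theta=E_0E\rho_\theta=\hat\rho_\theta$ from \eqref{eq:TPCE-factorizes}, and use $T^*\hat\rho_\theta=\rho_\theta$ from \cref{lem:Erho}. Your Step~1, showing that $E_0$ itself preserves $(\hat\rho_\theta)$, spells out the part of the claim that the paper leaves implicit.
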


\begin{proof}
    Let $E_0$ be the trace-preserving conditional expectation onto $\Jstaralg(\hat\rho_\theta)$.
    Then $E_0 = E_0E$ implies $E_0\rho_\theta =E_0 E\rho_\theta = E_0\hat\rho_\theta =\hat\rho_\theta$.
    Let $T$ be a $(\rho_\theta)$-preserving UP map into $J$.
    \cref{lem:Erho} gives $T^*\hat\rho_\theta = \rho_\theta$.
    We set $S = E_0\circ T$ and conclude 
    \begin{equation}
        S^* \rho_\theta 
        = T^* E_0\rho_\theta 
        = T^* \hat\rho_\theta =\rho_\theta.
    \end{equation}
    Thus, $S$ is a $(\rho_\theta)$-preserving UP map into $\Jstaralg(\hat\rho_\theta)$.
    Therefore, $\Jstaralg(\hat\rho_\theta)$ is sufficient.
\end{proof}

\begin{corollary}\label{cor:TPCE-ptp-equivalence}
    Let $J_{(\rho_\theta)}\subset L(\H)$ be the minimal sufficient J*-subalgebra for $(\rho_\theta)$ on $\H$ with trace-preserving conditional expectation $E:L(\H)\to J_{(\rho_\theta)}$, and set $\hat \rho_\theta = E\rho_\theta$. Then $(\rho_\theta)\ptp (\hat \rho_\theta)$ and 
    \begin{align}
            J_{(\rho_\theta)}  = J_{(\hat\rho_\theta)} = \Jstaralg((\hat\rho_\theta )_\theta). 
    \end{align}
\end{corollary}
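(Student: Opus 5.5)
The plan is to combine \cref{lem:Erho} and \cref{lem:CE-sufficient} with the minimality of $J_{(\rho_\theta)}$ from \cref{thm:luczak}. Write $J=J_{(\rho_\theta)}$ and $J_0=\Jstaralg((\hat\rho_\theta)_\theta)$.

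First, PTP-equivalence. Since $E$ is a UP map with $E^*=E$, the map $E^*$ sends $\rho_\theta$ to $\hat\rho_\theta$. By \cref{thm:luczak} there is a $(\rho_\theta)$-preserving UP map $F:L(\H)\to J$. \cref{lem:Erho} then gives $F^*\hat\rho_\theta=F^*E^*\rho_\theta=\rho_\theta$. Thus $E^*$ and $F^*$ are interconverting PTP maps, so $(\rho_\theta)\ptp(\hat\rho_\theta)$.

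Second, $J=J_0$. By \cref{lem:CE-sufficient}, $J_0\subset J$ and $J_0$ is sufficient for $(\rho_\theta)$. Minimality of $J$ among PTP-sufficient operator systems gives $J\subset J_0$, hence $J=J_0$.

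Third, $J_{(\hat\rho_\theta)}=J$. First check that $(\hat\rho_\theta)$ is faithful, so that \cref{thm:luczak} applies to it. If $a\ge0$ and $\tr(\hat\rho_\theta a)=0$ for all $\theta$, then $\tr(\rho_\theta\,Ea)=0$. Since $Ea\ge0$ and $(\rho_\theta)$ is faithful, $Ea=0$, and $\tr(a)=\tr(Ea)=0$ forces $a=0$.

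Next, \cref{lem:CE-sufficient} says $J_0$ is sufficient for $(\hat\rho_\theta)$, so $J_{(\hat\rho_\theta)}\subset J_0=J$. For the reverse inclusion, let $G:L(\H)\to J_{(\hat\rho_\theta)}$ be the $(\hat\rho_\theta)$-preserving conditional expectation. Then $GE$ is a UP map into $J_{(\hat\rho_\theta)}$, and
\[
(GE)^*\rho_\theta=E^*G^*\rho_\theta .
\]
This is not obviously $\rho_\theta$. The better choice is $T=G\circ F$, which satisfies
\[
T^*\rho_\theta=G^*F^*\rho_\theta=G^*\hat\rho_\theta=\hat\rho_\theta .
\]
That is still not $\rho_\theta$. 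So instead use $T=F\circ G\circ E$. Its range lies in $J$, not necessarily in $J_{(\hat\rho_\theta)}$, so try instead the map $S=G\circ E\circ\ldots$ with range in $J_{(\hat\rho_\theta)}$. Composing, $(G\circ E')^*$ with $E'$ satisfying $E'^*\rho_\theta=\hat\rho_\theta$ gives
\[
\rho_\theta\mapsto\hat\rho_\theta\mapsto G^*\hat\rho_\theta=\hat\rho_\theta ,
\]
and then one still needs a UP map taking $\hat\rho_\theta$ back to $\rho_\theta$ while keeping the range inside $J_{(\hat\rho_\theta)}$.

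The correct route is to take $S=G\circ F$. Its range lies in $J_{(\hat\rho_\theta)}$, and
\[
S^*\hat\rho_\theta=F^*G^*\hat\rho_\theta=F^*\hat\rho_\theta=\rho_\theta .
\]
This shows that $J_{(\hat\rho_\theta)}$ is sufficient for $(\rho_\theta)$ in the sense of \cref{cor:J-sufficient-CE}. To make this precise, check the condition of \cref{cor:J-sufficient-CE} with $E_1$ the trace-preserving conditional expectation onto $J_{(\hat\rho_\theta)}$. Since $J_{(\hat\rho_\theta)}\subset J$, we have $E_1=E_1E$, so
\[
E_1\rho_\theta=E_1\hat\rho_\theta .
\]
Because $G$ is a conditional expectation onto $J_{(\hat\rho_\theta)}$, we have $E_1G=G$, and combining this with $G^*\hat\rho_\theta=\hat\rho_\theta$ should give $E_1\hat\rho_\theta=\hat\rho_\theta$. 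Then $F^*E_1^*\rho_\theta=F^*\hat\rho_\theta=\rho_\theta$, so \cref{cor:J-sufficient-CE} applies and $J_{(\hat\rho_\theta)}$ is sufficient for $(\rho_\theta)$. Minimality then gives $J\subset J_{(\hat\rho_\theta)}$.

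The main obstacle is exactly this step, namely showing $E_1\hat\rho_\theta=\hat\rho_\theta$. I would try to derive it from the fact that the trace-preserving conditional expectation fixes $G^*$-invariant states. A cleaner alternative is to avoid $E_1$ altogether: apply \cref{cor:J-sufficient-CE} for the algebra $J_{(\hat\rho_\theta)}$ directly with the expectation $G$ and the UP map $F$, which requires $F^*G^*\rho_\theta=\rho_\theta$. This reduces to $G^*\rho_\theta=\hat\rho_\theta$, or more modestly $EG^*\rho_\theta=\hat\rho_\theta$, and that identity is the one I would aim to establish.
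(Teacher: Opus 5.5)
\textbf{There is a genuine gap: the inclusion $J_{(\rho_\theta)}\subset J_{(\hat\rho_\theta)}$ is never established, and it is the only nontrivial part of the statement.}

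The following parts are correct and use the paper's ingredients (\cref{lem:Erho}, \cref{lem:CE-sufficient}, minimality):
\begin{itemize}
\item the PTP-equivalence via $E^*=E$ and \cref{lem:Erho};
\item the identification $J_{(\rho_\theta)}=\Jstaralg((\hat\rho_\theta)_\theta)$;
\item the faithfulness check for $(\hat\rho_\theta)$, which is a welcome detail the paper leaves implicit;
\item the inclusion $J_{(\hat\rho_\theta)}\subset J_{(\rho_\theta)}$.
\end{itemize}

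Your attempts at the reverse inclusion do not close it:
\begin{itemize}
\item For $S=G\circ F$ you show $S^*\hat\rho_\theta=\rho_\theta$. That is not sufficiency for $(\rho_\theta)$, which needs $S^*\rho_\theta=\rho_\theta$.
\item The claim that $E_1G=G$ together with $G^*\hat\rho_\theta=\hat\rho_\theta$ \emph{should give} $E_1\hat\rho_\theta=\hat\rho_\theta$ does not follow. Dualizing $E_1G=G$ yields $G^*E_1=G^*$, which only says $G^*(E_1\hat\rho_\theta)=\hat\rho_\theta$.
\item Your final paragraph correctly reduces everything to $G^*\rho_\theta=\hat\rho_\theta$, but leaves that identity unproved.
\end{itemize}

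The missing step is the same kind of dual identity, but with $E$ in place of $E_1$ and applied to $\rho_\theta$ rather than $\hat\rho_\theta$. You already know $\mathrm{Ran}(G)=J_{(\hat\rho_\theta)}\subset J_{(\rho_\theta)}$, and $E$ is the identity on $J_{(\rho_\theta)}$, so $E\circ G=G$. Taking Hilbert--Schmidt adjoints and using $E^*=E$ gives $G^*\circ E=G^*$, hence
\[
G^*\rho_\theta=G^*E\rho_\theta=G^*\hat\rho_\theta=\hat\rho_\theta .
\]
Then $S=G\circ F$ maps into $J_{(\hat\rho_\theta)}$, and $S^*\rho_\theta=F^*G^*\rho_\theta=F^*\hat\rho_\theta=\rho_\theta$. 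So $J_{(\hat\rho_\theta)}$ is PTP-sufficient for $(\rho_\theta)$, and minimality gives $J_{(\rho_\theta)}\subset J_{(\hat\rho_\theta)}$.

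For comparison, the paper's proof is a two-line minimality argument. It simply asserts that $\Jstaralg((\hat\rho_\theta)_\theta)$ is contained in $J_{(\hat\rho_\theta)}$. Since $\Jstaralg((\hat\rho_\theta)_\theta)=J_{(\rho_\theta)}$, that assertion is exactly the inclusion you were missing, and the computation above is what justifies it. You located the real content correctly, but as written your proposal does not prove it.
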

\begin{proof}
    The previous Lemmas shows that $\Jstaralg((\hat\rho_\theta)_\theta)$ is sufficient for $(\rho_\theta)$ and $(\hat\rho_\theta)$ and is contained in both $J_{(\rho_\theta)}$ and $J_{(\hat\rho_\theta)}$. But since the latter are minimal sufficient, we must have equality.
\end{proof}

Owing to the corollary, we can assume without loss of generality that any statistical experiment $(\rho_\theta)$ is represented by density matrices \emph{that generate the minimal sufficient J*-algebra $J_{(\rho_\theta)}$}. 

Since a J*-algebra with at most three generators is reversible \cite[Cor.~2.3.8]{hanche-olsen_jordan_1984}, we find:
\begin{corollary}\label{cor:reversible}
    The minimal sufficient J*-algebra $J_{(\rho,\sigma)}$ of a dichotomy $(\rho,\sigma)$ is reversible.
\end{corollary}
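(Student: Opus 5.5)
The plan is to reduce the claim to a statement about J*-algebras with few generators. I will use \cref{cor:TPCE-ptp-equivalence} to find generators, and then quote the structural fact from \cite[Cor.~2.3.8]{hanche-olsen_jordan_1984} that J*-algebras generated by at most three elements are reversible.

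\emph{Step 1: reduce to faithful dichotomies.} By the standing assumption of this section the dichotomy $(\rho,\sigma)$ is faithful. Let $E:L(\H)\to J_{(\rho,\sigma)}$ be the trace-preserving conditional expectation and set $\hat\rho=E\rho$, $\hat\sigma=E\sigma$. Then \cref{cor:TPCE-ptp-equivalence} gives
\begin{equation*}
J_{(\rho,\sigma)}=\Jstaralg(\hat\rho,\hat\sigma).
\end{equation*}
So $J_{(\rho,\sigma)}$ is generated as a J*-algebra by the two hermitian elements $\hat\rho$ and $\hat\sigma$, together with $\1$.

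\emph{Step 2: apply the generator criterion.} A J*-algebra on $\H$ generated by at most three hermitian elements is reversible, meaning it is closed under the symmetrized products $\trip{a_1}{\ldots}{a_n}$. This is the Cohn-type result recorded in \cite[Cor.~2.3.8]{hanche-olsen_jordan_1984}. Applying it with the two generators $\hat\rho,\hat\sigma$ gives reversibility of $J_{(\rho,\sigma)}$.

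\emph{Main obstacle.} The only real content is the cited fact that few generators imply reversibility. I plan to invoke it rather than reprove it. If a direct argument were needed, I would argue by induction on word length that every symmetrized word $\trip{a_1}{\ldots}{a_n}$ in the generators $\{\hat\rho,\hat\sigma\}$ lies in $\Jstaralg(\hat\rho,\hat\sigma)$. For this I would use the triple product formula \eqref{eq:triple-product} and identities of the type $\trip{a}{w}{a}\in J$ to peel off outer letters. With only two letters, palindromic structure keeps each step inside $J$, and the delicate case is the tetrad $\trip{a}{b}{c}{d}$, which Cohn's argument handles for three generators. Finally, since reversibility is a property of the concrete J*-algebra $J_{(\rho,\sigma)}\subset L(\H)$, and the generators already live in that representation, no transfer between representations is needed.
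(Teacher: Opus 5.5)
Your proposal is correct and is exactly the paper's argument. \cref{cor:TPCE-ptp-equivalence} gives $J_{(\rho,\sigma)}=\Jstaralg(\hat\rho,\hat\sigma)$ with $\hat\rho=E\rho$, $\hat\sigma=E\sigma$, and the cited fact \cite[Cor.~2.3.8]{hanche-olsen_jordan_1984} that J*-algebras with at most three generators are reversible then finishes the proof (the sketched direct induction in your last paragraph is unnecessary, and as written is too vague to stand alone, but nothing depends on it).
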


\begin{remark}[Relation to the Koashi-Iomoto decomposition]\label{rem:KI}
    The Koashi-Imoto theorem \cite{koashi_operations_2002,hayden_structure_2004} states the following:
    If $(\rho_\theta)$ is a statistical experiment on $\H$, then there exists a direct sum decomposition
    \begin{equation}
        \H = \oplus_j \H_j \ox \K_j,\quad \rho_\theta = \oplus_j \,p_{j|\theta}\, \rho_{j|\theta} \ox \omega_j
    \end{equation}
    with $(p_{j|\theta})_j$ probability distributions, and states $\rho_{j|\theta}$ on $\H_j$ and $\omega_j$  on $\K_j$,
    which has the following property: If $T$ is a UCP map on $\H$, then
    \begin{equation}
        T^*\rho_\theta = \rho_\theta
        \quad \iff\quad
        T |_{L(\H_j\ox \K_j)} = \id \ox \, T_j,
    \end{equation}
    where $T_j$ is some UCP map on $\K_j$ with $T_j^*\omega_j=\omega_j$.
    It is known \cite{jencova_sufficiency_2006,kuramochi_minimal_2017} that the Koashi-Imoto decomposition is directly related to the minimal sufficient *-algebra via 
    \begin{equation}
        A_{(\rho_\theta)} = \oplus_j L(\H_j)\ox \1_{\K_j}.
    \end{equation}
    Thus, the Koasi-Imoto decomposition is precisely the decomposition of the minimal sufficient *-algebra into factors, each of which occurs with a certain multiplicity.
    
    As for *-algebras, J*-algebras have a direct sum of simple J*-factors (J*-algebras without non-trivial J*-ideals).
    Thus, we can write $\H = \oplus_k \L_k$ and $J_{(\rho_\theta)} = \oplus_k J_k$ with each $J_k\subset L(\L_k)$ a J*-factor.
    We know from \cref{thm:luczak} that the minimal sufficient J*-algebra $J_{(\rho_\theta)}$ is a subalgebra of $A_{(\rho_\theta)}$, which generates $A_{(\rho_\theta)}$ as a *-algebra.
    Thus, we have
    $\oplus_j L(\H_j)\ox \1_{\K_j} 
    = \oplus_k  \mathrm{\text{*-}alg}(J_k).$
    Therefore, for every $k$, there exists a set $I(k)$ such that 
    \begin{equation}
        \H = \oplus_k \oplus_{j\in I(k)}
        \H_j \ox \K_j,
        \quad
        J_{(\rho_\theta)} = \oplus_k J_k,
        \quad 
        \mathrm{\text{*-}alg}(J_k) = \oplus_{j\in I(k)} L(\H_j)\ox\1_{\K_j}.
    \end{equation}
    Looking at the representation theory of finite-dimensional Jordan algebras \cite{jacobson_structure_1968,hanche-olsen_jordan_1984,idel}, we see that the enveloping *-algebra of a J*-factor is either a full matrix algebra or a direct sum of two matrix algebras.
    Thus, for each $k$, $I(k)$ consists of either one or two elements.
\end{remark}

Let us discuss an example where the minimal sufficient J*-algebra significantly deviates from the minimal sufficient *-algebra. 

\begin{example}
    Let $(\rho_\theta)$ be an irreducible family of states on a Hilbert space $\H$, i.e., a family with the property that the only subspaces $\L\subset \H$ that are jointly invariant under all $\rho_\theta$ are $\L = \{0\}$ and $\L = \H$.
    Then the minimal sufficient *-algebra is 
    \begin{equation}
        A_{(\rho_\theta)} = L(\H).
    \end{equation}
    Indeed, this follows from the Koashi-Imoto decomposition discussed above.
    (In fact, the Koashi-Imoto decomposition also implies the converse, so that
    $A_{(\rho_\theta)}=L(\H)$ if and only if $(\rho_\theta)$ is irreducible.)
    However, if there exists a basis with respect to which $\rho_\theta = \rho_\theta^t$ for all $\theta$,
    then $E = \frac12(\id + (\placeholder)^t)$ is a conditional expectation with $E^*\rho_\theta = \rho_\theta$ for all $\theta$.
    Then, the symmetric matrices are a sufficient J*-algebra, and we have
    \begin{equation}\label{eq:J-sub-symm-sub-A}
        J_{(\rho_\theta)} \subseteq \{x \in L(\H)\ : \ x=x^t \} \subsetneq L(\H) = A_{(\rho_\theta)}.
    \end{equation}
    An example of a family of states with these properties is the dichotomy $(\rho,\sigma)$ on $\H=\CC^2$ with $\rho = \ketbra00 = \frac12(\1+ Z)$ and $\sigma = \ketbra ++ = \frac12(\1+X)$, where $X,Y,Z$ denote the Pauli matrices.
    In this specific example, the first inclusion in \eqref{eq:J-sub-symm-sub-A} becomes an equality.
\end{example}

\begin{example}\label{exa:trp-doubling}
    Let $(\rho_\theta)$ be a family of states on $\H$. 
    For $0 < \lambda < 1$, consider the states $(\lambda\rho_\theta \oplus (1-\lambda) \rho_\theta^t)$ on $\H\oplus \H$, where the transpose is taken in some arbitrary basis on $\H$.
    We claim that the minimal sufficient J*-algebra is given by
    \begin{equation}\label{eq:trp-doubling}
        J_{(\lambda\rho_\theta\oplus (1-\lambda)\rho_\theta^t)} = \{ x\oplus x^t \ : \ x\in J_{(\rho_\theta)}\} \Jcong J_{(\rho_\theta)},
    \end{equation}
    where the isomorphism is an isomorphism of J*-algebras.
    In particular, if $J_{(\rho_\theta)}=L(\H)$, we have
    \begin{equation}\label{eq:trp-doubling2}
        J_{(\lambda\rho_\theta\oplus (1-\lambda)\rho_\theta^t)} = \{ x\oplus x^t \ : \ x\in L(\H)\} \Jcong L(\H),
    \end{equation}
    We postpone the proof of this claim until later, when we can give a short and elegant proof based on the algebraic structure of PTP-interconvertibility.
\end{example}

\section{The algebraic structure of PTP-equivalence}

We are now in a position to clarify the algebraic structure of PTP-equivalence of faithful statistical experiments.
The first part of the following Theorem first appeared in the Bachelor thesis of O. Skodda \cite{Ole}. 

\begin{theorem}\label{thm:ptp-inter}
    Let $(\rho_\theta)_{\theta \in \Theta}$ and $(\rho'_\theta)_{\theta \in \Theta}$ be faithful statistical experiments on Hilbert spaces $\H$ and $\H'$.
    \begin{enumerate}[(i)]
    \item\label{it:ptp-inter1}
        $(\rho_\theta)_{\theta \in \Theta}$ and $(\rho'_\theta)_{\theta \in \Theta}$ are PTP-interconvertible if and only if there is a J*-isomorphism 
        \begin{equation*}
            \psi : J_{(\rho_\theta)} \to J_{(\rho'_\theta)}
        \end{equation*}
        such that 
        \begin{equation}\label{eq:intertwining-iso}
            \tr(\rho_\theta' \psi(a)) = \tr(\rho_\theta a), \qquad a\in J_{(\rho_\theta)},\ \theta\in \Theta.
        \end{equation}
    \item\label{it:ptp-inter2}
        The J*-isomorphism $\psi$ is uniquely determined by \eqref{eq:intertwining-iso}.
    \item\label{it:ptp-inter3}
        If $T,S$ are UP maps such that $T^*\rho_\theta = \rho'_\theta$ and $S^*\rho'_\theta = \rho_\theta$ for all $\theta\in \Theta$,
        then the restriction of $T$ to $J_{(\rho_\theta)}$ is $\psi$ and the restriction of $S$ to $J_{(\rho'_\theta)}$ is $\psi^{-1}$.
        I.e., the following diagram commutes
        \begin{equation}
        \begin{tikzcd}
            L(\H) \arrow{d}{T} & J_{(\rho_\theta)} \arrow[hook']{l}{} \arrow[leftarrow,shift left]{d}{}{\psi^{-1}} \arrow[rightarrow, shift right,swap]{d}{\psi} \arrow[hook]{r}{} & L(\H)  \\
            L(\H') & \arrow[hook']{l}{} J_{(\rho_\theta')} \arrow[hook]{r}{} & L(\H') \arrow{u}{S}
        \end{tikzcd}
        \end{equation}
    \end{enumerate}
\end{theorem}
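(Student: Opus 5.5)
The plan is to prove the three items together, starting from the forward direction of \ref{it:ptp-inter1} and \ref{it:ptp-inter3}. Let $T,S$ be interconverting UP maps, so $T^*\rho_\theta=\rho'_\theta$ and $S^*\rho'_\theta=\rho_\theta$.

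Then $ST$ is a $(\rho_\theta)$-preserving UP map on $L(\H)$. By \eqref{eq:min-suff-J} it fixes $J_{(\rho_\theta)}$ pointwise. Likewise $TS$ fixes $J_{(\rho'_\theta)}$ pointwise.

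Next I want to show that $T$ maps $J_{(\rho_\theta)}$ into $J_{(\rho'_\theta)}$. Let $F$ and $F'$ be the state-preserving conditional expectations from \cref{thm:luczak}. Consider the UP map $T F S$ on $L(\H')$. It satisfies
\[
(TFS)^*\rho'_\theta = S^*F^*T^*\rho'_\theta = S^*F^*\rho_\theta \cdot 0 + \ldots ,
\]
which is not the right composition, so I will take the dual order carefully instead. We have $(TFS)^* = S^*F^*T^*$, and this acts on states of $\H'$ only after $T^*$. So the correct check is $(TFS)^*\rho'_\theta = S^*F^*T^*\rho'_\theta$, which requires $T^*$ to act on $\H'$ — a domain mismatch.

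The correct composition is $G := T\circ F\circ S : L(\H')\to L(\H')$, with dual $G^* = S^*F^*T^*$. Here $T^*:L(\H)\to L(\H')$ and $S^*:L(\H')\to L(\H)$, so this needs adjusting too. Concretely, $S: L(\H)\to L(\H')$ and $T:L(\H')\to L(\H)$, because $T^*$ maps states on $\H$ to states on $\H'$. Hence the well-defined map is $G=S F T : L(\H')\to L(\H')$? Its domain is wrong as well. I will therefore take $G = S\circ F\circ T$ viewed with $T:L(\H')\to L(\H)$, $F$ on $L(\H)$, and $S:L(\H)\to L(\H')$. Then $G^*\rho'_\theta = T^*F^*S^*\rho'_\theta = T^*F^*\rho_\theta = T^*\rho_\theta=\rho'_\theta$. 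So $G$ is $(\rho'_\theta)$-preserving, and its range lies in $S(J_{(\rho_\theta)})$. By minimality, $J_{(\rho'_\theta)}\subset \mathrm{Ran}(G)\subset S(J_{(\rho_\theta)})$. Symmetrically, $J_{(\rho_\theta)}\subset T(J_{(\rho'_\theta)})$.

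Combining these with the fixed-point relations gives
\[
J_{(\rho_\theta)} = TS(J_{(\rho_\theta)})\subset \ldots
\]
and a dimension count yields $\dim J_{(\rho_\theta)}\le\dim J_{(\rho'_\theta)}\le \dim J_{(\rho_\theta)}$. Also $T(J_{(\rho'_\theta)}) \subset$ $TST(J_{(\rho'_\theta)})$, and since $ST$ is the identity on $J_{(\rho'_\theta)}$ (with appropriate naming), $T$ and $S$ restrict to mutually inverse linear bijections between the two algebras. In particular $T(J)\subset J'$ follows from $J' = S(J)$ and $TS=\id$ on the relevant algebra.

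The restricted map $\psi$ is a unital positive bijection whose inverse is also positive, i.e.\ an order isomorphism. To get the Jordan property, I use the multiplicative domain. For $a\in J$ hermitian, the Jordan–Schwarz inequality (\cref{lem:JS-inequality}) applied twice gives
\[
a^2 = S T(a^2)\ge S(T(a)^2)\ge (STa)^2=a^2,
\]
so equality holds throughout. Pairing $T(a^2)-T(a)^2\ge0$ against the faithful state $\sum_\theta\rho'_\theta$ shows it vanishes, since its trace equals $\tr(\sum\rho_\theta a^2)-\tr(\sum\rho'_\theta (Ta)^2)$, and the second term equals the first by the sandwich. Hence $J\subset\mathrm{MD}(T)$, and \cref{prop:multiplicative_dom} shows $\psi$ is a J*-homomorphism; bijectivity makes it a J*-isomorphism. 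Intertwining is immediate: $\tr(\rho'_\theta Ta)=\tr(\rho_\theta a)$. This proves the forward direction of \ref{it:ptp-inter1} and \ref{it:ptp-inter3}.

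For the converse, given $\psi$, set $T := \psi\circ F$ and $S := \psi^{-1}\circ F'$. These are UP maps because J*-isomorphisms are positive. For $x\in L(\H)$ we get $\tr(\rho'_\theta\psi(Fx))=\tr(\rho_\theta Fx)=\tr(\rho_\theta x)$, so $T^*\rho'_\theta=\rho_\theta$ in the appropriate orientation, and similarly for $S$.

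For uniqueness in \ref{it:ptp-inter2}, suppose $\psi_1,\psi_2$ both satisfy \eqref{eq:intertwining-iso}. Then the converse construction yields interconverting maps with restrictions $\psi_1$ and $\psi_2$. Taking the maps $\psi_1\circ F$ and $\psi_2^{-1}\circ F'$ as an interconverting pair, \ref{it:ptp-inter3} forces $\psi_2^{-1}=\psi_1^{-1}$ on $J'$. Hence $\psi_1=\psi_2$.

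The main obstacle I expect is bookkeeping the domains and directions of $T,S$ and their duals, together with the minimality argument that shows $T$ really lands in $J_{(\rho'_\theta)}$.
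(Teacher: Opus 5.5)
Your argument is correct in substance. For the forward direction of \ref{it:ptp-inter1} and for \ref{it:ptp-inter3}, however, it takes a genuinely different route from the paper (orientation as in the paper's proof; see below).

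The paper proceeds in three steps:
\begin{itemize}
\item It first replaces $T,S$ by $F'\circ T$ and $F\circ S$, so that the ranges lie in the minimal sufficient J*-algebras.
\item It reads off mutually inverse unital positive maps, i.e.\ an order isomorphism, and upgrades this to a J*-isomorphism with \cref{lem:order-iso}, via the fact that projections are the extreme points of the unit interval.
\item Since this only concerns the \emph{corrected} maps, \ref{it:ptp-inter3} then needs the separate \cref{lem:restricted-iso}. That lemma is an alternating-conditional-expectation argument showing that a UP map whose compression by a conditional expectation is a J*-isomorphism already equals that isomorphism.
\end{itemize}

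You instead feed the round trips through $F$ and $F'$ into the minimality characterization \eqref{eq:min-suff-J}. This gives $J_{(\rho'_\theta)}\subset T(J_{(\rho_\theta)})$ and $J_{(\rho_\theta)}\subset S(J_{(\rho'_\theta)})$. Together with the pointwise fixing of both algebras by $S\circ T$ and $T\circ S$ and a dimension count, the \emph{original} maps restrict to mutually inverse bijections. Your Jordan--Schwarz sandwich plus faithfulness of $(\rho'_\theta)$ then puts $J_{(\rho_\theta)}$ into $\mathrm{MD}(T)$. Having already noted that the restriction is an order isomorphism, you could equally have invoked \cref{lem:order-iso} at that point.

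Your route delivers the forward direction and \ref{it:ptp-inter3} in one pass and makes \cref{lem:restricted-iso} superfluous. The paper's route instead isolates two reusable structural facts, and \cref{lem:order-iso} in particular needs no faithfulness. Your converse construction and your uniqueness argument are essentially the paper's.

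When you write it up, make the following repairs.
\begin{itemize}
\item Fix one orientation throughout, namely the one used in the paper's proof: $T:L(\H)\to L(\H')$, $S:L(\H')\to L(\H)$, $T^*\rho'_\theta=\rho_\theta$, $S^*\rho_\theta=\rho'_\theta$. The hypothesis of \ref{it:ptp-inter3} as printed has the duals the other way round, which is presumably what tripped you up.
\item With this orientation, the composites to use are $T\circ F\circ S$ on $L(\H')$ and $S\circ F'\circ T$ on $L(\H)$. Your very first attempt was the right map. Your sandwich $a^2=ST(a^2)\ge S((Ta)^2)\ge (STa)^2=a^2$ is already written in this orientation.
\item Delete the abandoned computation and the unfinished chain. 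Replace them by $\dim J_{(\rho_\theta)}\le\dim S(J_{(\rho'_\theta)})\le\dim J_{(\rho'_\theta)}\le\dim T(J_{(\rho_\theta)})\le\dim J_{(\rho_\theta)}$. This forces $T(J_{(\rho_\theta)})=J_{(\rho'_\theta)}$ and injectivity of both restrictions.
\item In the uniqueness step, cite the mutual-inverse property you established for arbitrary interconverting pairs, not \ref{it:ptp-inter3} itself. Item \ref{it:ptp-inter3} speaks of ``the'' $\psi$ and so presupposes \ref{it:ptp-inter2}.
\end{itemize}
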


For the proof, we need a characterization of J*-isomorphisms in terms of the order structure.

\begin{lemma}\label{lem:order-iso}
    Let $J_1,J_2$ be J*-algebras, and let $T:J_1\to J_2$ be a *-preserving unital linear map.
    Then $T$ is a J*-isomorphism if and only if it is an order isomorphism, i.e., $T$ is a linear bijection such that $T(J_1^+)=J_2^+$.
\end{lemma}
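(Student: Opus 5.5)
The forward direction is straightforward. Every J*-isomorphism $T$ is positive, because $J^+$ consists of squares of hermitian elements by \eqref{eq:positive-cone-squares} and $T(a^2)=T(a)^2$ for hermitian $a$. The inverse $T^{-1}$ is again a J*-homomorphism, so it is positive as well. Hence $T(J_1^+)=J_2^+$, and $T$ is an order isomorphism.

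For the converse, suppose $T$ is a unital, *-preserving linear bijection with $T(J_1^+)=J_2^+$. The plan is to show first that $T$ maps projections to projections and then to invoke \cref{cor:projections-homo}. That corollary is stated for UP maps on $L(\H)$, but its content is just the multiplicative-domain argument. So I would either rerun that argument inside $J_1$, or extend $T$ to a UP map on all of $L(\H_1)$ by precomposing with the trace-preserving conditional expectation $E_1:L(\H_1)\to J_1$. The composite $T\circ E_1$ is a UP map $L(\H_1)\to L(\H_2)$ that agrees with $T$ on $J_1$.

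The key step is to characterize projections in $J$ order-theoretically, so that an order isomorphism must preserve them. I would use the standard fact that, among elements $0\le p\le \1$, $p$ is a projection if and only if it is an extreme point of the order interval $[0,\1]_J$. To prove this I would diagonalize $p$ using functional calculus in $J$. If $p$ has an eigenvalue $\lambda\in(0,1)$ with spectral projection $q\in J$, then $p\pm\eps q$ both lie in $[0,\1]_J$, so $p$ is not extreme. Conversely, suppose $p$ is a projection and $p=\tfrac12(x+y)$ with $x,y\in[0,\1]_J$. Compressing by $\1-p$ gives $(\1-p)x(\1-p)=0$, and positivity then forces $x(\1-p)=0$. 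Similarly, $p(\1-x)p=0$ forces $(\1-x)p=0$. Together these give $x=p$.

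Since $T$ is a unital affine bijection with $T(J_1^+)=J_2^+$, it maps $[0,\1]_{J_1}$ bijectively onto $[0,\1]_{J_2}$, and therefore it maps extreme points to extreme points. So $T$ sends projections to projections. By \eqref{eq:spanned-by-projections}, $J_1$ is spanned by its projections, so $J_1=\Jstaralg(\text{projections of }J_1)$. Applying \cref{cor:projections-homo} to the UP extension $T\circ E_1$ shows that $T$ restricted to $J_1$ is a J*-homomorphism. Since $T$ is bijective, it is a J*-isomorphism.

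The main obstacle is the extreme-point characterization. In particular one has to make sure the compression argument works inside $L(\H)$ while the extreme points are taken within $J$; this is fine because $J$ is closed under functional calculus, so the spectral projections $q$ used in the perturbation lie in $J$. The remaining steps only apply earlier results.
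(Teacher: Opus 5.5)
Your proof is correct, but it reaches the conclusion by a different route than the paper. The first half coincides with the paper's argument. Both identify the projections of $J$ with the extreme points of $[0,\1]_J$ and conclude that an order isomorphism sends projections to projections. Your two-sided verification of that characterization is more complete than the paper's footnote, which only indicates the spectral decomposition.

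The routes split afterwards. The paper stays elementary. If $p_i,p_j$ are distinct spectral projections of a hermitian $x$, then $p_i+p_j$ is a projection, so $Tp_i+Tp_j$ is a projection. This forces $Tp_i\perp Tp_j$. Hence $Tx=\sum_i\lambda_i\,Tp_i$ is already the spectral decomposition of $Tx$, and $(Tx)^2=T(x^2)$ can be read off directly.

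You instead extend $T$ to the UP map $T\circ E_1$ on $L(\H_1)$ and invoke \cref{cor:projections-homo}. That corollary rests on St\o rmer's result (\cref{prop:multiplicative_dom}) that the multiplicative domain of a UP map is a J*-algebra on which the map acts as a J*-homomorphism. This is valid: $T\circ E_1$ is unital and positive because $T(J_1^+)\subset J_2^+$, it agrees with $T$ on $J_1$, and the projections generate $J_1$. However, it imports the Jordan--Schwarz machinery and the existence of the trace-preserving conditional expectation. The paper needs only the fact that a sum of two projections is a projection exactly when they are orthogonal.

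What your version buys is that the lemma becomes a direct instance of the multiplicative-domain principle used elsewhere in the paper, and you never need to track spectral decompositions of the images. The paper's version is self-contained and would work verbatim for abstract J*-algebras without choosing a concrete representation to extend into.
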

\begin{proof}
    If $T$ is a J*-isomorphism, then \eqref{eq:positive-cone-squares} shows that $T(J_1^+)= J_2^+$. 
    To show the converse, it is enough to establish that $T$ is a J*-homomorphism.
    Since $T$ is *-preserving, \eqref{eq:square trick} shows that we only have to prove that $(Tx)^2 = T(x^2)$ for hermitian $x\in J_1$.
    Let $x= \sum_i \lambda_i p_i$ be the spectral decomposition of a hermitian element $x=x^*\in J_1$ and denote by $[0,\1]_{J} = \{ a : 0\leq a \leq \1, a\in J\}$ the unit interval of a J*-algebra $J$.
    Since $T$ restricts to an affine bijection of the convex sets $[0,\1]_{J_1}$ and $[0,\1]_{J_2}$, it maps extreme points onto extreme points.
	The extreme points of the unit interval of a J*-algebra are precisely the projections.\footnote{For every positive element $a\geq 0$, we have $a = \sum_j \lambda_j p_j$ with eigenvalues $\lambda_j\geq 0$ and orthogonal projections $p_j\in J$, see \cref{sec:basics}.}
    Thus, $T$ maps projections to projections.
    Set $q_i = Tp_i$.
    If $i\ne j$, then $p_i$ and $p_j$ are orthogonal. Thus, $p_i + p_j$ is a projection, and, hence, $q_i + q_j = Tp_i + Tp_j = T(p_i + p_j)$ is a projection.
    The sum of two projections is a projection if and only if the projections are orthogonal.
    Therefore $q_i$ and $q_j$ are orthogonal, so that $Tx = \sum_i \lambda_i q_i$ must be the spectral decomposition of $Tx$.
    In particular, we have $(Tx)^2 = \sum_i \lambda_i^2 q_i = \sum_i \lambda_i^2 Tp_i = T(x^2)$.
\end{proof}

\begin{proof}[Proof of \cref{it:ptp-inter1} of \cref{thm:ptp-inter}]
    We let $J\up\prime$ denote the minimal sufficient J*-algebras of $(\rho_\theta\up\prime)$ and let $F\up\prime:L(\H\up\prime) \to J\up\prime$ denote the $(\rho_\theta\up\prime)$-preserving conditional expectations onto it.
    
    Assume  $(\rho_\theta)\ptp(\rho_\theta')$ and let $T:L(\H)\to L(\H')$, $S:L(\H')\to L(\H)$ be interconverting UP maps.
    If necessary, we replace $T$ and $S$ by the maps $F'\circ T$ and $F\circ S$ to obtain interconverting UP maps whose ranges are contained in $J'$ and $J$, respectively.
    The UP maps $T\circ S$ and $S\circ T$ satisfy $(T\circ S)^*\rho_\theta = \rho_\theta$ and $(S\circ T)^*\rho_\theta'=\rho_\theta'$.
    Thus, by \cref{thm:luczak}, $ (T\circ S)|_{J'} = \id_{J'}$ and $S\circ T|_J = \id_{J}$.
    We define $\psi = T|_J : J \to J'$ and $\psi' = S|_{J'} :J' \to J$.
    Then $\psi$ and $\psi'$ are unital positive maps that are inverses of each other.
    Therefore, $\psi$ is an order isomorphism and $\psi' = \psi^{-1}$.
    By \cref{lem:order-iso}, $\psi$ is also a J*-isomorphism.
    \Cref{eq:intertwining-iso} follows from the construction of $\psi$.

    Conversely, suppose that an isomorphism $\psi$ as in the statement is given. 
    Define the UP maps $T = \psi^{-1} \circ F'$ and $S = \psi\circ F$. Then
    \begin{align}
        \tr(\rho'_\theta S(a)) = \tr(\rho'_\theta \psi(F(a))) = \tr(\rho_\theta F(a)) = \tr(\rho_\theta a),\quad a\in L(\H),
    \end{align}
    and similarly $T^*\rho_\theta = \rho_\theta'$.
\end{proof}

\begin{lemma}\label{lem:restricted-iso}
    Let $\hat J \subset L(\hat\H), J \subset L(\H)$ be J*-subalgebras. Let $T:\hat J \to L(\H)$ be a UP map and $E:L(\H) \to J$ a conditional expectation such that $\alpha = E\circ T$ is a J*-isomorphism $\hat J \to J$. Then $T(\hat J) = J$ and $T=\alpha$.
\end{lemma}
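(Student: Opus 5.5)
The plan is to show that $T$ maps projections of $\hat J$ to projections lying in $J$. Then \cref{cor:projections-homo} makes $T|_{\hat J}$ a J*-homomorphism, and we compare it with $\alpha$.

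First, let $p\in\hat J$ be a projection. Since $\alpha$ is a J*-isomorphism, $\alpha(p)=E(T(p))=:q$ is a projection in $J$. Set $x=T(p)$. Then $0\le x\le\1$ because $T$ is unital and positive, so $x-x^2\ge0$. The same holds for $\1-p$: $T(\1-p)=\1-x$ and $E(\1-x)=\1-q$.

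The key step is to prove $x=q$. Since $q\in J\subset\mathrm{MD}(E)$, \eqref{eq:homo} gives
\[
E(qxq)=qE(x)q=q\qandq E\big((\1-q)x(\1-q)\big)=(\1-q)q(\1-q)=0 .
\]
If $E$ is faithful, then $(\1-q)x(\1-q)\ge0$ together with $E((\1-q)x(\1-q))=0$ forces $(\1-q)x(\1-q)=0$. Because $x\ge0$, this gives $x(\1-q)=0$, so $x=qxq$. Applying the same argument to $\1-x$ gives $q(\1-x)q=0$, hence $qxq=q$ and therefore $x=q$.

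This argument needs $E$ to be faithful, and the statement does not assume it; this is the main obstacle. If faithfulness is not available, I would instead work with the Jordan-Schwarz inequality directly: $E(x^2)\ge E(x)^2=q$ and $E(x^2)\le E(x)=q$, so $E(x-x^2)=0$ with $x-x^2\ge0$. I expect that in the intended use $E$ is faithful, for instance a state-preserving conditional expectation with a faithful invariant state, or the trace-preserving one. In that case $x=x^2$, so $x$ is a projection, and $E(x)=q$ together with the two displayed identities gives $x=q$. I would state faithfulness as the standing assumption, or reduce to it by replacing $E$ with the trace-preserving conditional expectation onto $J$. The remaining steps only use that $E$ restricted to $J$ is the identity.

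Finally, $T(p)=\alpha(p)$ for every projection $p\in\hat J$. By \eqref{eq:spanned-by-projections}, $\hat J$ is spanned by its projections, so linearity gives $T|_{\hat J}=\alpha$. In particular $T(\hat J)=\alpha(\hat J)=J$.
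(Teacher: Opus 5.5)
\textbf{Verdict.} Your argument is correct once $E$ is assumed faithful, and it follows a different route from the paper's proof.

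\textbf{Comparison with the paper.} The paper argues globally. It forms $\beta=T\circ\alpha^{-1}$ and $F=\beta\circ E$ and checks that $EF=E$, $FE=F$, $F^2=F$ and $FT=T$. Hence $F$ is a second conditional expectation, onto $T(\hat J)$, and it shares the invariant state $E^*\sigma$ with $E$. State-preserving conditional expectations are orthogonal projections for the KMS inner product (\cref{lem:CE-selfadjoint}), so the relations $EF=E$ and $FE=F$ force $E=F$. It follows that $T(\hat J)=J$ and $T=FT=ET=\alpha$.

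You work pointwise on projections instead. The bimodule property \eqref{eq:bimodule-CE} gives $E((\1-q)x(\1-q))=0=E(q(\1-x)q)$. Faithfulness kills both compressions, so $T(p)=qxq=q=\alpha(p)$. Linearity together with \eqref{eq:spanned-by-projections} then finishes the proof.

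Your route is more elementary. It needs no KMS geometry, no Woronowicz-type contraction estimate, and no prior knowledge that $T(\hat J)$ is a J*-algebra. The paper's route instead reuses the conditional-expectation machinery already built for \cref{thm:luczak}. Neither the announced appeal to \cref{cor:projections-homo} nor the Jordan--Schwarz detour is needed, since your compression argument already closes the proof.

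\textbf{The faithfulness hypothesis.} You were right to flag it, and in fact it cannot be dropped: the statement is false for a non-faithful $E$. Here is a counterexample.
\begin{itemize}
\item Take $\H=\CC^3$ and $J=\lin\{\kettbra{0},\kettbra{1}+\kettbra{2}\}$.
\item Let $E(a)=\langle 0|a|0\rangle\kettbra{0}+\langle 1|a|1\rangle(\kettbra{1}+\kettbra{2})$. This is an idempotent UP map onto $J$, and it is not faithful.
\item Let $\hat J\subset L(\CC^2)$ be the diagonal algebra, and set $T(e_0)=\kettbra{0}+\tfrac12\kettbra{2}$ and $T(e_1)=\kettbra{1}+\tfrac12\kettbra{2}$.
\end{itemize}
Then $E\circ T$ is a *-isomorphism $\hat J\to J$, but $T(e_0)\notin J$.

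For the same reason, your suggested reduction of replacing $E$ by the trace-preserving conditional expectation $E_0$ onto $J$ does not work. The hypothesis concerns the particular composite $E\circ T$, and changing $E$ can destroy it. In the example above, $E_0T(e_0)=\kettbra{0}+\tfrac14(\kettbra{1}+\kettbra{2})$ is not a projection.

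The paper's proof uses faithfulness tacitly, when it asserts that $E^*\sigma$ is faithful. The assumption is satisfied in the only application, \cref{thm:ptp-inter}, where $E$ preserves a faithful experiment. The correct fix is therefore to add faithfulness of $E$ to the hypotheses, not to try to reduce to it.
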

\begin{proof}[Proof of \cref{lem:restricted-iso}]
  Consider the UP maps $\beta = T \circ \alpha^{-1} : J \to T(\hat J)$ and $F = \beta \circ E: L(\H) \to T(\hat J)$. 
Then we have
\begin{align}
    E\circ F &= E, \quad F^2=F, \quad 
    F\circ E = F,\quad F\circ T = T.
\end{align}
Now let $\sigma$ be a faithful state on $L(\H)$.
By the first equality, $E^*\sigma$ is a faithful $F$-and $E$-invariant state. By the second equality, $F$ is a conditional expectation onto its range, which is a J*-algebra $F(L(\H))\subset T(\hat J)$. By the last equality, we have $T(\hat J) \subset F(L(\H))$. 
Hence $F$ is a conditional expectation onto $T(\hat J)$.
By a standard argument, we have $E = \lim_{n\to\infty} (E\circ F)^n = \lim_{n\to\infty} (F\circ E)^{n} = F$ and hence $J = T(\hat J)$ as well as $\alpha = E\circ T = F\circ T = T$.
\end{proof}

\begin{proof}[Proof of \cref{it:ptp-inter2,it:ptp-inter3} of \cref{thm:ptp-inter}]
    We begin by showing the uniqueness of $\psi$.
    Let $\psi_1,\psi_2$ be J*-isomorphisms intertwining the states as specified in \cref{thm:ptp-inter}, and let $E : L(\H)\to J_{(\rho_\theta)}$ be the $(\rho_\theta)$-preserving conditional expectation.
    Then $T = \psi_2^{-1}\circ \psi_1\circ E$ is a UP map on $L(\H)$ with $T^*\rho_\theta=\rho_\theta$.
    By \cref{thm:luczak}, this implies that $T$ restricts to the identity on $J_{(\rho_\theta)}$. Thus, we have $\psi_1^{-1} = \psi_2^{-1}$ and, hence, $\psi_1=\psi_2$.
    
    Let $E$ and $E'$ be the conditional expectations onto $J_{(\rho_\theta)}$ and $J_{(\rho'_\theta)}$, respectively.
    The proof of \cref{thm:ptp-inter} shows that $ET$ and $E'S$ restrict to mutually inverse J*-isomorphisms between the two minimal sufficient J*-algebras. The result now follows from \cref{lem:restricted-iso}.
\end{proof}

Using \cref{thm:ptp-inter}, we can now provide the missing proof in \cref{exa:trp-doubling}:

\begin{example}[continues=exa:trp-doubling]\label{exa:final-trp-doubling}
    We identify $L(\H\oplus \H)$ with the space $M_2(L(\H))$ of $L(\H)$-valued $2\times 2$-matrices in the natural way.
    Then the families $(\rho_\theta)$ and $(\lambda\rho_\theta\oplus(1-\lambda)\rho_\theta^t)$ are interconverted by the PTP maps $T:L(\H)\to L(\H\oplus\H)$ and $S:L(\H\oplus\H)\to L(\H)$ given by
    \footnote{Indeed, we have
    $\tr(\tfrac12(\rho_\theta\oplus\rho_\theta^t)\,Ta)
        = \tfrac12 (\tr(\rho_\theta a) + \tr(\rho_\theta^t a^t)) = \tr(\rho_\theta a)$, $a\in L(\H)$,
    and 
        $\tr(\rho_\theta\, Sb)
        = (\tr(\rho_\theta (\lambda b_{11} + (1-\lambda)b_{22}^t))
        = \lambda \tr(\rho_\theta b_{11}) + (1-\lambda)\tr(\rho_\theta^t b_{22})
        = \tr(\lambda\rho_\theta\oplus(1-\lambda)\rho_\theta^t) \,b)$, $b\in M_2(L(\H))=L(\H\oplus\H)$.
    }
    \begin{equation*}
        Ta = a \oplus a^t, \qquad Sb = \lambda b_{11}+(1-\lambda) b_{22}^t ,\qquad a\in L(\H),\ b = [b_{ij}]\in M_2(L(\H)).
    \end{equation*}
    Thus, by \cref{thm:ptp-inter}, the minimal sufficient J*-algebra of $(\lambda\rho_\theta\oplus (1-\lambda) \rho_\theta^t)$ is the image of $J_{(\rho_\theta)}$ under $T$, which is exactly the right-hand side of \eqref{eq:trp-doubling}.
\end{example}

We also mention the following result, but postpone the proof until Section \cref{sec:standard-rep}.
Recall that a positive map is called \emph{decomposable} if it is the sum of a CP and a coCP map (a CP map followed by transposition in some basis).

\begin{theorem}\label{thm:decomposable-interconversion}
    Two dichotomies $(\rho,\sigma)$ and $(\tau,\omega)$ are PTP-interconvertible if and only if there exist decomposable UP maps $T$ and $S$ such that $(T^*\rho,T^*\sigma)=(\tau,\omega)$ and $(S\tau,S\omega)=(\rho,\sigma)$.
\end{theorem}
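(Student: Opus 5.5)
By \cref{thm:ptp-inter}, if the two dichotomies are PTP-interconvertible, then $S=\psi\circ F$ and $T=\psi^{-1}\circ F'$ are interconverting UP maps. Here $F,F'$ are the state-preserving conditional expectations onto $J=J_{(\rho,\sigma)}$ and $J'=J_{(\tau,\omega)}$, and $\psi:J\to J'$ is the intertwining J*-isomorphism. The converse direction is trivial, since decomposable maps are positive. So the plan is to show that a conditional expectation onto a minimal sufficient J*-algebra of a dichotomy is decomposable, and that $\psi$ and $\psi^{-1}$ extend to decomposable UP maps. One also needs that compositions of decomposable maps are decomposable. This holds because $\mathrm{CP}\circ\mathrm{coCP}$ and $\mathrm{coCP}\circ\mathrm{CP}$ are coCP, while $\mathrm{coCP}\circ\mathrm{coCP}$ is CP, where the transposition is allowed to change from map to map in the usual way.

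\emph{Step 1: $J$ is universally reversible.} By \cref{cor:TPCE-ptp-equivalence} we may assume that $J$ is generated as a J*-algebra by two hermitian elements (the states, or equivalently $\rho-\sigma$ and $\rho+\sigma$). Each factor of $J$ is then also generated by two hermitian elements, namely the compressions of the generators by the central projections. In a spin factor $V_n$, the Jordan product of two elements of $\lin\{\1,s_1,\dots,s_n\}$ is a scalar plus a combination of those same elements. Hence a $2$-generated spin factor is contained in $\lin\{\1,a,b\}$, so $n\le 2$. By \cref{prop:univ-rev}\ref{it:univ-rev4}, $J$ is universally reversible, and the same holds for $J'$.

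\emph{Step 2: decomposability of $F$.} Apply \cref{prop:CE-univ-rev} factor by factor. The map $\tilde F$ onto $\staralg(J)$ is a conditional expectation onto a *-algebra and is therefore CP. The map $\hat F$ is either the identity (case \ref{it:reps1}), or $\tfrac12(\id+\vartheta)\ox\id$ with $\vartheta$ an involution of a matrix algebra, or of the form $\lambda a+(1-\lambda)b^t$ in case \ref{it:reps2}. In the case $\tfrac12(\id+\vartheta)\ox\id$, the involution is a *-antiautomorphism of $L(\L)$, hence of the form $x\mapsto u x^t u^*$, which is coCP. So each $\hat F$ is CP plus coCP, and $F=\hat F\tilde F$ is decomposable. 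The direct sum over the central decomposition remains decomposable.

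\emph{Step 3: extending $\psi$.} Write $A=\staralg(J)$ and $A'=\staralg(J')$. Since $J$ is universally reversible, $A$ together with its inclusion is essentially the universal enveloping *-algebra, at least on the factors in cases \ref{it:reps2}--\ref{it:reps4}. So $\psi:J\to J'\subset A'$ extends to a *-homomorphism $\hat\psi:A\to A'$; the factors in case \ref{it:reps1} must be handled separately. I expect this step to be the main obstacle, since the enveloping algebra generated inside $L(\H)$ may be a non-universal quotient, as in case \ref{it:reps1}. There I would instead use that a J*-isomorphism between $L(\CC^n)$-type factors is either a *-isomorphism or a *-antiisomorphism, i.e.\ $x\mapsto v x v^*$ or $x\mapsto v x^t v^*$ (Jacobson--Rickart/Størmer). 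Tensoring with the multiplicity spaces then gives a CP or coCP map, and unitality is fixed by compressing with a state on the multiplicity space.

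Having an extension $\Psi:A\to L(\H')$ that is CP or coCP on each summand, set $S=\Psi\circ\tilde F$ composed appropriately with $F$, and similarly for $T$. The restrictions to $J$ and $J'$ agree with $\psi$ and $\psi^{-1}$, so $S^*\tau=\rho$, $S^*\omega=\sigma$, and dually for $T$. Decomposability of $S$ and $T$ follows from Steps 1--2 and closure under composition.
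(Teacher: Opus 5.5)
Your proof is correct, but it takes a different route from the paper's.

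\textbf{The paper's route.} The paper never works with $\psi\circ F$ directly. It goes through the standard representation.
\begin{itemize}
\item By \cref{lem:univ-decomposable}, each faithful dichotomy is decomposably interconvertible with its standard representation. By \cref{prop:standard-rep}, the standard representations of PTP-equivalent dichotomies are unitarily equivalent.
\item Inside \cref{lem:univ-decomposable}, one direction is $\pi^{-1}\circ\hat E$. It is decomposable because $\hat E$ is decomposable by St\o rmer's criterion (\cref{lem:reversible-decomposable-CE}), and $\pi^{-1}$ lifts to a $*$-homomorphism since $\staralg(\hat J)$ is the universal enveloping algebra.
\item For the other direction the paper does not lift $\pi$ at all. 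It uses the Petz recovery map $R_{T,\sigma}$, which is decomposable whenever $T$ is.
\end{itemize}
This sidesteps exactly the obstacle you identified: $\staralg(J)$ need not be universal on factors of type \ref{it:reps1}.

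\textbf{Your route.} You face that obstacle head-on. You get decomposability of $F$ from the explicit form in \cref{prop:CE-univ-rev}. You then extend $\psi$ factor by factor: by the universal property on types \ref{it:reps2}--\ref{it:reps4}, and by the structure of Jordan automorphisms of $L(\CC^n)$ on type \ref{it:reps1}.

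\textbf{Comparison.} Your approach yields explicit, symmetric interconversion maps $\psi\circ F$ and $\psi^{-1}\circ F'$, with no recovery-map or unitary-equivalence step. The paper's approach is shorter and avoids the case analysis. Your Step 1 dimension count is also a nice self-contained replacement for the citation behind \cref{cor:reversible}: in a spin factor, $\Jstaralg(a,b)\subset\lin\{\1,a,b\}$, so $n\le 2$.

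\textbf{Two small corrections.}
\begin{itemize}
\item Suppose a type \ref{it:reps1} factor is mapped onto a type \ref{it:reps2} factor of $J'$. Then the map is $x\mapsto(\phi(x)\ox\1)\oplus(\phi(x)^t\ox\1)$. This is decomposable but neither CP nor coCP, so ``CP or coCP on each summand'' should read ``decomposable.'' It is cleaner to write such a J*-isomorphism as a $*$-homomorphism composed with $a\mapsto a\oplus a^t$, using the universal property of $L(\CC^n)$.
\item \cref{thm:ptp-inter} is stated for faithful experiments. You should first reduce to that case via \cref{lem:inter-to-faithful}, whose interconverting maps are UCP.
\end{itemize}
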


\section{Petz recovery maps}
\label{sec:recovery}
We now take a small detour and study recovery maps. 
In the following, $\sigma$ is always a faithful state and $T:L(\hat\H)\to L(\H)$ is a UP map such that $\hat \sigma = T^*(\sigma)$ is faithful as well. We will discuss later that this is essentially without loss of generality, see \cref{remark:faithfulness-petz}.

The Petz recovery map $R=R_{T,\sigma}:L(\H)\to L(\hat\H)$ of $T$ with respect to $\sigma$ is defined as 
\begin{equation}\label{eq:def-recovery-map}
    R(a) = \hat\sigma^{-\frac12} T^*\big(\sigma^{\frac12}x \sigma^{\frac12}\big) \hat\sigma^{-\frac12}, \qquad a\in L(\H).
\end{equation}
Note that $R^*(\hat\sigma)=\sigma$ by construction.
Equivalently, we can define $R$ as the adjoint of $T$ with respect to the KMS inner products relative to $\sigma$ and $\hat\sigma$ (see \cref{sec:KMS}):
\begin{equation}
    \ip{\hat a}{T(a)}_{\hat\sigma} = \ip{R(\hat a)}{a}_{\sigma}, \qquad a\in L(\H),\ \hat a\in L(\hat\H).
\end{equation}
In particular, this shows that the Petz recovery map of $R$ (relative to $\hat \sigma$) is again $T$,
\begin{align}
	R_{R,\hat\sigma}a &= (R^*\hat\sigma)^{-\frac12} R^*(\hat\sigma^{\frac12} a\hat\sigma^{\frac12}) (R^*\hat\sigma)^{-\frac12} = Ta,
\end{align}
and that the Petz recovery map behaves well under the composition of maps. 
Given a state $\rho$ on $\H$, our aim in this section is to characterize when $R_{T,\sigma}$ recovers $\rho$ from $\hat\sigma$. 
The following observation is essential for us.

\begin{lemma}\label{lem:observable-recovery}
    Let $a\in L(\hat\H)$. The following are equivalent:
    \begin{enumerate}[(a)]
        \item\label{it:observable-recovery1} there exists a UP map $S:L(\H)\to L(\hat\H)$ with $ST(a)=a$ and $S^*\hat\sigma = \sigma$;
        \item\label{it:observable-recovery2} $\ip{T(a)}{T(a)}_{\sigma}=\ip aa_{\hat\sigma}$;
        \item\label{it:observable-recovery3} $R_{T,\sigma} T(a) = a$.
    \end{enumerate}
\end{lemma}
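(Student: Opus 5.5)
The plan is to prove the cycle (c) $\Rightarrow$ (a) $\Rightarrow$ (b) $\Rightarrow$ (c). Everything takes place in the KMS Hilbert spaces $(L(\hat\H),\ip{\cdot}{\cdot}_{\hat\sigma})$ and $(L(\H),\ip{\cdot}{\cdot}_\sigma)$, with $T$ and $R=R_{T,\sigma}$ adjoint to each other. Throughout I read $R$ as the map $L(\H)\to L(\hat\H)$, $R(x)=\hat\sigma^{-\frac12}T^*(\sigma^{\frac12}x\sigma^{\frac12})\hat\sigma^{-\frac12}$, so that $RT$ acts on $L(\hat\H)$.

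\emph{(c) $\Rightarrow$ (a).} I take $S=R_{T,\sigma}$.
\begin{itemize}
\item $R$ is positive, because it is a composition of $\Gamma_\sigma$, the positive map $T^*$, and conjugation by $\hat\sigma^{-\frac12}$.
\item $R$ is unital, since $R(\1)=\hat\sigma^{-\frac12}T^*(\sigma)\hat\sigma^{-\frac12}=\1$ by $T^*\sigma=\hat\sigma$.
\item $R^*\hat\sigma=\sigma$, since $R^*=\Gamma_\sigma\circ T\circ\Gamma_{\hat\sigma}^{-1}$ gives $R^*\hat\sigma=\sigma^{\frac12}T(\1)\sigma^{\frac12}=\sigma$.
\end{itemize}
Thus $S=R$ is a UP map with $S^*\hat\sigma=\sigma$, and $STa=a$ is exactly assumption (c).

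\emph{(a) $\Rightarrow$ (b).} I apply \cref{cor:UP-contraction} twice. First to $T$, using $T^*\sigma=\hat\sigma$, which gives $\ip{Ta}{Ta}_\sigma\le\ip aa_{\hat\sigma}$. Then to $S$, using $S^*\hat\sigma=\sigma$, which gives $\ip{STa}{STa}_{\hat\sigma}\le\ip{Ta}{Ta}_\sigma$. Since $STa=a$, these combine to
\[
\ip aa_{\hat\sigma}\le\ip{Ta}{Ta}_\sigma\le\ip aa_{\hat\sigma},
\]
so equality holds, which is (b).

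\emph{(b) $\Rightarrow$ (c).} This is a Hilbert-space argument. The first step is that $R$ is a KMS contraction. This follows from \cref{cor:UP-contraction} applied to the UP map $R$ with $R^*\hat\sigma=\sigma$, shown above; alternatively, an adjoint has the same operator norm as $T$, which is a contraction. Then, by the defining adjoint relation,
\[
\ip{a}{RTa}_{\hat\sigma}=\ip{Ta}{Ta}_\sigma=\ip aa_{\hat\sigma},
\]
while $\norm{RTa}_{\hat\sigma}\le\norm{Ta}_\sigma=\norm a_{\hat\sigma}$. Now expand
\[
\norm{RTa-a}^2=\norm{RTa}^2-2\Re\ip{a}{RTa}+\norm a^2\le \norm a^2-2\norm a^2+\norm a^2=0,
\]
with all norms and inner products taken with respect to $\hat\sigma$. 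Hence $RTa=a$, which is (c).

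I expect the only delicate point to be bookkeeping: checking the unitality and state-preservation of $R$, and keeping straight which KMS inner product belongs to which space. None of the steps looks like a genuine obstacle.
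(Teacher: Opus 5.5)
Your proof is correct and follows the paper's argument: the same cycle (c)$\Rightarrow$(a)$\Rightarrow$(b)$\Rightarrow$(c), with \cref{cor:UP-contraction} applied twice for (a)$\Rightarrow$(b), and the KMS-adjointness of $R_{T,\sigma}$ and $T$ used for (b)$\Rightarrow$(c). The only difference is that you explicitly check that $R_{T,\sigma}$ is a UP map with $R_{T,\sigma}^*\hat\sigma=\sigma$, and you prove inline the Hilbert-space fact that the paper merely cites: a contraction $A$ with $\norm{A\xi}=\norm{\xi}$ satisfies $A^*A\xi=\xi$.
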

\begin{proof}
    \ref{it:observable-recovery3} $\Rightarrow$ \ref{it:observable-recovery1} is clear.
    \ref{it:observable-recovery1} $\Rightarrow$ \ref{it:observable-recovery2}:
    Denote by $\norm\placeholder_{2,\sigma}$ the norm induced by the KMS inner product.
    By \cref{cor:UP-contraction}, both $T$ and $S$ are contractions between the respective KMS Hilbert spaces.
    Thus, we have $\norm{a}_{2,\hat\sigma} =\norm{ST(a)}_{2,\hat\sigma}\le \norm{T(a)}_{2,\sigma} \le \norm{a}_{2,\hat\sigma}$.
    Hence, we have $\norm{T(a)}_{2,\sigma}=\norm{a}_{2,\hat\sigma}$, which is just another way of writing to \ref{it:observable-recovery2}.

    \ref{it:observable-recovery2} $\Rightarrow$ \ref{it:observable-recovery3}:
    It is a general fact that if $A$ is a contraction between Hilbert spaces and $\norm{A\xi}=\norm{\xi}$ for some vector $\xi$, then $A^*A\xi=\xi$.
    Thus, the claim follows because $R_{T,\sigma}$ is the adjoint operator of $T$ with respect to the KMS inner products.
\end{proof}

In Petz's original work on recovery maps \cite{petz_sufficiency_1988}, the Connes cocycles $\rho^{it}\sigma^{-it}$ play an essential role. However, these cannot be constructed from $\rho$ and $\sigma$ using only the Jordan product. 
To state our main theorem characterizing Petz recovery for UP maps, we use the operator
\begin{equation}
    d_{\rho|\sigma} = \sigma^{-\frac12}\rho\sigma^{-\frac12},
\end{equation}
which can be seen as a symmetrized version of $\rho^{\frac12}\sigma^{-\frac12}$.
It should not be confused with the relative modular operator $\rho\otimes\sigma^{-1}$. 
By construction, we have
\begin{equation}\label{eq:R-of-Delta}
    R_{T,\sigma} d_{\rho|\sigma} = d_{T^*\rho |T*\sigma}. 
\end{equation}
and 
\begin{equation}\label{eq:dual-of-petz}
    R_{T,\sigma}^*(\hat\rho) = \sigma^{\frac12} T(d_{\hat\rho| T^*\sigma}) \sigma^{\frac12}
\end{equation}
for any density operator $\hat\rho\in L(\hat \H)$.

\begin{definition}\label{def:sufficient-map}
    We say that a UP map $T:L(\hat \H) \to L(\H)$ is \emph{sufficient for $(\rho,\sigma)$} if there exists a UP map $S:L(\H) \to L(\hat \H)$ such that $S^* T^*\rho = \rho$ and $S^* T^*\sigma=\sigma$. 
\end{definition}

\begin{theorem}\label{thm:algebraic-recovery}
    Let $(\rho,\sigma)$ be a dichotomy on $\H$ with $\sigma$ faithful, $T:L(\hat\H)\to L(\H)$ be a UP map such that $\hat\sigma = T^*\sigma$ is faithful and set $\hat \rho = T^*\rho$.
    The following are equivalent:
    \begin{enumerate}[(a)]
        \item\label{it:algebraic-recovery1} 
        $T$ is sufficient for $(\rho,\sigma)$.
        
        \item\label{it:algebraic-recovery2} 
        $R_{T,\sigma}^*(\hat\rho) = \rho$, were $R_{T,\sigma}$ is the Petz recovery map of $T$ relative to $\sigma$.
        \item\label{it:algebraic-recovery3} $T(d_{\hat\rho|\hat\sigma}) \le d_{\rho|\sigma}$, or $T(d_{\hat\rho|\hat\sigma}) \ge d_{\rho|\sigma}$.
        \item\label{it:algebraic-recovery4} $T(d_{\hat\rho|\hat\sigma}) = d_{\rho|\sigma}$.
        \item\label{it:algebraic-recovery5} $T$ restricts to an isomorphism $J_{(\hat\rho,\hat\sigma)} \to J_{(\rho,\sigma)}$. 
    \end{enumerate}
    If, in addition, $\rho$, $\hat\rho$ are full rank states, then the above statements are equivalent to the analogous statements for $(\sigma,\rho)$ instead of $(\rho,\sigma)$.
\end{theorem}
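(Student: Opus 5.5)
The plan is to work throughout in the KMS Hilbert spaces of \cref{sec:KMS}. The guiding observation is that $d_{\rho|\sigma}$ is the KMS representative of $\rho$:
\begin{equation*}
\ip{d_{\rho|\sigma}}{a}_\sigma=\tr(\sigma^{\frac12}d_{\rho|\sigma}\sigma^{\frac12}a)=\tr(\rho a),\qquad a\in L(\H).
\end{equation*}
Since $R_{T,\sigma}$ is the KMS-adjoint of $T$, the relation $T^*\rho=\hat\rho$ becomes $R_{T,\sigma}(d_{\rho|\sigma})=d_{\hat\rho|\hat\sigma}$, which is \eqref{eq:R-of-Delta}. I would prove the cycle (a)$\Rightarrow$(d)$\Rightarrow$(b)$\Rightarrow$(a), then (c)$\Leftrightarrow$(d), and finally (a)$\Leftrightarrow$(e).

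For (a)$\Rightarrow$(d), let $S$ be a UP map with $S^*\hat\rho=\rho$ and $S^*\hat\sigma=\sigma$. Applying \eqref{eq:R-of-Delta} to $S$ (relative to $\hat\sigma$) gives $R_{S,\hat\sigma}(d_{\hat\rho|\hat\sigma})=d_{\rho|\sigma}$. Applying it to $T$ gives $R_{T,\sigma}(d_{\rho|\sigma})=d_{\hat\rho|\hat\sigma}$. By \cref{cor:UP-contraction}, $T$ and $S$ are KMS contractions, hence so are their adjoints $R_{T,\sigma}$ and $R_{S,\hat\sigma}$. Therefore
\begin{equation*}
\norm{d_{\rho|\sigma}}_{2,\sigma}\le\norm{d_{\hat\rho|\hat\sigma}}_{2,\hat\sigma}\le\norm{d_{\rho|\sigma}}_{2,\sigma},
\end{equation*}
so $R_{T,\sigma}$ preserves the norm of $d_{\rho|\sigma}$. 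For a contraction $A$ with $\norm{A\xi}=\norm\xi$ we have $A^*A\xi=\xi$ (this is the argument in \cref{lem:observable-recovery}). Applied to $A=R_{T,\sigma}$, whose adjoint is $T$, this yields $T(d_{\hat\rho|\hat\sigma})=d_{\rho|\sigma}$.

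The remaining steps in the cycle are short.
\begin{itemize}
\item (d)$\Rightarrow$(b) is immediate from \eqref{eq:dual-of-petz}: $R^*_{T,\sigma}(\hat\rho)=\sigma^{\frac12}T(d_{\hat\rho|\hat\sigma})\sigma^{\frac12}=\rho$.
\item (b)$\Rightarrow$(a) holds with $S=R_{T,\sigma}$, since $R^*_{T,\sigma}\hat\sigma=\sigma$ always.
\item (d)$\Rightarrow$(c) is trivial.
\item For (c)$\Rightarrow$(d), put $x=d_{\rho|\sigma}-T(d_{\hat\rho|\hat\sigma})$, which has a definite sign by assumption. Its $\sigma$-expectation is
\begin{equation*}
\tr(\sigma x)=\tr(\rho)-\tr(T^*\sigma\, d_{\hat\rho|\hat\sigma})=1-\tr(\hat\rho)=0.
\end{equation*}
Faithfulness of $\sigma$ then forces $x=0$.
\end{itemize}

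For (a)$\Leftrightarrow$(e) I would invoke \cref{thm:ptp-inter}. First, the dichotomies $(\rho,\sigma)$ and $(\hat\rho,\hat\sigma)$ are faithful because $\sigma$ and $\hat\sigma$ are. Under (a), $T$ and $S$ interconvert them, so item (iii) of that theorem says $T|_{J_{(\hat\rho,\hat\sigma)}}$ is the J*-isomorphism $\psi^{-1}$ onto $J_{(\rho,\sigma)}$. Conversely, if $T$ restricts to a J*-isomorphism $J_{(\hat\rho,\hat\sigma)}\to J_{(\rho,\sigma)}$, it automatically intertwines the states, since $\tr(\rho\,Ta)=\tr(\hat\rho a)$ and likewise for $\sigma$. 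Item (i) then gives PTP-interconvertibility, which is (a).

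The last sentence of the theorem follows because (a) is symmetric in $\rho$ and $\sigma$. When $\rho$ and $\hat\rho$ are full rank, the standing faithfulness hypotheses hold for the swapped pair, and the whole argument applies verbatim to $(\sigma,\rho)$. The only genuinely non-formal step is (a)$\Rightarrow$(d), which rests on the norm-equality trick; the rest is bookkeeping with the KMS duality.
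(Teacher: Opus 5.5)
Your proof is correct and follows the paper's argument essentially step for step. Both use the cycle (a)$\Rightarrow$(d)$\Rightarrow$(b)$\Rightarrow$(a), the faithfulness of $\sigma$ for (c)$\Rightarrow$(d), \cref{thm:ptp-inter} for (a)$\Leftrightarrow$(e), and the symmetry of (a) for the swapped statements; the only cosmetic difference is that in (a)$\Rightarrow$(d) you carry out the KMS norm sandwich directly, whereas the paper cites \cref{lem:observable-recovery} for $R_{T,\sigma}$ with recovery map $R_{S,\hat\sigma}$.
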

\begin{proof}
    We set $d = d_{\rho|\sigma}$, $\hat d=d_{\hat\rho|\hat\sigma}$, and $R = R_{T,\sigma}$.
    Note that $R(d)=\hat d$ (see \eqref{eq:R-of-Delta}).
    \ref{it:algebraic-recovery2} $\Rightarrow$ \ref{it:algebraic-recovery1}, and \ref{it:algebraic-recovery4} $\Rightarrow$ \ref{it:algebraic-recovery3} are trivial.

    \ref{it:algebraic-recovery1} $\Rightarrow$  \ref{it:algebraic-recovery4}:
    Write $P = R_{S,\hat\sigma}$.
    Then we have $P(\hat d)= d$ by \cref{eq:R-of-Delta} and, hence, $P R( d)= d$. Since $P^*(R^*\hat \sigma) = P^*\sigma = P^* S^*\hat\sigma = \sigma$, 
    \cref{lem:observable-recovery} shows $T R( d) =  d$, where we used that the Petz recovery map of $R$ is $T$.
    Due to \eqref{eq:R-of-Delta}, we conclude $T(\hat d)= d$.

    \ref{it:algebraic-recovery3} $\Rightarrow$ \ref{it:algebraic-recovery4}:
    First, assume the "$\le$" inequality. Notice that
    \begin{align}
        0 \leq \tr(\sigma( d - T(\hat d))) = \tr(\sigma d) - \tr(\hat\sigma\hat d) = 0.
    \end{align}
    Since $\sigma$ is faithful, we thus have $T(\hat d) =  d$.
    The case of the other inequality follows analogously.

    \ref{it:algebraic-recovery4} $\Rightarrow$ \ref{it:algebraic-recovery2}:
    By definition of the Petz recovery map, we have $R^*\hat\sigma=\sigma$ and (by \cref{eq:dual-of-petz})
    \begin{equation}
        R^*\hat\rho  = \sigma^{\frac12}T(\hat d)\sigma^{\frac12} = \sigma^{\frac12} d\sigma^{\frac12} = \rho.
    \end{equation}

    \ref{it:algebraic-recovery5} $\Leftrightarrow$ \ref{it:algebraic-recovery1} is shown in \cref{thm:ptp-inter}. 

    Assume now that $\rho$ and $\hat\rho$ have full rank. 
    The symmetry of \ref{it:algebraic-recovery1} under the exchange $\rho\leftrightarrow \sigma$ implies that we may exchange the states in the other statements as well.
\end{proof}
Now suppose that $J\subset L(\H)$ is a J*-algebra that admits a $\sigma$-preserving conditional expectation $F$. 
Eq.\ \eqref{eq:state-preserving-CE} shows that 
\begin{align}
		F = R_{E,\sigma},\quad E = R_{F,E\sigma},
\end{align}
where $E$ is the trace-preserving conditional expectation onto $J$.

\begin{corollary}\label{cor:d-in-J}
	Let $J\subset L(\H)$ be a J*-algebra with $\sigma$-preserving conditional expectation $F$. 
	Then $J$ is sufficient for $(\rho,\sigma)$ if and only if $d_{\rho|\sigma}\in J$ if and only if $\rho \in L^1(J,\sigma)$.
	In this case, we have
	\begin{align}
		d_{\rho|\sigma} = d_{E\rho|E\sigma},
	\end{align}
	where $E$ is the trace-preserving conditional expectation onto $J$.
	In particular, $d_{\rho|\sigma} \in J_{(\rho,\sigma)}$.
\end{corollary}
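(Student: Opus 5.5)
We apply \cref{thm:algebraic-recovery} with $T = F$ viewed as a UP map $L(\H)\to L(\H)$, but we route everything through the trace-preserving expectation $E$, using the identities $F = R_{E,\sigma}$ and $E = R_{F,E\sigma}$ noted just before the statement. The point of using $E$ is that it is the map whose Petz recovery is $F$.

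First we show that sufficiency of $J$ is equivalent to $d_{\rho|\sigma}\in J$. Suppose $J$ is sufficient. By \cref{cor:J-sufficient-CE} there is a UP map $S$ with $S^*E^*\rho=\rho$ and $S^*E^*\sigma=\sigma$, so $E$ is a sufficient map for $(\rho,\sigma)$. Here $E^*=E$, and $E\sigma$ is faithful because $E$ is faithful. \Cref{thm:algebraic-recovery}\ref{it:algebraic-recovery4} applied to $T=E$ then gives
\[
E(d_{E\rho|E\sigma}) = d_{\rho|\sigma}.
\]
Hence $d_{\rho|\sigma}\in J$. Moreover, $E\rho$ and $E\sigma$ lie in $J$, and $J$ is closed under $x\mapsto aba$ by the triple product formula and under inverses by functional calculus. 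So $d_{E\rho|E\sigma}=(E\sigma)^{-1/2}E\rho\,(E\sigma)^{-1/2}\in J$, and $E$ fixes it. This yields the identity $d_{\rho|\sigma}=d_{E\rho|E\sigma}$.

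For the converse, suppose $d:=d_{\rho|\sigma}\in J$. The map $F$ is hermitian for the KMS inner product by \cref{lem:CE-selfadjoint}, so $F^*=F_1=\Gamma_\sigma\circ F\circ\Gamma_\sigma^{-1}$. Then
\[
F^*\rho=\sigma^{1/2}F(d)\sigma^{1/2}=\sigma^{1/2}d\,\sigma^{1/2}=\rho .
\]
Since also $F^*\sigma=\sigma$, the map $F$ is a $(\rho,\sigma)$-preserving UP map into $J$, and $J$ is sufficient. This computation also gives the equivalence with the $L^1$ condition, because $\rho=\Gamma_\sigma(d)$, $L^1(J,\sigma)=\Gamma_\sigma(J)$, and $\Gamma_\sigma$ is injective. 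So $d\in J$ if and only if $\rho\in L^1(J,\sigma)$.

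Finally, take $J=J_{(\rho,\sigma)}$. It is sufficient, and by \cref{thm:luczak} it admits a $\sigma$-preserving conditional expectation, so the first part gives $d_{\rho|\sigma}\in J_{(\rho,\sigma)}$.

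The main obstacle is the step that gets $T=E$ into the hypotheses of \cref{thm:algebraic-recovery}: one must check that $E\sigma$ is faithful and correctly match the roles of $\hat\rho=E\rho$ and $\hat\sigma=E\sigma$. Everything else is bookkeeping with the duality formulas.
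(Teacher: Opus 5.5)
Your proof is correct. It routes the two directions the opposite way from the paper.

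\emph{The paper's route.} For the forward direction, the paper applies \cref{thm:algebraic-recovery} to $T=F$ with the identity as recovery map. This tacitly uses that the $\sigma$-preserving expectation $F$ is automatically $\rho$-preserving once $J$ is sufficient. That fact is true but is not spelled out there. One way to see it: if $F_0$ denotes the $(\rho,\sigma)$-preserving expectation onto $J_{(\rho,\sigma)}\subseteq J$, then $F_0F$ is a $\sigma$-preserving conditional expectation onto $J_{(\rho,\sigma)}$. Hence $F_0F=F_0$ by \cref{cor:uniquenes-CE}, and so $F^*\rho=(F_0F)^*\rho=\rho$. For the converse, the paper uses $F=R_{E,\sigma}$ and \eqref{eq:R-of-Delta} to get $d_{\rho|\sigma}=Fd_{\rho|\sigma}=d_{E\rho|E\sigma}$. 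It then invokes \cref{thm:algebraic-recovery} for $E$ together with $F^*E^*=F^*$.

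\emph{Your route.} In the forward direction you feed the trace-preserving $E$ into \cref{thm:algebraic-recovery}, with the recovery map supplied by \cref{cor:J-sufficient-CE}. You then read off the identity $d_{\rho|\sigma}=d_{E\rho|E\sigma}$ from $E(d_{E\rho|E\sigma})=d_{\rho|\sigma}$ together with $d_{E\rho|E\sigma}\in J$. For the converse you compute $F^*\rho=\Gamma_\sigma F\Gamma_\sigma^{-1}\rho=\rho$ directly from \cref{lem:CE-selfadjoint}.

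\emph{What each buys.} In your version, the forward implication and the identity hold for any sufficient J*-algebra, with no need for $F$ at all. It also avoids the unstated step that $F$ is $\rho$-preserving, and the converse becomes a one-line computation. The paper's version keeps everything inside the Petz-recovery picture, $F=R_{E,\sigma}$, which makes the identity a direct consequence of \eqref{eq:R-of-Delta}.

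One presentational fix is needed. Your opening sentence says you apply the theorem with $T=F$ and use $F=R_{E,\sigma}$ and $E=R_{F,E\sigma}$. The argument you actually give applies it to $T=E$ and never uses those identities, so the opening should be rewritten to match. Also, the faithfulness of $E\sigma$ that you flag as the main obstacle is immediate. The genuinely delicate point is the $\rho$-invariance of $F$, and your route sidesteps it.
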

\begin{proof}
	If $J$ is sufficient for $(\rho,\sigma)$, then $F$ is $(\rho,\sigma)$-preserving and the identity is a recovery map for $F$. Hence, by \cref{thm:algebraic-recovery}, $Fd_{F^*\rho|F^*\sigma} = F d_{\rho|\sigma} = d_{\rho|\sigma}$. Therefore $d_{\rho|\sigma}\in J$. 
	Conversely, suppose that $d_{\rho|\sigma} \in J$ and denote by $E$ the trace-preserving conditional expectation onto $J$. Since $F$ is the recovery map of $E$ relative to $\sigma$ and $d_{\rho|\sigma}\in J$, we know that 
	\begin{align}
		d_{\rho|\sigma} = F d_{\rho|\sigma} = d_{E \rho|E\sigma}.
	\end{align}
	Hence $Ed_{E\rho|E\sigma} = d_{\rho|\sigma}$, so that the recovery map $F$ of $E$ fulfills $F^* E\rho = \rho$. Since $F^*E^* = (EF)^* = F^*$, we find $F^*\rho=\rho$ and $J$ is sufficient for $(\rho,\sigma)$. 
    That $d_{\rho|\sigma} \in J$ if and only if $\rho \in L^1(J,\sigma)$ follows immediately from the definition of $L^1(J,\sigma)$ in \cref{sec:Lp-spaces}.
\end{proof}

Let now $\hat\rho = E\rho,\hat\sigma = E\sigma$, where $E$ is the trace-preserving conditional expectation onto $J_{(\rho,\sigma)} = \Jstaralg(\hat\rho,\hat\sigma)$.
Note that the triple product $\{ \hat\sigma^{\frac12}, d_{\hat\rho|\hat\sigma},\hat\sigma^{\frac12}\} = \hat\rho$  can be constructed using just the Jordan product.
As a consequence of the previous corollary and $J_{(\rho,\sigma)} = J_{(\hat\rho,\hat\sigma)} = \Jstaralg(\hat\rho,\hat\sigma)$ we thus find that $J_{(\rho,\sigma)}$ is also generated by $\hat\sigma$ and $d_{\hat\rho|\hat\sigma}$.

Recall from \cref{sec:CEs} that the fixed-point space of a UP map $T:L(\H) \to L(\H)$ with faithful invariant state $\sigma$ is a J*-algebra which admits a $\sigma$-preserving conditional expectation given by the Cesaro mean of $T$ (cp.\ \cref{lem:cesaro-mean}).
This yields the following corollary:

\begin{corollary}\label{cor:fixed-point-recovery}
    Let $\sigma$ be a faithful state on $\H$ and let $T:L(\hat\H)\to L(\H)$ be a UP map.
    Set $W = T \circ  R_{T,\sigma}$. Then the fixed-point space $\mathrm{Fix}(W)$ is a J*-algebra.
    Consider the $\sigma$-preserving conditional expectation $F:L(\H) \to \mathrm{Fix}(W)$ and the trace-preserving conditional expectation $E:L(\H) \to \mathrm{Fix}(W)$.
    The following are equivalent for a state $\rho$ on $\H$:
    \begin{enumerate}[(a)]
    \item $T$ is sufficient for $(\rho,\sigma)$.
    
    \item $\mathrm{Fix}(W)$ is sufficient for $(\rho,\sigma)$.
        \item $W^* \rho = \rho$.
        \item $F^*\rho = \rho$.
        \item $E W^* E\rho = E\rho$.
        \item $W d_{\rho|\sigma} = d_{\rho|\sigma}$, i.e., $d_{\rho|\sigma} \in \mathrm{Fix}(W)$.
        \item $W d_{E\rho|E\sigma} = d_{E\rho|E\sigma}$.
    \end{enumerate}
\end{corollary}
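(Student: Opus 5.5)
The plan is to reduce every item to a statement about $d := d_{\rho|\sigma}$ and then apply \cref{thm:algebraic-recovery} and \cref{cor:d-in-J}.

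\emph{Setup.} Write $R = R_{T,\sigma}$ and $\hat\sigma = T^*\sigma$, assumed faithful as throughout \cref{sec:recovery}.
\begin{itemize}
\item $W = T\circ R$ is a UP map on $L(\H)$ with $W^*\sigma = R^*T^*\sigma = R^*\hat\sigma = \sigma$. So $\sigma$ is a faithful invariant state for $W$.
\item By \cref{prop:fixpoint}, $\mathrm{Fix}(W)$ is a J*-algebra.
\item By \cref{lem:cesaro-mean}, the Ces\`aro means of $W$ converge to a $\sigma$-preserving conditional expectation onto $\mathrm{Fix}(W)$. By \cref{cor:uniquenes-CE}, this limit is the unique such map, namely $F$.
\end{itemize}
The central algebraic identity is an intertwining relation. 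From $R^* = \Gamma_\sigma\circ T\circ\Gamma_{\hat\sigma}^{-1}$ and, dually, $T^* = \Gamma_{\hat\sigma}\circ R\circ\Gamma_\sigma^{-1}$, we get
\begin{equation*}
W^* = R^*T^* = \Gamma_\sigma\circ W\circ \Gamma_\sigma^{-1}.
\end{equation*}
Since $\rho = \Gamma_\sigma(d)$, this gives $W^*\rho=\rho \iff Wd = d$, which is (c)$\Leftrightarrow$(f).

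\emph{The core equivalences.}
\begin{itemize}
\item (a)$\Leftrightarrow$(f): By \eqref{eq:R-of-Delta}, $R(d) = d_{T^*\rho|\hat\sigma}$. Hence $Wd = T(d_{T^*\rho|T^*\sigma})$, and \cref{thm:algebraic-recovery} (a)$\Leftrightarrow$(d) says exactly that $T$ is sufficient iff this equals $d$.
\item (b)$\Leftrightarrow$(f): Apply \cref{cor:d-in-J} to $J=\mathrm{Fix}(W)$ with its $\sigma$-preserving conditional expectation $F$. It states that $J$ is sufficient iff $d\in J$.
\item (b)$\Leftrightarrow$(d): The proof of \cref{cor:d-in-J} shows that if $J$ is sufficient, then the $\sigma$-preserving conditional expectation $F$ is also $\rho$-preserving. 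Conversely, if $F^*\rho=\rho$, then $F$ itself witnesses sufficiency.
\end{itemize}
So (a)--(d) and (f) are equivalent.

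\emph{The items involving $E$.}
\begin{itemize}
\item (f)$\Rightarrow$(g): Under (f), $\mathrm{Fix}(W)$ is sufficient, so \cref{cor:d-in-J} gives $d = d_{E\rho|E\sigma}$. Then $W d_{E\rho|E\sigma} = d_{E\rho|E\sigma}$ follows immediately.
\item (g)$\Rightarrow$(d): Use $F = R_{E,\sigma}$, i.e.\ $F^* = \Gamma_\sigma E\,\Gamma_{E\sigma}^{-1}$. Combined with $F^*E = (EF)^* = F^*$ (as $E = E^*$), this gives
\begin{equation*}
F^*\rho = F^*E\rho = \sigma^{\frac12}\, E\big(d_{E\rho|E\sigma}\big)\,\sigma^{\frac12}.
\end{equation*}
Under (g) we have $d_{E\rho|E\sigma}\in\mathrm{Fix}(W)$, so $F^*\rho = \Gamma_\sigma(d_{E\rho|E\sigma}) =: \rho'$ is an $F$-invariant state with $E\rho' = E\rho$.
\item (e): I would treat it like (c), via a conjugation identity for $EW^*E$ restricted to $J$. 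On $\mathrm{Fix}(W)$ the KMS inner products for $\sigma$ and $E\sigma$ agree, as shown in the proof of \cref{cor:uniquenes-CE}. Using this, I expect $EW^*E$ on $L^1(J,E\sigma)$ to correspond to $W$ acting on $d_{E\rho|E\sigma}$, giving (e)$\Leftrightarrow$(g).
\end{itemize}

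\emph{Main obstacle.} The hard step is closing the loop from (g) (and (e)) back to (d), i.e.\ showing $\rho'=\rho$. The intended route is the observation that $\rho\mapsto E\rho$ is injective on $F$-invariant states. Indeed, if $F^*\omega=\omega$ then $\omega = F^*E\omega$, so $\omega$ is determined by $E\omega$. This resolves the issue once we know $\rho$ itself is $F$-invariant, which is circular. So I would instead try to show, via the contraction argument of \cref{lem:observable-recovery} applied to $E$ and $F$ with the state $\sigma$, that $d_{E\rho|E\sigma}\in J$ forces $F^*\rho = \rho$. If this fails in general, (e) and (g) may need to be read as statements about the dichotomy $(E\rho,E\sigma)$.
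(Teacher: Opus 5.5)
\textbf{The gap in (e) and (g) cannot be closed: as stated, the corollary is false for these two items.}

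Your treatment of (a)--(d) and (f) is correct. The paper gives no separate proof of this corollary; it only points to the Ces\`aro-mean conditional expectation, \cref{thm:algebraic-recovery} and \cref{cor:d-in-J}, which is exactly what you use. Your identity $W^*=\Gamma_\sigma\circ W\circ\Gamma_\sigma^{-1}$ is a clean way to get (c)$\Leftrightarrow$(f).

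The obstacle you flag at the end is not technical. No argument can lead from (e) or (g) back to (d), because (e) and (g) hold for \emph{every} state $\rho$. Write $J=\mathrm{Fix}(W)$.
\begin{itemize}
\item \emph{(g) always holds.} We have $E\rho,E\sigma\in J$, and $E\sigma$ is faithful. So $(E\sigma)^{-1/2}\in J$ by functional calculus. Closure of $J$ under $aba$ then gives $d_{E\rho|E\sigma}=(E\sigma)^{-1/2}(E\rho)(E\sigma)^{-1/2}\in J$, which is (g).
\item \emph{(e) always holds.} For $b\in J$ we have $\tr(EW^*E\rho\,b)=\tr(E\rho\,Wb)=\tr(E\rho\,b)$, using $E=E^*$ and $Wb=b$. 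Both $EW^*E\rho$ and $E\rho$ lie in $J$, where the trace pairing is nondegenerate, so $EW^*E\rho=E\rho$.
\end{itemize}

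A concrete counterexample: take $\hat\H=\CC$ and $T(\lambda)=\lambda\1$. Then $\hat\sigma=1$, $R_{T,\sigma}=\tr(\sigma\,\cdot\,)$, $W=\tr(\sigma\,\cdot\,)\1$, and $\mathrm{Fix}(W)=\CC\1$. Also $E\rho=E\sigma=\1/\dim\H$. So (e) and (g) hold for all $\rho$, whereas (a) holds only for $\rho=\sigma$.

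The structural reason is the circularity you noticed: (e) and (g) depend on $\rho$ only through $E\rho$, while (d) does not. Since $F^*=F^*E$, a state $\rho$ is $F$-invariant if and only if $\rho=F^*E\rho$. Whenever $J\neq L(\H)$, pick $x=x^*\neq0$ with $Ex=0$. For small $\epsilon$, $\sigma+\epsilon x$ is a state with the same image under $E$ as $\sigma$. But $F^*(\sigma+\epsilon x)=\sigma$, so it is not $F$-invariant. For the same reason, your fallback of reading (e) and (g) as statements about $(E\rho,E\sigma)$ cannot rescue the equivalence.

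What your computation $F^*\rho=\sigma^{1/2}d_{E\rho|E\sigma}\sigma^{1/2}$ does establish is the correct replacement:
\begin{itemize}
\item for (g), the condition $d_{\rho|\sigma}=d_{E\rho|E\sigma}$, equivalently $\rho=\Gamma_\sigma(d_{E\rho|E\sigma})$;
\item for (e), the condition $F^*E\rho=\rho$.
\end{itemize}
Each of these is equivalent to (d). With (e) and (g) amended this way, or simply dropped, your proof is complete. The implications (f)$\Rightarrow$(e) and (f)$\Rightarrow$(g) in your plan are true but vacuous.
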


\begin{remark}[Faithfulness]\label{remark:faithfulness-petz}
    In \cref{sec:experiments}, we saw that $(\rho,\sigma) \ptp (\rho',\sigma')$ is equivalent to $(\rho, \tfrac{1}{2}(\rho+\sigma))\ptp (\rho', \tfrac{1}{2}(\rho'+\sigma'))$. This shows that in a situation where $\sigma$ is not faithful, but the dichotomy $(\rho,\sigma)$ is faithful we may always replace $\sigma$ with $\tfrac{1}{2}(\rho+\sigma)$ in the equivalent statements \ref{it:algebraic-recovery2} -- \ref{it:algebraic-recovery5} in \cref{thm:algebraic-recovery}.

     If the dichotomy in question is not faithful, we can always first interconvert to a faithful dichotomy and then apply \cref{thm:algebraic-recovery}. 
\end{remark}

\section{The standard representation of a statistical experiment}
\label{sec:standard-rep}

In this section, we introduce a particularly useful representation of a statistical experiment, which we refer to as the standard representation.\footnote{Unrelated to the notion of standard representation of a von Neumann algebra.}
We believe that this representation is of general interest for statistical experiments.
We use it to prove two results on dichotomies. 
The first result, which will be essential for our later study of Bayesian hypothesis testing, is that the minimal sufficient J*-algebra of any faithful dichotomy is a universally reversible J*-algebra.
Second, we will provide the proof of \cref{thm:decomposable-interconversion}.

We have seen in \cref{sec:experiments} that for questions related to PTP-equivalence we can assume without loss of generality that a statistical experiment is faithful (and in fact that it contains a faithful state). 
\cref{thm:ptp-inter} together with \cref{cor:TPCE-ptp-equivalence} shows that PTP-equivalence classes are completely described by the minimal sufficient J*-algebra $J_{(\rho_\theta)}$ and the reduced states $E\rho_\theta\in J_{(\rho_\theta)}$, where $E$ is the trace-preserving conditional expectation onto $J_{(\rho_\theta)}$.

We claim that any faithful representation $\pi: J_{(\rho_\theta)} \to L(\hat\H)$ yields a PTP-equivalent dichotomy $(\hat\rho_\theta)$ on $\hat\H$:
To see this, set $\hat J = \pi(J)$. Let $\hat E:L(\hat\H) \to \hat J$ be the trace-preserving conditional expectation onto $\hat J$ and let $F:L(\H) \to J_{(\rho_\theta)}$ be the $(\rho_\theta)$-preserving conditional expectation.
We can then define UP maps
\begin{align}
    T = \pi^{-1}\circ \hat E:L(\hat\H) \to J_{(\rho_\theta)} \subset L(\H),\quad S = \pi\circ F: L(\H)\to \hat J \subset L(\hat\H),
\end{align}
where $\pi^{-1}$ is restricted to $\hat J$. 
Define a statistical experiment $(\hat\rho_\theta) = (T^*\rho_\theta)$ on $\hat\H$. 
Since $\hat E \circ \pi(a) = \pi(a)$ for any $a\in J_{(\rho_\theta)}$ we have $\hat E \circ \pi \circ F = \pi\circ F$. It follows that $S^* \circ T^* = (T\circ S)^* = F^*$ and hence $S^*\hat \rho_\theta = F^* \rho_\theta = \rho_\theta.$  Thus $(\rho_\theta) \ptp (\hat\rho_\theta)$, $\hat J = J_{(\hat \rho_\theta)}$ and $\hat\rho_\theta \in \hat J$.

Let now $(\rho_\theta)$ be a, not necessarily faithful, statistical experiment on $\H$. 
By first restricting to the support of the experiment (cp.\ \cref{lem:inter-to-faithful}), converting via the trace-preserving conditional expectation on the resulting minimal sufficient J*-algebra and then using a universal representation of the resulting minimal sufficient J*-algebra (cp.\ \cref{def:univ-rep}), we obtain a PTP-equivalent statistical experiment $(\hat\rho_\theta)$. 
We refer to $(\hat\rho_\theta)$ as a \emph{standard representation} of the statistical experiment $(\rho_\theta)$.
Its importance stems from the following list of properties:

\begin{proposition}\label{prop:standard-rep}
    The standard representation $(\hat\rho_\theta)$ of a statistical experiment $(\rho_\theta)$ has the following properties:
    \begin{enumerate}
        \item\label{it:univ1} $(\hat\rho_\theta)$ is faithful and if $\rho_\theta$ is faithful for some $\theta\in \Theta$, so is $\hat\rho_\theta$.
        
        \item\label{it:univ2} 
        $J_{(\hat\rho_\theta)} = \pi(J_{(\rho_\theta)}) = \Jstaralg((\hat\rho_\theta)_\theta)$ 
        and the unique $(\hat\rho_\theta)$-preserving conditional expectation onto $J_{(\hat\rho_\theta)}$ is the trace-preserving one.

        \item\label{it:univ3} 
        $J_{(\hat\rho_\theta)} \subset L(\H)$ is in its universal representation. The generated *-algebra is the minimal sufficient *-algebra $A_{(\hat\rho_\theta)} = \staralg(J_{\hat\rho_\theta}) = \staralg((\hat\rho_\theta)_{\theta})$.
         
        \item\label{it:univ4} If $(\sigma_\theta)$ is another statistical experiment (possibly on another Hilbert space) with standard representation $(\hat\sigma_\theta)$, then $(\rho_\theta) \ptp (\sigma_\theta)$ if and only if $(\hat \rho_\theta)$ and $(\hat\sigma_\theta)$ are \emph{unitarily} equivalent.  
    \end{enumerate}
\end{proposition}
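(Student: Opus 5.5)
I will verify the four items in order, following the three steps of the construction: restriction to the support (\cref{lem:inter-to-faithful}), replacement of the states by $E\rho_\theta$ with $E$ the trace-preserving conditional expectation onto $J_{(\rho_\theta)}$ (\cref{cor:TPCE-ptp-equivalence}), and transport along a universal representation $\pi$ via $T=\pi^{-1}\circ\hat E$, $S=\pi\circ F$ as in the discussion preceding the statement. Each step is a PTP-equivalence. By \cref{thm:ptp-inter}, each step therefore induces a state-intertwining J*-isomorphism between the minimal sufficient J*-algebras.

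For \cref{it:univ1}, faithfulness of $(\hat\rho_\theta)$ follows from faithfulness of the conditional expectations. Suppose $\hat a\ge0$ with $\tr(\hat\rho_\theta\hat a)=0$ for all $\theta$. Then $\tr(\rho_\theta T\hat a)=0$ for all $\theta$, so $T\hat a=0$ by faithfulness of $(\rho_\theta)$ (after the support restriction). Since $\pi^{-1}$ is injective on $\hat J$, this gives $\hat E\hat a=0$, and since the trace-preserving $\hat E$ is faithful, $\hat a=0$. If $\rho_\theta$ is faithful, apply the same argument to $\rho_\theta$ alone: $\tr(\hat\rho_\theta\hat a)=\tr(\rho_\theta T\hat a)$, and $T$ is faithful as a composition of the faithful map $\hat E$ with the injective map $\pi^{-1}$.

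For \cref{it:univ2}, the preceding discussion already shows $J_{(\hat\rho_\theta)}=\hat J=\pi(J_{(\rho_\theta)})$ and $\hat\rho_\theta\in\hat J$. Since $\hat\rho_\theta\in\hat J$, we have $\hat E\hat\rho_\theta=\hat\rho_\theta$, so the trace-preserving $\hat E=\hat E^*$ preserves $(\hat\rho_\theta)$. By the uniqueness in \cref{thm:luczak}\ref{it:luczak2}, it is the unique such conditional expectation. Applying \cref{cor:TPCE-ptp-equivalence} to $(\hat\rho_\theta)$, where the reduced states are the $\hat\rho_\theta$ themselves, gives $J_{(\hat\rho_\theta)}=\Jstaralg((\hat\rho_\theta))$.

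For \cref{it:univ3}, $\hat J$ is in its universal representation by construction. \Cref{thm:luczak}\ref{it:luczak3} gives $A_{(\hat\rho_\theta)}=\staralg(J_{(\hat\rho_\theta)})$. Since $J_{(\hat\rho_\theta)}=\Jstaralg((\hat\rho_\theta))$ and Jordan products are products, $\staralg(J_{(\hat\rho_\theta)})=\staralg((\hat\rho_\theta))$.

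For \cref{it:univ4}, the direction "$\Leftarrow$" is immediate: unitary conjugation is PTP (indeed CPTP), and PTP-equivalence is transitive. For "$\Rightarrow$", combine the PTP-equivalences $(\hat\rho_\theta)\ptp(\rho_\theta)\ptp(\sigma_\theta)\ptp(\hat\sigma_\theta)$ with \cref{thm:ptp-inter}. This yields a J*-isomorphism $\psi:J_{(\hat\rho_\theta)}\to J_{(\hat\sigma_\theta)}$ intertwining the states. Both algebras are in universal representations, so $\pi_2\circ\psi$ and the inclusion of $J_{(\hat\sigma_\theta)}$ are two universal representations of the same abstract J*-algebra. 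By \cref{lem:unitary-equivalence-univ}, there is a unitary $u$ with $\psi(a)=uau^*$ on $J_{(\hat\rho_\theta)}$.

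It remains to show $u\hat\rho_\theta u^*=\hat\sigma_\theta$, and I expect this to be the main subtle step. Both sides lie in $J_{(\hat\sigma_\theta)}$, because $\hat\rho_\theta\in J_{(\hat\rho_\theta)}$ and $\psi$ is implemented by $u$. For $b=uau^*\in J_{(\hat\sigma_\theta)}$,
\[
\tr(u\hat\rho_\theta u^*b)=\tr(\hat\rho_\theta a)=\tr(\hat\sigma_\theta\psi(a))=\tr(\hat\sigma_\theta b).
\]
The trace pairing is nondegenerate on $J_{(\hat\sigma_\theta)}$, so the two elements coincide.

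One point needs care throughout: \cref{lem:unitary-equivalence-univ} must be applied to a map that is both unital and intertwines the embeddings. This holds here since $\psi$ is a J*-isomorphism.
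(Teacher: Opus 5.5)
Your proof is correct and follows the paper's argument item by item: faithfulness via the faithfulness of $\hat E$ and the injectivity of $\pi^{-1}$, \cref{cor:TPCE-ptp-equivalence} and \cref{thm:luczak} for items 2--3, and \cref{thm:ptp-inter} together with \cref{lem:unitary-equivalence-univ} for item 4. You also make explicit two points the paper leaves implicit (faithfulness of an individual $\hat\rho_\theta$, and the trace-duality check on $J_{(\hat\sigma_\theta)}$ that the implementing unitary carries $\hat\rho_\theta$ to $\hat\sigma_\theta$); the only slip is that your undefined ``$\pi_2\circ\psi$'' should read ``the inclusion of $J_{(\hat\sigma_\theta)}$ composed with $\psi$'', to be compared with the inclusion of $J_{(\hat\rho_\theta)}$.
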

\begin{proof}
    By construction of $(\hat\rho_\theta)$ we can without loss of generality assume that $(\rho_\theta)$ is a faithful statistical experiment with $E\rho_\theta = \rho_\theta$, where $E$ is the trace-preserving conditional expectation onto $J_{(\rho_\theta)}$, which coincides with the $(\rho_\theta)$-preserving conditional expectation $F$. 
    We denote by $\pi:J_{(\rho_\theta)} \to L(\hat\H)$ the universal representation. We then have 
    \begin{align}\label{eq:proof-std-rep}
        \tr(\hat\rho_\theta \hat a) = \tr(\rho_\theta \pi^{-1} (\hat E (\hat a))),\quad \hat a\in L(\hat\H). 
    \end{align}

     \cref{it:univ1}: Suppose $0\le \hat a\in L(\hat\H)$ and
    $0= \tr(\hat\rho_\theta\, \hat a)$ for all $\theta\in\Theta$. 
    Since $(\rho_\theta)$ is faithful this requires $\pi^{-1}\hat Ea =0$.
    Since $\pi^{-1}$ is an isomorphism when restricted to the range of $\hat E$, this requires $\hat Ea=0$. But $\hat E$ is faithful (because it is trace-preserving), and therefore $a=0$. Hence $(\hat\rho_\theta)$ is faithful.
    
    \cref{it:univ2}: By \cref{thm:ptp-inter}, the PTP-equivalence of $(\rho_\theta)$ and $(\hat\rho_\theta)$ implies that $J_{(\hat\rho_\theta)}=\pi(J_{(\rho,\sigma})$.
    The statement that the $(\hat\rho_\theta)$-preserving conditional expectation is trace-preserving holds by construction, and the equality $\Jstaralg((\hat\rho_\theta)_\theta)= J_{(\hat\rho_\theta)}$ is shown in \cref{cor:TPCE-ptp-equivalence}.
    
    \cref{it:univ3}: This follows by construction from \cref{thm:luczak} and \cref{it:univ2}. 

    \cref{it:univ4}:
    Clearly if $(\hat\rho_\theta)\subset L(\H)$ and $(\hat\sigma_\theta) \subset L(\K)$ are unitarily equivalent, then $(\rho_\theta)$ and $(\sigma_\theta)$ are PTP-equivalent.
    To see the converse, let $(\rho_\theta) \ptp (\sigma_\theta)$ and hence $(\hat\rho_\theta)\ptp(\hat\sigma_\theta)$, where $\hat\rho_\theta \in L(\hat\H)$ and $\hat\sigma_\theta\in L(\hat\K)$ are the standard representations of the statistical experiments.
    \Cref{thm:ptp-inter} shows that there is an isomorphism $\psi : J_{(\hat\rho_\theta)}\to J_{(\hat\sigma_\theta)}$, which intertwines the expectation values of the two statistical experiments.
    Since both J*-algebras are in their universal representations, by \cref{lem:unitary-equivalence-univ} the J*-isomorphism $\psi$ is unitarily implemented.
\end{proof}

\begin{proposition}\label{prop:MSJA-uni-rev}
    Let $(\rho,\sigma)$ be a faithful dichotomy on a Hilbert space $\H$. 
    Then $J_{(\rho,\sigma)}$ is a universally reversible J*-algebra.
\end{proposition}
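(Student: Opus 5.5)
The plan is to reduce to the standard representation of the dichotomy and there apply the fact that J*-algebras with few generators are reversible. Universal reversibility depends only on the abstract J*-algebra (\cref{prop:univ-rev}). By \cref{thm:ptp-inter}, PTP-equivalent experiments have J*-isomorphic minimal sufficient J*-algebras. So it suffices to prove the claim for any PTP-equivalent dichotomy, and I would choose the standard representation $(\hat\rho,\hat\sigma)$ on $\hat\H$.

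\textbf{Step 1.} By \cref{prop:standard-rep}, the algebra $J_{(\hat\rho,\hat\sigma)}=\pi(J_{(\rho,\sigma)})$ is J*-isomorphic to $J_{(\rho,\sigma)}$. It sits in its universal representation, and it is generated by the two hermitian elements:
\begin{equation*}
J_{(\hat\rho,\hat\sigma)} = \Jstaralg(\hat\rho,\hat\sigma).
\end{equation*}

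\textbf{Step 2.} A J*-algebra with at most three generators is reversible \cite[Cor.~2.3.8]{hanche-olsen_jordan_1984}; this is the content of \cref{cor:reversible}. Applied to $J_{(\hat\rho,\hat\sigma)}$ inside $L(\hat\H)$, it shows that $J_{(\hat\rho,\hat\sigma)}$ is reversible in this concrete representation, which is the universal one. Reversibility inside $L(\hat\H)$ is the same as reversibility inside $\staralg(J_{(\hat\rho,\hat\sigma)})$, the universal enveloping *-algebra. This is because symmetrized products of elements of $J$ already lie in that *-algebra.

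\textbf{Step 3.} The implication \ref{it:univ-rev1} $\Rightarrow$ \ref{it:univ-rev2} of \cref{prop:univ-rev} now gives that $J_{(\hat\rho,\hat\sigma)}$ is universally reversible. Transporting along the J*-isomorphism $\pi^{-1}$ yields the same conclusion for $J_{(\rho,\sigma)}$.

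The main point needing care is Step 2. The reversibility result must hold in the universal representation specifically, not just in some representation. Since the cited corollary concerns concrete J*-algebras of operators generated by $\le 3$ elements, it applies to every representation, so this should be fine.

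As a backup, I would use criterion \ref{it:univ-rev4} of \cref{prop:univ-rev} directly. The image of $J_{(\rho,\sigma)}$ in any factor summand is generated by the images $a,b$ of the two generators. In a spin factor $V_n$, each hermitian element has the form $\lambda\1+v$ with $v$ in the real span of the $s_i$. The Jordan product of two such elements stays in $\lin\{\1,a,b\}$, because $\jp{v}{w}$ is a scalar multiple of $\1$. Hence the J*-algebra generated by two elements has dimension at most $3$ and cannot be $V_n$ for $n\ge 4$. This excludes the irreversible summands.
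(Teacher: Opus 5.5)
Your main argument (Steps 1--3) is essentially the paper's proof: pass to the standard representation, where $J_{(\hat\rho,\hat\sigma)}=\Jstaralg(\hat\rho,\hat\sigma)$ sits in its universal representation, get reversibility there from the few-generators result behind \cref{cor:reversible}, and conclude universal reversibility via \cref{prop:univ-rev}. Your backup via criterion \ref{it:univ-rev4} is also sound, because two hermitian elements $a,b$ of a spin factor generate only $\lin\{\1,a,b\}$, which has dimension at most $3$, so no summand $V_n$ with $n\ge 4$ can occur.
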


\begin{proof}
    By \cref{prop:standard-rep}, the universal representation $J_{(\hat\rho,\hat\sigma)}$ of $J_{(\rho,\sigma)}$ is generated by two hermitian elements, hence reversible by \cref{cor:reversible}. By \cref{prop:univ-rev} $J_{(\rho,\sigma)}$ is reversible in every representation.
\end{proof}

We use the universal representation to prove \cref{thm:decomposable-interconversion}.
The following result is due to St\o rmer (see \cite{ha_positive_2002} for a stronger version of the statement).
\begin{lemma}[{\cite[Cor.~7.3]{stormerDecompositionPositiveProjections1980}}]\label{lem:reversible-decomposable-CE}
    Let $J\subset L(\H)$ be a J*-algebra and $E:L(\H)\to J$ a faithful conditional expectation. Then $E$ is decomposable if and only if $J$ is reversible. 
\end{lemma}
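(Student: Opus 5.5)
Since $E$ is faithful, we may fix the invariant state $\omega = E^*\tau$ and use the factorization of \cref{lem:CE-extension}: $E = \hat E\circ\tilde E$ with $\tilde E$ a conditional expectation of $L(\H)$ onto the *-algebra $A=\staralg(J)$, and $\hat E=E|_A$. The map $\tilde E$ is a conditional expectation onto a *-algebra and hence completely positive, so decomposability of $E$ reduces to decomposability of $\hat E:A\to J$. Both the decomposability of maps and reversibility respect the direct-sum decomposition $J=\oplus_i J_i$ into J*-factors. Here we use that $E$ maps $p_iL(\H)p_i$ into $J_ip_i$ by the bimodule property \eqref{eq:bimodule-CE} with the central projections $p_i$. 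Hence it suffices to treat a J*-factor $J$.

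For ``reversible $\Rightarrow$ decomposable'' I would use the case distinction of \cref{prop:reps}. If the factor $J$ is reversible, then either it is universally reversible, or it is $V_5$ in a reversible non-universal representation, or it is $V_2,V_3$. The latter are covered by the identifications $V_2=L(\CC^2)^t$ and $V_3\Jcong L(\CC^2)$. For the universally reversible cases, \cref{prop:CE-univ-rev} gives $\hat E$ explicitly. In case \ref{it:reps1}, $\hat E=\id$. In cases \ref{it:reps3}, \ref{it:reps4}, $\hat E=\frac12(\id+\vartheta)\ox\id$ with $\vartheta$ either the transpose or $a\mapsto-\Omega a^t\Omega$; the latter equals $\Omega^*a^t\Omega$, a unitary conjugation composed with the transpose, hence coCP. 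In case \ref{it:reps2}, $\hat E$ is a convex combination of terms of the form $a\mapsto a$ and $b\mapsto b^t$ placed in the two blocks, each CP or coCP. Thus $\hat E$, and therefore $E$, is decomposable. The remaining reversible non-universal representation of $V_5$ still sits inside $A\cong L(\CC^4)$ as $A^\beta$ by \cref{lem:rev-hull1}, so the same formula applies.

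For ``decomposable $\Rightarrow$ reversible'', write $E=\Phi_1+\Phi_2$ with $\Phi_1$ CP and $\Phi_2=t\circ\Psi$, $\Psi$ CP. The plan is to exploit the multiplicative-domain type rigidity on $J=\mathrm{Fix}(E)$. For $a_1,\dots,a_n\in J$ we have $E(a_1\cdots a_n)$, and we want $E(a_1\cdots a_n + a_n\cdots a_1)=a_1\cdots a_n+a_n\cdots a_1$. First, using the Kadison–Schwarz inequality for $\Phi_1$ and $\Psi$ separately, together with saturation on $J$ (which follows from faithfulness of $E$ via \cref{lem:faithful-CE-multi-domain}), I would show that on $J$ the map $\Phi_1$ acts as $x\mapsto \Phi_1(\1)^{1/2}\pi_1(x)\Phi_1(\1)^{1/2}$-type homomorphic pieces. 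Here $\pi_1$ is a *-homomorphism and $\Phi_2$ gives a *-antihomomorphism, giving a decomposition of the inclusion $J\hookrightarrow L(\H)$ into a sum of a homomorphism and an antihomomorphism, both vanishing on the symmetrized products' antisymmetric parts. Symmetrized products are mapped to symmetrized products by both, so $J$ is closed under them.

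I expect this second direction to be the main obstacle: extracting genuine (anti-)homomorphisms from a decomposition that is only additive requires a careful Stinespring/multiplicative-domain argument. If it resists, the fallback is to quote it as St\o rmer's result directly, as the lemma is attributed to \cite{stormerDecompositionPositiveProjections1980}. Within the paper, that direction is in any case needed only to the extent that the irreversible spin factors $V_4$ and $V_n$, $n\ge6$, admit no decomposable conditional expectation.
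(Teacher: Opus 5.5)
The paper does not prove this lemma. It imports it from St\o rmer and only ever uses the direction ``reversible $\Rightarrow$ decomposable'' (in \cref{lem:univ-decomposable}). So your closing remark is the wrong way round: the converse is not used anywhere in the paper.

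\textbf{Reversible $\Rightarrow$ decomposable.} Your route through \cref{lem:CE-extension}, \cref{prop:reps} and \cref{prop:CE-univ-rev} is a legitimate alternative to the citation that stays within the paper's own structure theory. Your case list (universally reversible factors, plus $V_5$ in a reversible representation) is complete. One step needs justification: \cref{prop:CE-univ-rev} does not cover $V_5$, so ``the same formula applies'' requires a uniqueness argument. Here is one.
\begin{itemize}
\item The proof of \cref{lem:CE-extension} shows that $\omega=E^*\tau$ commutes with $A=u^*(L(\CC^4)\ox\1)u$, hence $\omega=u^*(\tfrac14\1\ox\omega_\R)u$.
\item The map $(\tfrac12(\id+\beta)\ox\id)\circ(\id\ox\tr(\omega_\R\placeholder)\1)$, conjugated by $u$, is an $\omega$-preserving conditional expectation onto $J$.
\item By \cref{cor:uniquenes-CE} it therefore equals $E$.
\end{itemize}
You should also add a sentence showing that $A$ carries an involution fixing $J$, since \cref{lem:rev-hull1} presupposes one. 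The same device (invariant state commutes with $A$, explicit $\omega$-preserving candidate, uniqueness) also removes the appeal to St\o rmer's structure results inside \cref{prop:CE-univ-rev}. That makes this half fully independent of the cited source, which the paper's bare citation does not offer.

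\textbf{Decomposable $\Rightarrow$ reversible.} This half is the genuine gap: as written it is only a plan. The decisive step is asserted, not derived: why saturation of $E$ on $J$ forces $\Phi_1$ and $\Phi_2=t\circ\Psi$ to act on products of elements of $J$ as the product and the reversed product. Only the sum $E$ is known to be saturated. The (anti)homomorphisms you mention live on dilation spaces, not on $J$, which is not an associative algebra. Also, $J\subset\mathrm{MD}(E)$ holds for every conditional expectation; faithfulness and \cref{lem:faithful-CE-multi-domain} (which gives the reverse inclusion) play no role.

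A complete argument runs as follows.
\begin{itemize}
\item Take Stinespring dilations $\Phi_1=V_1^*\pi_1(\placeholder)V_1$ and $\Psi=W^*\pi_2(\placeholder)W$.
\item Let $C_0$ be the conjugation on $\H$ with $t(y)=C_0y^*C_0$, and $C$ any conjugation on the second dilation space. Then $\Phi_2=U^*\rho_2(\placeholder)U$ with $U=CWC_0$ linear and $\rho_2(x)=C\pi_2(x)^*C$ a *-antihomomorphism.
\item Hence $E=V^*\rho(\placeholder)V$, where $\rho=\pi_1\oplus\rho_2$ is a Jordan homomorphism and $V=(V_1,U)$ is an isometry because $E(\1)=\1$.
\item For hermitian $x\in J$, the identity $E(x^2)=x^2=E(x)^2$ gives $V^*\rho(x)(\1-VV^*)\rho(x)V=0$. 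Hence $\rho(x)V=Vx$, i.e.\ $\pi_1(a)V_1=V_1a$ and $\rho_2(a)U=Ua$ for all $a\in J$.
\end{itemize}
Consequently
\[
E(a_1\cdots a_n)=\Phi_1(\1)\,a_1\cdots a_n+\Phi_2(\1)\,a_n\cdots a_1,\qquad a_1,\ldots,a_n\in J .
\]
Since $\Phi_1(\1)+\Phi_2(\1)=\1$, $E$ fixes $a_1\cdots a_n+a_n\cdots a_1$, which therefore lies in $\mathrm{Ran}(E)=J$. This direction needs no faithfulness.
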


\begin{lemma}\label{lem:univ-decomposable}
    The interconversion between a faithful dichotomy $(\rho,\sigma)$ and its standard representation $(\hat\rho,\hat\sigma)$ can always be achieved with decomposable PTP maps.
\end{lemma}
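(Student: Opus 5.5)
The interconversion is built from three stages, and I will check decomposability of each maps separately, since compositions of decomposable maps are again decomposable. (A CP map composed with a coCP map is coCP, two coCP maps compose to a CP map, and sums of CP and coCP maps are closed under these compositions.) The stages are the following.

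\begin{enumerate}[(i)]
\item Restrict to the support. Since $(\rho,\sigma)$ is assumed faithful, this step is trivial, and in any case \cref{lem:inter-to-faithful} gives UCP maps.
\item Pass from $(\rho,\sigma)$ to $(E\rho,E\sigma)$ with $E$ the trace-preserving conditional expectation onto $J=J_{(\rho,\sigma)}$.
\item Pass from $(E\rho,E\sigma)$ to the universal representation via $T=\pi^{-1}\circ\hat E$ and $S=\pi\circ F$, as in the construction preceding \cref{prop:standard-rep}.
\end{enumerate}

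For stage (ii), \cref{lem:Erho} says that $E$ and the $(\rho,\sigma)$-preserving conditional expectation $F$ interconvert the two dichotomies: $E^*\rho$ is reached by $E$, and $F^*E^*\rho=\rho$ because $EF=F$. Both $E$ and $F$ are faithful conditional expectations onto $J$. By \cref{prop:MSJA-uni-rev}, $J$ is universally reversible, in particular reversible in its given representation. \Cref{lem:reversible-decomposable-CE} then shows that $E$ and $F$ are decomposable.

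For stage (iii), the maps are $S=\pi\circ F$ and $T=\pi^{-1}\circ\hat E$. The conditional expectations $F$ and $\hat E$ are decomposable by the same lemma, since $\hat J=\pi(J)$ is reversible (again by universal reversibility). What remains is to show that the J*-isomorphisms $\pi:J\to\hat J$ and $\pi^{-1}:\hat J\to J$, precomposed with the conditional expectations, give decomposable maps $L(\H)\to L(\hat\H)$ and back.

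This is the step I expect to be the main obstacle. My approach is to use the structure theory. Since $J$ and $\hat J$ are universally reversible, \cref{prop:univ-rev} gives $J=\staralg(J)^\vartheta$, and by the universal property $\pi$ extends to a *-homomorphism $\hat\pi$ from the universal enveloping algebra onto $\staralg(J)$. I will work factor by factor using \cref{prop:reps} and \cref{prop:CE-univ-rev}.

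\begin{itemize}
\item In cases \ref{it:reps3}--\ref{it:reps4}, both representations are amplifications $\hat J\ox\1$ of the universal one, because the universal algebra is a matrix factor. Then $\pi\circ F=(\iota\ox\tr(\omega\,\cdot))\circ\tilde F$ composed with $\hat E$. This is a CP map followed by the decomposable map $\frac12(\id+\vartheta)$, where $\vartheta$ is a transpose conjugated by a unitary and hence coCP. The map $\pi^{-1}\circ\hat E$ is handled in the same way, as a CP map followed by the decomposable conditional expectation.
\item In case \ref{it:reps1}, one side is $L(\L)\ox\1$ and the universal side is $\{a\oplus a^t\}$. The maps are $a\ox\1\mapsto a\oplus a^t$, which is a CP map plus a coCP map, and $(a\oplus b)\mapsto \lambda a+(1-\lambda)b^t$, which is likewise CP plus coCP, exactly as in \cref{exa:final-trp-doubling}.
\item Case \ref{it:reps2} is unitarily equivalent to the universal representation up to multiplicity, so the maps are CP.
\end{itemize}

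Finally, I take direct sums over the factors. Direct sums of decomposable maps are decomposable, since a block-diagonal sum of coCP maps is coCP with respect to the block basis. This gives decomposable $T$ and $S$ for the whole algebra.

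Composing the three stages yields decomposable interconverting maps, which proves the lemma.
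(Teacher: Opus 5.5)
Your proof is correct, but for the direction $(\hat\rho,\hat\sigma)\to(\rho,\sigma)$ it takes a different route from the paper.

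\textbf{The paper's route.} The paper does not split the conversion into stages. It applies $T=\pi^{-1}\circ\hat E$ directly to $(\rho,\sigma)$. This already gives $T^*\rho=\hat\rho$, because $\pi^{-1}\hat E$ takes values in $J$ and the trace-preserving expectation onto $J$ is self-dual. Decomposability of $T$ is obtained exactly as in your remark: $\pi^{-1}$ lifts to a *-homomorphism on the universal enveloping algebra $\staralg(\hat J)$, and \cref{lem:reversible-decomposable-CE} handles $\hat E$.

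For the converse, the paper points out that this trick fails, since $\pi$ itself need not extend to a *-homomorphism (think of $a\mapsto a\oplus a^t$). It therefore takes $S=R_{T,\sigma}$ instead. This map recovers the dichotomy by \cref{thm:algebraic-recovery}. It is decomposable because $R_{T,\sigma}(a)=\hat\sigma^{-1/2}T^*(\sigma^{1/2}a\sigma^{1/2})\hat\sigma^{-1/2}$ differs from $T^*$ only by CP conjugations.

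\textbf{Your route and what it buys.} You replace the Petz step by a factor-by-factor extension of $\pi$ to CP or CP$+$coCP maps, using \cref{prop:reps} and \cref{prop:CE-univ-rev}.
\begin{itemize}
\item The cost is reliance on the representation theory of universally reversible factors.
\item The gain is that you show the canonical maps of the standard-representation construction are themselves decomposable.
\item You also avoid the recovery theorem, and with it the tacit passage to a faithful reference state such as $\tfrac12(\rho+\sigma)$ that forming $R_{T,\sigma}$ requires.
\end{itemize}

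\textbf{Two points to tighten.}
\begin{itemize}
\item Your bullets treat a specific standard form of $\pi$, but $\pi$ is only determined up to a J*-automorphism of $\hat J$. Since $\hat J$ is in its universal representation, such an automorphism extends to a *-automorphism of the multiplicity-free algebra $\staralg(\hat J)$. It is therefore unitarily implemented and does not affect decomposability, so treating the standard form suffices.
\item In your bullet for cases (iii)--(iv), the decomposable factor $\tfrac12(\id+\vartheta)\ox\id$ is the restriction $F|_{\staralg(J)}$ from \cref{prop:CE-univ-rev}. It is not the expectation $\hat E$ onto $\pi(J)$.
\end{itemize}
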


\begin{proof}
    By the universal property of the universal embedding $J_{(\hat\rho,\hat\sigma)}\subset A_{(\hat\rho,\hat\sigma)}$, the J*-isomorphism $\pi^{-1} : J_{(\hat\rho,\hat\sigma)} \to J_{(\rho,\sigma)}$ lifts to a *-homomorphism $\pi : A_{(\hat\rho,\hat\sigma)} \to A_{(\rho,\sigma)}$ (where we used that $A_{(\rho,\sigma)}$ is generated by $J_{(\rho,\sigma)}$, see \cref{thm:luczak}).
    By \cref{cor:reversible}, $J_{(\hat\rho,\hat\sigma)}\subset L(\hat\H)$ is reversible.
    By \cref{lem:reversible-decomposable-CE}, this implies that the trace-preserving conditional expectation $\hat E: L(\hat\H) \to J_{(\hat\rho,\hat\sigma)}$ is decomposable.
    Then $T =  \pi^{-1} \circ \hat E$ satisfies $(T^*\rho,T^*\sigma)=(\hat\rho,\hat\sigma)$ and is decomposable as a composition of decomposable maps.
    
    We cannot use the same argument to construct the converse map.
    However, we know that there exists a UP map $S:L(\H)\to L(\hat\H)$ with $(S^*\hat\rho,S^*\hat\sigma)=(\rho,\sigma)$.
    By \cref{thm:algebraic-recovery}, we can take $S = R_{T,\sigma}$ as the Petz recovery map. 
    The definition of the Petz recovery map makes it evident that decomposability of $T$ implies decomposability of $R_{T,\sigma}$ (see \eqref{eq:def-recovery-map}).
    This finishes the proof.
\end{proof}

\begin{proof}[Proof of \cref{thm:decomposable-interconversion}]
   We only need to prove that PTP-equivalence implies that the two dichotomies can be mapped into each other via decomposable maps.
    Consider the following diagram:
    \begin{equation}
    \begin{tikzcd}
        (\rho,\sigma) \arrow[leftrightarrow]{r}{} \arrow[leftrightarrow]{d}{} & (\tau,\omega) \arrow[leftrightarrow]{d}{} \\
        (\hat\rho,\hat\sigma) \arrow[leftrightarrow]{r}{} & (\hat\tau,\hat\omega) 
    \end{tikzcd}
    \end{equation}
    By \cref{lem:univ-decomposable}, the vertical interconversions can be achieved with decomposable PTP maps, and, by \cref{prop:standard-rep}, the lower horizontal interconversion can be achieved with a unitary.
    Thus, the upper horizontal conversion is indeed possible with decomposable PTP maps.
\end{proof}

\section{Sufficiency and Bayesian hypothesis testing}

\label{sec:bayes}
Suppose a quantum system is known to be either in the quantum state $\rho$ or in the state $\sigma$.
To distinguish the two cases, one performs a binary measurement $(x,\1-x)$, where $x$ (resp.\ $\1-x$) is associated with $\rho$ (resp.\ $\sigma$).
There are four possible cases, two error cases (mistaking $\rho$ for $\sigma$ or $\sigma$ for $\rho$) and two success cases (correctly returning $\rho$ or correctly returning $\sigma$).
If $(\rho,\sigma)$ occur (or are believed to occur) with probabilities $(p,1-p)$ with $p\in(0,1)$, the success probability is
\begin{equation}\label{eq:P-succ1}
    p \tr(\rho x) + (1-p)\tr(\sigma(\1-x)) 
       = (1-p) +p\tr((\rho - \tfrac{1-p}{p}\sigma)x))
\end{equation}
Maximizing the success probability over all possible measurements, we obtain
\begin{equation}\label{eq:P-succ2}
 P_{\text{succ.},p}(\rho,\sigma) = 
1-p +  p\tr((\rho-\tfrac{1-p}{p}\sigma)^+)  
 = p + (1-p)\tr((\sigma-\tfrac{p}{1-p}\rho)^+),
\end{equation}
where the second equality follows from (writing $t=(1-p)/p$)
\begin{equation*}
    pt +  p\tr((\rho-t\sigma)^+)  
 = p\left(t + t\tr((\sigma-\tfrac{1}{t}\rho)^-)\right) 
 = p\left(1 + t(\tr(\sigma-\tfrac{1}{t}\rho)^+)\right) 
 = p + pt\tr((\sigma-\tfrac{1}{t}\rho)^+).
\end{equation*}
Let $s = \supp(\rho,\sigma)$ be the support projection of the dichotomy $(\rho,\sigma)$. Then for any test $x$
\begin{align}
    \tr((\rho-t\sigma)x) = \tr(s(\rho-t\sigma)s x) = \tr((\rho-t\sigma) sxs).
\end{align}
Therefore, tests can always be restricted to the subspace $\K = \supp(\rho,\sigma)$ without affecting the success probability. This means that Bayesian hypothesis is only concerned with the subspace $\K$. 
In other words, we can assume without loss of generality that all dichotomies are faithful.

In the following, we will express the success probability using the \emph{Hockey stick divergence}, defined as \cite{hirche_quantum_2023,hirche_quantum_2024}
\begin{equation}\label{eq:hockey-stick-def}
    E_t(\rho\|\sigma) = \max_{0\leq x\leq 1} 
    \tr((\rho-t\sigma)x) - (1-t)^+
  =
    \tr((\rho-t\sigma)^+) - (1-t)^+,\qquad t\ge 0.
\end{equation}
The term $(1-t)^+=\max\{1-t,0\}$ in the definition ensures that $E_t(\rho\|\rho) = 0$ for all $t\ge 0$. 

Using the Hockey stick divergence, the success probability takes the form
\begin{equation}
P_{\text{succ.},p}(\rho,\sigma) = \begin{cases} 
(1-p)+pE_{(1-p)/p}(\rho\|\sigma)),&\text{if } p \leq \tfrac12\\
p + p E_{(1-p)/p}(\rho\|\sigma),&\text{if } p\geq \tfrac12.
\end{cases}
\end{equation}
Since $E_t$ fulfills
$E_t(\rho\|\sigma) = tE_{1/t}(\sigma\|\rho)$ \cite{hirche_quantum_2024}, the apparent asymmetry between $\rho$ and $\sigma$ is only superficial.
Evidently, the Hockey stick divergences and the success probability fulfill the data-processing inequality: If $T:L(\K)\to L(\H)$ is a UP map, we have 
\begin{align}
    E_t(T^*\rho\|T^*\sigma) - E_t(\rho\|\sigma) = \max_{0\leq T(x) \leq \1} \tr((\rho-t \sigma)T(x)) - \max_{0\leq x \leq \1} \tr((\rho-t \sigma)x) \leq 0.
\end{align}
In the context of Bayesian hypothesis testing, the probability distribution $(p,1-p)$ indicating the (believed) probabilities of the states $(\rho,\sigma)$, is called the \emph{prior}.
We define a notion of sufficiency by asking that the optimal success probability is achieved with measurements in $K$:

\begin{definition}\label{def:bayes-suff}
    An operator system $K\subset L(\H)$ is sufficient for Bayesian hypothesis testing, \emph{Bayes-sufficient} for short,
    for a dichotomy $(\rho,\sigma)$ if, for all priors, the optimal success probabilities can be achieved with measurements in $K$.%
    \footnote{\label{footnote:check} Bayes-sufficiency has appeared in the literature before, e.g., in \cite{jencova_quantum_2010} under the name "2-sufficiency".}
\end{definition}

By \cref{eq:P-succ1,eq:P-succ2}, an operator system $K$ is Bayes-sufficient precisely when the variational formula definition of the Hockey stick divergence can be restricted to $K$, i.e.,
\begin{equation}\label{eq:E-opt-over-K}
    E_t(\rho\|\sigma) = \max_{\substack{x\in K\\ 0\le x\le 1}} \tr((\rho-t\sigma)x)-(1-t)^+, \qquad t>0.
\end{equation}

We make a short detour introducing certain families of projections that depend on the joint spectral properties of hermitian operators.
To continue, we need introduce some notation.
If $\bullet$ is a translation-invariant binary relation on $\RR$ (e.g., "$=$" or "$\le $") and if $a,b\in L(\H)$ are hermitian operators, we define a projection $[a\bullet b]$ on $\H$ via the functional calculus:
\begin{equation}
    [a \bullet b] := \chi_{\bullet}(a-b),
\end{equation}
where $\chi_\bullet$ is the characteristic function of the set of numbers $\lambda\in\RR$ such that $\lambda \bullet 0$.
By construction, one has $[a+\lambda \bullet b] = [a\bullet b-\lambda]$, $\lambda\in\RR$, and $[\lambda a \bullet b] = [a\bullet \lambda^{-1} b]$ for $\lambda>0$.
Let us consider some examples:
$[a = b]$ denotes the projection onto $\ker(a-b)$, and $[a>b]$ the projection onto the positive part of $a-b$.
By the functional calculus, we have $[a\ge b] = [a=b] + [a>b]$. 

We return to the properties of Hockey stick divergences.
They are related to the families of projections discussed above via
\begin{equation}
    E_t(\rho\|\sigma) = \tr\big((\rho-t\sigma) \,\proj{\rho>t\sigma}\big) -(1-t)^+= \tr\big((\rho-t\sigma) \,[\rho\ge t\sigma]\big)-(1-t)^+, \qquad t>0.
\end{equation}
We will refer to the projections $\proj{\rho>t\sigma}$ and $[\sigma>t\rho]$, $t>0$, as the \emph{Neyman-Pearson tests}.
We see that the variational formula for $E_t(\rho\|\sigma)$ attains its optimum at the Neyman-Pearson test $\proj{\rho>t\sigma}$.
We need to understand the structure of general optimizers. The following Lemma is taken from \cite{jencova_quantum_2010}:\footnote{The faithfulness assumption in \cite{jencova_quantum_2010} is not used in the proof of this statement. }

\begin{lemma}\label{lem:bayes-optimizers}
    Let $0\le x\le \1$ be an effect, and let $0<p<1$. 
    The following are equivalent:
    \begin{enumerate}[(a)]
        \item the success probability is optimal with respect to the prior $(p,1-p)$,
        \item $\tr((\rho-t\sigma)x) = E_t(\rho\|\sigma)+(1-t)^+$ for $t= \frac{1-p}p$,
        \item $\proj{\rho>t\sigma} \le x \le \proj{\rho\ge t\sigma}$ for $t= \frac{1-p}p$. 
    \end{enumerate}
\end{lemma}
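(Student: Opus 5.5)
The plan is to reduce everything to a statement about the single hermitian operator $h=\rho-t\sigma$ with $t=\frac{1-p}{p}$. By \eqref{eq:P-succ1}, the success probability of the test $(x,\1-x)$ equals $(1-p)+p\tr(hx)$. Maximizing over effects $0\le x\le\1$ gives \eqref{eq:P-succ2}. Since $p>0$, $x$ is optimal exactly when $\tr(hx)=\max_{0\le y\le\1}\tr(hy)=\tr(h^+)$. By \eqref{eq:hockey-stick-def}, $\tr(h^+)$ equals $E_t(\rho\|\sigma)+(1-t)^+$. This gives (a)$\Leftrightarrow$(b) directly.

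For (b)$\Leftrightarrow$(c), I will use the spectral decomposition $h=h^+-h^-$ and write $P_+=\proj{\rho>t\sigma}$, $P_0=[\rho=t\sigma]$ and $P_-=\1-P_+-P_0$, so that $\proj{\rho\ge t\sigma}=P_++P_0$. For any effect $x$,
\begin{equation*}
\tr(h^+)-\tr(hx)=\tr\big(h^+(\1-x)\big)+\tr(h^-x),
\end{equation*}
and both terms on the right are nonnegative because $0\le x\le\1$. So (b) holds if and only if $\tr(h^+(\1-x))=0$ and $\tr(h^-x)=0$.

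Next I use the standard fact that for $a\ge0$ and $b\ge0$, $\tr(ab)=0$ if and only if $b^{1/2}ab^{1/2}=0$, which holds if and only if $a b=0$, i.e., the support of $b$ lies in the kernel of $a$. Applied to $a=h^-$ and $b=x$, the condition $\tr(h^-x)=0$ becomes $xP_-=0$. Since $0\le x\le\1$, this is equivalent to $x\le\1-P_-=\proj{\rho\ge t\sigma}$. For the converse direction, note that if $x\le \1-P_-$ then $P_-xP_-\le 0$, so $P_-xP_-=0$, hence $x^{1/2}P_-=0$ and $xP_-=0$. Applied to $a=h^+$ and $b=\1-x\ge0$, the condition $\tr(h^+(\1-x))=0$ becomes $(\1-x)P_+=0$, i.e., $xP_+=P_+$. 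By the same compression argument with $\1-x$ in place of $x$, this is equivalent to $P_+\le x$. Combining the two gives (c). The reverse implication follows by reading the same equivalences backwards.

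I do not expect a real obstacle here. The only step needing care is the operator-order equivalence $xP=0\iff x\le \1-P$ for effects $x$ and projections $P$. I handle it via the compression argument above, using $x^{1/2}$, and apply it once with $x$ and once with $\1-x$.
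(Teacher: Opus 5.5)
Your proof is correct and complete. The paper gives no proof of its own for this lemma: it imports it from Jen\v{c}ov\'{a}'s work, remarking only in a footnote that faithfulness is not needed, so there is no in-paper argument to compare against.

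Your route is the standard self-contained one. You write $\tr(h^+)-\tr(hx)=\tr(h^+(\1-x))+\tr(h^-x)$ with $h=\rho-t\sigma$, then characterize when both nonnegative terms vanish. This also confirms the paper's footnote, since nowhere do you use faithfulness of $(\rho,\sigma)$.

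The one implication you only asserted, $xP_-=0\Rightarrow x\le \1-P_-$, follows from $x=(\1-P_-)x(\1-P_-)\le \1-P_-$, using $x\le\1$. The same step with $\1-x$ and $P_+$ gives $P_+\le x$.
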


Next, we show that a unique minimal Bayes-sufficient operator system exists:

\begin{proposition}\label{prop:min-bayes-suff}
    An operator system $K\subset L(\H)$ is Bayes sufficient for a  dichotomy $(\rho,\sigma)$ if and only if it contains the Neyman-Pearson tests:
    \begin{equation}
        \proj{\rho>t\sigma} \in K, \qquad t>0.
    \end{equation}
    In particular, there is a minimal Bayes-sufficient operator system:
    \begin{equation}\label{eq:min-bayes-suff}
        K_{(\rho,\sigma)} = \lin \big\{ \proj{\rho>t\sigma}\ : \ t>0\,\big\} + \CC\cdot \1.
    \end{equation}
\end{proposition}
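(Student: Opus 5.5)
The "if" direction is immediate from the variational formula \eqref{eq:E-opt-over-K}. If $\proj{\rho>t\sigma}\in K$ for all $t>0$, then for each $t>0$ the Neyman-Pearson test is an effect in $K$ attaining the maximum in \eqref{eq:hockey-stick-def}. Hence the maximum restricted to effects in $K$ equals $E_t(\rho\|\sigma)+(1-t)^+$. Since $t=(1-p)/p$ ranges over all of $(0,\infty)$ as $p$ ranges over $(0,1)$, \eqref{eq:E-opt-over-K} holds for every prior, and $K$ is Bayes-sufficient.

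For the "only if" direction, suppose $K$ is Bayes-sufficient. I will show $\proj{\rho>t\sigma}\in K$ for every $t>0$. The difficulty is that for a fixed $t$ an optimal effect $x_t\in K$ need not be the Neyman-Pearson test. By \cref{lem:bayes-optimizers} we only know $\proj{\rho>t\sigma}\le x_t\le\proj{\rho\ge t\sigma}$, so $x_t$ may carry an arbitrary piece on the eigenspace $\proj{\rho=t\sigma}$. The plan is to exploit that $\proj{\rho=t\sigma}\neq0$ only for finitely many $t$, namely those $t$ for which $\rho-t\sigma$ is singular. This holds because $\det(\rho-t\sigma)$ is a polynomial in $t$ which is not identically zero, under the faithfulness of $\sigma$ (or of the dichotomy, after the reduction to $\supp(\rho,\sigma)$ discussed at the start of this section, replacing $\sigma$ by $\frac12(\rho+\sigma)$ if needed).

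Given this, I treat two cases. If $t$ is not exceptional, then $\proj{\rho>t\sigma}=\proj{\rho\ge t\sigma}$, so $x_t=\proj{\rho>t\sigma}\in K$. If $t$ is exceptional, I approximate from the right. The function $s\mapsto\proj{\rho>s\sigma}$ takes only finitely many values, since it can change only when an eigenvalue of $\rho-s\sigma$ crosses zero. I then claim right-continuity: for non-exceptional $s>t$ close to $t$, one has $\proj{\rho>s\sigma}=\proj{\rho>t\sigma}$. Granting this, such an $s$ is non-exceptional, so the first case gives $\proj{\rho>s\sigma}\in K$, and therefore $\proj{\rho>t\sigma}\in K$.

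The main obstacle is this right-continuity claim. I would prove it through the order monotonicity of the Neyman-Pearson tests in $t$. This monotonicity is a little subtle, because the projections need not commute for different $t$. The way around it is to use the optimizer characterization in \cref{lem:bayes-optimizers}, or a continuity argument via analytic perturbation of the eigenvalues and eigenprojections of the hermitian pencil $\rho-s\sigma$. An alternative route avoids monotonicity entirely. For $\sigma$ faithful, $\proj{\rho>s\sigma}$ is the spectral projection of $\rho-s\sigma$ for its positive part. Its range varies analytically in $s$ away from the exceptional points, and at an exceptional $t$ the eigenvalues that vanish at $t$ have nonzero derivative $-\langle v,\sigma v\rangle<0$ by Hellmann–Feynman. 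Consequently these eigenvalues become negative for $s>t$, so the limit of $\proj{\rho>s\sigma}$ as $s\downarrow t$ is exactly $\proj{\rho>t\sigma}$. Since $K$ is a finite-dimensional subspace and hence closed, this limit lies in $K$.

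Finally, the minimality statement follows directly. Every Bayes-sufficient operator system contains the Neyman-Pearson tests and $\1$, hence contains $K_{(\rho,\sigma)}$ as defined in \eqref{eq:min-bayes-suff}. Conversely, $K_{(\rho,\sigma)}$ is itself an operator system, being spanned by projections together with $\1$ and therefore $*$-invariant. It contains all the tests, so by the first direction it is Bayes-sufficient.
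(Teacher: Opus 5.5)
Your argument follows the paper's route. By \cref{lem:bayes-optimizers}, an optimal test $x_t\in K$ satisfies $[\rho>t\sigma]\le x_t\le[\rho\ge t\sigma]$. The two bounds coincide except at finitely many $t$. The exceptional $t$ are handled by right-continuity of $t\mapsto[\rho>t\sigma]$ together with closedness of the finite-dimensional space $K$. Your ``alternative route'' (analytic eigenpairs, with vanishing eigenvalues crossing zero at slope $-\langle v,\sigma v\rangle<0$) is exactly the content of \cref{lem:pencil}. With it, the proof is complete.

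You should delete the first version of the right-continuity step, because its premises are false when $\rho$ and $\sigma$ do not commute. It is not true that $s\mapsto[\rho>s\sigma]$ takes finitely many values. It is not true that $[\rho>s\sigma]=[\rho>t\sigma]$ for $s$ slightly above $t$. And the tests are not order-monotone in $s$. For instance, take $\rho=\tfrac12(\1+Z)$ and $\sigma=\tfrac12(\1+X)$. Then $\det(\rho-s\sigma)=-s/2$, so no $s>0$ is exceptional. Yet $[\rho>s\sigma]=\tfrac12\big(\1+(Z-sX)/\sqrt{1+s^2}\big)$ is a different rank-one projection for every $s>0$, so any two of them are incomparable. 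Only the eigenvalues of the pencil $\rho-s\sigma$ are monotone; the eigenvectors rotate. This is why the limit statement plus closedness of $K$ is genuinely needed, rather than eventual equality.

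A minor point: replacing $\sigma$ by $\tfrac12(\rho+\sigma)$ is unnecessary. The statement genuinely needs the dichotomy to be faithful. Without it, take $\rho=\mathrm{diag}(1,0,0)$, $\sigma=\mathrm{diag}(0,1,0)$ and $K=\lin\{\1,\mathrm{diag}(1,0,1)\}$: $K$ is Bayes-sufficient but does not contain $[\rho>t\sigma]$. Faithfulness of the dichotomy already suffices for everything you need:
\begin{itemize}
\item At $t=-1$, $\det(\rho-t\sigma)=\det(\rho+\sigma)\neq0$, so the polynomial is not identically zero.
\item A kernel vector $v$ of $\rho-t\sigma$ with $\langle v,\sigma v\rangle=0$ would also satisfy $\langle v,\rho v\rangle=0$, hence lie in $\ker\rho\cap\ker\sigma=\{0\}$. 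So the Hellmann--Feynman slope is strictly negative even when $\sigma$ itself is not faithful.
\end{itemize}
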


The scalars only appear in \eqref{eq:min-bayes-suff} because we consider operator systems, which are unital by definition.
For the proof of \cref{prop:min-bayes-suff}, we need the following Lemma, essentially taken from \cite{liu_layer_2025}:

\begin{lemma}\label{lem:pencil}
    For any pair of hermitian operators $a,b\in L(\H)$, the map $t\mapsto [a>tb]$ is right continuous, while $t\mapsto [a\ge tb]$ left continuous.
    The two functions have finitely many (at most $\dim(\H)$) points of discontinuity and coincide at all other points
    \begin{equation}
        [a>tb] = [a \ge tb].
    \end{equation}
\end{lemma}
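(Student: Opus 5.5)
The plan is to diagonalize the hermitian pencil $h(t) = a - tb$, $t\in\RR$, analytically, and to read every claim off the signs of the eigenvalue branches. By Rellich's theorem for analytic hermitian families, there exist real-analytic functions $\lambda_1(t),\ldots,\lambda_n(t)$, with $n=\dim\H$, and an analytic orthonormal basis $v_1(t),\ldots,v_n(t)$ such that $h(t) = \sum_k \lambda_k(t)\, \kettbra{v_k(t)}$. Branches may cross, but the spectral projections are still sums of the rank-one projections. With $P_k(t)=\kettbra{v_k(t)}$ the functional calculus therefore gives
\begin{equation*}
    [a>tb] = \sum_{k:\,\lambda_k(t)>0} P_k(t), \qquad [a\ge tb] = \sum_{k:\,\lambda_k(t)\ge 0} P_k(t), \qquad [a=tb] = \sum_{k:\,\lambda_k(t)=0} P_k(t).
\end{equation*}

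\emph{Finiteness of the exceptional set.} Let $Z$ be the set of branches with $\lambda_k\equiv 0$, and put $m=|Z|$. The characteristic polynomial factors as $\det(h(t)-\lambda) = \lambda^m \prod_{k\notin Z}(\lambda_k(t)-\lambda)$. The coefficient of $\lambda^m$ is $\pm\prod_{k\notin Z}\lambda_k(t)$. It is also a polynomial in $t$ of degree at most $n-m$, because the entries of $h(t)$ are affine in $t$, and it is not identically zero. Hence the set $N$ of its real roots has at most $\dim\H$ elements, and every nonvanishing branch can vanish only at points of $N$. Off $N$, each branch $k\notin Z$ has a locally constant nonzero sign, so both projections are locally given by a fixed index set; since the $P_k$ are analytic, both maps are continuous there. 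On $\RR\setminus N$ one has $[a\ge tb]-[a>tb] = [a=tb] = \sum_{k\in Z}P_k(t)$, which is the projection onto $\ker(a-tb)$. This vanishes exactly when $m=0$, i.e.\ when $\det(a-tb)\not\equiv 0$. In that case the two functions coincide on $\RR\setminus N$, which gives the coincidence claim with at most $\dim\H$ exceptional points. If $m>0$ (for example $a=b=0$), the literal identity fails for every $t$. I would then state the precise form: on $\RR\setminus N$, $[a\ge tb] = [a>tb] + [a=tb]$ with $[a=tb]$ analytic in $t$. Note that $m=0$ is automatic in the situation where the lemma is used, namely $b=\sigma$ faithful, or more generally $(a,b)$ with $b$ invertible.

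\emph{One-sided continuity.} At a point $t_0\in N$, the branches with $\lambda_k(t_0)>0$ (respectively $<0$) keep that sign nearby. For right continuity of $t\mapsto[a>tb]$, what has to be shown is that every branch $k\notin Z$ with $\lambda_k(t_0)=0$ is negative on some interval $(t_0,t_0+\eps)$; dually, left continuity of $[a\ge tb]$ needs such branches to be nonnegative just to the left. The main obstacle is this sign information, and I would get it from Hellmann--Feynman: $\lambda_k'(t) = -\langle v_k(t), b\, v_k(t)\rangle$. When $b\ge0$, every branch is nonincreasing, and a nonincreasing, nonconstant analytic function that vanishes at $t_0$ is strictly positive to the left and strictly negative to the right. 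This gives exactly both one-sided continuity statements. For an arbitrary hermitian $b$ this step genuinely breaks: with $a=0$, $b=-\1$ one gets $[a>tb]=[t\1>0]$, which is $0$ at $t=0$ and $\1$ for $t>0$, so it is not right continuous. My plan is therefore to prove the finiteness and coincidence parts for every hermitian pair as above, and to prove the continuity part under the positivity $b\ge0$ that holds in all applications ($b=\sigma$). I would also record that for general $b$, replacing $t\mapsto[a>tb]$ by its right-continuous modification changes it only on the finite set $N$.
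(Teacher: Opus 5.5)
Your proof is correct and takes the same route as the paper's: analytic eigenbranches of $a-tb$, with the behaviour at a zero of a branch controlled by monotonicity. It is more careful in two places.
\begin{itemize}
\item The paper simply asserts that the branches are decreasing. That is exactly your Hellmann--Feynman identity $\lambda_k'=-\langle v_k,b\,v_k\rangle$, so it requires $b\ge 0$.
\item The paper's count of at most $\dim(\H)$ discontinuities, and its claim that $[a>tb]=[a\ge tb]$ away from them, tacitly assume that no branch vanishes identically.
\end{itemize}
Your counterexamples are valid and show the literal statement needs $b\ge0$ and $\det(a-tb)\not\equiv0$. For right continuity, $a=0$, $b=-\1$ fails at $t=0$; if one restricts to $t>0$ as the paper does, $a=b=-\1$ fails at $t=1$. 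For coincidence, $a=b=0$ fails at every $t$.

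These hypotheses hold wherever the paper uses the lemma, namely for faithful dichotomies with $b=\sigma\ge0$. Your remark ties $m=0$ to $\sigma$ being faithful, but that is narrower than needed. Even when only $\rho+\sigma$ is faithful, $\det(\rho-t\sigma)$ at $t=-1$ equals $\det(\rho+\sigma)\neq0$, so $m=0$ is automatic there.
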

\begin{proof}
    As noted in the proof of \cite[Lem.~2.1]{liu_layer_2025}, the spectral decomposition 
    \begin{equation}
        a-tb = \sum_i \lambda_i(t) \kettbra{v_i(t)},\quad t\in \RR,
    \end{equation}
    with analytic eigenpairs $(\lambda_i(t),v_i(t))$ has the property that each eigenvalue function $\lambda_i(t)$ is monotonically decreasing in $t$.
    Let $\bullet$ denote either "$>$" or "$\ge$".
    We have
    \begin{equation}
        [a \bullet tb] = \sum_i \chi_\bullet(\lambda_i(t)) \, \kettbra{v_i(t)},\qquad t>0,
    \end{equation}
    where $\chi_>$ and $\chi_\ge$ are the characteristic function of $(0,\oo)$ and $[0,\oo)$, respectively.
    Since $\chi_>$ is left continuous and the $\lambda_i(t)$ are continuous and monotonically decreasing, $[a>tb]$ is right continuous in $t$. 
    The left continuity of $[a\ge tb]$ follows analogously.
    The points of discontinuity are the $t$ values for which (at least) one of the eigenvalue functions $\lambda_i(t)$ is zero.
    Away from points of discontinuity, the two cases coincide.
\end{proof}

\begin{proof}[Proof of \cref{prop:min-bayes-suff}]
    Clearly, $K$ is Bayes-sufficient if $\proj{\rho>t\sigma}\in K$ for all $t>0$. We show the converse:
    If $K$ is sufficient, \cref{lem:bayes-optimizers} implies that, for each $t>0$, there is an operator $x=x^*\in K$ with $\proj{\rho>t\sigma}\le x\le \proj{\rho \ge t\sigma}$.
    By \cref{lem:pencil}, $\proj{\rho>t\sigma}=[\rho\ge t\sigma]$ for all but finitely many $t>0$.
    Thus, Bayes-sufficiency requires $\proj{\rho>t\sigma}\in K$ for all but finitely many $t>0$, but $\proj{\rho>t\sigma}$ is right continuous, so that $\proj{\rho>t\sigma}\in K$  must hold for all $t>0$.
\end{proof}

In the remainder of this section, we study how the minimal Bayes-sufficient operator system $K_{(\rho,\sigma)}$ relates to sufficient (J)*-algebras. 
It is clear that, for an operator system $K\subset L(\H)$, CPTP-sufficiency implies PTP-sufficiency.
However, it is not immediately clear that PTP-sufficiency also implies Bayes-sufficiency.
We will show the following theorem:

\begin{theorem}\label{thm:K-generates}
    Let $(\rho,\sigma)$ be a faithful dichotomy on a Hilbert space $\H$. 
    Then the minimal Bayes-sufficient operator system fulfills
     \begin{equation}\label{eq:min-suff-inclusions}
         K_{(\rho,\sigma)} \subset J_{(\rho,\sigma)} \subset A_{(\rho,\sigma)}
     \end{equation}
     and 
    \begin{equation}\label{eq:K-generates}
        J_{(\rho,\sigma)} = \Jstaralg(K_{(\rho,\sigma)}), 
        \qquad A_{(\rho,\sigma)} = \staralg(K_{(\rho,\sigma)}).
    \end{equation}
\end{theorem}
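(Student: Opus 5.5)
The plan is in two stages. First I establish the inclusions \eqref{eq:min-suff-inclusions} using the PTP-equivalence with the reduced dichotomy. Then I prove the generation statements \eqref{eq:K-generates}: first for the *-algebra, using Frenkel's formula and Petz's theorem, and then for the J*-algebra, using universal reversibility.

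\emph{Step 1: $K_{(\rho,\sigma)}\subset J_{(\rho,\sigma)}$.} Let $E$ be the trace-preserving conditional expectation onto $J=J_{(\rho,\sigma)}$, and set $\hat\rho=E\rho$, $\hat\sigma=E\sigma$. By \cref{cor:TPCE-ptp-equivalence} we have $(\rho,\sigma)\ptp(\hat\rho,\hat\sigma)$. The Hockey-stick divergences satisfy the data-processing inequality for PTP maps, so $E_t(\hat\rho\|\hat\sigma)=E_t(\rho\|\sigma)$ for all $t$. Since $\hat\rho,\hat\sigma\in J$ and $J$ is closed under functional calculus, $x_t:=[\hat\rho>t\hat\sigma]\in J$. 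For $x\in J$ we have $\tr((\rho-t\sigma)x)=\tr((\hat\rho-t\hat\sigma)x)$, so $x_t$ attains $E_t(\rho\|\sigma)$. Hence $J$ is Bayes-sufficient, and \cref{prop:min-bayes-suff} gives $K_{(\rho,\sigma)}\subset J$. \cref{lem:bayes-optimizers} further gives $[\rho>t\sigma]\le x_t\le[\rho\ge t\sigma]$. By \cref{lem:pencil} and right continuity this yields $[\rho>t\sigma]=[\hat\rho>t\hat\sigma]$ for all $t>0$, i.e.\ $K_{(\rho,\sigma)}=K_{(\hat\rho,\hat\sigma)}$. The inclusion $J\subset A$ is \cref{thm:luczak}.

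\emph{Step 2: $A_{(\rho,\sigma)}=\staralg(K)$.} Let $A_0=\staralg(K_{(\rho,\sigma)})\subset A_{(\rho,\sigma)}$, and let $E_0$ be its trace-preserving conditional expectation, which is CP. Since all Neyman-Pearson tests lie in $A_0$, the same computation as in Step 1 shows $E_t(E_0\rho\|E_0\sigma)=E_t(\rho\|\sigma)$ for all $t\ge0$; by symmetry the same holds with $\rho,\sigma$ exchanged. Frenkel's integral formula then gives $D(E_0\rho\|E_0\sigma)=D(\rho\|\sigma)$ (after replacing $\sigma$ by $\tfrac12(\rho+\sigma)$ if needed for faithfulness, cf.\ \cref{remark:faithfulness-petz}). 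Petz's theorem for CPTP maps now says that $(\rho,\sigma)$ is recoverable from $(E_0\rho,E_0\sigma)$. By \cref{cor:J-sufficient-CE} (the *-algebra version), $A_0$ is sufficient, and minimality forces $A_0=A_{(\rho,\sigma)}$.

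\emph{Step 3: $J_{(\rho,\sigma)}=\Jstaralg(K)$.} Set $J_0=\Jstaralg(K)\subset J$. By Step 2, $J_0$ generates $A=\staralg(J)$. By \cref{prop:MSJA-uni-rev}, $J$ is universally reversible, so \cref{cor:rev-hull} yields $J=\revJstaralg(J_0)$. It therefore remains to show that $J_0$ is itself reversible, i.e.\ closed under symmetrized products of Neyman-Pearson tests.

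This last point is the main obstacle. My first attempt is to show that $J_0$ is sufficient directly, which by minimality gives $J_0=J$. By \cref{cor:d-in-J} this amounts to showing $d_{\hat\rho|\hat\sigma}\in J_0$. To get it, I would pass to the standard representation (\cref{prop:standard-rep}), where $\hat\rho,\hat\sigma$ generate $J$ and $A$ is the universal enveloping *-algebra with canonical involution $\vartheta$. Every element of $K$ is $\vartheta$-fixed, and on each factor (\cref{prop:reps}) a J*-subalgebra that generates $A$ and is $\vartheta$-invariant should already be all of $A^\vartheta$, except possibly for spin-factor-type pieces. Failing that, I would try to express $\hat\rho$ through the spectral data of the pencil $\hat\rho-t\hat\sigma$, in the spirit of the operator layer-cake formula, using only Jordan operations on the projections $[\hat\rho>t\hat\sigma]$.
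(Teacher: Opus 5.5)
Your Steps 1 and 2 are sound. Step 1 is a valid alternative to the paper's route through \cref{lem:bayes-suff}, and Step 2 is essentially the paper's Frenkel-plus-Petz argument. The proof breaks down in Step 3, at two points.

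\textbf{The reduction misuses \cref{cor:rev-hull}.} The corollary needs $J_0$ to generate the \emph{universal} enveloping *-algebra of $J$. Knowing $\staralg(J_0)=\staralg(J)=A_{(\rho,\sigma)}$ in the given representation is weaker, unless $A_{(\rho,\sigma)}$ happens to be universal. For example, if $J_{(\rho,\sigma)}=L(\CC^n)$ in its defining representation, then $L(\CC^n)^t$ is a reversible J*-subalgebra generating the same *-algebra, yet it is not $J$. So in a general representation, even a proof that $J_0$ is reversible would not give $J_0=J$. You must run Step 3 in the standard representation, where $A_{(\hat\rho,\hat\sigma)}$ is universal (\cref{prop:standard-rep}). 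Then transport the conclusion back with the J*-isomorphism $\psi$ of \cref{thm:ptp-inter}, which maps Neyman--Pearson tests to Neyman--Pearson tests (\cref{cor:restriction-iso-on-K}).

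\textbf{Reversibility of $J_0=\Jstaralg(K_{(\rho,\sigma)})$ is never proved.} Your first suggestion, showing $d_{\hat\rho|\hat\sigma}\in J_0$ directly, is exactly what the paper flags as open: it would give an independent proof of Petz's theorem (\cref{remark:layer-cake}). Your second suggestion, that a J*-subalgebra of $A^\vartheta$ generating $A$ is all of $A^\vartheta$ "except for spin-factor pieces", just restates what must be shown, namely that $J_0$ has no irreversible summands.

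The missing idea is short and uses only tools you already have; it is \cref{lem:J-of-K-is-rev} in the paper.
\begin{enumerate}
\item Let $E$ be the trace-preserving conditional expectation onto $J_0$. All Neyman--Pearson tests lie in $J_0$, so your Step 2 computation gives $E_t(E\rho\|E\sigma)=E_t(\rho\|\sigma)$.
\item Your Step 1 argument then applies: $[E\rho>tE\sigma]\in J_0$ is an optimal test for $(\rho,\sigma)$, so \cref{lem:bayes-optimizers,lem:pencil} give $K_{(E\rho,E\sigma)}=K_{(\rho,\sigma)}$.
\item $J_0$ is sufficient for the faithful dichotomy $(E\rho,E\sigma)$, since $E$ preserves it. Together with Step 1 applied to $(E\rho,E\sigma)$ this gives $J_{(E\rho,E\sigma)}\subset J_0=\Jstaralg(K_{(E\rho,E\sigma)})\subset J_{(E\rho,E\sigma)}$.
\item Hence $J_0=J_{(E\rho,E\sigma)}$ is the minimal sufficient J*-algebra of a dichotomy, and so it is reversible by \cref{cor:reversible}.
\end{enumerate}
With this lemma and the standard-representation reduction, \cref{cor:rev-hull} gives $J_0=J$.
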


Combining \cref{prop:min-bayes-suff} and \cref{thm:K-generates}, we immediately obtain the following:

\begin{corollary}
     Let $(\rho,\sigma)$ be a faithful dichotomy on $\H$ and $J\subset L(\H)$ be a J*-algebra (in particular, $J$ could be a *-algebra).
     The following are equivalent:
     \begin{enumerate}[(a)]
         \item $J$ is sufficient;
         \item $J$ is Bayes-sufficient;
         \item $[\rho>t\sigma]\in J$ for all $t>0$.
     \end{enumerate}
\end{corollary}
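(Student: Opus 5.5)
The corollary follows by combining \cref{prop:min-bayes-suff} with \cref{thm:K-generates} and the minimality statement of \cref{thm:luczak}. I would prove the cycle (a) $\Rightarrow$ (c) $\Rightarrow$ (a), and separately (b) $\Leftrightarrow$ (c).

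\emph{(b) $\Leftrightarrow$ (c).} A J*-algebra is in particular an operator system. \Cref{prop:min-bayes-suff} says exactly that an operator system is Bayes-sufficient for $(\rho,\sigma)$ if and only if it contains all Neyman-Pearson tests $\proj{\rho>t\sigma}$, $t>0$. So nothing beyond quoting it is needed here.

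\emph{(a) $\Rightarrow$ (c).} If $J$ is (PTP-)sufficient, then by \cref{thm:luczak}\,\ref{it:luczak1} it contains the minimal PTP-sufficient operator system $J_{(\rho,\sigma)}$. The inclusion \eqref{eq:min-suff-inclusions} of \cref{thm:K-generates} gives $K_{(\rho,\sigma)}\subset J_{(\rho,\sigma)}$. Hence every $\proj{\rho>t\sigma}$ lies in $J$.

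\emph{(c) $\Rightarrow$ (a).} If all Neyman-Pearson tests lie in $J$, then $K_{(\rho,\sigma)}\subset J$, since $\1\in J$ as well. Because $J$ is a J*-algebra, it contains the J*-algebra generated by $K_{(\rho,\sigma)}$. By \eqref{eq:K-generates} this is $J_{(\rho,\sigma)}$. Now take the $(\rho,\sigma)$-preserving conditional expectation $F$ onto $J_{(\rho,\sigma)}$ from \cref{thm:luczak}\,\ref{it:luczak2}. It is a $(\rho,\sigma)$-preserving UP map whose range lies in $J$, so $J$ is PTP-sufficient by \cref{def:sufficiency}.

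\emph{The case of a *-algebra.} For $J$ a *-algebra, PTP- and CPTP-sufficiency coincide by the corollary following \cref{thm:luczak}, so "sufficient" is unambiguous there. Alternatively one can use $\staralg(K_{(\rho,\sigma)})=A_{(\rho,\sigma)}$ and the CPTP conditional expectation onto $A_{(\rho,\sigma)}$ directly.

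The only substantive input is \cref{thm:K-generates}, namely $K_{(\rho,\sigma)}\subset J_{(\rho,\sigma)}$ and $\Jstaralg(K_{(\rho,\sigma)})=J_{(\rho,\sigma)}$. Since that theorem is assumed here, the corollary itself is routine bookkeeping; all the real difficulty sits in proving \cref{thm:K-generates}.
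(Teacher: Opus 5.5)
Your proof is correct and is essentially the paper's argument: the paper states the corollary as an immediate consequence of \cref{prop:min-bayes-suff} and \cref{thm:K-generates}, implicitly using the minimality of $J_{(\rho,\sigma)}$ and its $(\rho,\sigma)$-preserving conditional expectation from \cref{thm:luczak}, both of which you make explicit. You are also right to read \emph{sufficient} for a general J*-algebra as PTP-sufficiency, which is the sense in which the statement holds.
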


In fact, the following corollary shows that the minimal sufficient J*-algebra of any statistical experiment is generated by Neyman-Pearson tests.
\begin{corollary}\label{cor:general-suff-generated-neyman-pearson}
    Let $(\rho_\theta)_{\theta\in \Theta}$ be a faithful statistical experiment, $\mu:\Theta\to[0,1]$ a faithful probability distribution and set $\omega = \sum_\theta \mu(\theta) \rho_\theta$. 
    Then
    \begin{align}
        J_{(\rho_\theta)} = \Jstaralg((\proj{\rho_\theta > t \omega})_{t>0,\theta\in\Theta}).
    \end{align}
\end{corollary}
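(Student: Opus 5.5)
Set $J := J_{(\rho_\theta)}$ and $J_0 := \Jstaralg((\proj{\rho_\theta>t\omega})_{t>0,\theta\in\Theta})$. I will prove the two inclusions $J_0\subset J$ and $J\subset J_0$ separately, reducing each to the dichotomy case of \cref{thm:K-generates}.

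\emph{Inclusion $J_0\subset J$.} Fix $\theta$ and look at the dichotomy $(\rho_\theta,\omega)$. It is faithful, because $\omega$ is faithful by \cref{lem:faithfulness}. Every $(\rho_\theta)$-preserving UP map also preserves $\omega$, since $\omega$ is a convex combination of the $\rho_\theta$. Hence the $(\rho_\theta)$-preserving conditional expectation $F$ onto $J$ from \cref{thm:luczak} is $(\rho_\theta,\omega)$-preserving, so $J$ is PTP-sufficient for $(\rho_\theta,\omega)$. Minimality then gives $J_{(\rho_\theta,\omega)}\subset J$. By \cref{thm:K-generates}, $\proj{\rho_\theta>t\omega}\in K_{(\rho_\theta,\omega)}\subset J_{(\rho_\theta,\omega)}\subset J$ for every $t>0$, and therefore $J_0\subset J$.

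\emph{Inclusion $J\subset J_0$.} It suffices to show that $J_0$ is PTP-sufficient for $(\rho_\theta)$; minimality of $J$ then gives the inclusion. First, the argument above run with $J_0$ in place of $J$ shows that, for each $\theta$, $J_0$ contains all Neyman--Pearson tests of $(\rho_\theta,\omega)$. By \cref{thm:K-generates} (equivalently, by the corollary following it), $J_0$ is therefore sufficient for each dichotomy $(\rho_\theta,\omega)$ separately. The delicate step is to combine these into sufficiency for the whole family. The key is that each dichotomy contains the same faithful reference state $\omega$.

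Since $J_0$ is sufficient for $(\rho_\theta,\omega)$, there is a $(\rho_\theta,\omega)$-preserving conditional expectation onto $J_0$. In particular it is $\omega$-preserving, so by \cref{cor:uniquenes-CE} it is the unique $\omega$-preserving conditional expectation $F_0$ onto $J_0$, and this $F_0$ does not depend on $\theta$. Consequently $F_0^*\rho_\theta=\rho_\theta$ for all $\theta$, so $J_0$ is PTP-sufficient for $(\rho_\theta)$.

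Alternatively, the same step can be phrased with \cref{cor:d-in-J}: $d_{\rho_\theta|\omega}\in J_{(\rho_\theta,\omega)}\subset J_0$ for every $\theta$, so $J_0$ is sufficient for each $(\rho_\theta,\omega)$ through the single map $F_0$.

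I expect the main obstacle to be the existence of $F_0$, that is, that $J_0$ admits an $\omega$-preserving conditional expectation at all. This is not automatic for an arbitrary J*-algebra. Here it follows because $J_0$ is sufficient for $(\rho_{\theta_0},\omega)$ for some fixed $\theta_0$, which produces a $(\rho_{\theta_0},\omega)$-preserving UP map into $J_0$; by the minimality argument of \cref{thm:luczak} applied to that dichotomy, this map restricts to the identity on $J_0$, so it is a conditional expectation onto $J_0$. Uniqueness in \cref{cor:uniquenes-CE} then makes it preserve every $\rho_\theta$, which completes the proof.
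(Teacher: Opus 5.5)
Your inclusion $J_0\subset J_{(\rho_\theta)}$ is correct and matches the paper. So is your reduction of the reverse inclusion to the fact that $J_0\supset \Jstaralg(K_{(\rho_\theta,\omega)})=J_{(\rho_\theta,\omega)}$ is sufficient for each dichotomy $(\rho_\theta,\omega)$.

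The gap is at the step you flagged yourself: the existence of an $\omega$-preserving conditional expectation $F_0$ onto $J_0$. Your justification misapplies \cref{thm:luczak}. A $(\rho_{\theta_0},\omega)$-preserving UP map into a J*-algebra is forced to be the identity on it only when that algebra is the \emph{minimal} sufficient one, $J_{(\rho_{\theta_0},\omega)}$, because only that algebra is the intersection of all fixed-point spaces. But $J_0$ contains $J_{(\rho_\theta,\omega)}$ for every $\theta$ and is in general strictly larger than $J_{(\rho_{\theta_0},\omega)}$. In that case the $(\rho_{\theta_0},\omega)$-preserving conditional expectation onto $J_{(\rho_{\theta_0},\omega)}$ is a $(\rho_{\theta_0},\omega)$-preserving UP map into $J_0$ that is not the identity on $J_0$.

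Nor does sufficiency alone produce a state-preserving conditional expectation. Take $\rho=\sigma$ faithful and non-diagonal. The diagonal matrices are sufficient, via $x\mapsto\tr(\sigma x)\1$, yet the only conditional expectation onto them is the diagonal projection, which does not preserve $\sigma$. For $J_0$ the existence of $F_0$ is true a posteriori, since $J_0=J_{(\rho_\theta)}$, but that is what you are trying to prove. Your alternative via \cref{cor:d-in-J} has the same defect, because that corollary presupposes an $\omega$-preserving conditional expectation onto the algebra.

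The repair, which is the paper's proof, is to make the Petz recovery map, not a state-preserving conditional expectation, the $\theta$-independent object.
Let $E$ be the trace-preserving conditional expectation onto $J_0$. It always exists and is faithful, so $E\omega$ is faithful.
By \cref{cor:J-sufficient-CE}, sufficiency of $J_0$ for $(\rho_\theta,\omega)$ means $E$ is sufficient for $(\rho_\theta,\omega)$ in the sense of \cref{def:sufficient-map}.
\cref{thm:algebraic-recovery} then gives $R_{E,\omega}^*E\rho_\theta=\rho_\theta$, and $R_{E,\omega}$ depends only on $\omega$, not on $\theta$.
A second application of \cref{cor:J-sufficient-CE} gives sufficiency of $J_0$ for the whole family.

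Note that $R_{E,\omega}$ is exactly the right-hand side of \eqref{eq:state-preserving-CE}, so it is the map your $F_0$ would have to be. The point is that it is always a well-defined UP map into $J_0$, and you never need to know in advance that it is idempotent.
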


\begin{proof}
Set $J =  \Jstaralg((\proj{\rho_\theta > t \omega})_{t>0,\theta\in\Theta})$. Since $J_{(\rho_\theta)}$ is sufficient, we have $\proj{\rho_\theta > t \omega}\in J_{(\rho_\theta,\omega)}$ for each fixed $\theta$. Since $J_{(\rho_\theta,\omega)} \subset J_{(\rho_\theta)_\theta}$, we have $J \subset J_{(\rho_\theta)}$. 
It remains to show that $J$ is sufficient. 
Let $E$ be the trace-reserving conditional expectation onto $J$. Since $J$ is sufficient for each faithful dichotomy $(\rho_\theta,\omega)$, by \cref{cor:J-sufficient-CE} there exists a recovery map $S_\theta$ such that $S_\theta^* E^* \rho_\theta = \rho_\theta$ and $S_\theta^* E^* \omega = \omega$. By \cref{thm:algebraic-recovery} we can choose $S_\theta = R_{E,\omega}=:R$ independently of $\theta$.  Thus $R^* E^* \rho_\theta = \rho_\theta$ for all $\theta$ and $J$ is sufficient by \cref{cor:J-sufficient-CE}.
\end{proof}

For the proof of \cref{thm:K-generates}, we need some preparations.
We start with the following:

\begin{lemma}\label{lem:Pt-preservation}
    Let $T:L(\K) \to L(\H)$ be a UP map, let $(\rho,\sigma)$ be a dichotomy on $\H$ and set $\hat\rho = T^*\rho$, $\hat\sigma = T^*\sigma$.
    The following are equivalent:
    \begin{enumerate}[(a)]
        \item\label{it:Pt-preservation1} $E_t(\rho\|\sigma) = E_t(\hat \rho\|\hat\sigma)$ for all $t> 0$,
        \item\label{it:Pt-preservation2}
        $T\proj{\hat\rho>t \hat\sigma} = \proj{\rho>t\sigma}$ for all $t>0$.
    \end{enumerate}
 \end{lemma}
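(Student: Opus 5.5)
The plan is to prove the two implications separately. Both directions rest on the variational formula \eqref{eq:hockey-stick-def} and on the characterization of optimizers in \cref{lem:bayes-optimizers}.

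For \ref{it:Pt-preservation2} $\Rightarrow$ \ref{it:Pt-preservation1}, the argument is a direct computation. Fix $t>0$ and write $\hat p = \proj{\hat\rho>t\hat\sigma}$. Then
\begin{equation*}
E_t(\hat\rho\|\hat\sigma)+(1-t)^+ = \tr\big((\hat\rho-t\hat\sigma)\hat p\big) = \tr\big((\rho-t\sigma)T\hat p\big) = \tr\big((\rho-t\sigma)\proj{\rho>t\sigma}\big) = E_t(\rho\|\sigma)+(1-t)^+ .
\end{equation*}

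For \ref{it:Pt-preservation1} $\Rightarrow$ \ref{it:Pt-preservation2}, fix $t>0$ and set $x = T\proj{\hat\rho>t\hat\sigma}$. Since $T$ is unital and positive, $0\le x\le \1$. The same computation as above, now combined with the assumed equality of Hockey-stick divergences, gives $\tr((\rho-t\sigma)x) = E_t(\rho\|\sigma)+(1-t)^+$. So $x$ is an optimal test for $(\rho,\sigma)$ at parameter $t$. Applying \cref{lem:bayes-optimizers} with prior $p = 1/(1+t)$, we obtain
\begin{equation*}
\proj{\rho>t\sigma}\le T\proj{\hat\rho>t\hat\sigma}\le \proj{\rho\ge t\sigma}.
\end{equation*}
\cref{lem:pencil} says that $\proj{\rho>t\sigma}=\proj{\rho\ge t\sigma}$ for all but finitely many $t$, so at those $t$ the sandwich immediately forces \ref{it:Pt-preservation2}.

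The main obstacle is the finitely many exceptional values of $t$, where the sandwich no longer pins down $x$. Here I would use right-continuity. By \cref{lem:pencil}, $t\mapsto\proj{\hat\rho>t\hat\sigma}$ is right continuous, and composing with the continuous linear map $T$ keeps it right continuous. Likewise $t\mapsto\proj{\rho>t\sigma}$ is right continuous. Two right-continuous functions that agree off a finite set agree everywhere: at an exceptional point $t_0$, approximate from the right by non-exceptional points $t_n\downarrow t_0$ and pass to the limit. This gives $T\proj{\hat\rho>t\hat\sigma}=\proj{\rho>t\sigma}$ for all $t>0$.

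Finally, I will check that \cref{lem:bayes-optimizers} applies without any faithfulness assumption, as the footnote there notes, so the lemma holds for arbitrary dichotomies as stated.
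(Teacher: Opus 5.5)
Your argument is the paper's own proof. For (b)$\Rightarrow$(a) you use the direct trace computation. For (a)$\Rightarrow$(b) you use the sandwich $\proj{\rho>t\sigma}\le T\proj{\hat\rho>t\hat\sigma}\le\proj{\rho\ge t\sigma}$ from \cref{lem:bayes-optimizers}, its collapse off a finite set via \cref{lem:pencil}, and right continuity at the exceptional points, which you justify more explicitly than the paper does.

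Your closing claim that the lemma holds for arbitrary dichotomies is wrong, and you have put the faithfulness issue in the wrong place. \Cref{lem:bayes-optimizers} indeed needs no faithfulness, but the step you take from \cref{lem:pencil} does. The difference $\proj{\rho\ge t\sigma}-\proj{\rho>t\sigma}=\proj{\rho=t\sigma}$ dominates the projection onto $\ker\rho\cap\ker\sigma$ for every $t$, so for a non-faithful $(\rho,\sigma)$ the sandwich never collapses.

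In that case the implication (a)$\Rightarrow$(b) is actually false. Take $\rho=\sigma=\kettbra{0}$ on $\H=\CC^2$, $\K=\CC$ and $T(c)=c\1$. Then $\hat\rho=\hat\sigma=1$ and all hockey-stick divergences vanish on both sides. Yet for $t\in(0,1)$ we get $T\proj{\hat\rho>t\hat\sigma}=\1\neq\kettbra{0}=\proj{\rho>t\sigma}$.

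So the lemma needs $(\rho,\sigma)$ to be faithful on $\H$. The paper covers this by its standing convention in \cref{sec:bayes}, and the hypothesis holds wherever the lemma is used. Under it, $\det(\rho-t\sigma)$ is a nonzero polynomial in $t$, since it is nonzero at $t=-1$. Hence $\proj{\rho=t\sigma}=0$ for all but at most $\dim\H$ values of $t$, and your proof goes through unchanged. No faithfulness of $(\hat\rho,\hat\sigma)$ is needed, because right continuity of $t\mapsto\proj{\hat\rho>t\hat\sigma}$ holds for any $\hat\sigma\ge0$.
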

\begin{proof}
    \ref{it:Pt-preservation2} $\Rightarrow$ \ref{it:Pt-preservation1}:
    Let $t\geq 0$. We have 
    \begin{align}
        E_t(\hat\rho\|\hat\sigma) +(1-t)^+ &= \tr(T^*(\rho-t\sigma) \proj{\hat\rho>t\hat\sigma} ) = \tr((\rho-t\sigma)T\proj{\hat\rho>t\hat\sigma}) \nonumber\\
        &= \tr((\rho-t\sigma)\proj{\rho>t\sigma}) = E_t(\rho\|\sigma)+(1-t)^+ .
    \end{align}
    
    \ref{it:Pt-preservation1} $\Rightarrow$ \ref{it:Pt-preservation2}:
    Clearly $0\le T\proj{\hat\rho>t\hat\sigma}\le \1$. 
    For $t\ge0$, we have
    \begin{align}
        E_t(\rho,\sigma) &\ge \tr((\rho-t\sigma) T\proj{\hat\rho>t\hat\sigma}) -(1-t)^+\nonumber\\
        &= \tr(T^*(\rho-t \sigma) \proj{\hat\rho>t\hat\sigma}) -(1-t)^+
        = E_t(\hat\rho,\hat\sigma) = E_t(\rho,\sigma).
    \end{align}
    Thus, equality holds.
    By \cref{lem:bayes-optimizers}, this implies 
    \begin{equation}
        \proj{\rho>t\sigma}\le T\proj{\hat\rho>t\hat\sigma} \le [\rho\ge t\sigma],\qquad t>0.
    \end{equation}
    By \cref{lem:pencil}, the lower and upper bounds coincide for all but finitely many $t>0$. Thus, $T[\hat \rho>t\hat\sigma]=[\rho>t\sigma]$ holds for all but finitely many $t>0$, so that the right continuity (see \cref{lem:pencil}) implies the equality holds for all $t>0$.
\end{proof}

\begin{corollary}\label{cor:restriction-iso-on-K}
	Let $T:L(\K) \to L(\H)$ be a UP map, and let $(\rho,\sigma)$ be a faithful dichotomy on $\H$ such that $(\hat\rho,\hat\sigma)=(T^*\rho,T^*\sigma)$ is faithful as well. 
    Suppose that $E_t(\rho\|\sigma) = E_t(\hat\rho\| \hat\sigma)$ for all $t> 0$. Then $T$ restricts to a J*-isomorphism $\psi:\Jstaralg(K_{(\hat\rho,\hat\sigma)}) \to \Jstaralg(K_{(\rho,\sigma)})$ with
    \begin{equation}\label{eq:restriction-iso-on-K}
        \psi(\proj{\hat\rho>t\hat\sigma}) =\proj{\rho>t\sigma},\qquad t>0.
    \end{equation}
\end{corollary}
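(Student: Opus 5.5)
By \cref{lem:Pt-preservation}, the hypothesis $E_t(\rho\|\sigma)=E_t(\hat\rho\|\hat\sigma)$ for all $t>0$ gives $T\proj{\hat\rho>t\hat\sigma}=\proj{\rho>t\sigma}$ for every $t>0$. The plan is to feed this into \cref{cor:projections-homo}.

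First, I apply \cref{cor:projections-homo} to the family of projections $p_t=\proj{\hat\rho>t\hat\sigma}$, $t>0$. Each $p_t$ is mapped by $T$ to the projection $\proj{\rho>t\sigma}$. Hence $T$ restricts to a surjective J*-homomorphism
\begin{equation*}
    \psi := T|_{\Jstaralg((p_t)_{t>0})} : \Jstaralg\big((\proj{\hat\rho>t\hat\sigma})_{t>0}\big)\to \Jstaralg\big((\proj{\rho>t\sigma})_{t>0}\big).
\end{equation*}
Since $\1$ lies in both generated algebras, these are exactly $\Jstaralg(K_{(\hat\rho,\hat\sigma)})$ and $\Jstaralg(K_{(\rho,\sigma)})$. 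Equation \eqref{eq:restriction-iso-on-K} then holds by construction, and only injectivity of $\psi$ remains.

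For injectivity, the main obstacle is that a J*-homomorphism can kill an ideal, that is, a central summand. I would handle this using faithfulness of $(\hat\rho,\hat\sigma)$ together with the fact that $T^*$ maps $(\rho,\sigma)$ onto $(\hat\rho,\hat\sigma)$.

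The kernel of $\psi$ is a J*-ideal of $J_0:=\Jstaralg(K_{(\hat\rho,\hat\sigma)})$. In finite dimensions such an ideal is a direct sum of factors, so it is of the form $zJ_0$ for a central projection $z\in Z(J_0)$. Concretely, $z$ is the sum of the units of the factors killed by $\psi$; here I use the direct-sum decomposition into J*-factors from \cref{sec:structure-theory}. Then $z$ is a projection in $J_0$ with $Tz=\psi(z)=0$. It follows that
\begin{equation*}
    \tr(\hat\rho z)=\tr(\rho\,Tz)=0 \qandq \tr(\hat\sigma z)=\tr(\sigma\,Tz)=0.
\end{equation*}
Since $z=z^*z\geq 0$ and $(\hat\rho,\hat\sigma)$ is faithful, this forces $z=0$. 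Hence $\ker\psi=0$, and $\psi$ is a J*-isomorphism.

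To avoid relying on the ideal structure, I would alternatively argue directly. If $\psi(a)=0$ with $a\neq 0$, then, since $\psi$ preserves adjoints, one may take $a$ hermitian. Then $\psi(a^2)=\psi(a)^2=0$, and $a^2$ has a nonzero spectral projection $q\in J_0$ by functional calculus. Because $\psi$ is a homomorphism, $\psi(q)$ is a spectral projection of $\psi(a^2)=0$ corresponding to a nonzero eigenvalue, so $\psi(q)=0$. The positivity and faithfulness argument above then gives $q=0$, a contradiction.
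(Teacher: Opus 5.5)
Your proof is correct and follows the paper's route. \cref{lem:Pt-preservation} puts the Neyman--Pearson tests in the multiplicative domain of $T$, \cref{cor:projections-homo} then gives a surjective J*-homomorphism onto $\Jstaralg(K_{(\rho,\sigma)})$, and injectivity comes from faithfulness of $(\hat\rho,\hat\sigma)$; the paper phrases this last step as faithfulness of $T$, since $T^*$ maps $\tfrac12(\rho+\sigma)$ to the faithful state $\tfrac12(\hat\rho+\hat\sigma)$.

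The ideal-structure detour is unnecessary, and your direct argument can be shortened. For hermitian $a$ with $\psi(a)=0$ you already have $T(a^2)=\psi(a)^2=0$, so $\tr(\hat\rho a^2)=\tr(\hat\sigma a^2)=0$ and hence $a=0$, with no need for spectral projections.
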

\begin{proof}
	From \cref{lem:Pt-preservation} we find that $T\proj{\hat\rho > t \hat\sigma} = \proj{\rho>t \sigma}$. 
	Hence, the Neyman-Pearson tests $\proj{\hat\rho > t\hat\sigma}$ are in the multiplicative domain of $T$. 
	Thus $T$ restricts to a J*-homomorphism $\psi$ from $\Jstaralg(K_{(\hat\rho,\hat\sigma)})$ into $L(\H)$. 
	Its range is $\Jstaralg(K_{(\rho,\sigma)})$ because it maps generators to generators. 
	Since $T^*$ maps the faithful state $\tfrac{1}{2}(\rho+\sigma)$ to the faithful state $\tfrac{1}{2}(\hat\rho+\hat\sigma)$, $T$ is a faithful map. Hence, its restriction $\psi$ is a J*-isomorphism.  
\end{proof}


\begin{lemma}\label{lem:bayes-suff}
    Let $(\rho,\sigma)$ be a faithful dichotomy on $\H$, and let $E$ be a faithful conditional expectation onto a J*-algebra $J \subset L(\H)$. Set $(\hat\rho,\hat\sigma)=(E^*\rho,E^*\sigma)$.
    The following are equivalent:  
    \begin{enumerate}[(a)]
        \item\label{it:bayes-suff1} $E_t(\hat\rho \| \hat\sigma)=E_t(\rho \| \sigma)$ for all $t>0$;
        \item\label{it:bayes-suff2} $\proj{\rho>t\sigma} \in J$ for all $t>0$.
    \end{enumerate}
    If these hold, then 
    \begin{equation}
        \proj{\rho>t\sigma} = \proj{\hat\rho>t \hat\sigma},\qquad t>0.
    \end{equation}
\end{lemma}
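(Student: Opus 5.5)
The plan is to prove both directions, and the identity of the Neyman-Pearson tests, using \cref{lem:Pt-preservation} and \cref{lem:bayes-optimizers} applied to the UP map $E$ itself. Since $E^*$ is the dual of $E$, we have $\hat\rho=E^*\rho$ and $\hat\sigma=E^*\sigma$, so \cref{lem:Pt-preservation} with $T=E$ shows that \ref{it:bayes-suff1} is equivalent to
\begin{equation*}
E\proj{\hat\rho>t\hat\sigma}=\proj{\rho>t\sigma}\qquad\text{for all } t>0 .
\end{equation*}

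\ref{it:bayes-suff1} $\Rightarrow$ \ref{it:bayes-suff2}: the displayed identity already shows $\proj{\rho>t\sigma}\in\mathrm{Ran}(E)=J$.

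\ref{it:bayes-suff2} $\Rightarrow$ \ref{it:bayes-suff1}: if $\proj{\rho>t\sigma}\in J$, then it is fixed by $E$. Hence
\begin{equation*}
\tr\big((\hat\rho-t\hat\sigma)\proj{\rho>t\sigma}\big)=\tr\big((\rho-t\sigma)E\proj{\rho>t\sigma}\big)=E_t(\rho\|\sigma)+(1-t)^+ .
\end{equation*}
This gives $E_t(\hat\rho\|\hat\sigma)\ge E_t(\rho\|\sigma)$. The data-processing inequality for the UP map $E$ gives the reverse inequality, so equality holds.

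It remains to show $\proj{\rho>t\sigma}=\proj{\hat\rho>t\hat\sigma}$ under these conditions. The computation above shows that $\proj{\rho>t\sigma}$ is optimal for the pair $(\hat\rho,\hat\sigma)$. By \cref{lem:bayes-optimizers},
\begin{equation*}
\proj{\hat\rho>t\hat\sigma}\le\proj{\rho>t\sigma}\le\proj{\hat\rho\ge t\hat\sigma}.
\end{equation*}
By \cref{lem:pencil}, the outer projections coincide for all but finitely many $t$. So $\proj{\rho>t\sigma}=\proj{\hat\rho>t\hat\sigma}$ away from a finite set, and right continuity of both sides in $t$ (again \cref{lem:pencil}) extends this to all $t>0$.

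The only delicate point is this squeeze-plus-continuity step. Note that the hypotheses that $E$ and $(\rho,\sigma)$ are faithful are not actually needed for the equivalence; they only guarantee that $(\hat\rho,\hat\sigma)$ is faithful, which may be useful later.
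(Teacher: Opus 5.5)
Your proof is correct. The equivalence of (a) and (b) is argued exactly as in the paper: data processing plus $E\proj{\rho>t\sigma}=\proj{\rho>t\sigma}$ in one direction, and \cref{lem:Pt-preservation} with $T=E$ in the other.

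For the final identity you take a different route. The paper argues algebraically. Since $E$ maps the projection $\proj{\hat\rho>t\hat\sigma}$ to the projection $\proj{\rho>t\sigma}$, the former lies in $\mathrm{MD}(E)$. For a faithful conditional expectation $\mathrm{MD}(E)=J$ (\cref{lem:faithful-CE-multi-domain}), so $\proj{\hat\rho>t\hat\sigma}=E\proj{\hat\rho>t\hat\sigma}=\proj{\rho>t\sigma}$. You instead note that $\proj{\rho>t\sigma}$ is an optimal test for $(\hat\rho,\hat\sigma)$, and then rerun the squeeze-plus-right-continuity argument from the proof of \cref{lem:Pt-preservation}. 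Your route avoids the multiplicative-domain machinery. It uses only that $E$ fixes the Neyman--Pearson tests and that $(\hat\rho,\hat\sigma)$ is faithful, so it works for any UP map with these two properties. The paper's route is immediate once \cref{lem:faithful-CE-multi-domain} is available and needs no continuity argument at this stage.

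Your closing remark, however, is false, and it hides where the hypotheses enter. \Cref{lem:pencil} is stated for arbitrary hermitian $a,b$. But for $a,b\ge0$, the projections $\proj{a>tb}$ and $\proj{a\ge tb}$ agree off a finite set only if $\ker a\cap\ker b=\{0\}$. Indeed, a unit vector $v$ in the common kernel gives $\proj{a\ge tb}-\proj{a>tb}\ge\kettbra{v}$ for every $t$. So your squeeze genuinely uses that $(\hat\rho,\hat\sigma)$ is faithful, which is exactly what faithfulness of $E$ and of $(\rho,\sigma)$ provides. Likewise, (a)$\Rightarrow$(b) via \cref{lem:Pt-preservation} uses that $(\rho,\sigma)$ is faithful.

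Faithfulness of $(\rho,\sigma)$ is needed already for the equivalence, and faithfulness of $E$ for the final identity:
\begin{itemize}
\item Take $\rho=\sigma=\kettbra{0}$ on $\CC^2$ and $E=\tfrac12\tr(\placeholder)\1$. Then (a) holds, but $\proj{\rho>t\sigma}=\kettbra{0}\notin\CC\1$ for $t<1$.
\item Take $\rho=\sigma$ faithful and $E=\tr(\omega\placeholder)\1$ with $\omega$ not faithful. Then (a) and (b) hold, but $\proj{\hat\rho>t\hat\sigma}=\supp\omega\neq\1=\proj{\rho>t\sigma}$ for $t<1$.
\end{itemize}
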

\begin{proof}
    \ref{it:bayes-suff2} $\Rightarrow$ \ref{it:bayes-suff1}:
    By data-processing, we have $E_t(\hat\rho\|\hat\sigma) \le E_t(\rho\|\sigma)$. 
    The converse inequality readily follows from our assumption and the variational formula for the Hockey stick divergence (see \eqref{eq:hockey-stick-def}):
    \begin{align*}
        E_t(\rho\|\sigma) +(1-t)^+ = \tr((\rho-t\sigma) \proj{\rho>t\sigma})
        &= \tr((\rho-t\sigma) E\proj{\rho>t\sigma})\\
        &= \tr((\hat\rho-t\hat\sigma)\proj{\rho>t\sigma}) \le E_t(\hat\rho\| \hat\sigma) +(1-t)^+ .
    \end{align*}

    \ref{it:bayes-suff1} $\Rightarrow$ \ref{it:bayes-suff2}:
    It suffices to show $\proj{\rho>t\sigma}\in \mathrm{Ran}(E)$.
    Indeed, by \cref{lem:Pt-preservation}, we have
    \begin{equation}\label{eq:siudfhpweiu}
        E(\proj{\hat\rho>t \hat\sigma}) = \proj{\rho>t\sigma}, 
        \qquad t>0.
    \end{equation}

    We now check the last claim:
    From \eqref{eq:siudfhpweiu}, we learn that $\proj{\hat\rho> t \hat\sigma}$ is in the multiplicative domain of $E$.
    By \cref{lem:faithful-CE-multi-domain}, the multiplicative domain of $E$ is $J$.
    Thus, using \eqref{eq:siudfhpweiu}, we get 
    $\proj{ \hat\rho > t \hat\sigma} =E(\proj{ \hat\rho > t \hat\sigma}) = \proj{\rho>t\sigma}$.
\end{proof}

\begin{corollary}\label{cor:K-in-J}
    Let $(\rho,\sigma)$ be a faithful dichotomy on $\H$, then
    \begin{equation}
        \proj{\rho>t\sigma} \in J_{(\rho,\sigma)}, \qquad t>0.
    \end{equation}
    In particular, we have
        $K_{(\rho,\sigma)} \subset \Jstaralg(K_{(\rho,\sigma)})\subset J_{(\rho,\sigma)}$.
\end{corollary}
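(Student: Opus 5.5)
The plan is to apply \cref{lem:bayes-suff} with $J = J_{(\rho,\sigma)}$ and $E = F$, the unique $(\rho,\sigma)$-preserving conditional expectation onto $J_{(\rho,\sigma)}$ provided by \cref{thm:luczak}.

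First, $F$ is faithful. Since $(\rho,\sigma)$ is faithful, the state $\omega = \tfrac12(\rho+\sigma)$ is faithful by \cref{lem:faithfulness}, and it satisfies $F^*\omega = \omega$. So if $a \ge 0$ and $Fa = 0$, then $\tr(\omega a) = \tr(\omega Fa) = 0$, which forces $a = 0$. Hence the hypotheses of \cref{lem:bayes-suff} are met.

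Next, put $(\hat\rho,\hat\sigma) = (F^*\rho,F^*\sigma)$. Since $F$ is state-preserving, $(\hat\rho,\hat\sigma) = (\rho,\sigma)$. Condition \ref{it:bayes-suff1} of \cref{lem:bayes-suff}, namely $E_t(\hat\rho\|\hat\sigma) = E_t(\rho\|\sigma)$ for all $t>0$, therefore holds trivially. The lemma then gives condition \ref{it:bayes-suff2}: $\proj{\rho>t\sigma} \in J_{(\rho,\sigma)}$ for every $t>0$.

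Alternatively, one can argue directly through \cref{lem:Pt-preservation} applied to $T = F$. It yields $F\proj{\rho>t\sigma} = \proj{\rho>t\sigma}$, so each Neyman--Pearson test lies in $\mathrm{Fix}(F) = J_{(\rho,\sigma)}$.

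For the final inclusions: $J_{(\rho,\sigma)}$ is an operator system containing $\1$ and all the tests $\proj{\rho>t\sigma}$, so $K_{(\rho,\sigma)} \subset J_{(\rho,\sigma)}$. Since $J_{(\rho,\sigma)}$ is a J*-algebra and $\Jstaralg(K_{(\rho,\sigma)})$ is the smallest J*-algebra containing $K_{(\rho,\sigma)}$, we get $\Jstaralg(K_{(\rho,\sigma)}) \subset J_{(\rho,\sigma)}$. The inclusion $K_{(\rho,\sigma)} \subset \Jstaralg(K_{(\rho,\sigma)})$ holds by definition.

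There is no real obstacle here. The only point needing care is the faithfulness of $F$, which is required to invoke \cref{lem:bayes-suff}; the direct route via \cref{lem:Pt-preservation} does not even need it.
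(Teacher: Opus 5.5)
Your argument is correct and is essentially the paper's proof, which simply applies \cref{lem:bayes-suff} to the $(\rho,\sigma)$-preserving conditional expectation $F$ onto $J_{(\rho,\sigma)}$; condition (a) of that lemma holds trivially because $F^*\rho=\rho$ and $F^*\sigma=\sigma$. You additionally verify that $F$ is faithful and write out the final chain of inclusions, both of which the paper leaves implicit.
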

\begin{proof}
    Apply \cref{lem:bayes-suff} to the $(\rho,\sigma)$-preserving conditional expectation $F$ onto $J_{(\rho,\sigma)}$.
\end{proof}

\begin{lemma}\label{lem:J-of-K-is-rev}
    The J*-algebra $\Jstaralg(K_{(\rho,\sigma)})$ generated by $K_{(\rho,\sigma)}$ is reversible.
\end{lemma}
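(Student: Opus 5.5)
The plan is to show that $J_0 := \Jstaralg(K_{(\rho,\sigma)})$ is generated, as a J*-algebra, by just two hermitian elements. Reversibility then follows from the fact, already used for \cref{cor:reversible}, that a J*-algebra with at most three generators is reversible \cite[Cor.~2.3.8]{hanche-olsen_jordan_1984}. Arguing directly with the Neyman-Pearson tests is awkward: their eigenvectors move continuously with $t$, so $K_{(\rho,\sigma)}$ may be large. Instead, I will replace $(\rho,\sigma)$ by a reduced dichotomy living inside $J_0$ whose Neyman-Pearson tests are the same.

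We may assume $(\rho,\sigma)$ is faithful, as in \cref{thm:K-generates}. Let $E_0 : L(\H)\to J_0$ be the trace-preserving conditional expectation onto $J_0$; it is faithful. Set $\hat\rho = E_0\rho$ and $\hat\sigma = E_0\sigma$. Since $E_0 = E_0^*$, these are $E_0^*\rho$ and $E_0^*\sigma$, and they are states lying in $J_0$. By construction $\proj{\rho>t\sigma}\in J_0$ for all $t>0$, which is condition \ref{it:bayes-suff2} of \cref{lem:bayes-suff} with $J = J_0$ and $E = E_0$. That lemma therefore yields
\begin{equation*}
    \proj{\rho>t\sigma} = \proj{\hat\rho>t\hat\sigma}, \qquad t>0 .
\end{equation*}
Consequently $K_{(\rho,\sigma)} = K_{(\hat\rho,\hat\sigma)}$ and $J_0 = \Jstaralg(K_{(\hat\rho,\hat\sigma)})$.

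It remains to prove $J_0 = \Jstaralg(\hat\rho,\hat\sigma)$. For "$\supset$": $\hat\rho$ and $\hat\sigma$ lie in $J_0$ because they are in the range of $E_0$. For "$\subset$": each $\proj{\hat\rho>t\hat\sigma}$ is a spectral projection of the hermitian element $\hat\rho - t\hat\sigma \in \Jstaralg(\hat\rho,\hat\sigma)$. J*-algebras are closed under functional calculus (\cref{sec:basics}), so these projections, together with $\1$, lie in $\Jstaralg(\hat\rho,\hat\sigma)$. Hence $J_0 = \Jstaralg(\hat\rho,\hat\sigma)$ is generated by two hermitian elements, and the cited result gives reversibility.

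The only point I expect to need care is the hypotheses of \cref{lem:bayes-suff}: the dichotomy must be faithful and $E_0$ must be a faithful conditional expectation. Faithfulness of $E_0$ holds because it is trace-preserving, and faithfulness of $(\rho,\sigma)$ is the standing assumption. Everything else is a direct combination of earlier results.
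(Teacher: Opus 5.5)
Your proof is correct, and it departs from the paper's only after the shared first step. Both arguments apply \cref{lem:bayes-suff} to a faithful conditional expectation onto $\Jstaralg(K_{(\rho,\sigma)})$ and conclude that the reduced dichotomy has the same Neyman-Pearson tests as $(\rho,\sigma)$.

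From there the paper identifies $\Jstaralg(K_{(\rho,\sigma)})$ with the minimal sufficient J*-algebra $J_{(E^*\rho,E^*\sigma)}$. One inclusion holds because $E$ preserves the reduced pair; the other comes from \cref{cor:K-in-J}. The paper then invokes \cref{cor:reversible}, which itself rests on \cref{cor:TPCE-ptp-equivalence}: the minimal sufficient J*-algebra is generated by the two reduced states, hence reversible.

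You instead choose the trace-preserving conditional expectation specifically, so that $\hat\rho,\hat\sigma$ already lie in $J_0$. You then verify $J_0=\Jstaralg(\hat\rho,\hat\sigma)$ directly by functional calculus. This bypasses the sufficiency machinery (\cref{thm:luczak}, \cref{cor:K-in-J}, \cref{cor:TPCE-ptp-equivalence}) and uses only \cref{lem:bayes-suff} plus the two-generator reversibility result, so it is more self-contained.

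The paper's detour does yield extra structural information: $\Jstaralg(K_{(\rho,\sigma)})$ is itself the minimal sufficient J*-algebra of the reduced dichotomy. However, only reversibility is needed later, in Step 3 of the proof of \cref{thm:K-generates}, so your argument is fully sufficient there.
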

\begin{proof}
    Let $E:L(\H)\to \Jstaralg(K_{(\rho,\sigma)})$ be a faithful conditional expectation. 
    Since evidently $\proj{\rho>t\sigma}\in \Jstaralg(K_{(\rho,\sigma)})$ for all $t>0$, \cref{lem:bayes-suff} gives $\proj{E^*\rho > t E^*\sigma} = \proj{\rho>t\sigma}$ for $t > 0$.
    We show reversibility by showing
    \begin{equation}\label{eq:iufhawie}
        \Jstaralg(K_{(\rho,\sigma)}) =\Jstaralg(K_{(E^*\rho,E^*\sigma)}) = J_{(E^*\rho,E^*\sigma)}
    \end{equation}
    and appealing to \cref{cor:reversible}.
    The first equality follows from $\proj{E^*\rho > t E^*\sigma} = \proj{\rho>t\sigma}$.
    To see $\Jstaralg(K_{(\rho,\sigma)})  \supseteq J_{(E^*\rho,E^*\sigma)}$, we only have to check that $\Jstaralg(K_{(\rho,\sigma)})$ is sufficient for $(E^*\rho,E^*\sigma)$, but this is clear since $E$ is a $(E^*\rho,E^*\sigma)$-preserving conditional expectation onto $\Jstaralg(K_{(\rho,\sigma)})$.
    \cref{cor:K-in-J} shows the converse inclusion $\Jstaralg(K_{(E^*\rho,E^*\sigma)}) \subset J_{(E^*\rho,E^*\sigma)}$.
\end{proof}

Another ingredient that we need for the proof of \cref{thm:K-generates} is Frenkel's integral formula, which relates the quantum relative entropy to the Hockey stick divergence \cite{frenkel_integral_2023,hirche_quantum_2024} 
\begin{equation}\label{eq:Frenkel}
    D(\rho\|\sigma) = \int_1^\oo \bigg(\frac1t E_t(\rho\|\sigma) + \frac1{t^2} E_t(\sigma\|\rho)\bigg) \,dt.
\end{equation}
Note that Frenkel's formula implies that the relative entropy satisfies the data-processing inequality for PTP-maps, see also \cite{muller-hermesMonotonicityQuantumRelative2017}.

\begin{proof}[Proof of \cref{thm:K-generates}]
The inclusion $K_{(\rho,\sigma)}\subset J_{(\rho,\sigma)}$ is shown in \cref{cor:K-in-J}, and the inclusion $J_{(\rho,\sigma)} \subset A_{(\rho,\sigma)}$ is shown in \cref{thm:luczak}. Thus, we have already established \eqref{eq:min-suff-inclusions}. It remains to show \cref{eq:K-generates}.
We proceed in three steps

\emph{Step 1.}
    We show that $A=\staralg(K_{(\rho,\sigma)})$ is sufficient.
    Let $E$ be a faithful conditional expectation onto $A$. 
    Then $E$ is completely positive since $A$ is a *-algebra. 
    Since $E\proj{\rho>t\sigma}= \proj{\rho>t\sigma}$ for all $t>0$, \cref{lem:bayes-suff} implies $E_t(E^*\rho\|E^*\sigma) = E_t(\rho\|\sigma)$ for all $t>0$.
    By Frenkel's formula \eqref{eq:Frenkel}, we have $D(E^*\rho\|E^*\sigma)=D(\rho\|\sigma)$.
    Since $E^*$ is CPTP, the equality $D(E^*\rho\|E^*\sigma)=D(\rho\|\sigma)$ and Petz's theorem imply that there is a recovery map \cite{petz_sufficient_1986,jencova_sufficiency_2006}.
    In particular, since the range of $E$ is $A$, $A$ is a sufficient *-algebra.

\emph{Step 2.}
    We show $A_{(\rho,\sigma)} = \staralg(K_{(\rho,\sigma)})$.
    In step 1, we showed that $\staralg(K_{(\rho,\sigma)}$ is sufficient. Thus, we have $A_{(\rho,\sigma)}\subseteq \staralg(K_{(\rho,\sigma)})$. 
    Since we already established $K_{(\rho,\sigma)} \subseteq A_{(\rho,\sigma)}$, the result follows.

\emph{Step 3.}
    We show $\Jstaralg(K_{(\rho,\sigma)})=J_{(\rho,\sigma)}$ under the additional assumption that the dichotomy $(\rho,\sigma)$ on $\H$ is in its own standard representation. Note that $\Jstaralg(K_{(\rho,\sigma)})\subset J_{(\rho,\sigma)}$ by \cref{cor:K-in-J}, and that  $J_{(\rho,\sigma)}$ is universally reversible by \cref{prop:MSJA-uni-rev}. 
    Our additional assumption guarantees that the minimal sufficient *-algebra $A_{(\rho,\sigma)}$ is the universal enveloping *-algebra of $J_{(\rho,\sigma)}$ (cp.\ \cref{prop:standard-rep}).
    By the previous step, $\Jstaralg(K_{(\rho,\sigma)})$ generates the same *-algebra as $J_{(\rho,\sigma)}$.
    As $J_{(\rho,\sigma)}$ is in its universal representation, \cref{cor:rev-hull} asserts that $J_{(\rho,\sigma)}$ is the reversible J*-algebra generated by $\Jstaralg(K_{(\rho,\sigma)})$.
    However, by \cref{lem:J-of-K-is-rev}, the latter is already a reversible J*-algebra.
    Therefore, \cref{cor:rev-hull} implies  $\Jstaralg(K_{(\rho,\sigma)}) = J_{(\rho,\sigma)}$.

\emph{Step 4.} 
    We finish the proof by removing the standard representation assumption in step 3.
    Let $T:L(\hat\H)\to L(\H)$, $S:L(\H)\to L(\hat\H)$ be UP maps interconverting the dichotomy $(\rho,\sigma)$ on $\H$ with its standard representation $(\hat\rho,\hat\sigma)$ on $\hat\H$.
    By \cref{thm:ptp-inter}, $T$ restricts to a J*-isomorphism $\psi:J_{(\hat\rho,\hat\sigma)}\to J_{(\rho,\sigma)}$ whose inverse is the restriction of $S$.
    By \cref{cor:restriction-iso-on-K}, we have $\psi(K_{(\hat\rho,\hat\sigma)}) = K_{(\rho,\sigma)}$.
    By step 3, we have $\Jstaralg(K_{(\hat\rho,\hat\sigma)})=J_{(\hat\rho,\hat\sigma)}$.
    As $\psi$ is a J*-isomorphism, we have
    \begin{equation}
        \Jstaralg(K_{(\rho,\sigma)}) 
        = \Jstaralg(\psi(K_{(\hat\rho,\hat\sigma)}))
        = \psi(\Jstaralg(K_{(\hat\rho,\hat\sigma)}))
        = \psi(J_{(\hat\rho,\hat\sigma)}) 
        = J_{(\rho,\sigma)}.
    \end{equation}
\end{proof}

\section{Divergences and recovery}

\subsection{Relative entropy}
Petz showed that if $T$ is a UCP map (or, more generally, a unital 2-positive map) and $D(\rho\|\sigma) = D(T^*\rho\|T^*\sigma)$, then $R_{T,\sigma}^* T^*\rho = \rho$ and hence $(\rho,\sigma) \cptp (T^*\rho,T^*\sigma)$, where $R_{T,\sigma}$ is the Petz recovery map (see \cref{sec:recovery}) \cite{petz_sufficiency_1988}. 
Thus, the equality case in the data-processing inequality implies that $T$ is sufficient relative to $(\rho,\sigma)$ (cp.~\cref{def:sufficient-map}). 

We now generalize this statement to UP maps instead of UCP maps. 
\begin{theorem}\label{thm:petz-sufficiency}
Let $(\rho,\sigma)$ be a dichotomy on $\H$ with $\rho\ll\sigma$ and $T:L(\hat\H) \to L(\H)$ be a UP map. The following are equivalent:
\begin{enumerate}[(a)]
    \item\label{it:DPI-equality-relent} $D(\rho\|\sigma) = D(T^*\rho\|T^*\sigma)$;
    \item\label{it:DPI-equality-hockey} $E_t(\rho\|\sigma) = E_t(T^*\rho\|T^*\sigma)$ for all $t>0$;
    \item\label{it:DPI-equality-petzmap} 
    $T$ is sufficient for $(\rho,\sigma)$.
\end{enumerate}
\end{theorem}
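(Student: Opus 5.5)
Throughout, write $\hat\rho = T^*\rho$, $\hat\sigma = T^*\sigma$. I would prove the implications \ref{it:DPI-equality-petzmap} $\Rightarrow$ \ref{it:DPI-equality-relent}, \ref{it:DPI-equality-petzmap} $\Rightarrow$ \ref{it:DPI-equality-hockey}, \ref{it:DPI-equality-relent} $\Rightarrow$ \ref{it:DPI-equality-hockey} and \ref{it:DPI-equality-hockey} $\Rightarrow$ \ref{it:DPI-equality-petzmap}.

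\textbf{The two implications out of \ref{it:DPI-equality-petzmap}.} Both follow from the data-processing inequality for PTP maps. For $D$ this is a consequence of Frenkel's formula \eqref{eq:Frenkel}; for $E_t$ it was shown in \cref{sec:bayes}. If $S^*\hat\rho=\rho$ and $S^*\hat\sigma=\sigma$, apply the inequality once along $T^*$ and once along $S^*$ to get equality.

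\textbf{\ref{it:DPI-equality-relent} $\Rightarrow$ \ref{it:DPI-equality-hockey}.} I would use Frenkel's formula for both pairs $(\rho,\sigma)$ and $(\hat\rho,\hat\sigma)$. The integrands satisfy $E_t(\hat\rho\|\hat\sigma)\le E_t(\rho\|\sigma)$ and $E_t(\hat\sigma\|\hat\rho)\le E_t(\sigma\|\rho)$ pointwise, again by data processing. Since $D(\rho\|\sigma)<\infty$ (because $\rho\ll\sigma$), equality of the integrals forces $E_t(\rho\|\sigma)=E_t(\hat\rho\|\hat\sigma)$ for Lebesgue-a.e.\ $t\ge1$. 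Both sides are continuous in $t$ (a trace of a positive part), so equality holds for all $t\ge1$. The same holds for the reversed pair. The identity $E_t(\rho\|\sigma)=tE_{1/t}(\sigma\|\rho)$ then transfers the reversed equalities to $t\in(0,1]$.

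\textbf{\ref{it:DPI-equality-hockey} $\Rightarrow$ \ref{it:DPI-equality-petzmap}: reductions.} This is the main step. First I would reduce to faithful dichotomies. Using $\rho\ll\sigma$, restrict to $\supp\sigma$ as in \cref{lem:inter-to-faithful}; the maps there are CPTP interconversions, so they preserve both $E_t$ and sufficiency. In the same way, compress $\hat\H$ to $\supp\hat\sigma$. After this, $\sigma$ and $\hat\sigma$ are faithful, and so are both dichotomies.

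\textbf{\ref{it:DPI-equality-hockey} $\Rightarrow$ \ref{it:DPI-equality-petzmap}: main argument.} \Cref{cor:restriction-iso-on-K} now says that $T$ restricts to a J*-isomorphism $\psi:\Jstaralg(K_{(\hat\rho,\hat\sigma)})\to\Jstaralg(K_{(\rho,\sigma)})$. By \cref{thm:K-generates}, these J*-algebras are exactly $J_{(\hat\rho,\hat\sigma)}$ and $J_{(\rho,\sigma)}$. So \ref{it:DPI-equality-petzmap} reduces to \ref{it:algebraic-recovery5} of \cref{thm:algebraic-recovery}, and it remains to check that $\psi$ intertwines the states, so that \cref{thm:ptp-inter} applies. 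For $a\in J_{(\hat\rho,\hat\sigma)}$ we have
\[
\tr(\rho\,\psi(a))=\tr(\rho\,T(a))=\tr(\hat\rho\, a),
\]
and likewise for $\sigma$. Hence \cref{thm:ptp-inter}\,\ref{it:ptp-inter1} gives PTP-interconvertibility of $(\rho,\sigma)$ and $(\hat\rho,\hat\sigma)$. By \cref{thm:algebraic-recovery}, the Petz map $R_{T,\sigma}$ can then be taken as the recovery map.

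\textbf{Expected obstacle.} The delicate points are the bookkeeping for the faithfulness reduction and the continuity and a.e.\ argument in Frenkel's formula. The heavy lifting is already contained in \cref{thm:K-generates}, which uses Petz's theorem.
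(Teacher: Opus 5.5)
Your proposal is correct and follows the paper's proof essentially step for step. The paper likewise reduces to faithful $\sigma,\hat\sigma$, gets (c)$\Rightarrow$(a) from the PTP data-processing inequality, and gets (a)$\Rightarrow$(b) from Frenkel's formula via a.e.\ equality, continuity in $t$ and $E_t(\rho\|\sigma)=tE_{1/t}(\sigma\|\rho)$; for (b)$\Rightarrow$(c) it combines \cref{cor:restriction-iso-on-K} with \cref{thm:K-generates} and \cref{thm:algebraic-recovery}. Your explicit check that $T|_{J_{(\hat\rho,\hat\sigma)}}$ intertwines the states only spells out what the paper leaves tacit. The same goes for the fact $\hat\rho\ll\hat\sigma$ needed when compressing to $\supp\hat\sigma$, which follows from $\rho\le c\,\sigma$ and positivity of $T^*$.
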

\begin{proof}
     In the following we write $(\hat\rho,\hat\sigma) = (T^*\rho,T^*\sigma)$. Without loss of generality, we may assume that $\sigma$ and $\hat\sigma$ are  faithful (otherwise we may interconvert with faithful dichotomies). This implies that $T$ is faithful: Suppose $x\geq 0$ and $T(x)=0$, then $x=0$ follows from $0 = \tr(\sigma T(x)) = \tr(\hat\sigma x)$.
     
      \ref{it:DPI-equality-petzmap} $\Rightarrow$ \ref{it:DPI-equality-relent}: This is immediate from the data-processing inequality of the relative entropy for positive maps. 
  
    \ref{it:DPI-equality-relent} $\Rightarrow$ \ref{it:DPI-equality-hockey}: 
    Since the Hockey stick divergences fulfill the data-processing inequality, equality in the data-processing inequality for the relative entropy implies $E_t(\rho\|\sigma) = E_t(\hat\rho \| \hat\sigma)$ and $E_t(\sigma\| \rho) = E_t(\hat\sigma \| \hat\rho)$ for almost all $t\geq 1$ by \cref{eq:Frenkel}. By continuity, we get equality for all $t\geq 1$. Since $E_t(\sigma\|\rho) = t E_{1/t}(\rho\|\sigma)$, we get $E_t(\rho \| \sigma) = E_t(\hat \rho\|\hat\sigma)$ for all $t>0$. 

    \ref{it:DPI-equality-hockey} $\Rightarrow$ \ref{it:DPI-equality-petzmap}: 
	By \cref{thm:K-generates} we have $J_{(\rho,\sigma)} = \Jstaralg(K_{(\rho,\sigma)})$ and $J_{(\hat\rho,\hat\sigma)} = \Jstaralg(K_{(\hat\rho,\hat\sigma)})$. By \cref{cor:restriction-iso-on-K} we find that $T$ restricts to a J*-isomorphism $J_{(\hat\rho,\hat\sigma)} \to J_{(\rho,\sigma)}$.
	\Cref{it:DPI-equality-petzmap} now follows from item \ref{it:algebraic-recovery5} of \cref{thm:algebraic-recovery}.
\end{proof}

\begin{remark}\label{rem:curious-formula}
We note  a, maybe non-obvious, formula for $d_{\rho|\sigma}$ that may be of independent interest. 
Given a faithful state $\sigma$, we define a UP map by
\begin{align}
	\varphi_{\sigma}(a) = \int_0^1 \sigma^{s-\frac12}a\sigma^{\frac12-s}\, ds = \int_0^1 \{\sigma^{s-\frac12},a,\sigma^{\frac12-s}\}\,ds.
\end{align}
Consider the logarithmic derivative
\begin{align}\label{eq:operator-layer-cake}
	L_\sigma(\rho) :=	
    \lim_{t\to 0} \frac{\log(\sigma+t \rho) - \log(\sigma)}{t} = \int_0^\infty  \proj{\rho> t\sigma} \,dt, 
\end{align}
where the right-most equality was shown in \cite{chengErrorExponentsQuantum2025}, and in fact follows from \eqref{eq:Frenkel} \cite{cheng_operator_2025}.
Lieb has shown \cite{liebConvexTraceFunctions1973} that the logarithmic derivative is an invertible linear map, with inverse 
\begin{align}
	L^{-1}_\sigma: a \mapsto \int_0^1 \sigma^s a \sigma^{1-s} \,ds =  \sigma^{\frac12}\varphi_{\sigma}(a)\sigma^{\frac12}.
\end{align}
We thus have
\begin{align}
	d_{\rho|\sigma} &=	\sigma^{-\frac12} \rho \sigma^{-\frac12} \nonumber\\
	&=\sigma^{-\frac12}\big(L^{-1}_\sigma (L_\sigma(\rho))\big)\sigma^{-\frac12}\nonumber\\
	&= \varphi_\sigma(L_\sigma(\rho)) \nonumber\\
	&= \int_0^\infty \varphi_\sigma(\proj{\rho>t\sigma}) \,dt \nonumber\\
	&= \int_0^\infty  \int_0^1 \{\sigma^{s-\frac12},\proj{\rho>t\sigma},\sigma^{\frac12-s}\}\,ds\,dt.\label{eq:d-formula} 
\end{align}
Since $d_{\rho|\sigma} = d_{E\rho|E\sigma}$ for the trace-preserving conditional expectation $E$ onto $J_{(\rho,\sigma)}$, we may exchange $\rho$ and $\sigma$ with $E\rho$ and $E\sigma$ in the formulae above.   
\Cref{eq:d-formula} then shows how to explicitly calculate $d_{\rho,\sigma}$ using only $K_{(\rho,\sigma)}$ and $E\sigma$.
\end{remark}

\begin{remark}\label{remark:layer-cake}
	The proof of \cref{thm:petz-sufficiency} uses \cref{thm:K-generates}, which in turn uses Petz's sufficiency theorem for completely positive maps to establish that $\Jstaralg(K_{(\rho,\sigma)})$ is a sufficient J*-algebra.
    Our result, therefore, does not provide an independent proof of Petz's theorem.
    However, if one could establish that $\Jstaralg(K_{(\rho,\sigma)})$ is sufficient by alternative means, our argument would provide an independent proof of Petz's theorem. 

	One way one might hope to do this is by deriving an integral representation of the form
	\begin{align}
		d_{\rho|\sigma}=\sum_{k} \int_0^\infty \cdots \int_0^\infty f_k(t_1,\ldots,t_k) \{p_{t_1},\ldots,p_{t_k}\}\, dt_1\cdots dt_k,
	\end{align}
	for some $(\rho,\sigma)$-independent functions $f_k:[0,\infty)\to \RR$.
	Here, we we write $p_t = \proj{\rho > t\sigma}$  (recall that $\{a_1,\ldots,a_n\}$ denotes a symmetrized product and not a set). 
	Indeed, since equality of the Hockey stick divergences implies that $T$ restricts to a J*-isomorphism such that $Tp_t = \proj{T^*\rho > t T^*\sigma} =: \hat p_t$, we would get
	\begin{align}
		Td_{\hat\rho|\hat\sigma} &= \sum_{k} \int_0^\infty \cdots \int_0^\infty f_k(t_1,\ldots,t_k) T\{p_{t_1},\ldots,p_{t_k}\}\, dt_1\cdots dt_k\nonumber\\ 
		&=\sum_{k} \int_0^\infty \cdots \int_0^\infty f_k(t_1,\ldots,t_k) \{Tp_{t_1},\ldots,Tp_{t_k}\}\, dt_1\cdots dt_k \nonumber\\
		&= d_{\hat\rho|\hat\sigma}.
	\end{align}
	and the result would follow from \cref{thm:algebraic-recovery}. 
    We note that the above is not possible with $f_k =0$ for all $k>1$. 
    Indeed, this follows from the following argument of Konrad Szyma\'nski: 
    By considering abelian dichotomies one sees that $f_k=0$ for $k>1$ forces $f_1(t)=1$ for all $t>0$. By the operator layer cake formula \eqref{eq:operator-layer-cake}, this implies the false statement that $d_{\rho|\sigma}$ equals the logarithmic derivative.

	Note that our results imply that any operator $x \in J_{(\rho,\sigma)} = \Jstaralg(K_{(\rho,\sigma)})$ has an integral-representation as above with $x$-dependent functions $f_k(x)$. These can be viewed as higher-order \emph{layer-cake representations}, see \cite{liu_layer_2025,chengErrorExponentsQuantum2025,cheng_operator_2025}.   
    In particular, if $E:L(\H) \to J_{(\rho,\sigma)}$ is the trace-preserving conditional expectation, the PTP-equivalent states $E\rho$ and $E\sigma$ have an integral representation as above with state-dependent functions $f_k(\rho), f_k(\sigma)$. 
\end{remark}

\subsection{Sandwiched R\'enyi divergence}
We now prove a Petz-sufficiency statement for the sandwiched R\'enyi divergence, defined as \cite{muller-lennertQuantumRenyiEntropies2013,wildeStrongConverseClassical2014}
\begin{align}
\tilde D_\alpha (\rho\|\sigma)  = \frac{1}{\alpha-1}\log \tr\left[\big(\sigma^\frac{1-\alpha}{2\alpha}\rho\sigma^\frac{1-\alpha}{2\alpha}\big)^\alpha\right],\quad \alpha \in (\tfrac12,\infty)\setminus\{1\}
\end{align}
if $\rho\ll \sigma$ or $\alpha \in (\frac12,1)$ and set to $+\infty$ otherwise. The definition can be extended to $\alpha=1,\infty$ by taking limits. Then $\tilde D_1(\rho\|\sigma) = D(\rho\|\sigma)$ is the relative entropy. 
In the following we set
\begin{align}
    \tilde Q_\alpha(\rho\|\sigma) := \tr\left[\big(\sigma^\frac{1-\alpha}{2\alpha}\rho\sigma^\frac{1-\alpha}{2\alpha}\big)^\alpha\right].
\end{align}
Note that 
\begin{align}
	\norm{d_{\rho|\sigma}}_{\alpha,\sigma}^\alpha 
	= \norm{ \Gamma^{\frac{1}{\alpha}} d_{\rho,\sigma} }_\alpha^\alpha 
	= \tr\big(\sigma^{\frac{1-\alpha}{2\alpha}}\rho \sigma^{\frac{1-\alpha}{2\alpha}}\big)^\alpha = \tilde Q_\alpha(\rho\|\sigma), 
\end{align}
showing that the sandwiched R\'eny divergence is essentially equivalent to the $L^p$-norm of $d_{\rho|\sigma}$ relative to $\sigma$.
The sandwiched R\'enyi divergence fulfills the data-pocessing inequality $\tilde D_\alpha(T^*\rho \| T^*\sigma)\leq \tilde D_\alpha(\rho\|\sigma)$ for any $\alpha \in(\frac12,1)\cup(1,\infty)$ for UCP maps \cite{beigiSandwichedRenyiDivergence2013,frankMonotonicityRelativeRenyi2013} as well as for UP maps \cite{muller-hermesMonotonicityQuantumRelative2017,jencova_renyi_2018,jencova_renyi_2021}.

\begin{theorem}\label{thm:sandwiched-sufficiency} Let $(\rho,\sigma)$ be a dichotomy on $\H$ with $\rho \ll \sigma$. Let $T:L(\hat\H) \to L(\H)$ be a UP map.
The following are equivalent:
\begin{enumerate}[(a)]
    \item $\tilde D_\alpha(T^*\rho,T^*\sigma) = \tilde D_\alpha(\rho,\sigma)$ for some $\alpha \in (\frac12,1)\cup (1,\infty)$.
    \item $T$ is sufficient for $(\rho,\sigma)$.
\end{enumerate}
\end{theorem}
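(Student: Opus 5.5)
The plan is to reduce everything to the $L^p$-picture of \cref{sec:Lp-spaces} and to use the equality case of Hölder's inequality. The end goal is condition \ref{it:algebraic-recovery4} of \cref{thm:algebraic-recovery}, i.e.\ $T(d_{\hat\rho|\hat\sigma}) = d_{\rho|\sigma}$, with $\hat\rho=T^*\rho$ and $\hat\sigma=T^*\sigma$. The direction (b)$\Rightarrow$(a) is immediate from the data-processing inequality for UP maps applied to $T$ and to the recovery map. For (a)$\Rightarrow$(b), fix $\alpha$ with $\tilde D_\alpha(\hat\rho\|\hat\sigma)=\tilde D_\alpha(\rho\|\sigma)$. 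As in the proof of \cref{thm:petz-sufficiency} (see also \cref{remark:faithfulness-petz}), I first reduce to the case where $\sigma$ and $\hat\sigma$ are faithful. I write $R=R_{T,\sigma}$ for the Petz recovery map, $\alpha'$ for the conjugate exponent $\frac1\alpha+\frac1{\alpha'}=1$, and
\[
x=\Gamma_\sigma^{1/\alpha}d_{\rho|\sigma}, \qquad \hat x=\Gamma_{\hat\sigma}^{1/\alpha}d_{\hat\rho|\hat\sigma}.
\]
Then $\tilde Q_\alpha(\rho\|\sigma)=\norm{x}_\alpha^\alpha$ and $\tilde Q_\alpha(\hat\rho\|\hat\sigma)=\norm{\hat x}_\alpha^\alpha$. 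By \eqref{eq:R-of-Delta}, $\hat x=R_\alpha x$, and by the duality relation $T_p^*=(R_{T,\sigma})_q$ we have $R_\alpha^*=T_{\alpha'}$.

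For $\alpha>1$ I will run the standard duality chain:
\begin{align*}
\norm{\hat x}_\alpha^\alpha=\tr\big(\hat x\,\hat x^{\alpha-1}\big)=\tr\big(R_\alpha(x)\,\hat x^{\alpha-1}\big)=\tr\big(x\,T_{\alpha'}(\hat x^{\alpha-1})\big)\le \norm{x}_\alpha\,\norm{T_{\alpha'}(\hat x^{\alpha-1})}_{\alpha'}\le \norm{x}_\alpha\,\norm{\hat x}_\alpha^{\alpha-1}.
\end{align*}
The last step uses that $T_{\alpha'}$ is an $L^{\alpha'}$-contraction, which is exactly the ingredient behind the known PTP data-processing inequality (Jenčová, Müller-Hermes–Reeb). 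Equality of the divergences makes every inequality an equality. Equality in Hölder's inequality in $L^\alpha$ (strictly convex for $1<\alpha<\infty$) then forces
\[
T_{\alpha'}(\hat x^{\alpha-1})=x^{\alpha-1}.
\]
Translating back through $\Gamma$, this says $T(\hat a)=a$, where
\[
a=\Gamma_\sigma^{-1/\alpha'}(x^{\alpha-1}) \qquad\text{and}\qquad \hat a=\Gamma_{\hat\sigma}^{-1/\alpha'}(\hat x^{\alpha-1}).
\]
For $\alpha\in(\frac12,1)$ I would do the same with the reverse Hölder inequality: here $\alpha'<0$ and the inequalities flip. Alternatively, I would pass to the dual variational formula for $\tilde Q_\alpha$ with $\alpha<1$, as in Jenčová's treatment, and read off the same fixed-point identity from its equality case.

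The second stage upgrades $T(\hat a)=a$ to $T(\hat d)=d$. By \cref{cor:d-in-J} we have $\hat d\in J_{(\hat\rho,\hat\sigma)}$, which is universally reversible by \cref{prop:MSJA-uni-rev}. \cref{lem:Lp-properties} (items on powers and triple products) then gives $\hat x^{\alpha-1}\in L^{\alpha'}(J_{(\hat\rho,\hat\sigma)},\hat\sigma)$, i.e.\ $\hat a\in J_{(\hat\rho,\hat\sigma)}$.
\begin{itemize}
\item First I show that $\hat a$ lies in the multiplicative domain of $T$. The plan is to check $\norm{T\hat a}_{2,\sigma}=\norm{\hat a}_{2,\hat\sigma}$ using the same Hölder-equality information, so that \cref{lem:observable-recovery} yields $R(a)=\hat a$.
\item Next, running the symmetric argument with $R$ in place of $T$ shows that the pair $(T,R)$ acts as mutually inverse isometries on the relevant elements.
\item Finally, $\hat d$ can be rebuilt from $\hat a$ and $\hat\sigma$ by functional calculus and triple products, using the inverse of the map $d\mapsto a$. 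Applying \cref{prop:multiplicative_dom} (in particular $T(bcb)=T(b)T(c)T(b)$) then transports this reconstruction to give $T(\hat d)=d$.
\end{itemize}
Once this holds, \cref{thm:algebraic-recovery} gives that $T$ is sufficient, which is (b).

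I expect this second stage to be the main obstacle. Hölder equality only gives that $T$ fixes one specific $\sigma$-weighted power of $d$. Inverting this requires controlling how $T$ interacts with the $\sigma$-dependent weights $\Gamma^{1/p}$, because UP maps only respect Jordan structure, and only on their multiplicative domain. My first attempt is to show that $\hat\sigma$-weighted elements of $J_{(\hat\rho,\hat\sigma)}$ behave well under $T$: compose $T$ with the $(\rho,\sigma)$-preserving conditional expectation and use \cref{lem:Lp-properties} together with the explicit form of invariant states in \cref{prop:CE-univ-rev}. This should make $T_p$ restricted to $L^p(J_{(\hat\rho,\hat\sigma)},\hat\sigma)$ compatible with powers once one element is known to be fixed. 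If that fails, the fallback is to iterate the Hölder-equality argument along all exponents generated from $a$, and show directly that $\hat d$ lies in the fixed-point J*-algebra $\mathrm{Fix}(T\circ R)$ of \cref{cor:fixed-point-recovery}.
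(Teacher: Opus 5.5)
\textbf{Gap: H\"older equality does not reach order $2$.}
Your first stage is fine. The H\"older chain, using $L^{\alpha'}$-contractivity of $T_{\alpha'}$, together with the equality case of H\"older, does give $T_{\alpha'}(\hat x^{\alpha-1})=x^{\alpha-1}$.

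The gap is the next step, where you claim this yields $\norm{T\hat a}_{2,\sigma}=\norm{\hat a}_{2,\hat\sigma}$. It does not.
\begin{itemize}
\item Put $\omega_0=\Gamma_\sigma a=\Gamma_\sigma^{1/\alpha}(x^{\alpha-1})$ and $\hat\omega_0=\Gamma_{\hat\sigma}^{1/\alpha}(\hat x^{\alpha-1})$. Since $R_{T,\sigma}^*=\Gamma_\sigma^{1/\alpha}\circ T_{\alpha'}\circ\Gamma_{\hat\sigma}^{-1/\alpha}$, your identity is literally $R_{T,\sigma}^*\hat\omega_0=\omega_0$.
\item One computes $\tilde Q_{\alpha'}(\omega\|\sigma)=\tr(x^\alpha)/(\tr\omega_0)^{\alpha'}=\tilde Q_{\alpha'}(\hat\omega\|\hat\sigma)$. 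So you have only traded the problem for a DPI-equality problem for the Petz map at the conjugate order $\alpha'$.
\item Applying your H\"older argument to this new problem just returns $R_\alpha x=\hat x$, which is where you started. Your fallback iteration is therefore circular.
\item The quantity you actually need is $\norm{a}_{2,\sigma}^2=\tr\big(\Gamma_\sigma^{1/\alpha-1/2}(x^{\alpha-1})\big)^2$. This trace functional is not controlled by $\tilde Q_\alpha$ unless $\alpha=2$. At $\alpha=2$ you have $a=d_{\rho|\sigma}$ and you are done by \cref{thm:algebraic-recovery}, as in \cref{lem:sandwich-2}.
\end{itemize}

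The missing idea is Jen\v{c}ov\'{a}'s complex-interpolation (three-lines) argument, \cref{lem:jencova1}. It uses that $T_p$ is a contraction for the whole range $1\le p\le\infty$, not just at one exponent. This transfers equality at order $\alpha>1$ to equality of $\tilde D_2$ for $\omega\propto\Gamma_\sigma^{1/2}(x^{\alpha/2})$, where the KMS Hilbert-space structure (\cref{lem:observable-recovery}) becomes available.

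The paper's case $\alpha\in(\frac12,1)$ is reduced, via Jen\v{c}ov\'{a}'s identity $T_\gamma(\hat\mu^{1/\gamma})=\mu^{1/\gamma}$, to an order $\gamma>1$. It therefore hinges on the same step, and reverse H\"older does not bypass it.

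\textbf{The third bullet uses the wrong mechanism.}
$\hat\sigma$ and its powers need not lie in $\mathrm{MD}(T)$, and in general $T(\hat\sigma^{s})\neq\sigma^{s}$: the relation is $T^*\sigma=\hat\sigma$, not $T\hat\sigma=\sigma$. So \cref{prop:multiplicative_dom} cannot carry a $\Gamma_{\hat\sigma}$-weighted reconstruction of $\hat d$ through $T$. Your proposed fix via conditional expectations and \cref{prop:CE-univ-rev} presupposes that $T$ already acts Jordan-compatibly between $J_{(\hat\rho,\hat\sigma)}$ and $J_{(\rho,\sigma)}$, which is essentially the conclusion.

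The paper instead first obtains \emph{full} sufficiency of $T$ for an auxiliary pair, and then reconstructs on the input side. In your notation, suppose you had $R_{T,\sigma}(a)=\hat a$. Then:
\begin{itemize}
\item \eqref{eq:R-of-Delta} gives $T^*\omega_0=\hat\omega_0$ and $T(d_{\hat\omega|\hat\sigma})=d_{\omega|\sigma}$, so $T$ is sufficient for $(\omega,\sigma)$ by \cref{thm:algebraic-recovery}.
\item Then $x^{\alpha-1}\in L^{\alpha'}(J_{(\omega,\sigma)},\sigma)$, hence $x\in L^{\alpha}(J_{(\omega,\sigma)},\sigma)$ by \cref{lem:Lp-properties}.
\item That is, $d_{\rho|\sigma}\in J_{(\omega,\sigma)}\subset\mathrm{Fix}(T\circ R_{T,\sigma})$, which is sufficiency by \cref{cor:fixed-point-recovery} (cf.\ \cref{lem:L1-sufficient-2}).
\end{itemize}
So the only genuinely missing piece is the passage from order $\alpha$ to order $2$, and that requires interpolation, not H\"older equality.
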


To prove \cref{thm:sandwiched-sufficiency}, we will make use of the connection between sandwiched R\'enyi divergences and $L^p$-norms and follow the proof of the corresponding statement for quantum channels in \cite{jencova_renyi_2018,jencova_renyi_2021}. We warn the reader that the convention for $L^p$-spaces in \cite{jencova_renyi_2018,jencova_renyi_2021} is different from ours. 

For the remainder of this subsection, we consider the setting of  \cref{thm:sandwiched-sufficiency}.
We write $(\hat \rho,\hat\sigma) = (T^*\rho,T^*\sigma)$. 
By introducing pre- and postprocessing UCP maps, we can and will assume without loss of generality that $\sigma$ and $\hat \sigma$ are faithful.

\begin{lemma}\label{lem:L1-sufficient}
	Let $J\subset L(\H)$ be a universally reversible J*-algebra admitting a $\sigma$-preserving conditional expectation $F$. Then 
    $J$ is sufficient for $(\rho,\sigma)$ (i.e., $\rho$ is $F$-invariant) if and only if $\rho \in L^1(J,\sigma)$. 
    Moreover, if $\rho$ is faithful, these equivalent condition imply $L^1(J,\sigma) = L^1(J,\rho)$.
\end{lemma}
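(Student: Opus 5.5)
The first equivalence is essentially \cref{cor:d-in-J}, since $\rho\in L^1(J,\sigma)$ iff $d_{\rho|\sigma}=\Gamma_\sigma^{-1}\rho\in J$. To keep the argument self-contained within the $L^p$ language, I would also note that $L^1(J,\sigma)=\mathrm{Fix}(F_1)=\mathrm{Fix}(F^*)$, because $F_1=F^*$ by the KMS-selfadjointness of $F$ (\cref{lem:CE-selfadjoint}). Hence $\rho$ is $F$-invariant iff $\rho\in L^1(J,\sigma)$. Sufficiency of $J$ for $(\rho,\sigma)$ is equivalent to $F^*\rho=\rho$. Indeed, if $J$ is sufficient, there is a $(\rho,\sigma)$-preserving conditional expectation onto $J$; by the uniqueness in \cref{cor:uniquenes-CE} (it preserves the faithful state $\sigma$), it equals $F$. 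The converse is trivial.

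For the second claim, assume $\rho$ is faithful and $F^*\rho=\rho$. Then $F$ is also the $\rho$-preserving conditional expectation onto $J$, and so $L^1(J,\rho)=\mathrm{Fix}(F^*)$ as well, using \cref{lem:CE-selfadjoint} now with respect to $\rho$. Both $L^1$-spaces are therefore the fixed-point space of the same map $F^*$, which gives $L^1(J,\sigma)=L^1(J,\rho)$.

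As a sanity check, in the explicit form of \cref{prop:CE-univ-rev} both states are of the form $\hat\sigma\ox\omega$ (or the twisted direct-sum form) with the same $\omega$, $\lambda$, and $\omega^{(i)}$. So both $L^1$-spaces equal $\{\hat a\ox\omega\}$ (respectively the analogous direct-sum set), and the universal reversibility hypothesis is needed only to access this description.

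The only potential obstacle is making sure the definition $L^1(J,\sigma)=\Gamma_\sigma(J)$ really coincides with $\mathrm{Fix}(F_1)$. This holds because $F_1=\Gamma_\sigma F\Gamma_\sigma^{-1}$ has range $\Gamma_\sigma(J)$ and is idempotent, which is routine.
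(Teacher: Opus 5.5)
Your proof of the main equivalence follows the paper: $L^1(J,\sigma)=\mathrm{Ran}(F_1)=\mathrm{Fix}(F^*)$ via $F_1=F^*$ is exactly what the paper records in its $L^p$-section, and the paper's proof simply cites \cref{cor:d-in-J}.

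For the second claim your route is genuinely different. The paper reads off $L^1(J,\sigma)=L^1(J,\rho)$ from the explicit form of $F$-invariant states in \cref{prop:CE-univ-rev}, and that is where universal reversibility is used. You instead note that once $F^*\rho=\rho$ with $\rho$ faithful, \cref{lem:CE-selfadjoint} applies to both states. This gives $\Gamma_\sigma F\Gamma_\sigma^{-1}=F^*=\Gamma_\rho F\Gamma_\rho^{-1}$, so both $L^1$-spaces equal $\mathrm{Ran}(F^*)$. This is correct and shorter. It shows the lemma holds for any J*-algebra with a faithful state-preserving conditional expectation, so universal reversibility is superfluous here; it is only needed for \cref{lem:Lp-properties} in the later steps. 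What the paper's route buys is the concrete description $\{\hat a\ox\omega\}$ (resp.\ its direct-sum analogue), which is the form used in those $L^p$ computations.

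One step in your side remark is not justified: ``if $J$ is sufficient, there is a $(\rho,\sigma)$-preserving conditional expectation onto $J$.'' Sufficiency only gives a $(\rho,\sigma)$-preserving UP map with range in $J$. An example is the conditional expectation $F_0$ onto $J_{(\rho,\sigma)}$, which may be a proper subalgebra of $J$. Given uniqueness, asserting such a conditional expectation onto all of $J$ is the same as asserting $F^*\rho=\rho$, so as written the argument is circular.

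A quick fix is as follows.
\begin{itemize}
\item Since $J\supset J_{(\rho,\sigma)}$ by \cref{thm:luczak}, you have $FF_0=F_0$.
\item Both $F$ and $F_0$ are self-adjoint for $\ip{\placeholder}{\placeholder}_\sigma$, so taking KMS-adjoints gives $F_0F=F_0$.
\item Hence $F^*\rho=F^*F_0^*\rho=(F_0F)^*\rho=F_0^*\rho=\rho$.
\end{itemize}
Alternatively, apply \cref{thm:algebraic-recovery} to $F$, noting $R_{F,\sigma}=F$. Since the statement's ``i.e.'' already treats this identification as given, the gap touches only your side remark, not the core of the proof.
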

\begin{proof}
    The first part of the statement is shown in \cref{cor:d-in-J}. The only statement left to prove is $L^1(J,\sigma) = L^1(J,\rho)$.
    This follows from the explicit characterization of $F$-invariant states in \cref{prop:CE-univ-rev}. 
\end{proof}

For the following Lemma, recall from \cref{cor:fixed-point-recovery} that a map $T$ is sufficient for a pair of states $(\rho,\sigma)$ with $\sigma$ and $\hat\sigma$ faithful, if and only if $d_{\rho|\sigma}$ is in the fixed-point J*-algebra $\mathrm{Fix}(W)$ of $W=T\circ R_{T,\sigma}$. 

\begin{lemma}\label{lem:L1-sufficient-2}
    Suppose $T$ is sufficient for $(\rho,\sigma)$ and $\mu \in L^1(J_{(\rho,\sigma)},\sigma)$ is a state. Then $T$ is sufficient for $(\mu,\sigma)$. 
\end{lemma}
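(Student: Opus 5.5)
Work with $W = T\circ R_{T,\sigma}$ and its fixed-point J*-algebra $\mathrm{Fix}(W)$, exactly as in \cref{cor:fixed-point-recovery}. As in the standing setting of this subsection, $\sigma$ and $\hat\sigma$ are assumed faithful. By \cref{cor:fixed-point-recovery}, sufficiency of $T$ for a pair $(\mu,\sigma)$ is equivalent to $\mathrm{Fix}(W)$ being sufficient for $(\mu,\sigma)$. Equivalently, by \cref{cor:d-in-J}, it is equivalent to $d_{\mu|\sigma}\in\mathrm{Fix}(W)$. So the goal is to show $d_{\mu|\sigma}\in\mathrm{Fix}(W)$.

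\emph{Step 1.} Since $T$ is sufficient for $(\rho,\sigma)$, \cref{cor:fixed-point-recovery} shows that $\mathrm{Fix}(W)$ is a sufficient J*-algebra for $(\rho,\sigma)$. The J*-algebra $\mathrm{Fix}(W)$ admits a $\sigma$-preserving conditional expectation, namely the Cesaro mean of $W$ from \cref{lem:cesaro-mean}; it is $\sigma$-preserving because $W^*\sigma = R_{T,\sigma}^*T^*\sigma = R_{T,\sigma}^*\hat\sigma=\sigma$. Minimality of $J_{(\rho,\sigma)}$ among PTP-sufficient operator systems (\cref{thm:luczak}) then gives
\begin{equation*}
    J_{(\rho,\sigma)}\subset \mathrm{Fix}(W).
\end{equation*}

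\emph{Step 2.} Let $F$ denote the $(\rho,\sigma)$-preserving, in particular $\sigma$-preserving, conditional expectation onto $J_{(\rho,\sigma)}$. By definition $L^1(J_{(\rho,\sigma)},\sigma)=\Gamma_\sigma(J_{(\rho,\sigma)})$. Hence $\mu\in L^1(J_{(\rho,\sigma)},\sigma)$ means precisely that
\begin{equation*}
    d_{\mu|\sigma}=\sigma^{-\frac12}\mu\sigma^{-\frac12}=\Gamma_\sigma^{-1}(\mu)\in J_{(\rho,\sigma)}.
\end{equation*}
This is the equivalence recorded in \cref{cor:d-in-J}, applied with $J=J_{(\rho,\sigma)}$ and the state $\mu$. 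Alternatively, \cref{lem:L1-sufficient} can be invoked here, using that $J_{(\rho,\sigma)}$ is universally reversible by \cref{prop:MSJA-uni-rev}. Either way, $J_{(\rho,\sigma)}$ is sufficient for $(\mu,\sigma)$.

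\emph{Step 3.} Combining the two steps gives
\begin{equation*}
    d_{\mu|\sigma}\in J_{(\rho,\sigma)}\subset\mathrm{Fix}(W),
\end{equation*}
that is, $Wd_{\mu|\sigma}=d_{\mu|\sigma}$. Applying \cref{cor:fixed-point-recovery} to the state $\mu$ in place of $\rho$ then yields that $T$ is sufficient for $(\mu,\sigma)$.

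If a more hands-on finish is preferred, apply \eqref{eq:dual-of-petz} together with $W d_{\mu|\sigma} = T(d_{T^*\mu|\hat\sigma})$, which follows from \eqref{eq:R-of-Delta}. This gives
\begin{equation*}
    R_{T,\sigma}^*T^*\mu=\sigma^{\frac12}\,T(d_{T^*\mu|\hat\sigma})\,\sigma^{\frac12}=\sigma^{\frac12}\,d_{\mu|\sigma}\,\sigma^{\frac12}=\mu,
\end{equation*}
while $R_{T,\sigma}^*\hat\sigma=\sigma$ holds by construction.

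The only step needing care is Step 1, specifically checking that the hypotheses of \cref{cor:fixed-point-recovery} hold. These are faithfulness of $\sigma$ and $\hat\sigma$ and $\sigma$-invariance of $W$, and they must hold so that the Cesaro-mean conditional expectation onto $\mathrm{Fix}(W)$ preserves both $\rho$ and $\sigma$. Once that is in place, the remaining steps are inclusions of J*-algebras.
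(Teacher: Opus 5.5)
Your proof is correct and follows essentially the paper's route. The paper's one-line proof via \cref{lem:L1-sufficient} implicitly relies on the fixed-point characterization of \cref{cor:fixed-point-recovery}, recalled just before the lemma, and you make explicit the inclusion $d_{\mu|\sigma}\in J_{(\rho,\sigma)}\subset\mathrm{Fix}(W)$ that this needs; your observation that $\mu\in L^1(J_{(\rho,\sigma)},\sigma)$ is by definition equivalent to $d_{\mu|\sigma}\in J_{(\rho,\sigma)}$ (so universal reversibility is not needed here) and your direct finish $R_{T,\sigma}^*T^*\mu=\mu$ are both valid.
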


\begin{proof}
    $J_{(\rho,\sigma)}$ is universally reversible and admits a $\sigma$-preserving conditional expectation. 
    Thus, the statement follows from \cref{lem:L1-sufficient}.
\end{proof}

The case $\alpha=2$ of \cref{thm:sandwiched-sufficiency} has been shown before by Jen\v{c}ov\'{a}\cite[Lem.~8]{jencovaPreservationQuantumRenyi2017}. 
Since it will be essential for proving \cref{thm:sandwiched-sufficiency}, we provide an independent proof using \cref{thm:algebraic-recovery}.

\begin{lemma}\label{lem:sandwich-2}
The following are equivalent:
\begin{enumerate}[(a)]
    \item $\tilde D_2(\hat\rho,\hat\sigma) = \tilde D_2(\rho,\sigma)$.
    \item $T$ is sufficient for $(\rho,\sigma)$.
\end{enumerate}
\end{lemma}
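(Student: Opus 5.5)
The plan is to identify $\tilde Q_2$ with the squared KMS norm of $d_{\rho|\sigma}$ and then use the equality case of the contraction property of UP maps. With $\alpha=2$ we have $\Gamma_\sigma^{1/2}a=\sigma^{1/4}a\sigma^{1/4}$, so
\begin{equation*}
    \tilde Q_2(\rho\|\sigma)=\norm{d_{\rho|\sigma}}_{2,\sigma}^2=\ip{d_{\rho|\sigma}}{d_{\rho|\sigma}}_\sigma .
\end{equation*}
Likewise, $\tilde Q_2(\hat\rho\|\hat\sigma)=\ip{\hat d}{\hat d}_{\hat\sigma}$, where $\hat d=d_{\hat\rho|\hat\sigma}$. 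Since $\tilde D_2=\log\tilde Q_2$, condition (a) is equivalent to $\ip{\hat d}{\hat d}_{\hat\sigma}=\ip dd_\sigma$, where $d=d_{\rho|\sigma}$. As in the surrounding setting, I would assume $\sigma$ and $\hat\sigma$ are faithful.

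(b)$\Rightarrow$(a). Let $S$ be a UP recovery map. Apply data processing for $\tilde D_2$ twice, once to $T$ and once to $S$; this is stated earlier for UP maps. Both inequalities must then be equalities. Alternatively, use \cref{cor:UP-contraction} directly, which gives the result in the norm language.

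(a)$\Rightarrow$(b). The key identity is $R_{T,\sigma}(d)=\hat d$ from \eqref{eq:R-of-Delta}. The map $R=R_{T,\sigma}$ is the KMS-adjoint of $T$, and \cref{cor:UP-contraction} makes $T$ a contraction from $(L(\hat\H),\ip\cdot\cdot_{\hat\sigma})$ to $(L(\H),\ip\cdot\cdot_\sigma)$. Hence $R=T^\dagger$ is also a contraction, and so is $TR=TT^\dagger$. The assumption reads $\norm{R d}_{2,\hat\sigma}=\norm{d}_{2,\sigma}$, so $R$ attains its norm at $d$. The Hilbert-space fact already used in \cref{lem:observable-recovery} (if $\norm{A}\le1$ and $\norm{A\xi}=\norm\xi$, then $A^*A\xi=\xi$) applied to $A=R$, with $A^*=T$, gives $TR d=d$, that is, $T(\hat d)=d$. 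This is condition \ref{it:algebraic-recovery4} of \cref{thm:algebraic-recovery}, so $T$ is sufficient for $(\rho,\sigma)$.

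I do not expect a real obstacle here. The only points to check are the identification of $\tilde Q_2$ with $\norm{d}_{2,\sigma}^2$, which is a direct calculation with $\frac{1-\alpha}{2\alpha}=-\frac14$, and that $R_{T,\sigma}$ is indeed the KMS adjoint of $T$, so that its contractivity follows from that of $T$.
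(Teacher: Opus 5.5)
Your proposal is correct and follows essentially the same route as the paper. Both identify $\tilde Q_2$ with $\norm{d_{\rho|\sigma}}_{2,\sigma}^2$, combine $R_{T,\sigma}(d_{\rho|\sigma})=d_{\hat\rho|\hat\sigma}$ with the KMS-contractivity of $R_{T,\sigma}$ to obtain $T R_{T,\sigma} d_{\rho|\sigma}=d_{\rho|\sigma}$ (the paper cites \cref{lem:observable-recovery}, which rests on the same Hilbert-space fact you inline), and conclude via condition \ref{it:algebraic-recovery4} of \cref{thm:algebraic-recovery}; the only minor differences are that you get contractivity of $R_{T,\sigma}$ as the adjoint of $T$ rather than by applying \cref{cor:UP-contraction} to $R_{T,\sigma}$ directly, and that you spell out (b)$\Rightarrow$(a), which the paper leaves to the data-processing inequality.
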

\begin{proof}
Since $\tilde Q_2(\rho\|\sigma) = \norm{d_{\rho|\sigma}}^2_{2,\sigma}$, 
equality of the sandwiched R\'enyi divergence of order $\alpha=2$ implies
\begin{align}
    \norm{d_{\rho|\sigma}}_{2,\sigma} = \norm{d_{\hat\rho|\hat\sigma}}_{2,\hat\sigma} = \norm{R_{T,\sigma} d_{\rho|\sigma}}_{2,\hat\sigma} \leq \norm{d_{\rho|\sigma}}_{2,R_{T,\sigma}^*\hat\sigma} = \norm{d_{\rho|\sigma}}_{2,\sigma},
\end{align}
where we used \cref{cor:UP-contraction} for the inequality and $R_{T,\sigma}^* T^*\sigma = \sigma$. 
Using that $T$ is the recovery map of $R_{T,\sigma}$, \cref{lem:observable-recovery} then shows
\begin{align}
    T d_{\hat\rho|\hat\sigma} = T R_{T,\sigma} d_{\rho|\sigma} = d_{\rho|\sigma},
\end{align}
which implies $R_{T,\sigma}^*\hat \rho = \hat \rho$ by \cref{thm:algebraic-recovery}.
\end{proof}

We will make use of a second ingredient that was shown by Jen\v{c}ov\'{a}using complex interpolation techniques for non-commutative $L^p$-spaces, see also \cite{beigiSandwichedRenyiDivergence2013,muller-hermesMonotonicityQuantumRelative2017,hiaiazRenyiDivergences2024}.

\begin{lemma}[{\cite[Proof of Thm.~7]{jencovaPreservationQuantumRenyi2017}}]\label{lem:jencova1}
Equality in the DPI,  $\tilde D_\alpha(\rho\|\sigma) = \tilde D_\alpha(\hat\rho\|\hat\sigma)$, for $\alpha>1$ implies 
    \begin{align}
        \tilde D_2(\omega \| \sigma) = \tilde D_2(T^* \omega \| \hat\sigma),
    \end{align}
    where $\omega = \omega_0/\tr(\omega_0)$ with
    \begin{align}\label{eq:def-omega-SW}
            \omega_0
            = \sigma^{\frac14} \Big(\sigma^{\frac{1-\alpha}{2\alpha}}\rho\sigma^{\frac{1-\alpha}{2\alpha}}\Big)^{\frac\alpha2} \sigma^{\frac14} = \Gamma_\sigma^{\frac12} \Big( \big(\Gamma_\sigma^{\frac{1}{\alpha}}d_{\rho|\sigma}\big)^{\frac{\alpha}{2}} \Big).
    \end{align}
\end{lemma}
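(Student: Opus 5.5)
The plan is to recast the equality case as a maximum-modulus statement for an analytic family of operators, following Jen\v{c}ov\'{a}'s interpolation argument, and to work entirely with the KMS-adjoint $R=R_{T,\sigma}$ of $T$. Write $x=\Gamma_\sigma^{1/\alpha}d_{\rho|\sigma}\ge0$, so that $\norm{x}_\alpha^\alpha=\tilde Q_\alpha(\rho\|\sigma)$. Likewise $\hat x=\Gamma_{\hat\sigma}^{1/\alpha}d_{\hat\rho|\hat\sigma}$, so that $\norm{\hat x}_\alpha^\alpha=\tilde Q_\alpha(\hat\rho\|\hat\sigma)$. By \eqref{eq:R-of-Delta} we have $Rd_{\rho|\sigma}=d_{\hat\rho|\hat\sigma}$. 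In the notation of \cref{sec:Lp-spaces} this reads $R_\alpha x=\hat x$, where $R_p=\Gamma_{\hat\sigma}^{1/p}\circ R\circ\Gamma_\sigma^{-1/p}$. The hypothesis $\tilde D_\alpha(\rho\|\sigma)=\tilde D_\alpha(\hat\rho\|\hat\sigma)$ is therefore the statement $\norm{R_\alpha x}_\alpha=\norm{x}_\alpha$.

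Next I would note that $R$ is itself a UP map with $R^*\hat\sigma=\sigma$. Hence $R_\infty=R$ is a contraction in operator norm (Russo--Dye for unital positive maps). Also $R_1=T^*$ up to the identification $T_p^*=(R_{T,\sigma})_q$, which is trace preserving and positive, hence a contraction in trace norm on hermitian elements.

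The key step is the following analytic construction. For $z$ in the strip $S=\{0\le\Re z\le1\}$, set
\begin{equation*}
x(z)=\sigma^{\frac{z}{2}-\frac{1}{2\alpha}}\,\big(\Gamma_\sigma^{-1/\alpha}x\big)\,\sigma^{\frac z2-\frac1{2\alpha}}\cdot(\text{suitably}),\qquad\text{or more simply}\qquad y(z)=\Gamma_\sigma^{-z}\big(x^{\alpha z}\big).
\end{equation*}
I would work with the second choice, $y(z)=\Gamma_\sigma^{-z}(x^{\alpha z})$. Then $\Gamma_\sigma^{z}y(z)=x^{\alpha z}$ lies in $L^{1/\Re z}$ with norm controlled by $\norm{x}_\alpha$. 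At $z=1/\alpha$ one recovers $d_{\rho|\sigma}$, and at $z=1/2$ one gets $\Gamma_\sigma^{-1/2}(x^{\alpha/2})=\Gamma_\sigma^{-1}\omega_0$, which is proportional to $d_{\omega|\sigma}$. Choose a dual analytic family $h(z)=(\hat u\,\hat x^{\alpha/\alpha'})^{\,\cdots}$ built from the norming functional of $\hat x$ in $L^{\alpha'}$; this is the standard H\"older-dual analytic family. Then define
\begin{equation*}
G(z)=\tr\big(h(z)^*\,\Gamma_{\hat\sigma}^{z}R\,y(z)\big).
\end{equation*}
$G$ is bounded and analytic in the interior of $S$ and continuous up to the boundary. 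On $\Re z=0$ and on $\Re z=1$, the contractivity of $R_\infty$ and $R_1$, together with the fact that the imaginary parts only introduce unitaries $\sigma^{is},\hat\sigma^{is}$, give $|G|\le\norm{x}_\alpha^{\alpha}$ after normalisation. The hypothesis gives equality $|G(1/\alpha)|=\norm{x}_\alpha^\alpha$ at the interior point $z=1/\alpha$. Since $\log|G|$ is subharmonic and bounded by a constant, equality at an interior point forces $|G|$ to be constant on the strip. Evaluating at $z=1/2$ and applying H\"older/Cauchy--Schwarz in $L^2$ then yields $\norm{R_2(x^{\alpha/2})}_2\ge\norm{x^{\alpha/2}}_2$. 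The reverse inequality $\le$ is \cref{cor:UP-contraction}, so equality holds.

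Finally I would translate this back. We have $R\,d_{\omega|\sigma}=d_{T^*\omega|\hat\sigma}$ by \eqref{eq:R-of-Delta}, and $\norm{d_{\omega|\sigma}}_{2,\sigma}^2=\tilde Q_2(\omega\|\sigma)$. Dividing by $\tr(\omega_0)$, the $L^2$-norm equality is exactly $\tilde D_2(\omega\|\sigma)=\tilde D_2(T^*\omega\|\hat\sigma)$.

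I expect the main obstacle to be the boundary estimates. One has to choose the dual family $h(z)$ so that $G(1/\alpha)$ equals exactly $\norm{\hat x}_\alpha\norm{x}_\alpha^{\alpha-1}$. One also has to verify that on $\Re z=0$ the operator $R$ is only applied to elements of operator norm at most $1$, and on $\Re z=1$ to elements of trace norm at most $1$. Here positivity of $R$ alone (rather than complete positivity) suffices, because the conjugations by $\sigma^{is}$ are isometries of each Schatten norm and $R_1$, $R_\infty$ are evaluated on operators that are not necessarily hermitian. Non-hermitian inputs will be handled by the bound $\norm{R(a)}\le2\norm a$ combined with the multiplicativity trick: apply the argument to tensor powers, or use that the extremal points are positive. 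Alternatively I would simply cite the interpolation inequality of Jen\v{c}ov\'{a} and Beigi for positive maps, which is already established in the works cited for the data-processing inequality of $\tilde D_\alpha$ under UP maps.
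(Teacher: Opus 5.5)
Your proposal is the same complex-interpolation argument (equality in the three-lines bound for $z\mapsto \Gamma_{\hat\sigma}^{z}R\,\Gamma_\sigma^{-z}(x^{\alpha z})$) as Jen\v{c}ov\'{a}'s proof, which the paper cites rather than reproves, and it goes through once you fix the normalisations: take $h(z)^*=(\hat x/\norm{\hat x}_\alpha)^{\alpha(1-z)}$ and replace $x$ by $x/\norm{x}_\alpha$ in $y(z)$, so that $\abs{G}\le 1$ on both boundary lines and $G(1/\alpha)=\norm{\hat x}_\alpha/\norm{x}_\alpha=1$. The non-hermitian inputs on $\Re z=1$ are not an obstacle, because trace duality and Russo--Dye give $\norm{T^*(a)}_1\le\norm{T}_{\infty\to\infty}\norm{a}_1=\norm{T(\1)}\,\norm{a}_1=\norm{a}_1$ for all $a$; drop the fallbacks you list, since a constant $2$ would destroy the equality case and tensor powers of a merely positive map need not be positive.
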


\begin{proof}[Proof of \cref{thm:sandwiched-sufficiency} for $\alpha>1$]
    By \cref{lem:sandwich-2,lem:jencova1}, $T$ is sufficient for $(\omega,\sigma)$. 
    A direct calculation shows
    \begin{align}
        \rho = \tr(\omega_0) \,\Gamma_\sigma^{1-\frac1\alpha}\big((\Gamma_\sigma^{\frac12} d_{\omega|\sigma} )^{\frac2\alpha}\big). 
    \end{align}
    \cref{lem:Lp-properties} and $d_{\omega|\sigma}\in J_{(\omega,\sigma)}$ show $\rho \in L^1(J_{(\omega,\sigma)},\sigma)$.
    By \cref{lem:L1-sufficient-2} this implies that $T$ is sufficient for $(\rho,\sigma)$. 
\end{proof}

For the case $\frac12<\alpha<1$ we follow proof of Jen\v{c}ov\'{a} for 2-positive maps \cite{jencova_renyi_2021}.
We define 
\begin{equation}
    \mu =\frac{1}{\tilde Q_\alpha(\rho\|\sigma)} \big(\Gamma_\sigma^{1/\alpha} d_{\rho|\sigma}\big)^\alpha,\qquad \hat\mu =\frac{1}{\tilde Q_\alpha(\hat\rho\|\hat\sigma)} \big(\Gamma_{\hat\sigma}^{1/\alpha} d_{\hat\rho|\hat\sigma}\big)^\alpha.
\end{equation}
It follows from \cref{lem:Lp-properties} that $\mu \in L^1(J_{(\rho,\sigma)},\sigma)$.
Hence, $d_{\mu|\sigma} \in J_{(\rho,\sigma)}$.

\begin{lemma}\label{lem:rho-sigma-suff-mu-sigma} $T$ is sufficient for $(\rho,\sigma)$ if and only if $T$ is sufficient for $(\mu,\sigma)$.
\end{lemma}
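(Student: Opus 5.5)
The plan is to prove both directions with the $L^p$-space machinery of \cref{lem:Lp-properties}, together with \cref{lem:L1-sufficient-2}, applied to the minimal sufficient J*-algebras of $(\rho,\sigma)$ and of $(\mu,\sigma)$. Throughout we keep the standing assumption that $\sigma$ and $\hat\sigma$ are faithful. The key observation is that
\begin{equation*}
x := \Gamma_\sigma^{1/\alpha} d_{\rho|\sigma} = \sigma^{\frac{1-\alpha}{2\alpha}}\rho\,\sigma^{\frac{1-\alpha}{2\alpha}}
\end{equation*}
is a positive operator, and that $\mu = \tilde Q_\alpha(\rho\|\sigma)^{-1} x^\alpha$. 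Hence $x = (\tilde Q_\alpha(\rho\|\sigma)\,\mu)^{1/\alpha}$, so $\rho$ and $\mu$ are related by positive powers after a $\Gamma_\sigma$-twist, and this relation can be inverted.

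\emph{($\Rightarrow$).} Assume $T$ is sufficient for $(\rho,\sigma)$. As noted just before the statement, \cref{lem:Lp-properties} (item \ref{it:Lp-powers}) gives $\mu \in L^1(J_{(\rho,\sigma)},\sigma)$. The argument is as follows.
\begin{itemize}
\item By \cref{cor:d-in-J}, $d_{\rho|\sigma}\in J_{(\rho,\sigma)}$, so $x \in L^\alpha(J_{(\rho,\sigma)},\sigma)$.
\item By item \ref{it:Lp-powers} with $p=\alpha$, this gives $x^\alpha\in L^1(J_{(\rho,\sigma)},\sigma)$.
\item The algebra $J_{(\rho,\sigma)}$ is universally reversible by \cref{prop:MSJA-uni-rev}, so \cref{lem:Lp-properties} applies.
\end{itemize}
Since $\mu$ is a state, \cref{lem:L1-sufficient-2} yields that $T$ is sufficient for $(\mu,\sigma)$.

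\emph{($\Leftarrow$).} Assume $T$ is sufficient for $(\mu,\sigma)$. Set $J = J_{(\mu,\sigma)}$. This is universally reversible by \cref{prop:MSJA-uni-rev} and admits a $\sigma$-preserving conditional expectation by \cref{thm:luczak}. The argument is as follows.
\begin{itemize}
\item By \cref{cor:d-in-J}, $d_{\mu|\sigma}\in J$, that is, $\mu = \Gamma_\sigma(d_{\mu|\sigma}) \in L^1(J,\sigma)$.
\item Hence $x^\alpha = \tilde Q_\alpha(\rho\|\sigma)\,\mu \in L^1(J,\sigma)$.
\item By item \ref{it:Lp-powers} of \cref{lem:Lp-properties}, applied to the positive operator $x$ with $p=\alpha$, we get $x\in L^\alpha(J,\sigma)$.
\item Therefore $d_{\rho|\sigma} = \Gamma_\sigma^{-1/\alpha}x\in J$, and so $\rho \in L^1(J,\sigma)$.
\end{itemize}
Now \cref{lem:L1-sufficient-2} applies with $(\mu,\sigma)$ in place of $(\rho,\sigma)$ and $\rho$ in place of $\mu$: $T$ is sufficient for $(\mu,\sigma)$, and $\rho$ is a state in $L^1(J_{(\mu,\sigma)},\sigma)$. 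This gives sufficiency of $T$ for $(\rho,\sigma)$.

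The main point requiring care is the use of item \ref{it:Lp-powers} of \cref{lem:Lp-properties} in the "only if" form $a^p\in L^1\Rightarrow a\in L^p$. I would check this against the explicit block description in the proof of that lemma. There, elements of $L^p(J,\sigma)$ have the form $\hat a\otimes\omega^{1/p}$, or the transposed-doubled analogue. For positive $a$, taking the $p$-th root inside each block commutes with the tensor structure, because $\omega^{1/p}$ commutes with its own powers and $(\hat a^t)^{1/\alpha} = (\hat a^{1/\alpha})^t$. The only other points to check are the positivity of $x$, which holds because $\rho\ge0$, and that $\mu$ is a genuine state.
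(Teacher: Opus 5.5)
Your proof is correct and follows the paper's argument. The forward direction is identical to the paper's. In the converse, you invert $\mu\propto(\Gamma_\sigma^{1/\alpha}d_{\rho|\sigma})^\alpha$ via item~1 of \cref{lem:Lp-properties} to get $\rho\in L^1(J_{(\mu,\sigma)},\sigma)$, just as the paper does; the only cosmetic difference is that you finish by applying \cref{lem:L1-sufficient-2} to the pair $(\mu,\sigma)$, whereas the paper concludes directly from $d_{\rho|\sigma}\in J_{(\mu,\sigma)}\subset\mathrm{Fix}(W)$ and \cref{cor:fixed-point-recovery}, which is exactly what that lemma packages.
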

\begin{proof}
    If $T$ is sufficient for $(\rho,\sigma)$, it follows from \cref{lem:L1-sufficient-2} and the above discussion that it is sufficient for $(\mu,\sigma)$. 
    Conversely, assume that $T$ is sufficient for $(\mu,\sigma)$. By the explicit form of $\mu$, we have
    \begin{align}
        \rho = \tilde Q_\alpha(\rho\|\sigma)^{1/\alpha} \Gamma_\sigma^{1-\frac{1}{\alpha}}\mu^{\frac1\alpha}. 
    \end{align}
     \cref{lem:Lp-properties} now shows that $\rho \in L^1(J_{(\mu,\sigma)},\sigma)$. It follows that $d_{\rho|\sigma} \in J_{(\mu,\sigma)}$. Since $T$ is assumed to be sufficient for $(\mu,\sigma)$, we have $J_{(\mu,\sigma)} \subset \mathrm{Fix}(W)$. Hence $T$ is sufficient for $(\rho,\sigma)$ by \cref{cor:fixed-point-recovery}.
\end{proof}

In the following we set $\gamma = \alpha/(\alpha-1)>1$ and choose $\beta$ such that $\frac{1}{\gamma} + \frac{1}{\beta} = 1$.
Recall from \cref{sec:Lp-spaces} that $T_\gamma$ and $R_{T,\sigma}$ are defined as
\begin{align}\label{eq:petz-Lp-formula}
    T_\gamma =  \Gamma_\sigma^{\frac1\gamma}\circ T \circ \Gamma_{\hat\sigma}^{-\frac1\gamma},\qquad
    R_{T,\sigma}^* = \Gamma_\sigma^{\frac{1}{\beta}} \circ T_\gamma \circ \Gamma_{\hat\sigma}^{-\frac{1}{\beta}}.
\end{align}

\begin{lemma}\label{lem:crucial-sandwiched}
    $T$ is sufficient with respect to $(\rho,\sigma)$ if 
    \begin{align}
       \tilde Q_\alpha(\rho\|\sigma) = \tilde Q_\alpha(\hat\rho\|\hat\sigma),\qandq \mu^{\frac1\gamma} = T_\gamma({\hat\mu}^\frac{1}{\gamma}). 
    \end{align}
\end{lemma}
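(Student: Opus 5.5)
The plan is to pass through an auxiliary state built from $\mu$. We then use the reduction of \cref{lem:rho-sigma-suff-mu-sigma}: it suffices to show that $T$ is sufficient for $(\mu,\sigma)$.

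\emph{Setting up.} Set $y := \Gamma_\sigma^{-\frac1\gamma}(\mu^{\frac1\gamma})$ and $\hat y := \Gamma_{\hat\sigma}^{-\frac1\gamma}(\hat\mu^{\frac1\gamma})$. These are positive operators. By definition of $T_\gamma$, the hypothesis $\mu^{\frac1\gamma}=T_\gamma(\hat\mu^{\frac1\gamma})$ is exactly $T(\hat y)=y$. Next define the positive operators
\begin{equation*}
\nu:=\Gamma_\sigma(y)=\Gamma_\sigma^{1-\frac1\gamma}(\mu^{\frac1\gamma})
\qandq
\hat\nu:=\Gamma_{\hat\sigma}(\hat y),
\end{equation*}
so that $y=d_{\nu|\sigma}$ and $\hat y=d_{\hat\nu|\hat\sigma}$. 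By \eqref{eq:dual-of-petz} we immediately get $R_{T,\sigma}^*\hat\nu=\sigma^{\frac12}T(\hat y)\sigma^{\frac12}=\nu$, and also $R^*_{T,\sigma}\hat\sigma=\sigma$.

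\emph{Main step.} The core is to upgrade this to sufficiency of $T$ for $(\nu,\sigma)$, i.e.\ to show $T^*\nu=\hat\nu$. Once that holds, \cref{thm:algebraic-recovery}\ref{it:algebraic-recovery4} applies: $T(d_{T^*\nu|\hat\sigma})=d_{\nu|\sigma}$. By KMS duality, $T^*\nu=\hat\nu$ is equivalent to $R_{T,\sigma}(y)=\hat y$. To prove this, I would use the equality $\tilde Q_\alpha(\rho\|\sigma)=\tilde Q_\alpha(\hat\rho\|\hat\sigma)$ in a Hölder-duality argument. Up to the normalisation by $\tilde Q_\alpha^{1/\alpha}$, the element $\mu^{\frac1\gamma}$ is the norming functional of $\Gamma_\sigma^{1/\alpha}d_{\rho|\sigma}$ under the pairing \eqref{eq:pq-pairing}, and likewise for the hatted quantities. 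Together with $R_{T,\sigma}(d_{\rho|\sigma})=d_{\hat\rho|\hat\sigma}$ this gives
\begin{equation*}
\ip{\hat y}{d_{\hat\rho|\hat\sigma}}_{\hat\sigma}=\ip{\hat y}{R_{T,\sigma}d_{\rho|\sigma}}_{\hat\sigma}=\ip{T\hat y}{d_{\rho|\sigma}}_\sigma=\ip{y}{d_{\rho|\sigma}}_\sigma .
\end{equation*}
Combining this with the contractivity of $(R_{T,\sigma})_\beta=T_\gamma^*$ on the relevant $L^p$-spaces, and with uniqueness of norming functionals (strict convexity of the Schatten norms), should force $(R_{T,\sigma})(y)=\hat y$.

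\emph{Main obstacle.} I expect this step to be the hardest. The exponents $\gamma$, $\beta$ must be handled carefully for $\alpha\in(\frac12,1)$, where the duality runs with quasi-norms. I would first try to mimic Jenčová's argument for $2$-positive maps verbatim, checking that it only uses positivity of $T$ and the KMS/$L^p$ contraction property (\cref{cor:UP-contraction} and its interpolated versions).

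\emph{Returning to $\rho$.} Given sufficiency for $(\nu,\sigma)$, we go back via \cref{lem:Lp-properties}. Since $\Gamma_\sigma^{-1/\alpha}$-rescaled, we have $\mu^{\frac1\gamma}=\Gamma_\sigma^{-(1-\frac1\gamma)}\nu$, so $\mu^{\frac1\gamma}\in L^\gamma(J_{(\nu,\sigma)},\sigma)$. Item \ref{it:Lp-powers} of \cref{lem:Lp-properties} then gives $\mu\in L^1(J_{(\nu,\sigma)},\sigma)$. This uses that $J_{(\nu,\sigma)}$ is universally reversible by \cref{prop:MSJA-uni-rev}, and that $\nu\in L^1(J_{(\nu,\sigma)},\sigma)$ by \cref{cor:d-in-J}. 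Hence \cref{lem:L1-sufficient-2} shows that $T$ is sufficient for $(\mu,\sigma)$, and \cref{lem:rho-sigma-suff-mu-sigma} yields sufficiency for $(\rho,\sigma)$.
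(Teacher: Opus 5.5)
\textbf{There is a gap in your main step.} Your framing matches the paper's. Your $\nu=\Gamma_\sigma^{1-\frac1\gamma}(\mu^{\frac1\gamma})$ is the paper's $\omega_0=\Gamma_\sigma^{\frac1\beta}\mu^{\frac1\gamma}$. You correctly derive $R_{T,\sigma}^*\hat\nu=\nu$ and $R_{T,\sigma}^*\hat\sigma=\sigma$ from the hypothesis. Your return to $\rho$ via \cref{lem:Lp-properties}, \cref{lem:L1-sufficient-2} and \cref{lem:rho-sigma-suff-mu-sigma} is the paper's final step.

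The problem is that you only say $T^*\nu=\hat\nu$ \emph{should} follow, and the mechanism you sketch does not give it:
\begin{itemize}
\item The identity $\ip{\hat y}{d_{\hat\rho|\hat\sigma}}_{\hat\sigma}=\ip{y}{d_{\rho|\sigma}}_\sigma$ is just $\tr(\hat\rho\,\hat y)=\tr(\rho\,T\hat y)$ and carries no information.
\item Apply uniqueness of norming functionals to the contraction $T_\gamma$, using $T_\gamma(\hat\mu^{\frac1\gamma})=\mu^{\frac1\gamma}$ and $\norm{\mu^{\frac1\gamma}}_\gamma=\norm{\hat\mu^{\frac1\gamma}}_\gamma=1$. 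This at best gives $(R_{T,\sigma})_\beta(\mu^{\frac1\beta})=\hat\mu^{\frac1\beta}$, i.e.\ recovery of the dual element at level $\beta$.
\item What you need, $R_{T,\sigma}(y)=\hat y$, is the different statement $(R_{T,\sigma})_\gamma(\mu^{\frac1\gamma})=\hat\mu^{\frac1\gamma}$. Norm preservation under $T$ translates into $R_{T,\sigma}T\hat y=\hat y$ only for the KMS norm, i.e.\ exponent $2$ (\cref{lem:observable-recovery}). You only know norm preservation at exponent $\gamma$.
\end{itemize}

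Also, the lemma needs $\gamma>1$. The formula $\gamma=\alpha/(\alpha-1)$ stated before the lemma is negative for $\alpha<1$, so $\gamma>1$ must mean $\gamma=\alpha/(1-\alpha)$. Then $\gamma,\beta\in(1,\infty)$, so no quasi-norms arise at these levels. With this $\gamma$, $\mu^{\frac1\gamma}\propto(\Gamma_\sigma^{1/\alpha}d_{\rho|\sigma})^{1-\alpha}$. That is the inverse of the element $(\Gamma_\sigma^{1/\alpha}d_{\rho|\sigma})^{\alpha-1}$ which pairs with $\Gamma_\sigma^{1/\alpha}d_{\rho|\sigma}$ to give $\tilde Q_\alpha$, so it is not that element's norming functional.

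\textbf{The missing idea.} No duality argument is needed. Nor is the hypothesis on $\tilde Q_\alpha$, since $\tr\mu=\tr\hat\mu=1$ by normalisation.
\begin{itemize}
\item Put $\omega=\nu/\tr\nu$ and $\hat\omega=\hat\nu/\tr\hat\nu$. Since $d_{\nu|\sigma}=y$, you get $\Gamma_\sigma^{\frac1\gamma}d_{\omega|\sigma}=\mu^{\frac1\gamma}/\tr\nu$. Hence $\tilde Q_\gamma(\omega\|\sigma)=(\tr\nu)^{-\gamma}$, and likewise $\tilde Q_\gamma(\hat\omega\|\hat\sigma)=(\tr\hat\nu)^{-\gamma}$.
\item $R_{T,\sigma}^*$ is trace-preserving and maps $\hat\nu\mapsto\nu$, so these two values coincide. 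Thus the UP map $R_{T,\sigma}$ saturates the data-processing inequality for $\tilde D_\gamma$ on $(\hat\omega,\hat\sigma)$, with $\gamma>1$.
\item The already established $\alpha>1$ case of \cref{thm:sandwiched-sufficiency}, applied to $R_{T,\sigma}$, shows that $R_{T,\sigma}$ is sufficient for $(\hat\omega,\hat\sigma)$.
\item The Petz recovery map of $R_{T,\sigma}$ relative to $\hat\sigma$ is $T$. So \cref{thm:algebraic-recovery} gives $T^*\omega=\hat\omega$, which is exactly your main step.
\end{itemize}
With that in place, the rest of your argument goes through as written.
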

\begin{proof} 
    Set $\omega_0  = \Gamma_\sigma^{\frac{1}{\beta}} \mu^{\frac1\gamma}$ and $\hat\omega_0 = \Gamma_{\hat\sigma}^{\frac{1}{\beta}} {\hat\mu}^\frac{1}{\gamma}$. 
    Then \cref{eq:petz-Lp-formula} and the assumption $\mu^{\frac1\gamma} = T_\gamma({\hat\mu}^\frac{1}{\gamma})$ show
    \begin{align}
        R_{T,\sigma}^*(\hat \omega_0) = \Gamma_\sigma^{\frac{1}{\beta}}\circ T_\gamma({\hat\mu}^{\frac1\gamma}) = \omega_0.
    \end{align}
    Since $R^*_{T,\sigma}$ is trace-preserving we have $\tr(\hat\omega_0) = \tr(\omega_0)$. We also have $R_{T,\sigma}^*(\hat{\sigma}) = \sigma$. Set $\hat\omega =\hat\omega/\tr\hat\omega_0$ and $\omega = \omega/\tr\omega_0$. Then $R^*_{T,\sigma} \hat\omega = \omega$.  Moreover
    \begin{align}
    \tilde Q_\gamma(\hat\omega\|\hat\sigma) = \tr\big((\Gamma_{\hat\sigma}^{\frac1\gamma} d_{\hat\omega|\hat\sigma})^\gamma\big) = \frac{\tr \mu}{(\tr\hat\omega_0)^\gamma} = \frac{1}{(\tr \hat\omega_0)^\gamma}
    \end{align}
    and similarly
    \begin{align}
    \tilde Q_\gamma(\omega\|\sigma) = \frac{1}{(\tr\omega_0)^\gamma} =\frac{1}{(\tr \hat\omega_0)^\gamma} =  \tilde Q_\gamma(\hat\omega\|\hat\sigma).
    \end{align}
    It hence follows from the $\alpha>1$ case of \cref{thm:sandwiched-sufficiency} that $T^* \omega = \hat\omega$ and $T^* \sigma = \hat\sigma$. Hence $T$ is sufficient for $(\omega,\sigma)$. 
    As in the proof of \cref{lem:rho-sigma-suff-mu-sigma}, it follows that $T$ is sufficient for $(\mu,\sigma)$ and hence sufficient for $(\rho,\sigma)$. 
\end{proof}

\begin{proof}[Proof of \cref{thm:sandwiched-sufficiency} for $\alpha \in (\frac12,1)$] 
Recall that we assume without loss of generality that $\sigma$ is faithful. First, consider the case where $\rho$ is also faithful. 
In \cite[Proof of Thm.~5.1 for $\psi\sim\varphi$]{jencova_renyi_2021}, Jen\v{c}ov\'{a}shows for general positive maps that if $\rho$ and $\sigma$ have full rank, then equality in the data-processing inequality implies $T_\gamma({\hat\mu}^{\frac1\gamma}) = \mu^{\frac1\gamma}$. 
Hence, the assumptions of \cref{lem:crucial-sandwiched} are fulfilled, and we conclude that $T$ is sufficient for $(\rho,\sigma)$. 

In the case that $\rho$ does not have full rank, we can use the same continuity arguments as in \cite[Proof of Thm.~5.1]{jencova_renyi_2021} to reduce the claim to the full rank case. 
\end{proof}

\subsection{\texorpdfstring{$\alpha$-$z$}{alpha-z}-R\'enyi divergence}
In this section, we discuss sufficiency statements for the $\alpha$-$z$-R\'enyi divergence \cite{audenaertAzRenyiRelativeEntropies2015}, building upon the work by Hiai and Jen\v{c}ov\'{a} \cite{hiaiazRenyiDivergences2024}. 
The $\alpha$-$z$-R\'enyi divergence is the two-parameter divergence defined as 
\begin{align}
    D_{\alpha,z}(\rho\|\sigma) = \frac{1}{\alpha-1}\log Q_{\alpha,z}(\rho\|\sigma),\quad Q_{\alpha,z}(\rho\|\sigma) = \tr\big(\sigma^{\frac{1-\alpha}{2z}}\rho^{\frac{\alpha}{z}} \sigma^{\frac{1-\alpha}{2z}} \big)^z,\quad 0<\alpha\neq 1, z>0. 
\end{align}
It reduces to the sandwiched R\'enyi divergence for $\alpha=z$ and the Petz R\'enyi divergence for $z=1$. 
The $\alpha$-$z$ R\'enyi divergence fulfills the data-processing inequality for PTP maps if and only if it fulfills the data-processing inequality for CPTP maps if and only if \cite{audenaertAzRenyiRelativeEntropies2015,carlenInequalitiesQuantumDivergences2018,zhangWignerYanaseDysonConjectureCarlenFrankLieb2020,hiaiQuantumFDivergencesNeumann2021,katoAzRenyiDivergenceNeumann2024,hiaiazRenyiDivergences2024}
\begin{align}\label{eq:alpha-z-range}
    0<\alpha<1,\quad \max\{\alpha,1-\alpha\}\leq z,\quad\text{or}\quad \alpha>1,\quad \max\{\alpha/2,\alpha-1\}\leq z \leq \alpha. 
\end{align}
The following is the exact analog of \cite[Thm.~4.5]{hiaiazRenyiDivergences2024} for positive instead of 2-positive maps.
\begin{theorem}\label{thm:alpha-z-smaller-1}
Let $0<\alpha<1$ and $\max\{\alpha,1-\alpha\} \leq z$. Let $\rho,\sigma$ be states on $\H$ and $T:L(\hat \H)\to L(\H)$ a unital, positive map. 
Assume that either $\alpha <z$ and $\sigma \ll \rho$ or $1-\alpha<z$ and $\rho \ll \sigma$. 
Then $T$ is sufficient with respect to $(\rho,\sigma)$ if and only if $D_{\alpha,z}(\rho\|\sigma) = D_{\alpha,z}(T^*\rho\|T^*\sigma)$.
\end{theorem}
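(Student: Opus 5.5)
The plan is to follow the strategy already used for the sandwiched R\'enyi divergence with $\frac12<\alpha<1$. We use the analytic part of Hiai--Jen\v{c}ov\'{a} \cite[Thm.~4.5]{hiaiazRenyiDivergences2024}, which only needs positivity, to extract an operator identity from equality in the DPI. The algebraic part, which in \cite{hiaiazRenyiDivergences2024} uses 2-positivity, is replaced by the $L^p$ machinery for universally reversible J*-algebras (\cref{lem:Lp-properties}, \cref{lem:L1-sufficient-2}) together with \cref{cor:fixed-point-recovery}.

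The direction ``sufficient $\Rightarrow$ equality'' is immediate. If $S^*T^*$ recovers $(\rho,\sigma)$, the DPI for PTP maps applied to $T^*$ and then to $S^*$ forces equality. By symmetry of the hypotheses, $D_{\alpha,z}(\rho\|\sigma)$ with $(\alpha,z)$ relates to $D_{1-\alpha,z}(\sigma\|\rho)$, and sufficiency of $T$ is symmetric in the pair. Hence the case $\alpha<z$, $\sigma\ll\rho$ reduces to the case $1-\alpha<z$, $\rho\ll\sigma$. After pre- and postprocessing with UCP maps as in \cref{lem:inter-to-faithful}, we may assume $\sigma$ and $\hat\sigma=T^*\sigma$ are faithful. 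First treat $\rho$ faithful, and recover the general case by the continuity argument of \cite{hiaiazRenyiDivergences2024,jencova_renyi_2021}.

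For the converse, introduce the analog of $\mu$:
\begin{align*}
\mu \propto \big(\sigma^{\frac{1-\alpha}{2z}}\rho^{\frac{\alpha}{z}}\sigma^{\frac{1-\alpha}{2z}}\big)^{z},
\end{align*}
and define $\hat\mu$ in the same way from $(\hat\rho,\hat\sigma)$. Hiai--Jen\v{c}ov\'{a} show, using only the variational formula and interpolation valid for positive maps, that equality in the DPI gives an identity of the form
\begin{align*}
T\big(\hat\sigma^{-\frac{1-\alpha}{2z}}\hat\mu^{\frac{1}{\gamma}}\hat\sigma^{-\frac{1-\alpha}{2z}}\big)=\sigma^{-\frac{1-\alpha}{2z}}\mu^{\frac1\gamma}\sigma^{-\frac{1-\alpha}{2z}},
\end{align*}
or equivalently that some auxiliary state $\omega$ built from $\mu$ satisfies $R^*_{T,\sigma}T^*\omega=\omega$. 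The exponents are adjusted as in \cref{lem:crucial-sandwiched}, with $\gamma$ a conjugate exponent. In that case $T$ is sufficient for $(\omega,\sigma)$ by \cref{thm:algebraic-recovery}, so $d_{\omega|\sigma}\in J_{(\omega,\sigma)}\subset\mathrm{Fix}(W)$.

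It remains to transfer sufficiency from $(\omega,\sigma)$ back to $(\rho,\sigma)$. The quantity $\rho^{\alpha/z}$ is obtained from $\omega$ and $\sigma$ by triple products $aba$ with powers of $\sigma$ and by functional calculus. \cref{lem:Lp-properties}, applied to the universally reversible $J_{(\omega,\sigma)}$ (\cref{prop:MSJA-uni-rev}), then places $\rho^{\alpha/z}$ in a suitable $L^p(J_{(\omega,\sigma)},\sigma)$. Taking the $\frac z\alpha$-th power puts $\rho$ in $L^1$, so by \cref{lem:L1-sufficient-2} and \cref{cor:fixed-point-recovery} $T$ is sufficient for $(\rho,\sigma)$.

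The main obstacle is this transfer step. Unlike the sandwiched case, $\rho^{\alpha/z}$ is not a ``weighted'' $L^p$ element relative to $\sigma$ but an unweighted power. Pulling it back through the $\sigma$-weights requires that $\sigma$ itself, and not only its $\Gamma_\sigma$-weights, behaves well inside $J$. This follows from the explicit product form $\sigma=\hat\sigma\ox\omega$ of invariant states in \cref{prop:CE-univ-rev}, and I would verify the needed exponent bookkeeping there case by case, in the same way as in the proof of \cref{lem:Lp-properties}.
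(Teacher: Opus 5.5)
\textbf{Gap: the step from the Hiai--Jen\v{c}ov\'{a} identity to sufficiency for the auxiliary pair.} Your reduction to one normalization, the faithfulness assumptions, and the final $L^p$ transfer are sound. Your normalization ($1-\alpha<z$, $\rho\ll\sigma$, with $\sigma$ as reference) is the mirror image of the paper's.

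Put $r=\frac{z}{1-\alpha}>1$, $\mu_0=(\sigma^{\frac1{2r}}\rho^{\frac\alpha z}\sigma^{\frac1{2r}})^z$, $\omega_0=\sigma^{\frac{r-1}{2r}}\mu_0^{\frac1r}\sigma^{\frac{r-1}{2r}}$. Build $\hat\mu_0,\hat\omega_0$ the same way from $(\hat\rho,\hat\sigma)$. With $\gamma=r$, your displayed identity reads $T(d_{\hat\omega_0|\hat\sigma})=d_{\omega_0|\sigma}$, i.e.\ $R^*_{T,\sigma}\hat\omega_0=\omega_0$. This is \emph{not} equivalent to $R^*_{T,\sigma}T^*\omega_0=\omega_0$. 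The operator $\hat\omega_0$ is built from $(T^*\rho,T^*\sigma)$, and nothing tells you it equals $T^*\omega_0$. \Cref{thm:algebraic-recovery} needs exactly that equality, so your appeal to it does not go through. What the paper actually takes from \cite{hiaiazRenyiDivergences2024} (in the mirrored normalization) is the forward identity $T^*\omega_0=\hat\omega_0$, and that alone is not a recovery statement either.

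\textbf{Missing idea: reduce to the sandwiched case.} Your plan never invokes \cref{thm:sandwiched-sufficiency}, and that is what closes the gap.
Since $\Gamma_\sigma^{1/r}d_{\omega_0|\sigma}=\mu_0^{1/r}$, you get $\tilde Q_r(\omega\|\sigma)=\tr\mu_0/(\tr\omega_0)^r$ for $\omega=\omega_0/\tr\omega_0$, and likewise for the hatted objects. By hypothesis $\tr\mu_0=Q_{\alpha,z}(\rho\|\sigma)=Q_{\alpha,z}(\hat\rho\|\hat\sigma)=\tr\hat\mu_0$. The identity $T^*\omega_0=\hat\omega_0$ gives $\tr\omega_0=\tr\hat\omega_0$ and $\hat\omega=T^*\omega$. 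Hence $\tilde D_r(\omega\|\sigma)=\tilde D_r(T^*\omega\|T^*\sigma)$ with $r>1$, and \cref{thm:sandwiched-sufficiency} makes $T$ sufficient for $(\omega,\sigma)$. If you only had $R^*_{T,\sigma}\hat\omega_0=\omega_0$, you would instead apply the sandwiched theorem to $R_{T,\sigma}$, as in \cref{lem:crucial-sandwiched}, to obtain $T^*\omega=\hat\omega$.

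\textbf{The transfer step is easier than you expect.} We have $\mu_0=(\Gamma_\sigma^{1/r}d_{\omega_0|\sigma})^r\in L^1(J_{(\omega,\sigma)},\sigma)$, so $\mu_0^{1/z}\in L^z$. Then $\rho^{\alpha/z}=\Gamma_\sigma^{-1/r}(\mu_0^{1/z})\in\Gamma_\sigma^{\frac1z-\frac1r}(J_{(\omega,\sigma)})=L^{z/\alpha}(J_{(\omega,\sigma)},\sigma)$, directly from the definition of the $L^p$-spaces. Item~1 of \cref{lem:Lp-properties} then gives $\rho\in L^1$. No case-by-case analysis of invariant states is needed. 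Also, only $\sigma$ has to be faithful in your normalization, so the continuity detour for non-faithful $\rho$ is unnecessary.
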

In the following, whenever $\alpha$ and $z$ are fixed,  we set
    \begin{align}\label{eq:pqr}
    p = \frac{z}{\alpha},\qquad r = \frac{z}{1-\alpha},\qquad q = -r = \frac{z}{\alpha - 1}.
\end{align}

\begin{proof}[Proof of \cref{thm:alpha-z-smaller-1}] 
    We only need to show that equality in the data-processing inequality implies sufficiency of $T$ for $(\rho,\sigma)$. Following \cite{hiaiQuantumFDivergencesNeumann2021}  we  assume $\alpha<z$ and $\sigma \ll \rho$, since otherwise we can exchange the roles of $p,r$ and $\rho,\sigma$ using the equality $Q_{\alpha,z}(\rho\|\sigma) = Q_{1-\alpha,z}(\sigma\|\rho)$. Then $p>1$ and we have
    \begin{align}
        \frac1p + \frac1r = \frac1z.
    \end{align}
    By pre- and post-processing, we can assume without loss of generality that both $\rho$ and $\hat \rho=T^*\rho$ are faithful. 
    Set
    \begin{align}
        \mu_0 = (\rho^\frac{1}{2p}\sigma^{\frac1r}\rho^\frac{1}{2p})^z,\qquad \omega_0 = \rho^\frac{p-1}{2p}\mu_0^{\frac1p} \rho^{\frac{p-1}{2p}}. 
    \end{align}
    Similarly, we define $\hat\mu_0$ and $\hat\omega_0$ relative to $\hat\rho = T^*\rho$ and $\hat\sigma = T^*\sigma$. 
    Then $\tr\mu_0 = Q_{\alpha,z}(\rho\|\sigma) = \tilde Q_{\alpha,z}(\hat\rho\|\hat\sigma) =\tr \hat\mu_0$. Define $\omega = \omega_0/\tr\omega_0$ and similarly for $\hat\omega$.
    In \cite[Proof of Thm.~4.5]{hiaiazRenyiDivergences2024} it is shown that $T^*\omega_0 = \hat\omega_0$. 
    In particular $\tr\omega_0 = \tr\hat\omega_0$, hence also $T^*\omega = \hat\omega$.\footnote{This part of the proof does not require $T$ to be 2-positive.}
    By construction, we have
    \begin{align}
     \tilde Q_p(\omega\|\rho) =   \frac{\tr \mu_0 }{(\tr\omega_0)^p}. 
    \end{align}
    Using the equality of $\alpha$-$z$-R\'enyi divergences, we thus find
    \begin{align}
        \tilde Q_p(\omega\|\rho)  = \tilde Q_p(\hat\omega\|\hat\rho). 
    \end{align}
    Hence \cref{thm:sandwiched-sufficiency} shows that $T$ is sufficient for $(\omega,\rho)$. 
    By \cref{lem:Lp-properties}, we have
    \begin{align}
        \mu \propto \big(\Gamma_\rho^{\frac1p} d_{\omega|\rho}\big)^p \in L^1(J_{(\omega,\rho)},\rho)\quad\Rightarrow\quad \sigma \propto \big( \Gamma_\rho^{-\frac1p} \mu^{\frac1z}\big)^r \in L^1(J_{(\omega,\rho)},\rho).  
    \end{align}
   By \cref{lem:L1-sufficient-2} it follows that $T$ is sufficient for $(\sigma,\rho)$. 
\end{proof}

The following is the exact analog of \cite[Thm.~4.7]{hiaiazRenyiDivergences2024} for positive instead of 2-positive maps. 

\begin{theorem}\label{thm:alpha-z-bigger-1}
    Let $\alpha>1$ and $\max\{\alpha/2,\alpha-1\} \leq z\leq \alpha < z+1$. Suppose $\rho \ll\sigma$ are states on $\H$ and $T:L(\hat \H)\to L(\H)$ is a UP map. Then $T$ is sufficient for $(\rho,\sigma)$ if and only if $D_{\alpha,z}(T^*\rho\|T^*\sigma) = D_{\alpha,z}(\rho\|\sigma)$. 
\end{theorem}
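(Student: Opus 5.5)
The plan follows the structure of the proof of \cref{thm:alpha-z-smaller-1}. Sufficiency implies equality by the data-processing inequality for PTP maps in the range \eqref{eq:alpha-z-range}. For the converse, we borrow the analytic part of \cite[Proof of Thm.~4.7]{hiaiazRenyiDivergences2024}, which in its key step only uses positivity. We then replace the final algebraic step, which relies on 2-positivity, by the Jordan-algebraic tools of \cref{lem:Lp-properties} and \cref{lem:L1-sufficient-2}.

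By pre- and post-processing we assume $\sigma$ and $\hat\sigma=T^*\sigma$ are faithful. With $p,q,r$ as in \eqref{eq:pqr}, now $q=z/(\alpha-1)>0$, and we have $\frac1p-\frac1q=\frac1z$. The natural auxiliary operators are
\begin{align*}
\mu_0=\big(\sigma^{\frac{1}{2q}}\cdots\big)
\end{align*}
or, more concretely, $\mu_0=(\sigma^{-\frac{1}{2q}}\rho^{\frac1p}\sigma^{-\frac1{2q}})^z$, which satisfies $\tr\mu_0=Q_{\alpha,z}(\rho\|\sigma)$. From it we build an auxiliary state
\begin{align*}
\omega_0=\sigma^{\frac{\theta}{2}}\mu_0^{s}\sigma^{\frac{\theta}{2}},
\end{align*}
with exponents chosen as in \cite{hiaiazRenyiDivergences2024}, so that $\tilde Q_{\gamma}(\omega\|\sigma)$ is an explicit function of $\tr\mu_0$ and $\tr\omega_0$ for a suitable sandwiched order $\gamma$. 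The condition $\alpha<z+1$ ensures $\gamma>\frac12$.

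The first step is to verify that Hiai--Jen\v{c}ov\'a's variational and interpolation argument yields $T^*\omega_0=\hat\omega_0$ for merely positive $T$. As in the $\alpha<1$ case, this should only use the data-processing inequality and the equality of the $Q_{\alpha,z}$ values. Granting this, $\tilde Q_\gamma(\omega\|\sigma)=\tilde Q_\gamma(\hat\omega\|\hat\sigma)$ follows, and \cref{thm:sandwiched-sufficiency} shows that $T$ is sufficient for $(\omega,\sigma)$.

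The second step is to invert the construction. We write $\rho$ as
\begin{align*}
\rho\propto\big(\Gamma_\sigma^{\frac1q}(\cdot)\big)^{p},
\end{align*}
a composition of the operations of \cref{lem:Lp-properties} applied to $d_{\omega|\sigma}\in J_{(\omega,\sigma)}$: powers, and $\Gamma_\sigma^{1/p}$-shifts of elements of $L^p$. This gives $\rho\in L^1(J_{(\omega,\sigma)},\sigma)$. Since $J_{(\omega,\sigma)}$ is universally reversible (\cref{prop:MSJA-uni-rev}) and admits a $\sigma$-preserving conditional expectation, \cref{lem:L1-sufficient-2} then yields sufficiency of $T$ for $(\rho,\sigma)$.

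The main obstacle is the first step: confirming that the derivation of $T^*\omega_0=\hat\omega_0$ in \cite{hiaiazRenyiDivergences2024} for $\alpha>1$ avoids 2-positivity. For instance, it must not use a Kadison--Schwarz inequality, which we would replace by the Jordan--Schwarz inequality \cref{lem:JS-inequality} or by the KMS contraction property \cref{cor:UP-contraction}. I would check this first. A secondary issue is bookkeeping the exponents so that each application of \cref{lem:Lp-properties} uses admissible nonzero $p$'s, including negative ones, which that lemma allows. Finally, a full-rank-$\rho$ assumption, if needed, would be removed by the continuity argument of \cite{jencova_renyi_2021}.
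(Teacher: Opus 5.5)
Your overall architecture matches the paper's proof: an auxiliary state built from the Hiai--Jen\v{c}ov\'a optimizer, the sandwiched theorem at an auxiliary order, and an inversion via \cref{lem:Lp-properties} and \cref{lem:L1-sufficient-2}. Your exponent bookkeeping ($\frac1z+\frac1q=\frac1p$, $q(\alpha-1)=z$) is also what is needed. The gap is in the first step, which you correctly flag as the crux but state in the wrong direction.

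Set $w=(\Gamma_\sigma^{-1/q}\rho^{1/p})^{\alpha-1}$, so that $w^q$ is your $\mu_0$, and $\omega_0=\Gamma_\sigma^{1-\frac1q}w$. This means $\theta=1-\frac1q$, $s=\frac1q$, and the auxiliary order is $\gamma=q$. For $\alpha>1$, the analytic input available for merely positive $T$ is \cite[Lem.~4.6]{hiaiazRenyiDivergences2024}: $T_q\hat w=w$, where $T_q=\Gamma_\sigma^{1/q}\circ T\circ\Gamma_{\hat\sigma}^{-1/q}$. The optimizer is carried from the output side to the input side by the Heisenberg-picture map. Since $R_{T,\sigma}^*=\Gamma_\sigma^{1-\frac1q}\circ T_q\circ\Gamma_{\hat\sigma}^{-(1-\frac1q)}$, this translates to $R_{T,\sigma}^*\hat\omega_0=\omega_0$, not $T^*\omega_0=\hat\omega_0$. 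This differs from the case $\alpha<1$, where the auxiliary object is a density pushed forward by $T^*$. The identity $T^*\omega_0=\hat\omega_0$ is true only a posteriori, once sufficiency is known. So trying to show that the derivation of $T^*\omega_0=\hat\omega_0$ avoids 2-positivity would not succeed, and applying \cref{thm:sandwiched-sufficiency} to $T$ is not justified at that stage.

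The missing idea is to apply the sandwiched theorem to the Petz recovery map and then use Petz duality.
\begin{itemize}
\item From $R_{T,\sigma}^*\hat\omega_0=\omega_0$ and trace preservation, $\tr\omega_0=\tr\hat\omega_0$. Hence $\tilde Q_q(\omega\|\sigma)=\tr w^q/(\tr\omega_0)^q=\tilde Q_q(\hat\omega\|\hat\sigma)$, with $R_{T,\sigma}^*\hat\omega=\omega$ and $R_{T,\sigma}^*\hat\sigma=\sigma$.
\item Since $q>1$, \cref{thm:sandwiched-sufficiency} applied to the UP map $R_{T,\sigma}$ and the pair $(\hat\omega,\hat\sigma)$ shows that $R_{T,\sigma}$ is sufficient for $(\hat\omega,\hat\sigma)$.
\item The Petz recovery map of $R_{T,\sigma}$ relative to $\hat\sigma$ is $T$, so \cref{thm:algebraic-recovery} gives $T^*\omega=\hat\omega$. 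Then $R_{T,\sigma}^*T^*\omega=\omega$, so $T$ is sufficient for $(\omega,\sigma)$.
\end{itemize}
From here your second step goes through unchanged, using $\rho^{1/p}\propto\Gamma_\sigma^{1/q}\big((\Gamma_\sigma^{1/q}d_{\omega|\sigma})^{1/(\alpha-1)}\big)$.

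Two smaller points. The role of $\alpha<z+1$ is exactly $q=z/(\alpha-1)>1$, which places the auxiliary order in the $\alpha>1$ branch of \cref{thm:sandwiched-sufficiency}. At $q=1$ one has $\tilde Q_1\equiv 1$, which carries no information, so the relevant condition is $q>1$ rather than $\gamma>\frac12$. Finally, no full-rank assumption on $\rho$ is needed; faithfulness of $\sigma$ and $\hat\sigma$ suffices.
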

\begin{proof}
    We only need to show that equality in the data-processing inequality implies sufficiency of $T$ for $(\rho,\sigma)$.
    By pre- and post-processing, we can assume that $\sigma$ and $\hat\sigma = T^*\sigma$ are faithful. 
    We use $p,q,r$ as defined in \cref{eq:pqr}, so that $p\in (\frac12,1]$ and $q >  1$. Define
    \begin{align}
    w = \left(\sigma^{\frac{1-\alpha}{2z}} \rho^{\frac{\alpha}{z}} \sigma^{\frac{1-\alpha}{2z}} \right)^{\alpha-1} 
    = \big(\Gamma_\sigma^{-\frac1q} \rho^{\frac1p}\big)^{\alpha-1},
    \quad \omega_0 = \Gamma^{1-\frac{1}{q}}_\sigma w,
    \quad \omega = \frac{\omega_0}{\tr\omega_0}.
\end{align}
Similarly, define $\hat w$, $\hat\omega_0$ and $\hat\omega$ relative to $\hat\sigma$ and $\hat\rho = T^*\rho$. 
Note that 
\begin{align}
    \tr w^q = Q_{\alpha,z}(\rho\|\sigma), 
\end{align}
so that equality in the DPI corresponds to $\tr w^q = \tr \hat w^q$.
In \cite[Lem.~4.6]{hiaiazRenyiDivergences2024} it is shown that this equality also implies $T_{q}\hat w = w$. It follows that
\begin{align}
    R_{T,\sigma}^* \hat\omega_0 =\Gamma^{1-\frac{1}{q}}_\sigma T_q  \Gamma^{-(1-\frac{1}{q})}_{\hat\sigma} \hat\omega_0 = \omega_0.
\end{align}
In particular $\tr \hat\omega_0 = \tr\omega_0$. 
Moreover, we have
\begin{align}
    \tilde Q_q(\omega\| \sigma) = \frac{\tr w^q}{\tr(\omega_0)^q} = \frac{\tr\hat w^q}{\tr(\hat\omega_0)^q} = \tilde Q_q(\hat\omega\|\hat\sigma).  
\end{align}
Thus, by \cref{thm:sandwiched-sufficiency}, $R_{T,\sigma}$ is sufficient for $(\hat\omega,\hat\sigma)$ and hence $T$ is sufficient for $(\omega,\sigma)$.
A short calculation shows
\begin{align}
    \rho^{\frac1p} \propto \Gamma_\sigma^{\frac1q} w^{\frac{1}{\alpha-1}} \propto \Gamma_\sigma^{\frac1q} \big((\Gamma_\sigma^{\frac1q }d_{\omega|\sigma})^{\frac{1}{\alpha-1}}\big).
\end{align}
Since $q(\alpha-1) = z$ and $\frac{1}{z} + \frac{1}{q} = \frac{1}{p}$, it follows from \cref{lem:Lp-properties} that $\rho \in L^1(J_{(\omega,\sigma)},\sigma)$. Hence, by \cref{lem:L1-sufficient-2}, it follows that $T$ is sufficient for $(\rho,\sigma)$.
\end{proof}
As emphasized in \cite{hiaiazRenyiDivergences2024}, the additional constraint $\alpha<z+1$ is necessary, even in the case of completely positive maps.

\appendix

\section{More about minimal sufficient J*-algebras}\label{sec:more}

\subsection{Jordan symmetries vs unitary symmetries}\label{sec:symmetries}

An (anti-)unitary symmetry of a dichotomy $(\rho,\sigma)$ on a Hilbert space $\H$ is an (anti-)unitary operator $u$ on $\H$ such that
\begin{equation}\label{eq:anti-unitary-symmetry}
    u\rho u^*=\rho, \qquad u\sigma u^*=\sigma.
\end{equation}
A unitary symmetry is trivial if $u \propto 1$ is a scalar.
The absence of non-trivial unitary symmetries is easily characterized:

\begin{lemma}\label{lem:unitary-symmetries}
    Let $(\rho,\sigma)$ be a faithful dichotomy.
    The following are equivalent:
    \begin{enumerate}[(a)]
        \item\label{it:unitary-symmetries1} $(\rho,\sigma)$ only have trivial unitary symmetries; 
        \item\label{it:unitary-symmetries2} $L(\H)$ is the minimal sufficient *-algebra;
        \item\label{it:unitary-symmetries3} the Koashi-Imoto decomposition $\H = \oplus_j \K_j\ox \L_j$ consists of a single block $\K$ with trivial multiplicity, i.e., $\dim\L=1$.
        \item\label{it:unitary-symmetries4} the pair $(\rho,\sigma)$ is irreducible, i.e., all jointly invariant subspaces are trivial.
    \end{enumerate}
\end{lemma}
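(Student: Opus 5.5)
The plan is to prove the cycle \ref{it:unitary-symmetries2} $\Leftrightarrow$ \ref{it:unitary-symmetries3} $\Leftrightarrow$ \ref{it:unitary-symmetries4} using the Koashi-Imoto structure of \cref{rem:KI}, and then tie in \ref{it:unitary-symmetries1} through the commutant of $A_{(\rho,\sigma)}$. The key intermediate claim is
\begin{equation*}
    \{\rho,\sigma\}' = A_{(\rho,\sigma)}',
\end{equation*}
where $'$ denotes the commutant in $L(\H)$.

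First, \ref{it:unitary-symmetries2} $\Leftrightarrow$ \ref{it:unitary-symmetries3} follows directly from \cref{rem:KI}, which gives $A_{(\rho,\sigma)} = \oplus_j L(\H_j)\ox \1_{\K_j}$. This equals $L(\H)$ if and only if there is a single block with $\dim \K_j = 1$.

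Next, I establish the commutant identity. By \cref{thm:K-generates} we have $A_{(\rho,\sigma)} = \staralg(K_{(\rho,\sigma)})$. The Neyman-Pearson tests $\proj{\rho>t\sigma}$ are functions of $\rho - t\sigma$, so they lie in $\staralg(\rho,\sigma)$, and hence $A_{(\rho,\sigma)} \subset \staralg(\rho,\sigma)$. Conversely, the Koashi-Imoto form $\rho = \oplus_j p_j \rho_j \ox \omega_j$ may not lie in $A_{(\rho,\sigma)}$, so I use a different route for the reverse direction. If $u \in \{\rho,\sigma\}'$ is unitary, then $\mathrm{Ad}(u)$ is a $(\rho,\sigma)$-preserving UCP map. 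By \eqref{eq:MSA}, $A_{(\rho,\sigma)} \subset \mathrm{Fix}(\mathrm{Ad}\, u)$, so $u \in A_{(\rho,\sigma)}'$. Since the commutant $\{\rho,\sigma\}'$ is a *-algebra spanned by its unitaries, this gives $\{\rho,\sigma\}' \subset A_{(\rho,\sigma)}'$. Together with the inclusion $A_{(\rho,\sigma)} \subset \staralg(\rho,\sigma)$, which yields $\{\rho,\sigma\}' \supset A_{(\rho,\sigma)}'$, we obtain the identity.

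With this identity the remaining equivalences are quick. For \ref{it:unitary-symmetries1}: the pair has only trivial unitary symmetries if and only if every unitary in $\{\rho,\sigma\}'$ is a scalar. Since $\{\rho,\sigma\}'$ is spanned by its unitaries, this holds if and only if $\{\rho,\sigma\}' = \CC\1$, which by the bicommutant theorem is equivalent to $A_{(\rho,\sigma)} = A_{(\rho,\sigma)}'' = L(\H)$, i.e., to \ref{it:unitary-symmetries2}. For \ref{it:unitary-symmetries4}: a subspace is jointly invariant under the hermitian pair $\rho,\sigma$ if and only if its projection commutes with $\rho$ and $\sigma$, i.e., lies in $\{\rho,\sigma\}'$. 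Since $\{\rho,\sigma\}'$ is spanned by its projections, irreducibility is again equivalent to $\{\rho,\sigma\}' = \CC\1$.

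I expect the main obstacle to be the commutant identity. In particular, the direction $\{\rho,\sigma\}' \subset A_{(\rho,\sigma)}'$ is where minimality via \eqref{eq:MSA} is essential. The other delicate point is that faithfulness is needed in order to invoke \cref{thm:K-generates}.
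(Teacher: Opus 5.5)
\textbf{There is a genuine gap: the commutant identity $\{\rho,\sigma\}'=A_{(\rho,\sigma)}'$ is false, and only one of its two inclusions is actually proved.}

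Commutants reverse inclusions. So $A_{(\rho,\sigma)}\subset\staralg(\rho,\sigma)$ yields $A_{(\rho,\sigma)}'\supset\staralg(\rho,\sigma)'=\{\rho,\sigma\}'$. That is the \emph{same} inclusion $\{\rho,\sigma\}'\subset A_{(\rho,\sigma)}'$ that your $\mathrm{Ad}(u)$ argument already gives.

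The reverse inclusion $A_{(\rho,\sigma)}'\subset\{\rho,\sigma\}'$ is, by the bicommutant theorem, equivalent to $\rho,\sigma\in A_{(\rho,\sigma)}$. You yourself noted this can fail, and it does. Take $\rho=\sigma=\mathrm{diag}(\tfrac13,\tfrac23)$ on $\CC^2$. The map $x\mapsto\tr(\rho x)\1$ is $(\rho,\sigma)$-preserving, so $A_{(\rho,\sigma)}=\CC\1$ and $A_{(\rho,\sigma)}'=L(\CC^2)$. But $\{\rho,\sigma\}'$ is only the diagonal algebra.

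What you have actually established is (a)$\Leftrightarrow$(d), (b)$\Leftrightarrow$(c), and (b)$\Rightarrow$(a),(d), the last via $\{\rho,\sigma\}'\subset A_{(\rho,\sigma)}'=\CC\1$. The substantive direction (a)/(d)$\Rightarrow$(b), that irreducibility forces $A_{(\rho,\sigma)}=L(\H)$, is unproven. Incidentally, $A_{(\rho,\sigma)}\subset\staralg(\rho,\sigma)$ needs neither \cref{thm:K-generates} nor faithfulness. The trace-preserving conditional expectation onto $\staralg(\rho,\sigma)$ fixes $\rho$ and $\sigma$, so that *-algebra is already sufficient.

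\textbf{The missing idea is the form of the states in the Koashi--Imoto decomposition of \cref{rem:KI}.} There $\rho=\oplus_j p_j\rho_j\ox\omega_j$ and $\sigma=\oplus_j q_j\sigma_j\ox\omega_j$, with the \emph{same} states $\omega_j$ on the multiplicity spaces $\K_j$. Combine this with your correct inclusion $\{\rho,\sigma\}'\subset A_{(\rho,\sigma)}'=\oplus_j\1_{\H_j}\ox L(\K_j)$. Faithfulness of the dichotomy ensures every block carries weight from $\rho$ or $\sigma$. Together these give the correct replacement
\begin{equation*}
    \{\rho,\sigma\}'=\oplus_j \1_{\H_j}\ox\{\omega_j\}',
\end{equation*}
which is in general strictly smaller than $A_{(\rho,\sigma)}'$. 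This algebra contains the block projections, and $\{\omega_j\}'\neq\CC\1$ whenever $\dim\K_j\ge2$. Hence $\{\rho,\sigma\}'=\CC\1$ forces a single block with $\dim\K=1$, which is (c), and this closes the cycle.

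This is what the paper's appeal to the Koashi--Imoto decomposition amounts to. Be aware that the paper's phrasing (``the group of unitary symmetries is the unitary group of the commutant of $A_{(\rho,\sigma)}$'') is, read literally, the same overstatement. It should be read as the unitary group of $\oplus_j\1_{\H_j}\ox\{\omega_j\}'\subset A_{(\rho,\sigma)}'$, which is still trivial if and only if $A_{(\rho,\sigma)}=L(\H)$.
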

\begin{proof}
    \ref{it:unitary-symmetries1} $\Leftrightarrow$ \ref{it:unitary-symmetries4} follows from the standard fact that a pair $(a,b)$ of hermitian operators on a finite-dimensional Hilbert space is irreducible if and only if the only unitaries commuting with both $a$ and $b$ are the trivial ones. 
    \ref{it:unitary-symmetries2} $\Leftrightarrow$ \ref{it:unitary-symmetries3} is explained in \cref{rem:KI}.
    It follows from the Koashi-Imoto decomposition (see \cref{rem:KI}) that the group of unitary symmetries of $(\rho,\sigma)$ is the unitary group of the commutant of the minimal sufficient *-algebra $A_{(\rho,\sigma)}$.
    This unitary group is trivial if and only if $A_{(\rho,\sigma)}=L(\H)$. This shows \ref{it:unitary-symmetries1} $\Leftrightarrow$ \ref{it:unitary-symmetries2}, which finishes the proof.
\end{proof}

\begin{proposition}\label{prop:J-symmetries}
    Let $(\rho,\sigma)$ be a faithful dichotomy.
    The following are equivalent:
    \begin{enumerate}[(a)]
        \item\label{it:J-symmetries1} $(\rho,\sigma)$ has no anti-unitary symmetries and only trivial unitary symmetries;
        \item\label{it:J-symmetries2} $L(\H) = J_{(\rho,\sigma)}$ is the minimal sufficient J*-algebra;
        \item\label{it:J-symmetries3} The following three conditions are met:
        \begin{enumerate}[({c.}1)]
            \item\label{it:J-symmetries31} $(\rho,\sigma)$ only has trivial unitary symmetries,
            \item\label{it:J-symmetries32} there is no basis relative to which both $\rho$ and $\sigma$ are real matrices,
            \item\label{it:J-symmetries33} if $\dim(\H) = 2n$ is even, it is not possible to decompose $\H$ as $\H = \CC^n\oplus \CC^n$ in such a way that $\beta(\rho)=\rho$ and $\beta(\sigma)=\sigma$, where $\beta$ is defined as in \cref{exa:fixed-pts}.
        \end{enumerate}
    \end{enumerate}
\end{proposition}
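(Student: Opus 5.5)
The plan is to prove \ref{it:J-symmetries1} $\Leftrightarrow$ \ref{it:J-symmetries3} directly, and then \ref{it:J-symmetries2} $\Leftrightarrow$ \ref{it:J-symmetries3} via the structure theory of universally reversible J*-algebras.

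\emph{Step 1: \ref{it:J-symmetries1} $\Leftrightarrow$ \ref{it:J-symmetries3}.} The condition on unitary symmetries is literally (c.1). Under (c.1), an anti-unitary symmetry $u$ has $u^2$ a unitary symmetry, hence $u^2=\lambda\1$ with $|\lambda|=1$. Since $u$ commutes with $u^2$ and is antilinear, $\lambda$ is real, so $u^2=\pm\1$.

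If $u^2=\1$, then $u$ is a real structure: there is an orthonormal basis of $u$-fixed vectors, i.e.\ $u=$ complex conjugation in that basis. Then $u\rho u^*=\bar\rho=\rho^t$ (using hermiticity), so $\rho,\sigma$ are real matrices. This violates (c.2).

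If $u^2=-\1$, then $u$ is a quaternionic structure, $\dim\H=2n$, and in a suitable basis $u=\Omega\circ(\text{complex conjugation})$. The condition $u\rho u^*=\rho$ then becomes $\beta(\rho)=\rho$, violating (c.3).

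Conversely, real bases and symplectic decompositions produce anti-unitary symmetries in exactly this way. Hence (a) $\Leftrightarrow$ (c).

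\emph{Step 2: \ref{it:J-symmetries3} $\Rightarrow$ \ref{it:J-symmetries2}.} By \cref{lem:unitary-symmetries}, (c.1) gives $A_{(\rho,\sigma)}=L(\H)$. By \cref{thm:luczak}, $J:=J_{(\rho,\sigma)}$ generates $L(\H)$, so $J$ has trivial center by \cref{lem:center}, i.e.\ $J$ is a J*-factor. By \cref{prop:MSJA-uni-rev} it is universally reversible, so exactly one case of \cref{prop:reps} applies, with $A=L(\H)$ a factor. Case \ref{it:reps2} has a non-factor $A$, so it is excluded. In cases \ref{it:reps1}, \ref{it:reps3}, \ref{it:reps4} we must have multiplicity $l=1$, since $A=L(\H)$ forces it.

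In case \ref{it:reps3}, $J=u^*L(\CC^n)^t u$ is sufficient. By \cref{thm:luczak}, its conditional expectation $\frac12(\id+\vartheta)$ (from \cref{prop:CE-univ-rev}) preserves $\rho,\sigma$, and the invariant states satisfy $\vartheta^*\rho=\rho$. Here $\vartheta$ is the transpose in the basis $u^*e_i$, so $\rho,\sigma$ are symmetric, i.e.\ $\rho^t=\rho$ and hence $\bar\rho=\rho$. This makes them real in that basis, contradicting (c.2).

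Case \ref{it:reps4} similarly gives $\beta(\rho)=\rho$ and $\beta(\sigma)=\sigma$, contradicting (c.3).

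Hence only case \ref{it:reps1} remains, with $l=1$, which gives $J=L(\H)$.

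\emph{Step 3: \ref{it:J-symmetries2} $\Rightarrow$ \ref{it:J-symmetries3}.} If $J=L(\H)$, then $A=L(\H)$, so (c.1) holds by \cref{lem:unitary-symmetries}.

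Suppose (c.2) failed, with $\rho,\sigma$ real in some basis. Then $E=\frac12(\id+t)$ in that basis is a conditional expectation preserving $\rho$ and $\sigma$, because $\tr(\rho\,x^t)=\tr(\rho^t x)=\tr(\rho x)$. By \cref{def:sufficiency} this makes the symmetric matrices sufficient, so $J$ is contained in them, which is a proper subspace when $\dim\H\ge2$. This contradicts $J=L(\H)$.

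The same argument with $\frac12(\id+\beta)$ shows that (c.3) must hold.

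The trivial case $\dim\H=1$ has to be handled separately. There the overlaps in \cref{thm:classification} mean the conditions (c.2) and (c.3) must be read appropriately (a one-dimensional $\H$ always admits a real basis). I expect this to be the main nuisance: the equivalence fails literally for $\dim\H=1$, so the statement has to be read with $\dim\H\ge2$ or with that convention. The substantive step is the case analysis in Step 2, in particular identifying the involution $\vartheta$ in \cref{prop:reps} concretely as the transpose or $\beta$ in a unitarily rotated basis.
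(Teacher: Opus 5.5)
Your proof is correct. Your (c)$\Rightarrow$(b) step is the paper's argument, with the details the paper leaves implicit in ``by the classification'' spelled out: reversibility via \cref{prop:MSJA-uni-rev}, the case list of \cref{prop:reps}, and the form of invariant states from \cref{prop:CE-univ-rev}.

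The overall route differs, however. The paper runs the cycle (a)$\Rightarrow$(c)$\Rightarrow$(b)$\Rightarrow$(a). It first uses Wigner's theorem to restate (a) as ``the identity is the only J*-automorphism of $L(\H)$ fixing $\rho$ and $\sigma$.'' It then gets (b)$\Rightarrow$(a) in one step: the fixed-point algebra of such an automorphism is sufficient, hence contains $J_{(\rho,\sigma)}=L(\H)$. Along this cycle, (a)$\Rightarrow$(c) is immediate, since a real basis or a symplectic splitting directly supplies an anti-unitary symmetry.

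You instead prove (a)$\Leftrightarrow$(c) in both directions. For (c)$\Rightarrow$(a) you show that under (c.1) an anti-unitary symmetry satisfies $u^2=\pm\1$. It is therefore complex conjugation in a real basis, or $\Omega K$ for a symplectic splitting. You then close the cycle with (b)$\Rightarrow$(c), using the explicit conditional expectations $\frac12(\id+t)$ and $\frac12(\id+\beta)$.

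Each route has its price. The paper's is shorter and never needs the classification of anti-unitaries squaring to $\pm\1$, but it relies on Wigner's theorem. Yours is more elementary and shows why (c.2) and (c.3) are exactly the possible anti-unitary symmetries (real versus quaternionic structures), at the cost of one extra implication.

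Your caveat about $\dim\H=1$ is also correct. There (b) holds, but complex conjugation is an anti-unitary symmetry and $\rho,\sigma$ are trivially real, so (a) and (c.2) fail. In the paper's route this shows up as a failure of the Wigner restatement, since in dimension one an anti-unitary implements the identity automorphism. The statement should be read with $\dim\H\ge 2$.
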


\begin{proof}
    Since all J*-automorphisms of $L(\H)$ are either implemented by unitaries or anti-unitaries (this follows from Wigner's theorem \cite{wigner1931gruppentheorie}), item \ref{it:J-symmetries1} is equivalent to:
    \begin{enumerate}[(a$'$)]\it 
        \item\label{it:J-symmetries1p} 
        The identity is the only J*-automorphism of $L(\H)$ that leaves both $\rho$ and $\sigma$ invariant.
    \end{enumerate}

    \ref{it:J-symmetries1} $\Rightarrow$ \ref{it:J-symmetries3} is clear.

    \ref{it:J-symmetries2} $\Rightarrow$ \ref{it:J-symmetries1p}:
    Let $\vartheta$ be a J*-automorphism of $L(\H)$ with $\vartheta(\rho)=\rho$, $\vartheta(\sigma)=\sigma$.
    Then the fixed-point J*-algebra $\mathrm{Fix}(\vartheta)$ is sufficient for $(\rho,\sigma)$ (indeed, the conditional expectation $E = \frac12(\id + \vartheta)$ onto $J$ is $(\rho,\sigma)$-preserving).
    Hence, $\mathrm{Fix}(\vartheta) = L(\H)$ because any sufficient J*-algebra contains the minimal sufficient J*-algebra.
    This shows the claim since $\mathrm{Fix}(\vartheta)=L(\H)$ implies $\vartheta = \id$.

    \ref{it:J-symmetries3} $\Rightarrow$ \ref{it:J-symmetries2}:
    By \cref{lem:unitary-symmetries}, item \ref{it:J-symmetries31} implies that the minimal sufficient *-algebra is $L(\H)$.
    According to \cref{thm:luczak}, this entails that $J_{(\rho,\sigma)}$ generates $L(\H)$ as a *-algebra.
    Thus, $J_{(\rho,\sigma)}\subset L(\H)$ is a J*-factor (if it had a center, it would generate a *-algebra with a center) with trivial multiplicity (i.e., $J_{(\rho,\sigma)}$ is not of the form $\hat J \ox \1$ for some decomposition $\H = \K\ox\L$).
    By the classification of J*-factors (see \cref{sec:structure-theory}), there are three possibilities: $J_{(\rho,\sigma)}= L(\H)$, (ii) there is a basis such that $J = L(\CC^d)^t$ with $d=\dim \H$, or (iii) $\H$ is even-dimensional and there is a basis such that $J=L(\CC^{2d})^\beta$.
    The cases (ii) and (iii) are ruled out by items \ref{it:J-symmetries32} and \ref{it:J-symmetries33}, respectively.
\end{proof}

\begin{corollary}
    Let $(\rho,\sigma)$ be an irreducible faithful dichotomy on an odd-dimensional Hilbert space $\H$.
    If there is no basis in which both $\rho$ and $\sigma$ are real matrices, then $L(\H) = J_{(\rho,\sigma)}$ is the minimal sufficient J*-algebra.
\end{corollary}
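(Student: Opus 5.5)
The plan is to derive the corollary directly from \cref{prop:J-symmetries} by checking its condition (c). Irreducibility together with \cref{lem:unitary-symmetries} (equivalence of \ref{it:unitary-symmetries1} and \ref{it:unitary-symmetries4}) says that $(\rho,\sigma)$ has only trivial unitary symmetries, which is (c.1). Condition (c.2) is exactly the hypothesis that no basis makes both $\rho$ and $\sigma$ real matrices. Condition (c.3) only concerns even $\dim(\H)$, so it holds vacuously when $\dim(\H)$ is odd. Once (c) is verified, \cref{prop:J-symmetries} gives $J_{(\rho,\sigma)} = L(\H)$.

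The one point that needs care is that (c.3) really is vacuous. One should not read it as "no decomposition exists" in some subtler sense. In odd dimensions the symplectic case simply cannot occur, and one can double-check this from the classification. By \cref{lem:unitary-symmetries} and \cref{thm:luczak}, $A_{(\rho,\sigma)} = L(\H)$ is generated by $J_{(\rho,\sigma)}$, so $J_{(\rho,\sigma)}$ is a J*-factor acting on $\H$ without multiplicity. The representation $L(\CC^{2n})^\beta$ of \cref{prop:reps} acts on an even-dimensional space. By \cref{prop:reps}, the only remaining options are $L(\H)$ or $L(\CC^d)^t$. The second of these is excluded by the reality hypothesis, since $J_{(\rho,\sigma)}$ contains $E\rho$ and $E\sigma$. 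More directly, if $J_{(\rho,\sigma)} \subset L(\CC^d)^t$, then the trace-preserving conditional expectation $\frac12(\id + t)$ preserves $\rho$ and $\sigma$, so these states are symmetric. Being hermitian and symmetric, they are real in that basis, which contradicts the hypothesis.

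There is one subtlety in the spin-factor case. A spin factor $V_m$ in its universal representation generates a full matrix algebra only if $m$ is even, and then it acts on $(\CC^2)^{\otimes m/2}$, which has even dimension. A multiplicity-free representation generating the factor $L(\H)$ therefore cannot be a spin factor when $\dim\H$ is odd. Also, $J_{(\rho,\sigma)}$ is universally reversible by \cref{prop:MSJA-uni-rev}, so the spin factors $V_m$ with $m \ge 4$ are excluded in any case, and the small ones coincide with the listed families. So the main thing to watch is that \cref{prop:J-symmetries} (c) already covers every case, and that parity eliminates the symplectic and even spin options. I expect no real obstacle beyond this.
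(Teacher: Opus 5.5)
Your proposal is correct and follows the route the paper intends: the corollary is stated without a separate proof as an immediate consequence of \cref{prop:J-symmetries}, with (c.1) coming from irreducibility via \cref{lem:unitary-symmetries}, (c.2) being the hypothesis, and (c.3) vacuous in odd dimension. Your supplementary double-check is not needed, and its \emph{more directly} step holds only because here $J_{(\rho,\sigma)}$ must coincide with $L(\CC^d)^t$ in a multiplicity-free representation, where the conditional expectation onto it is unique (\cref{prop:CE-univ-rev}); a mere inclusion $J_{(\rho,\sigma)}\subset L(\CC^d)^t$ with multiplicity would not force $\rho,\sigma$ to be symmetric.
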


\subsection{Sufficiently many examples}\label{sec:many-examples}

We say that a J*-algebra $J$ is \emph{2-generated} if there exist hermitian elements $a,b\in J$ such that $J = \Jstaralg(a,b)$.
2-generated J*-algebras are classified in \cref{sec:2-gen}: A J*-algebra is 2-generated if and only if it is J*-isomorphic to a direct sum of the J*-factors $L(\CC^d)$ ($d\ge 1$), $L(\CC^d)^t$  ($d\ge2$), and $L(\CC^{2d})^\beta$ ($d\ge 4$).
We conjecture that 2-generatedness is equivalent to being minimal sufficient for some dichotomy.
We are only able to prove this for J*-factors.

\begin{proposition}\label{prop:2-gen}
    Let $J$ be a J*-factor on a Hilbert space $\H$.
    The following are equivalent:
    \begin{enumerate}[(a)]
        \item $J=J_{(\rho,\sigma)}$ is the minimal sufficient J*-algebra of a faithful dichotomy $(\rho,\sigma)$ on $\H$;
        \item $J$ is 2-generated.
    \end{enumerate}
\end{proposition}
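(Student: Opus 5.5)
The direction (a) $\Rightarrow$ (b) follows from \cref{cor:TPCE-ptp-equivalence}. Let $E$ be the trace-preserving conditional expectation onto $J=J_{(\rho,\sigma)}$ and set $\hat\rho=E\rho$, $\hat\sigma=E\sigma$. Then $J_{(\rho,\sigma)}=\Jstaralg(\hat\rho,\hat\sigma)$. Since $\hat\rho,\hat\sigma\in J$ are hermitian, $J$ is 2-generated; this is just the definition.

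For (b) $\Rightarrow$ (a), suppose $J=\Jstaralg(a,b)$ with $a,b$ hermitian. The plan is to replace the generators by states and then show that $J$ is exactly minimal sufficient for them.

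\textbf{Step 1: choose states that generate $J$.} Replace $a,b$ by $a+c\1$ and $b+c\1$ with $c$ large, so that both are positive and invertible. Then normalize to obtain faithful states $\rho,\sigma\in J$ with $\Jstaralg(\rho,\sigma)=J$. Shifting by scalars and rescaling do not change the generated J*-algebra, since $\1\in J$.

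\textbf{Step 2: reduce to showing sufficiency.} \cref{cor:TPCE-ptp-equivalence} gives $\Jstaralg(E_0\rho,E_0\sigma)=J_{(\rho,\sigma)}$, where $E_0$ is the trace-preserving conditional expectation onto $J_{(\rho,\sigma)}$. So it suffices to show that $J$ is sufficient for $(\rho,\sigma)$ and that $E_0\rho=\rho$, $E_0\sigma=\sigma$. Then $J_{(\rho,\sigma)}=\Jstaralg(\rho,\sigma)=J$.

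\textbf{Step 3: show $J$ is sufficient.} Let $E$ be the trace-preserving conditional expectation onto $J$. Since $E=E^*$ and $\rho,\sigma\in J$, we get $E^*\rho=\rho$ and $E^*\sigma=\sigma$. Hence $J$ is sufficient, and $J_{(\rho,\sigma)}\subset J$.

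\textbf{Step 4: show $E_0$ fixes $\rho$ and $\sigma$.} This is the main point. We have $J_{(\rho,\sigma)}\subset J$. By \eqref{eq:TPCE-factorizes}, $E_0=E_0E$, so $E_0\rho=E_0E\rho$. This alone does not show $E_0\rho=\rho$, so I argue differently.

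Let $F_0$ be the $(\rho,\sigma)$-preserving conditional expectation onto $J_{(\rho,\sigma)}$. Its restriction to $J$ fixes $\rho$ and $\sigma$ in the Schrödinger picture. Since $\rho$ is faithful, \cref{cor:d-in-J} gives $d_{\rho|\rho}=\1$ and $d_{\sigma|\rho}=\rho^{-\frac12}\sigma\rho^{-\frac12}\in J_{(\rho,\sigma)}$.

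Now use the triple product. By \eqref{eq:triple-product}, J*-algebras are closed under $x\mapsto yxy$. From $d_{\sigma|\rho}\in J_{(\rho,\sigma)}$ I want to recover $\rho$ and $\sigma$ inside $J_{(\rho,\sigma)}$. That requires first placing $\rho^{\frac12}$ in $J_{(\rho,\sigma)}$, and this is the real obstacle.

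The first thing I would try is the symmetry of the setup. $J_{(\rho,\sigma)}$ admits a conditional expectation preserving the faithful state $\rho$. By the argument in the proof of \cref{lem:CE-extension}, such a state $\omega$ satisfies $[\omega,x]=0$ for all $x\in J_{(\rho,\sigma)}$ in the trace-preserving case. Applied to $F_0$, I would instead use the KMS self-adjointness of \cref{lem:CE-selfadjoint} together with $\rho\in J$ and \cref{cor:uniquenes-CE}:
\[
F_0(x)=(E_0\rho)^{-\frac12}E_0(\rho^{\frac12}x\rho^{\frac12})(E_0\rho)^{-\frac12}.
\]
I would compare this with the analogous formula for $\sigma$ to force $E_0\rho=\rho$.

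The fallback is a cleaner route. Since $J$ is a factor and $J_{(\rho,\sigma)}\subset J$ admits a faithful conditional expectation, I would use \cref{prop:CE-univ-rev} for $J_{(\rho,\sigma)}$. It describes all $F_0$-invariant states explicitly as $\sigma\otimes\omega$-type states lying in $L^1(J_{(\rho,\sigma)},\rho)$. The state $\sigma$ is such an invariant state. Combining $\sigma,\rho\in J$ with \cref{lem:L1-sufficient} and \cref{lem:Lp-properties}, I expect to get $\rho^{\frac12}$, and hence $\rho,\sigma$, in $J_{(\rho,\sigma)}$.

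With that, $J=\Jstaralg(\rho,\sigma)\subset J_{(\rho,\sigma)}$, which gives equality.
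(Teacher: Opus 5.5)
Your reduction is correct as far as it goes. The direction (a)$\Rightarrow$(b) via \cref{cor:TPCE-ptp-equivalence} is fine, and Steps 1--3 correctly produce faithful states $\rho,\sigma\in J$ with $\Jstaralg(\rho,\sigma)=J$ and $J_{(\rho,\sigma)}\subseteq J$; the paper picks its states the same way. The gap is Step 4: everything rests on showing $\rho,\sigma\in J_{(\rho,\sigma)}$, and you end with ``I expect''. The tools you list cannot close it, because your sketch never uses that $J$ is a factor, and without that hypothesis the claim is false. Take $\rho=\mathrm{diag}(0.2,0.3,0.5)$ and $\sigma=\mathrm{diag}(0.1,0.15,0.75)$:
\begin{itemize}
\item Since $\rho$ has simple spectrum, $\Jstaralg(\rho,\sigma)$ is the full diagonal algebra.
\item $J_{(\rho,\sigma)}=\lin\{e_{11}+e_{22},e_{33}\}$, given by the level sets of $\rho\sigma^{-1}=\mathrm{diag}(2,2,\tfrac23)$.
\item $d_{\sigma|\rho}\in J_{(\rho,\sigma)}$, and both states have the product form $p\,\omega\oplus(1-p)$ with the nontrivial $\omega=(0.4,0.6)$ on the first block ($p=\tfrac12$ resp.\ $\tfrac14$).
\item Both lie in $L^1(J_{(\rho,\sigma)},\rho)$, the two formulas of \cref{cor:uniquenes-CE} agree, and \cref{lem:L1-sufficient,lem:Lp-properties} apply.
\end{itemize}
Yet $\rho\notin J_{(\rho,\sigma)}$, $\rho^{1/2}\notin J_{(\rho,\sigma)}$, and $E_0\rho=\mathrm{diag}(0.25,0.25,0.5)\neq\rho$. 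So neither of your two routes can force $E_0\rho=\rho$ on its own.

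The missing idea is to use that $\rho,\sigma$ generate the $*$-algebra $\staralg(J)$, not just $J$. This is the content of \cref{lem:minimality-condition} in the paper. Suppose $\staralg(J)$ is a factor. After removing the multiplicity ($\H=\L\otimes\R$, $J=\hat J\otimes\1$, $\rho=\hat\rho\otimes\1/\dim\R$), the states generate $L(\L)$, i.e.\ they are irreducible. By \cref{prop:CE-univ-rev}, states invariant under the $(\rho,\sigma)$-preserving conditional expectation onto $J_{(\rho,\sigma)}$ have two properties. They are block diagonal with respect to the central decomposition of $\staralg(J_{(\rho,\sigma)})$, including the two summands in case (ii) of \cref{prop:reps}. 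They also carry a fixed state $\omega$ on each multiplicity space. Irreducibility therefore forces a single block, trivial $\omega$, and excludes case (ii). In the remaining cases the invariant states lie in $J_{(\rho,\sigma)}$ itself (e.g.\ $\vartheta^*\rho=\rho$ means $\rho\in L(\L)^\vartheta$), which gives $J=\Jstaralg(\rho,\sigma)\subseteq J_{(\rho,\sigma)}$.

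When $\staralg(J)$ is a sum of two factors, i.e.\ $J=\{(a\otimes\1)\oplus(a^t\otimes\1)\}\Jcong L(\CC^d)$, this irreducibility argument does not apply directly. The paper instead passes to a dichotomy on $\CC^d$ that generates $L(\CC^d)$, via the PTP-equivalence given by $\pi(a)=(a\otimes\1)\oplus(a^t\otimes\1)$ and $\pi^{-1}\circ E$. It then transports the minimal sufficient J*-algebra back with \cref{thm:ptp-inter}. Your Step 4 needs to be replaced by an argument of this kind; that is where the factor hypothesis actually enters.
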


\begin{lemma}\label{lem:minimality-condition}
    Let $J\subset L(\H)$ be a J*-factor and let $(\rho,\sigma)$ be a dichotomy on $\H$ such that $J=\Jstaralg(\rho,\sigma)$.
    If $A=\staralg(J)$ is a factor, then $J$ is the minimal sufficient J*-algebra $J = J_{(\rho,\sigma)}$.
\end{lemma}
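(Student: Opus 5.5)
The idea is to show that $J$ is sufficient for $(\rho,\sigma)$, and that any sufficient J*-algebra must contain $\rho$ and $\sigma$ themselves. Then minimality forces $J_{(\rho,\sigma)}\supseteq \Jstaralg(\rho,\sigma)=J$, while sufficiency gives the reverse inclusion.

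\emph{Sufficiency of $J$.} Since $\rho,\sigma\in J$ and $J$ is a J*-algebra, the trace-preserving conditional expectation $E$ onto $J$ fixes both states. Indeed, $E=E^*$ and $E$ restricts to the identity on $J$, so $E^*\rho=E\rho=\rho$ and likewise for $\sigma$. Thus $E$ is a $(\rho,\sigma)$-preserving UP map into $J$. The dichotomy is faithful, because $J$ is unital and contains $\rho,\sigma$, so $\tfrac12(\rho+\sigma)$ is in $J$. (If needed, I would add a remark that faithfulness of the dichotomy is implicit here, or else restrict to its support.) Hence $J_{(\rho,\sigma)}\subset J$ by \cref{thm:luczak}.

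\emph{Reverse inclusion.} Set $J_0=J_{(\rho,\sigma)}\subset J$, and let $F$ be the $(\rho,\sigma)$-preserving conditional expectation onto $J_0$ (\cref{thm:luczak}). By \cref{thm:luczak}\ref{it:luczak3}, $\staralg(J_0)=A_{(\rho,\sigma)}$. The minimal sufficient *-algebra is contained in the sufficient *-algebra $A=\staralg(J)$ (which is sufficient since $J$ is). By \cref{lem:CE-extension}, I would compare the two unique state-preserving conditional expectations.

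The key step is to use that $A$ is a factor to force $\rho,\sigma\in J_0$. I plan to use \cref{cor:TPCE-ptp-equivalence}. With $E_0$ the trace-preserving conditional expectation onto $J_0$, we have $J_0=\Jstaralg(E_0\rho,E_0\sigma)$ and $(\rho,\sigma)\ptp(E_0\rho,E_0\sigma)$. It suffices to prove $E_0\rho=\rho$ and $E_0\sigma=\sigma$, i.e.\ that the $(\rho,\sigma)$-preserving expectation $F$ is trace-preserving. By \cref{lem:CE-extension}, $F$ factors through $\tilde F:L(\H)\to \staralg(J_0)$, which preserves $\omega=F^*\tau$ and commutes with $\staralg(J_0)$. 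Since $F$ preserves $\rho$ and $\sigma$, $\omega$ can be chosen in $\conv$-span terms; more directly, $\tfrac12(\rho+\sigma)\in J$ is $F$-invariant. The proof of \cref{lem:CE-extension} shows that any faithful $F$-invariant state (there the argument used $F^*\tau$, but it works for any faithful invariant state $\omega'$ as well: $\tr(\omega' ab)=\tr(a\omega' b)$ follows identically) commutes with $J_0$. Applying this with $\omega'=\tfrac12(\rho+\sigma)$ and also with $\omega'=\tfrac12(\rho+\sigma)+\epsilon\rho$ gives that $\rho$ and $\sigma$ commute with $J_0$. Hence $\rho,\sigma\in J_0'$, and therefore $J=\Jstaralg(\rho,\sigma)\subset J_0'$. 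Then $J_0\subset J\cap J'$, which is the center $Z(J)=\CC\1$ by \cref{lem:center} ($J$ is a factor). So $J_0=\CC\1$, and $A_{(\rho,\sigma)}=\CC\1$, meaning $\rho=\sigma=\tau$ up to the factor structure. In that case $J=\CC\1$ too, and the claim holds trivially.

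The main obstacle is this last step. The commutation trick yields a dichotomy: either everything is trivial, or the argument must be refined, and I expect it to be wrong in general, since $J_0$ need not be commutative. The refinement I would try instead uses the factor hypothesis on $A$ directly. Write $A\cong L(\L)\ox\1$ on $\H=\L\ox\R$. Since $\staralg(J_0)=A_{(\rho,\sigma)}\subset A$ and both are sufficient, the state-preserving expectation $\tilde F$ onto $A$ has the form $\id\ox\tr(\omega\placeholder)\1$. Invariance of $\rho,\sigma$ then gives $\rho=\hat\rho\ox\omega$, $\sigma=\hat\sigma\ox\omega$. So $J=\hat J\ox\1$ with $\hat J=\Jstaralg(\hat\rho,\hat\sigma)$, and $\omega=\1/\dim\R$, since $\rho\in J$ forces $\omega\propto\1$. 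Consequently $F$ is trace-preserving on this block, $E_0=F$, $E_0\rho=\rho$, and so $\rho,\sigma\in J_0$, which gives $J\subset J_0$.
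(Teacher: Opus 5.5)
Your proposal has a genuine gap at the step that carries the content of the lemma. You correctly reduce the claim to showing $\rho,\sigma\in J_0:=J_{(\rho,\sigma)}$, equivalently that the $(\rho,\sigma)$-preserving expectation $F$ onto $J_0$ is trace-preserving. Neither of your two arguments establishes this.

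\emph{The commutation trick.} It fails for a more basic reason than the one you suspect. In the proof of \cref{lem:CE-extension}, the step $\tr(\tau\, sF(sb)s)=\tr(\tau F(sb))$ uses that $\tau$ commutes with the symmetry $s=2p-\1$. For a general invariant state $\omega'$ you would need $[\omega',s]=0$, which is exactly what you are trying to prove, so the extension is circular. The claim is in fact false: for $J_0=L(\CC^n)^t$ and $F=\tfrac12(\id+(\placeholder)^t)$, every real symmetric state is invariant, yet such states do not commute with $J_0$.

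\emph{The refinement.} It decomposes $\H$ with respect to the wrong algebra. Since $\rho,\sigma\in J\subset A=L(\L)\ox\1$, the form $\rho=\hat\rho\ox\1/\dim\R$ is automatic and says nothing about $J_0$. All it achieves is the reduction to $\dim\R=1$, i.e.\ $A=L(\H)=\staralg(\rho,\sigma)$; the paper takes this same first step. Your next sentence, \emph{consequently $F$ is trace-preserving}, is then a non sequitur. $F$ factors through $\staralg(J_0)=A_{(\rho,\sigma)}$, not through $A$. You have shown neither $A_{(\rho,\sigma)}=A$ (only the inclusion $\subset$) nor anything about the second factor $F|_{A_{(\rho,\sigma)}}$.

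\emph{The missing idea.} Decompose with respect to $A_0=\staralg(J_0)$ instead.
\begin{enumerate}
\item Write $\H=\oplus_j\L_j\ox\R_j$ with $A_0=\oplus_j L(\L_j)\ox\1$. The conditional expectation onto $A_0$ from \cref{lem:CE-extension} preserves $\rho$ and $\sigma$, so $\rho=\oplus_j\rho_j\ox\omega_j$ and $\sigma=\oplus_j\sigma_j\ox\omega_j$. Hence $\staralg(\rho,\sigma)\subset\oplus_j L(\L_j)\ox\staralg(\omega_j)$.
\item After the reduction, $\staralg(\rho,\sigma)=L(\H)$. This forces a single block with $\dim\R_1=1$, because $\staralg(\omega_1)$ is commutative. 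So $A_0=L(\H)$.
\item $J_0$ is universally reversible by \cref{prop:MSJA-uni-rev}, and it generates the factor $L(\H)$ in trivial multiplicity. Apply \cref{prop:CE-univ-rev}, using cases (i), (iii), (iv) of \cref{prop:reps}; case (ii) is excluded because $A_0$ is a factor. This gives $F=\id$ or $F=\tfrac12(\id+\vartheta)$.
\item In either case $F$ is trace-preserving, hence self-dual. So $\rho=F^*\rho=F\rho\in J_0$, likewise $\sigma\in J_0$, and $J=\Jstaralg(\rho,\sigma)\subset J_0$.
\end{enumerate}

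\emph{Faithfulness.} Your justification (that $\tfrac12(\rho+\sigma)\in J$) proves nothing. The correct reason is this: the support projection of $\rho+\sigma$ is a spectral projection of an element of $J$, and it commutes with $\rho$ and $\sigma$, hence with $J$. It is therefore central in $J$, and it equals $\1$ because $J$ is a factor.
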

\begin{proof}
    Since $J$ is 2-generated, it is universally reversible.
    We have $\H = \L\ox\R$, $A = L(\L) \ox \1$, $J = \hat J\ox \1$ by the general representation theory of (universally) reversible factors. It follows that $\rho = \hat\rho \ox \1, \sigma = \hat\sigma\ox \1$ and $\staralg(\hat\rho,\hat\sigma) = L(\L)$. 
    Passing to $L(\L)$, we can (and will) assume without loss of generality that $A = L(\H) = \staralg(\rho,\sigma)$.

    Set $A_0 = \staralg(J_{(\rho,\sigma)})$ and let $F:L(\H) \to J_{(\rho,\sigma)}$ be the $(\rho,\sigma)$-preserving conditional expectation. 
    From \cref{sec:CEs2}, we find that $\H = \oplus_j \L_j\ox \R_j$, 
    $A_0 = \oplus_j L(\L_j) \ox \1$, $\rho = \oplus_j \rho_j \ox \omega_j , \sigma = \oplus_j \sigma_j\ox \omega_j$. But since $\rho$ and $\sigma$ generate $L(\H)$, there can only be one summand, and the associated state $\omega$ must be trivial. 
    It follows that $\rho,\sigma \in J_{(\rho,\sigma)}$ and hence $J_{(\rho,\sigma)} = \Jstaralg(\rho,\sigma) = J$.
\end{proof}

\begin{proof}
    By the representation theory of J*-factors (see \cref{sec:reps}), there are two possibilities (1) $A=\staralg(J)$ is a factor, or (2) $A= \staralg(J)$ is a direct sum of two factors.

    \emph{Case (1).}
    By \cref{prop:reps}, the generated *-algebra $\staralg(J)$ is a factor. 
    Let $a,b\in J$ be hermitian operators such that $\Jstaralg(a,b)=J$. Pick constants $c_i>0$ such that $\rho = c_1 + c_2 a$ and $\sigma=c_3 + c_4 b$ are density operators.
    The claim follows from \cref{lem:minimality-condition}.

    \emph{Case (2).} We have $J\Jcong L(\CC^d)$, $d\ge1$.
    By \cref{prop:reps}, there are integers $l_1,l_2\ge1$ and a unitary $u:\H \to (\CC^d\ox\CC^{l_1})\oplus (\CC^d\ox\CC^{l_2})$ such that $J = \{(a \ox \1) \oplus (a^t\ox\1) : a\in L(\CC^d)\}$.
    Thus, $(\pi,\H)$ with $\pi(a) = u^*((a\ox\1)\oplus (a^t\ox\1)) u$ is a J*-representation of $L(\CC^d)$ with $J=\pi(L(\CC^d))$.
    By case (1), there are density matrices $\rho_0,\sigma_0$ on $\CC^d$ with $J_{(\rho_0,\sigma_0)}=L(\CC^d)$.
    Set $\rho = (\rho_0\ox \frac1{2l_1}\1) \oplus (\rho_0^t \ox \frac1{2l_2}\1)$ and define $\sigma$ analogously.
    Then $(\rho,\sigma)$ are invariant under the trace-preserving conditional expectation $E$ onto $J$.
    The UP maps $\pi$ and $\pi^{-1}\circ E$ establish s PTP-equivalence $(\rho,\sigma)\leftrightarrow(\rho_0,\sigma_0)$, so that \cref{thm:ptp-inter} implies $J = \pi(L(\CC^d)) = T(J_{(\rho_0,\sigma_0)}) = J_{(\rho,\sigma)}$.
\end{proof}

A naive attempt to generalize the result to J*-algebras goes as follows:
Let $J=\oplus J_k$ be the direct sum decomposition into J*-factors of a 2-generated J*-algebra $J$ on a Hilbert space $\H$, then each $J_k$ is a 2-generated J*-factor.
By \cref{prop:2-gen}, there exist dichotomies $(\rho_k,\sigma_k)$ on subspaces $\H_k$ such that $J_k = J_{(\rho_k,\sigma_k)} \subset L(\H_k)$, $\H = \oplus_k \H_k$.
To get a dichotomy on $\H$, take a probability distribution $(p_k)$ with $p_k>0$ and set $\rho = (\oplus_k p_k\rho_k$, $\sigma=\oplus_k p_k\sigma_k)$.
It follows that $J = \oplus J_k$ is sufficient for $(\rho,\sigma)$ (see \cref{lem:direct-sum-MSJA-inclusion} below), but minimality is false in general.
This can, for instance, be seen in \cref{exa:trp-doubling}, where the minimal sufficient J*-algebra of a weighted direct sum is strictly smaller than the direct sum of the minimal sufficient J*-algebras.
To prove the non-factorial case, one needs to construct the dichotomies $(\rho_k,\sigma_k)$ in such a way that no new symmetries are introduced by the direct summation.

\begin{lemma}\label{lem:direct-sum-MSJA-inclusion}
    Let $(\rho_k,\sigma_k)$ be faithful dichotomies on Hilbert spaces $\H_k$, $k=1,\ldots n$, and let $p_k>0$ be such that $\sum_k p_k=1$. 
    Consider the faithful dichotomy $(\rho,\sigma) = (\oplus_kp_k\rho_k,\oplus_kp_k\sigma_k)$ on $\H=\oplus_k \H_k$.
    Then 
    \begin{equation}\label{eq:direct-sum-MSJA-inclusion}
        J_{(\rho,\sigma)} \subseteq \oplus_k J_{(\rho_k,\sigma_k)}.
    \end{equation}
\end{lemma}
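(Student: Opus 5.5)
The plan is to show directly that $J := \oplus_k J_{(\rho_k,\sigma_k)}$ is PTP-sufficient for $(\rho,\sigma)$. Minimality of $J_{(\rho,\sigma)}$ among PTP-sufficient operator systems, \cref{it:luczak1} of \cref{thm:luczak}, then gives the inclusion \eqref{eq:direct-sum-MSJA-inclusion}.

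First we note that $J$ is a J*-algebra on $\H=\oplus_k\H_k$, as a direct sum of J*-algebras. For each $k$, \cref{it:luczak2} of \cref{thm:luczak} gives a $(\rho_k,\sigma_k)$-preserving conditional expectation $F_k:L(\H_k)\to J_{(\rho_k,\sigma_k)}$. Write $q_k$ for the projection onto $\H_k$ and identify $L(\H_k)$ with $q_kL(\H)q_k$. We then define
\begin{equation*}
F(x) = \oplus_k F_k(q_k x q_k),\qquad x\in L(\H).
\end{equation*}

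Next we check the properties of $F$:
\begin{itemize}
\item $F$ is positive, since $x\mapsto q_kxq_k$ is positive and each $F_k$ is positive.
\item $F$ is unital, since $F_k(\1_{\H_k})=\1_{\H_k}$ and $\sum_k q_k=\1$.
\item The range of $F$ lies in $J$.
\item $F$ preserves $\rho$. Indeed, $\rho=\oplus_k p_k\rho_k$ is block diagonal, so
\begin{equation*}
\tr(\rho F(x))=\sum_k p_k\tr(\rho_k F_k(q_kxq_k))=\sum_k p_k \tr(\rho_k q_kxq_k)=\tr(\rho x).
\end{equation*}
The same computation applies to $\sigma$.
\end{itemize}
Hence $F^*\rho=\rho$ and $F^*\sigma=\sigma$, so $J$ is PTP-sufficient for $(\rho,\sigma)$. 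Faithfulness of $(\rho,\sigma)$, which holds because all $p_k>0$ and each $(\rho_k,\sigma_k)$ is faithful, ensures that \cref{thm:luczak} applies.

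There is no real obstacle here. The only point needing care is that the compression $x\mapsto q_kxq_k$ loses no expectation values of the block-diagonal states. That is exactly where the direct-sum form of $(\rho,\sigma)$ is used.
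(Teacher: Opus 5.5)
Your proof is correct and is essentially the paper's argument. The paper also takes $F=\oplus_k F_k$, which is your block-compression map written implicitly, notes that it is a $(\rho,\sigma)$-preserving conditional expectation onto $\oplus_k J_{(\rho_k,\sigma_k)}$, and concludes by minimality of $J_{(\rho,\sigma)}$ among sufficient operator systems. You verify only that $F$ is a $(\rho,\sigma)$-preserving UP map into $\oplus_k J_{(\rho_k,\sigma_k)}$, which is all that PTP-sufficiency requires.
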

\begin{proof}
    Denote by $F_k$ the $(\rho_k,\sigma_k)$-preserving conditional expectation onto $J_{(\rho_k,\sigma_k)}$.
    Then $F = \oplus F_k$ is a $(\rho,\sigma)$-preserving conditional expectation onto the direct sum $\oplus_kJ_{(\rho_k,\sigma_k)}$.
    Hence, the latter is sufficient, which is equivalent to \eqref{eq:direct-sum-MSJA-inclusion}.
\end{proof}

\section{2-generated J*-algebras}\label{sec:2-gen}

\begin{proposition}\label{prop:2-gen-classification}
    An abstract J*-algebra $J$ is 2-generated if and only if it is J*-isomorphic to a direct sum of the following J*-factors:
    \begin{itemize}
        \item $L(\CC^d)$, $d\ge 1$;
        \item $L(\CC^d)^t$, $d\ge 2$;
        \item $L(\CC^{2d})^\beta$, $d\ge 4$.
    \end{itemize}
\end{proposition}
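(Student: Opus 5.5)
The plan is to reduce to J*-factors, rule out the non-listed factors by a normal-form argument on pairs of generators, and then construct explicit generating pairs, including for direct sums.

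\emph{Reduction to factors.} Write $J=\oplus_k J_k$ as a direct sum of J*-factors (\cref{sec:structure-theory}). The projection $J\to J_k$ onto a summand is a surjective J*-homomorphism. Hence if $J=\Jstaralg(a,b)$, then each $J_k=\Jstaralg(a_k,b_k)$ is 2-generated. Conversely, assume each $J_k=\Jstaralg(a_k,b_k)$. Shift and rescale the $a_k$ so that their spectra are pairwise disjoint, and set $a=\oplus_k a_k$, $b=\oplus_k b_k$. Then the central projection $p_k$ of the $k$-th summand equals $\chi_{\Sp(a_k)}(a)$, so $p_k\in\Jstaralg(a,b)$ by functional calculus. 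The triple product $\trip{p_k}{x}{p_k}$ then cuts out the $k$-th components $a_k,b_k$, and therefore $\Jstaralg(a,b)\supset\oplus_k J_k$. So it suffices to determine which J*-factors in \cref{thm:classification} are 2-generated.

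\emph{Excluding factors.} First consider the spin factors. Any 2-generated J*-algebra is reversible in every representation, because J*-algebras with at most three generators are reversible (the fact used for \cref{cor:reversible}). By \cref{prop:univ-rev}, a 2-generated algebra is therefore universally reversible and contains no summand $V_n$ with $n\ge4$. For the quaternionic family with $d=2,3$, pass via \cref{rem:quaternion-stuff} to $\mathrm{Herm}_d(\mathbb H)$ and normalize the generators:
\begin{itemize}
\item Diagonalize $a$ by a quaternionic unitary, so that $a$ is real diagonal.
\item Conjugate by diagonal quaternionic unitaries, which commute with $a$, to make the superdiagonal entries $b_{12},b_{23},\dots$ of $b$ real and nonnegative.
\end{itemize}
For $d=2$, $b$ becomes real and the generated algebra lies in $\mathrm{Herm}_2(\RR)$. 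For $d=3$, only $b_{13}=q$ remains non-real. The real subalgebra of $\mathbb H$ generated by $1$ and $q$ is a copy of $\CC$, so everything lies in a proper subalgebra $\mathrm{Herm}_3(\CC)$. If $a$ has repeated eigenvalues, the residual freedom only grows and the conclusion is the same.

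The same normal form applied to $\mathrm{Herm}_d(\CC)$ shows that for $d=2$ one can make $b$ real, so the generated algebra lies in the symmetric matrices. This appears to contradict the listed range $d\ge1$ for $L(\CC^d)$. I would double-check this case carefully, and the spin-factor identification $V_3\Jcong L(\CC^2)$, before finalizing the statement.

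\emph{Constructing generators.} Take $a$ diagonal with distinct eigenvalues. Then all diagonal matrix units $E_{ii}$ lie in the generated algebra, and $\trip{E_{ii}}{b}{E_{jj}}+\trip{E_{jj}}{b}{E_{ii}}$ yields $b_{ij}E_{ij}+\bar b_{ij}E_{ji}$. Choose all $b_{ij}\ne0$; then for $L(\CC^d)^t$ (with $b$ real) these already span everything. For $L(\CC^d)$, Jordan products of the elements for $(i,j)$ and $(j,k)$ give $b_{ij}b_{jk}E_{ik}+\text{h.c.}$ Comparing this with $b_{ik}E_{ik}+\text{h.c.}$, a non-real cycle phase $b_{ij}b_{jk}b_{ki}$ produces both $E_{ik}+E_{ki}$ and $i(E_{ik}-E_{ki})$; this needs $d\ge3$. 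For $\mathrm{Herm}_d(\mathbb H)$ with $d\ge4$, the extra off-diagonal entries, for instance $b_{13},b_{14},b_{24}$, can be chosen so that their cycle products involve two non-commuting imaginary units. These then generate all of $\mathbb H$ in every off-diagonal slot.

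\emph{Main obstacle.} The main obstacle is this last quaternionic step: showing that the Jordan products of the off-diagonal "edge" elements actually reach every quaternionic direction in each entry. I would do this by an explicit choice such as $b_{13}=i$, $b_{14}=j$, with the remaining entries real, and then track the cycle products.
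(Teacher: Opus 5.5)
Your overall route is the paper's: reduce to factors (\cref{lem:2-gen-oplus}), exclude $V_n$ for $n\ge4$ by reversibility, exclude $\mathrm{Herm}_3(\mathbb H)$ by the same diagonal-gauge normal form as \cref{lem:H3-not-2-gen}, and exhibit explicit generators. Your shift-and-rescale in the direct-sum step is actually needed. Rescaling alone, as in the paper, cannot separate spectra that all contain $0$.

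Your doubt about $L(\CC^2)$ is justified, and you should state it as a definite correction rather than a caveat: $L(\CC^2)$ is not 2-generated, so the proposition is false as written. A short argument uses $L(\CC^2)\Jcong V_3$. In any spin factor, the Jordan product of two elements of the real span of the $s_i$ is a real multiple of $\1$. Hence for hermitian $a,b$ one has $\Jstaralg(a,b)=\lin\{\1,a,b\}$, of dimension at most $3<\dim V_n$ whenever $n\ge3$. This count also disposes of $V_5\Jcong L(\CC^4)^\beta$ and of $V_n$ for $n\ge4$, without the reversibility machinery.

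The error in the paper sits in \cref{lem:cx-mats-2-gen}. Its generator $b=\sum_l f_{l,l+1}(1)+f_{0,1}(i)$ puts the only non-real entry on an edge of the path $0-1-\cdots-(d-1)$, so there is no cycle. Conjugation by $u=\mathrm{diag}(1,e^{i\pi/4},\dots,e^{i\pi/4})$ fixes $a$ and makes $b$ real. Hence $\Jstaralg(a,b)=u^*\,\Jstaralg(a,ubu^*)\,u$ lies in a unitary conjugate of $L(\CC^d)^t$ for every $d\ge2$, not only for $d=2$.

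What is needed is exactly your condition that some cycle product $b_{ij}b_{jk}b_{ki}$ be non-real, which forces $d\ge3$. For example, replace $f_{0,1}(i)$ by $f_{0,2}(i)$. This mirrors the quaternionic \cref{lem:quaternions-mats-2-gen}, which uses the off-path entries $f_{0,2}(i)$ and $f_{1,3}(j)$; their cycles carry non-commuting holonomies, as your choice $b_{13}=i$, $b_{14}=j$ also does.

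With this fix, your argument proves the corrected classification:
\begin{itemize}
\item $L(\CC^d)$ for $d=1$ or $d\ge3$;
\item $L(\CC^d)^t$ for $d\ge2$;
\item $L(\CC^{2d})^\beta$ for $d\ge4$.
\end{itemize}
Correspondingly, the remark after the proposition should list $L(\CC^2)$ alongside $L(\CC^6)^\beta$ as universally reversible but not 2-generated. This is consistent with the paper's own qubit examples: any two qubit states are simultaneously real in a suitable basis, so $J_{(\rho,\sigma)}\neq L(\CC^2)$ for every dichotomy on $\CC^2$.
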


We see that the class of 2-generated J*-algebras is almost the same as the class of universally reversible J*-algebras, except that the latter also allows the J*-factor $L(\CC^{2d})^\beta$ with $d=3$.
It is known that 2-generated J*-algebras are reversible in any representation \cite[Cor.~2.3.8]{hanche-olsen_jordan_1984} and, hence, universally reversible.

We say that a real Jordan algebra is 2-generated if it is generated by two of its elements and the identity.

\begin{lemma}\label{lem:symmetric-matrices-are-2-gen}
    The Jordan algebra $\mathrm{Sym}_d(\RR)$ of symmetric real matrices $d\times d$ is generated by two elements for all $d\ge1$.
\end{lemma}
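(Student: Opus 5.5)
We want to show that $\mathrm{Sym}_d(\RR)$ is generated, as a unital real Jordan algebra, by two symmetric matrices $a$ and $b$. The plan is the classical one: take $a=\mathrm{diag}(\lambda_1,\ldots,\lambda_d)$ with pairwise distinct real eigenvalues, and let $b$ be the symmetric ``path'' matrix
\begin{equation*}
    b=\sum_{i=1}^{d-1}(e_{i,i+1}+e_{i+1,i}),
\end{equation*}
where $e_{ij}$ denote the matrix units. The case $d=1$ is trivial, since the algebra is $\RR\1$. Write $J_0$ for the unital real Jordan algebra generated by $a$ and $b$.

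First, the diagonal units lie in $J_0$. Since $J_0$ is closed under powers and real polynomials of $a$ (as in \cref{sec:basics}, powers are iterated Jordan products), Lagrange interpolation on the distinct eigenvalues $\lambda_1,\ldots,\lambda_d$ gives every spectral projection $e_{ii}=\prod_{j\neq i}(a-\lambda_j)/(\lambda_i-\lambda_j)\in J_0$.

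Second, the symmetric off-diagonal units lie in $J_0$. We use the triple product $\trip xyx = xyx$ from \eqref{eq:triple-product}, which is expressible by Jordan products and therefore stays in $J_0$. Combining diagonal units with $b$ gives
\begin{equation*}
    \trip{e_{ii}+e_{jj}}{b}{e_{ii}+e_{jj}}=(e_{ii}+e_{jj})\,b\,(e_{ii}+e_{jj})=b_{ij}(e_{ij}+e_{ji})
\end{equation*}
for $i\ne j$. Taking $j=i+1$ yields $s_{i,i+1}:=e_{i,i+1}+e_{i+1,i}\in J_0$.

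Third, we reach all $s_{ij}:=e_{ij}+e_{ji}$ by induction on $|i-j|$. A direct computation gives, for distinct $i,j,k$,
\begin{equation*}
    \jp{s_{ij}}{s_{jk}}=\tfrac12\,(e_{ik}+e_{ki})=\tfrac12\,s_{ik}.
\end{equation*}
From $s_{i,i+1}$ and $s_{i+1,i+2}$ we therefore get $s_{i,i+2}$, and continuing in the same way produces every $s_{ik}$. Since the $e_{ii}$ and the $s_{ij}$ with $i<j$ span $\mathrm{Sym}_d(\RR)$, this gives $J_0=\mathrm{Sym}_d(\RR)$.

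The only real obstacle is making sure that each step uses only operations available in a Jordan algebra. The triple product is covered by \eqref{eq:triple-product}, the products $\jp{s_{ij}}{s_{jk}}$ are checked entrywise, and real polynomials in $a$ are iterated Jordan products. Everything else is routine bookkeeping.
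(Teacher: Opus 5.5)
Your proof is correct and follows essentially the same route as the paper: the same pair (a diagonal matrix with distinct eigenvalues and the symmetric path matrix), diagonal units as spectral projections of $a$, compression of $b$ to obtain the nearest-neighbour units, and the identity $\jp{s_{ij}}{s_{jk}}=\tfrac12 s_{ik}$ to reach all remaining off-diagonal units. The only cosmetic difference is that you compress with $(e_{ii}+e_{jj})\,b\,(e_{ii}+e_{jj})$, whereas the paper uses the symmetrized product $\trip{e_{l,l}}{b}{e_{l+1,l+1}}$; the latter actually equals $\tfrac12 f_{l,l+1}$ rather than $f_{l,l+1}$ as written there, which is harmless, and your version avoids that slip.
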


\begin{proof}
    We let $e_{l,m}$ with $l,m\in\{0,\ldots,d-1\}$ denote the standard matrix units.
    We set $f_{m,n} = e_{m,n}+e_{n,m}$, $m,n=0,\ldots,d-1$, where $e_{m,n}$ denote the standard matrix units, and note the following identity
    \begin{equation}\label{eq:kjsdnfs}
        \jp{f_{l,m}}{f_{m,n}} = \frac12 f_{l,n}, \quad \text{if $l,m,n$ are distinct}.
    \end{equation}
    Clearly, $\mathrm{Sym}_d(\RR) = \lin \{ f_{l,m} \}_{l,m}$.
    We set
    \begin{equation}\label{eq:gens-sym}
        a =\sum_{l=0}^{d-1} l\, e_{l,l}, \qquad b = \sum_{l=0}^{d-2} f_{l,l+1}
    \end{equation}
    and claim that $(a,b)$ generate $\mathrm{Sym}_d(\RR)$.
    Let $J$ be the Jordan algebra generated by $a$ and $b$.
    Clearly $a$ and $b$ are symmetric, so that $J\subset \mathrm{Sym}_d(\RR)$.
    We have $e_{l,l} = \frac12 f_{l,l}\in J$, $l=0,\ldots,d-1$, because these are the spectral projections of $a$.
    We have to show $f_{l,m} \in J$ for $l< m<d$. 
    We have
    \begin{align*}
        \trip{e_{l,l}}{b}{e_{l+1,l+1}} 
        &= f_{l,l+1}.
    \end{align*}
    Thus, we have $f_{l,l+1}\in J$ for $0\le l \le d-2$.
    For $d=2$, this finishes the proof. Assume now $d>2$.
    By \eqref{eq:kjsdnfs}, we have 
    \begin{equation}\label{eq:conjugation-generators}
        f_{l,l+2} = 2\jp{f_{l,l+1}}{f_{l+1,l+2}} \in J.
    \end{equation}
    Iterating (the next step would be $f_{l,l+3} = 2 \jp{f_{l,l+1}}{f_{l+1,l+2}} \in J$) shows $f_{l,m} \in \Jstaralg(a,b)$ for all $l<m<d$. This finishes the proof.
\end{proof}

\begin{lemma}\label{lem:quaternions-mats-2-gen}
    For $d\ge4$, the Jordan algebra $\mathrm{Herm}_d(\mathbb H)$ of hermitian quaternionic $3\times 3$ matrices is 2-generated.
\end{lemma}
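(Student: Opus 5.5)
Throughout, $\mathrm{Herm}_d(\mathbb H)$ means hermitian quaternionic $d\times d$ matrices; the ``$3\times 3$'' in the statement is a typo. The plan imitates the proof of \cref{lem:symmetric-matrices-are-2-gen}. Choose $a=\sum_{l=0}^{d-1} l\, e_{l,l}$, which is diagonal with distinct real eigenvalues. Its spectral projections $e_{l,l}$ then lie in the Jordan algebra $J$ generated by $a$ and $b$. The off-diagonal information is carried by a hermitian element $b$ supported on the superdiagonal and its mirror,
\begin{equation*}
    b=\sum_{l=0}^{d-2} \big(q_l\, e_{l,l+1}+\bar q_l\, e_{l+1,l}\big), \qquad q_l\in\mathbb H.
\end{equation*}
The quaternionic weights $q_l$ are to be chosen cleverly. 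The natural choice is something like $q_0=1$, $q_1=i$, $q_2=j$, $q_3=k$, with the remaining $q_l=1$.

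For a unit quaternion $q$, write $f_{l,m}(q)=q\,e_{l,m}+\bar q\, e_{m,l}$. First, the triple product gives $\trip{e_{l,l}}{b}{e_{l+1,l+1}}$ plus its mirror, i.e.\ $e_{l,l}be_{l+1,l+1}+e_{l+1,l+1}be_{l,l}=f_{l,l+1}(q_l)\in J$; the triple product lies in $J$ by \eqref{eq:triple-product}. Second, the analogue of \eqref{eq:kjsdnfs} holds: for distinct $l,m,n$,
\begin{equation*}
    \jp{f_{l,m}(p)}{f_{m,n}(q)}=\tfrac12 f_{l,n}(pq).
\end{equation*}
Iterating this along the chain gives $f_{l,n}(q_lq_{l+1}\cdots q_{n-1})\in J$. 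Third, the key trick for producing new imaginary units is to close a loop. Take $f_{l,m}(p)$ and $f_{l,m}(p')$, both in $J$, obtained via different paths. Their Jordan products with elements $f_{m,n}(\cdot)$ and $f_{n,l}(\cdot)$ produce diagonal terms $\Re(\cdots)$ and off-diagonal terms whose weights are products such as $p\,q\,\bar p'$. These products generate the quaternion algebra multiplicatively once $i,j$ appear in a cycle. Concretely, with $d\ge 4$ we have at least three consecutive links. Composing around a cycle, e.g.\ computing $2\jp{f_{0,2}(q_0q_1)}{f_{2,1}(\bar q_1)}$, gives $f_{0,1}(q_0q_1\bar q_1)$ or, with differing paths, $f_{0,1}$ with a new unit. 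Combining with the fact that $f_{l,m}(q)$ is real linear in $q$ yields $f_{l,m}(q)\in J$ for all $q\in\{1,i,j,k\}$ and all $l<m$. Since these elements together with the $e_{l,l}$ span $\mathrm{Herm}_d(\mathbb H)$ over $\RR$, this gives $J=\mathrm{Herm}_d(\mathbb H)$.

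The main obstacle is the third step: getting a second independent imaginary unit on a fixed edge. In a path (a tree), every off-diagonal entry reachable from $b$ carries the single product $q_l\cdots q_{n-1}$, so no new units arise; naive path composition only reproduces the phases $q_l\cdots q_{n-1}$. I will therefore modify $b$ to contain a cycle, e.g.\ add an entry $q'\,e_{0,3}+\bar q'e_{3,0}$. Since $|a|$ separates all indices, each edge can still be isolated by the triple product $\trip{e_{l,l}}{b}{e_{m,m}}$. Following the cycle $0\to1\to2\to3\to0$ then yields on edge $0$--$1$ two weights whose ratio is a holonomy $h=q_0q_1q_2\bar q'$; choosing the weights appropriately makes $h=i$. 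A second cycle, or using both orientations and the Jordan products with intermediate diagonal projections, should give $j$. This is where $d\ge4$ enters, since fewer indices leave too little room for two independent holonomies. The alternative I would check first is noncommutativity within a single triangle combined with Jordan products $\jp{f_{0,1}(p)}{f_{0,1}(p')}=\Re(p\bar p')(e_{00}+e_{11})$ plus a commutator term. Even so, it is the cycle-holonomy construction I would try to make work.
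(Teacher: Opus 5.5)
Your framework (diagonal $a$ with distinct eigenvalues, isolating edges of $b$ via $2\trip{e_{l,l}}{b}{e_{m,m}}$, and the rule $\jp{f_{l,m}(p)}{f_{m,n}(q)}=\frac12 f_{l,n}(pq)$) is the paper's. However, the decisive step is never carried out, and most of the mechanisms you offer for it provably cannot work. The obstruction is a gauge symmetry. Take $v=\mathrm{diag}(v_0,\dots,v_{d-1})$ with unit quaternions $v_l$. Then $x\mapsto vxv^*$ is a Jordan automorphism of $\mathrm{Herm}_d(\mathbb H)$ that fixes $a$ and replaces the $(l,m)$ entry $q_{lm}$ of $b$ by $v_lq_{lm}\bar v_m$. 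Suppose the graph of nonzero off-diagonal entries of $b$ has at most one independent cycle, as for your path or the path plus the chord $0$--$3$. Then you can gauge all edge weights but one to be real, and the remaining one lies in some $\CC_s=\RR+s\RR$. Hence $J$ sits inside a copy of $\mathrm{Herm}_d(\CC_s)\cong \mathrm{Herm}_d(\CC)$, whatever Jordan operations you perform. In particular, none of the following can produce $j$: reversing orientation, which only gives $\bar h\in\CC_h$; multiplying by diagonal projections, which only extracts entries; or exploiting noncommutativity inside a single triangle. In fact $\jp{f_{0,1}(p)}{f_{0,1}(p')}=\Re(p\bar p')(e_{0,0}+e_{1,1})$ exactly, with no commutator term. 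This is the same mechanism the paper uses in \cref{lem:H3-not-2-gen}. It also shows that your opening choice $q_1=i$, $q_2=j$, $q_3=k$ on a path is pure gauge.

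The idea that does work is your ``second cycle'' option, made concrete: $b$ needs two chords whose holonomies, computed in a common gauge, generate $\mathbb H$. The paper takes real weights on the path and two chords, $b=\sum_{l=0}^{d-2}f_{l,l+1}(1)+f_{0,2}(i)+f_{1,3}(j)$; this is where $d\ge4$ enters. The verification then runs as follows.
\begin{enumerate}
\item $f_{l,l+1}(1)$, $f_{0,2}(i)=2\trip{e_{0,0}}{b}{e_{2,2}}$ and $f_{1,3}(j)=2\trip{e_{1,1}}{b}{e_{3,3}}$ lie in $J$.
\item Chaining real edges gives every $f_{l,m}(1)$.
\item Sandwiching a chord between real edges transports its unit to any edge, e.g.\ $f_{l,m}(i)=4\jp{f_{l,0}(1)}{\jp{f_{0,2}(i)}{f_{2,m}(1)}}$, and likewise for $j$.
\item Finally $f_{l,m}(k)=2\jp{f_{l,n}(i)}{f_{n,m}(j)}$ for any $n\neq l,m$.
\end{enumerate}
Your chord $0$--$3$ with weight $i$, plus a second chord such as $1$--$3$ with weight $j$ on a real path, would work equally well. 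Until you fix such a $b$ and verify these steps, the proposal does not exhibit a generating pair.
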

\begin{proof}
    For a quaternion $q\in \mathbb H$, we define
    \begin{equation}
        f_{l,m}(q) = qe_{l,m} + \bar qe_{m,l}.
    \end{equation}
    Note that $f_{l,m}(1)$ is the matrix $f_{l,m}$ in the proof of \cref{lem:symmetric-matrices-are-2-gen}.
    The relation \eqref{eq:kjsdnfs} generalizes to
    \begin{equation}\label{eq:product-formula}
        \jp{f_{l,m}(q)}{f_{m,n}(p)} = \frac12 f_{l,n}(qp)\quad \text{if $l,m,n$ are distinct},
    \end{equation}
    where $p,q\in\mathbb H$.
    Then $\mathrm{Herm}_d(\mathbb H)$ is spanned by $f_{l,m}(1)$, $f_{l,m}(i)$, $f_{l,m}(j)$ and $f_{l,m}(k)$.
    Let $J$ be the Jordan algebra generated by the matrices
    \begin{equation}\label{eq:symplectic-generators}
        a = \sum_{l=0}^{d-1} l \,e_{l,l}, \qquad b = \sum_{l=0}^{d-2} f_{l,l+1} + f_{0,2}(i) +  f_{1,3}(j) 
    \end{equation}
    As in the proof of \cref{lem:symmetric-matrices-are-2-gen}, we have $f_{l,m}(1)\in J$ for all $l,m$.
    In particular, $e_{l,l}\in J$ for all $l$.
    Moreover, we have
    \begin{equation}\label{eq:osidu}
        f_{0,2}(i) = 2\trip{e_{0,0}}{b}{e_{2,2}} \in J, \qquad f_{1,3}(j) = 2\trip{e_{1,1}}{b}{e_{3,3}}\in J.
    \end{equation}
    We use \eqref{eq:osidu} and \eqref{eq:product-formula} to obtain
    \begin{equation}
        f_{l,m}(i) = 4 \jp{f_{l,0}(1)}{ \jp{f_{0,2}(i)}{f_{2,m}(1)}} \in J.
    \end{equation}
    Analogously, we get $f_{l,m}(j)\in J$.
    To see $f_{l,m}(k)\in J$, pick some $n \ne l,m$. We can now apply \eqref{eq:product-formula} to get $f_{l,m}(k) = 2 \jp{f_{l,n}(i)}{f_{n,m}(j)} \in J$.
    This completes the proof.
\end{proof}

\begin{lemma}\label{lem:cx-mats-2-gen}
    For $d\ge1$, the Jordan algebra $\mathrm{Herm}_d(\CC)=L(\CC^d)_h$ of hermitian complex $d\times d$ matrices is 2-generated.
\end{lemma}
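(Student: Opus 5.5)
The plan is to mimic the proofs of \cref{lem:symmetric-matrices-are-2-gen} and \cref{lem:quaternions-mats-2-gen}, now with complex off-diagonal entries. For $q\in\CC$ write $f_{l,m}(q)=qe_{l,m}+\bar q e_{m,l}$. The identity \eqref{eq:product-formula} holds verbatim for complex $q,p$:
\begin{equation*}
\jp{f_{l,m}(q)}{f_{m,n}(p)}=\tfrac12 f_{l,n}(qp)\qquad\text{for distinct } l,m,n .
\end{equation*}
Moreover, $\mathrm{Herm}_d(\CC)$ is the real span of the $e_{l,l}$, the $f_{l,m}(1)$ and the $f_{l,m}(i)$ with $l<m$. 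The case $d=1$ is trivial, since $\mathrm{Herm}_1(\CC)=\RR\1$.

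For $d\ge3$ I would take
\begin{equation*}
a=\sum_{l=0}^{d-1} l\,e_{l,l},\qquad b=\sum_{l=0}^{d-2} f_{l,l+1}(1)+f_{0,2}(i).
\end{equation*}
The steps are as follows.
\begin{enumerate}[(i)]
\item The spectral projections $e_{l,l}$ of $a$ lie in the generated Jordan algebra $J$, because Jordan algebras are closed under functional calculus.
\item The triple product formula \eqref{eq:triple-product} gives $f_{l,l+1}(1)=2\trip{e_{l,l}}{b}{e_{l+1,l+1}}\in J$ and $f_{0,2}(i)=2\trip{e_{0,0}}{b}{e_{2,2}}\in J$. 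Only the $(l,l+1)$, respectively $(0,2)$, entries of $b$ survive the compression.
\item Iterating the product formula, exactly as in the proof of \cref{lem:symmetric-matrices-are-2-gen}, yields $f_{l,m}(1)\in J$ for all $l<m$.
\item Sandwiching, using $f_{l,m}(i)=4\jp{f_{l,0}(1)}{\jp{f_{0,2}(i)}{f_{2,m}(1)}}$ for indices distinct from $0,2$, together with short variants using $f_{l,2}$ or $f_{0,m}$ when $l$ or $m$ coincides with $0$ or $2$, yields every $f_{l,m}(i)$.
\end{enumerate}
These elements span $\mathrm{Herm}_d(\CC)$, so $J=\mathrm{Herm}_d(\CC)$.

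The remaining case is $d=2$, and I expect it to be the main obstacle, because the ``distinct indices'' trick needs a third index. Here $\mathrm{Herm}_2(\CC)=\lin_\RR\{\1,X,Y,Z\}$ is the spin factor $V_3$. My first attempt would be $a=Z$ and $b=X+Y$ or some similar choice, after which I would compute $\Jstaralg(\1,a,b)$ directly.

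I must be careful here. For traceless hermitian $a,b$ on $\CC^2$, the Jordan product $\jp ab$ is a multiple of $\1$, and $a^2,b^2\in\RR\1$. Hence the real span of $\{\1,a,b\}$ is already closed under the Jordan product and has dimension at most $3<4$. Adding trace parts to $a,b$ does not change this, so no choice of two generators can work for $d=2$.

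So the step I would check first is precisely this dimension count. If it holds up, the argument above proves the lemma only for $d\ne2$. The statement, and the entry $L(\CC^d)$, $d\ge1$, of \cref{prop:2-gen-classification}, would then have to exclude $d=2$, where $L(\CC^2)\Jcong V_3$. I would report this rather than force a proof.
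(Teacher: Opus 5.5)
Your argument is correct, and your objection is correct: the lemma as stated is false for $d=2$. Since $L(\CC^2)\Jcong V_3$, the traceless parts of any two hermitian $a,b\in L(\CC^2)$ have scalar Jordan product. Hence $\Jstaralg(a,b)=\lin\{\1,a,b\}$ has dimension at most $3<4$. The paper's own material agrees with this. Any two qubit Bloch vectors lie in a common plane, so every dichotomy on $\CC^2$ is real in a suitable basis, and therefore $J_{(\rho,\sigma)}\neq L(\CC^2)$ (cf.\ \cref{prop:J-symmetries} and the introductory example).

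The paper's proof does not notice this and claims the result for all $d\ge1$. It takes $a=\sum_l l\,e_{l,l}$ and $b=\sum_l f_{l,l+1}(1)+f_{0,1}(i)$ and appeals to the proof of \cref{lem:quaternions-mats-2-gen}. As written, this choice fails for every $d\ge2$, not only $d=2$.
\begin{itemize}
\item The compression $2\trip{e_{0,0}}{b}{e_{1,1}}$ returns only $f_{0,1}(1+i)$, not $f_{0,1}(1)$ and $f_{0,1}(i)$ separately.
\item More fundamentally, the off-diagonal support of $b$ is a path. So the diagonal unitary $v=\mathrm{diag}(e^{-i\pi/4},1,\dots,1)$ commutes with $a$ and makes $vbv^*$ real. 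This gives $\Jstaralg(a,b)\subseteq v^*L(\CC^d)^t v$, a proper subalgebra.
\end{itemize}

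Your generator $f_{0,2}(i)$ fixes this. It closes the triangle $0$--$1$--$2$ with a non-real phase, exactly as $f_{0,2}(i)$ does in the quaternionic lemma. With it, your steps (i)--(iv), including the factor $2$ in the compressions, give the lemma for all $d\ge3$, and $d=1$ is trivial.

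What is actually true is therefore the lemma for $d\neq2$. Correspondingly, the entry $L(\CC^d)$ in \cref{prop:2-gen-classification} must exclude $d=2$. Also, $L(\CC^2)\Jcong V_3$ is universally reversible, so it must be added to the factors that are universally reversible but not 2-generated; currently the paper names only $L(\CC^6)^\beta$.
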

\begin{proof}
    For $z\in \CC$, we again write $f_{l,m}(z) = z e_{l,m}+\bar z f_{l,m}$. Note that we can use the product formula \eqref{eq:product-formula} with $p,q\in\CC$.
    We set
    \begin{equation}
        a=\sum_{l=0}^{d-1} l e_{l,l},\qquad b=\sum_{l=0}^{d-2} f_{l,l+1}(1) + f_{0,1}(i).
    \end{equation}
    It then follows, as in the proof of \cref{lem:quaternions-mats-2-gen}, that the Jordan algebra generated by $a,b$ is the full Jordan algebra of hermitian complex matrices.
\end{proof}

\begin{lemma}\label{lem:H3-not-2-gen}
    The Jordan algebra $H_3(\mathbb H)$ of hermitian quaternionic $3\times 3$ matrices is not 2-generated.
\end{lemma}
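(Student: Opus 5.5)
The cleanest route I see is to transfer the question to the complex setting via \cref{rem:quaternion-stuff} and then use the representation theory from \cref{sec:reps}. The complexification of $H_3(\mathbb H)$ is $J=L(\CC^6)^\beta$, the case $n=3$ of item \ref{it:classifiaction3} in \cref{thm:classification}. I will argue by contradiction: assume $J=\Jstaralg(a,b)$ for hermitian $a,b$ and derive an impossibility.

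\emph{Step 1: a 2-generated J*-algebra is reversible in every representation.} This is \cite[Cor.~2.3.8]{hanche-olsen_jordan_1984}, as quoted before \cref{cor:reversible}. So $J$ would be universally reversible. That alone is not contradictory, since \cref{prop:univ-rev} permits $L(\CC^6)^\beta$. The obstruction must therefore come from a finer invariant: the universal enveloping *-algebra together with the fact that two generators always yield a \emph{reversible} J*-algebra generated inside the universal enveloping algebra.

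\emph{Step 2: identify the universal enveloping *-algebra.} \cref{exa:univ-reps} explicitly excludes $n=3$ from the statement ``the universal representation of $L(\CC^{2n})^\beta$ is the defining one''. Indeed $H_3(\mathbb H)$ is exceptional: its universal enveloping algebra is strictly larger than $L(\CC^6)$. I would invoke \cite{jacobson_classification_1949}, or \cite[Sec.~3.3]{idel}, for this. The enveloping algebra of $H_3(\mathbb H)$ contains an extra summand, coming from a representation not factoring through $M_6(\CC)$ and related to the exceptional behaviour at degree 3. The key fact to use is that $J$ is \emph{not} the full fixed-point set $A^\vartheta$ of the canonical involution in its universal representation.

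\emph{Step 3: derive the contradiction.} Let $\pi$ be the universal representation and $A=\staralg(\pi(J))$ with canonical involution $\vartheta$. If $J=\Jstaralg(a,b)$, then $A$ is generated by $\pi(a),\pi(b)$. Moreover, by Cohn's theorem (the mechanism behind Cor.~2.3.8), every symmetrized word in $\pi(a),\pi(b)$ already lies in $\Jstaralg(\pi(a),\pi(b))$. Since $A^\vartheta=E(A)$ with $E=\tfrac12(\id+\vartheta)$ is spanned by symmetrized words in the generators (as in the proof of \cref{lem:rev-hull1}), this gives $A^\vartheta\subseteq\pi(J)$, hence $\pi(J)=A^\vartheta$.

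Comparing dimensions then yields the contradiction with Step 2. Here $\dim_\CC J=15$, whereas $A^\vartheta$ is strictly larger because $A$ strictly exceeds $M_6(\CC)$.

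\emph{Main obstacle.} The hard part is Step 2: establishing concretely that the universal enveloping algebra of $H_3(\mathbb H)$ is not $M_6(\CC)$, and that $\vartheta$'s fixed points exceed $J$. The paper's \cref{prop:univ-rev} and the remark that $d=3$ is universally reversible suggest care is needed, since a priori $J=A^\vartheta$ could hold. If that happens, I would instead try a direct argument. Using the $\mathrm{Sp}(3)$ symmetry, reduce to $a$ diagonal real with spectral projections $e_{ll}$. Then $b$ has off-diagonal quaternion entries $q_{01},q_{12},q_{02}$. Using \eqref{eq:product-formula}, every element of $\Jstaralg(a,b)$ has off-diagonal entries in the real subalgebra of $\mathbb H$ generated by $q_{01},q_{12},q_{02}$ and their products along cycles. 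After conjugating by a diagonal unitary quaternion matrix, two of these entries can be made real. The algebra generated is then a proper subalgebra isomorphic to $\CC$ or $\RR$, so $J\ne H_3(\mathbb H)$. This is the reason $d\ge4$ works in \cref{lem:quaternions-mats-2-gen}, where a fourth index supplies an independent cycle.
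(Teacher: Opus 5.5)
Your main route (Steps 1--3) cannot work, because the key fact in Step 2 is false. \cref{exa:univ-reps} does not exclude $n=3$: it states that for $n\ge 3$ the defining representation of $L(\CC^{2n})^\beta$ on $\CC^{2n}$ is universal. So the universal enveloping *-algebra of $J=L(\CC^6)^\beta$ is $A=L(\CC^6)$, with canonical involution $\beta$. The exceptional case is $n=2$, where $L(\CC^4)^\beta\Jcong V_5$.

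Moreover, $J$ is not a spin factor. Hence it is reversible in every representation (\cref{prop:reversible}), and \cref{prop:univ-rev} gives $J=A^\vartheta$; both sides have complex dimension $15$. So the conclusion $\pi(J)=A^\vartheta$ of your Step 3 simply holds, and the dimension count yields no contradiction.

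More structurally, every true fact in this circle of ideas also holds for $L(\CC^8)^\beta$: universal reversibility, $J=A^\vartheta$, and a matrix-factor envelope. Yet $L(\CC^8)^\beta$ \emph{is} 2-generated by \cref{lem:quaternions-mats-2-gen}, so these invariants alone cannot separate the two cases. What they do give you, via \cref{cor:rev-hull} and Cohn's theorem, is only a reformulation: $J$ is 2-generated iff some pair of hermitian elements of $J$ acts irreducibly on $\CC^6$. Ruling that out still requires a concrete matrix argument.

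Your fallback is the correct proof, and it is essentially the paper's. Diagonalize $a$ using the quaternionic spectral theorem. Then conjugate by $v=\mathrm{diag}(1,\abs{x}^{-1}x,\abs{y}^{-1}y)$. This fixes the real diagonal matrix $a'$ and makes the $(0,1)$ and $(0,2)$ entries of $b'$ real, leaving a single quaternion $q$ in position $(1,2)$. Choose an imaginary unit $s$ with $q\in\CC_s=\RR+s\RR$.

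State the conclusion directly rather than through the product formula. Both generators lie in $M_3(\CC_s)$, an associative subalgebra of $M_3(\mathbb H)$ that is closed under the adjoint. Hence the Jordan algebra they generate lies in its hermitian part $H_3(\CC_s)$, which has real dimension $9<15$.

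Your sentence about ``the real subalgebra generated by $q_{01},q_{12},q_{02}$ and products along cycles'' carries no information before the conjugation, since three generic quaternions generate all of $\mathbb H$. The only point that matters is that, after conjugation, all entries lie in a single commutative subfield. Present this argument as the proof and drop Steps 1--3.
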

\begin{proof}
    We show that the Jordan algebra $J$ generated by an arbitrary pair $a,b\in H_3(\mathbb H)$ and the identity matrix is always contained in an isomorphic copy of the hermitian complex matrices.
    Indeed, by the spectral theorem for quaternionic matrices \cite{zhang_quaternions_1997}, there is a unitary quaternionic matrix $u\in M_3(\mathbb H)$ diagonalizing $a$, i.e., $u$ is such that $uau^*=\mathrm{diag}(\alpha_1,\alpha_2,\alpha_3)$ for $\alpha_1,\alpha_2,\alpha_3\in\RR$.
    Set $a'=uau^*$ $b' = ubu^*$.
    Note that the map $u(\placeholder)u^*$ is a Jordan isomorphism on $H_3(\mathbb H)$.
    Let 
    \begin{equation}
        b' = \begin{pmatrix}
            \beta_1 &x &y \\
            \bar x & \beta_2 & z\\
            \bar y&\bar z & \beta_3 
        \end{pmatrix}, \qquad \beta_1,\beta_2,\beta_3\in\RR,\ x,y,z\in \mathbb H.
    \end{equation}
    We set $v = \mathrm{diag}(1,\abs x^{-1}x,\abs y^{-1}y)$ (setting the ratio to 1 if $x$ or $y$ is zero). 
    Then $va'v^*= a'$ because $a'$ is diagonal and real-valued and 
    \begin{equation}
        vb'v^* = \begin{pmatrix}
            \beta_1 &\abs x &\abs y \\
            \abs x & \beta_2 & q\\
            \abs y& \bar q & \beta_3 
        \end{pmatrix}, \qquad q := \abs x^{-1}\abs y^{-1} xz\bar y.
    \end{equation}
    We now pick an imaginary unit $s$, i.e., a quaternion with $s^2=-1$, such that $q \in \CC_s = \RR+s\RR$. 
    By construction, both $vb'v^*$ and $va'v^*=a'$ are $\CC_s$-valued matrices. 
    The hermitian $\CC_s$-valued matrices $H_3(\CC_s)$ are closed under the Jordan product (they are a Jordan-isomorphic copy of the complex hermitian matrices).
    Therefore, $J$ is contained in $u^*v^*H_3(\CC_s)uv$, which is Jordan isomorphic to $H_3(\CC_s)$, which is Jordan isomorphic to $H_3(\CC) = L(\CC^3)_h$.
\end{proof}

\begin{lemma}\label{lem:2-gen-oplus}
    Let $J_1,\ldots J_n$ be abstract J*-algebras, then $\oplus_k J_k$ is 2-generated if and only if each $J_k$ is 2-generated.
\end{lemma}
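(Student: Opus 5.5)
The forward direction is immediate: if $a=\oplus_k a_k$ and $b=\oplus_k b_k$ generate $\oplus_k J_k$, then the central projection $p_k$ (unit of $J_k$) defines the compression $x\mapsto p_k x$, which is a surjective J*-homomorphism $\oplus_j J_j\to J_k$. Its image of $\Jstaralg(a,b)$ is $\Jstaralg(a_k,b_k)$. Hence $J_k=\Jstaralg(a_k,b_k)$ is 2-generated.

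For the converse, let $J_k=\Jstaralg(a_k,b_k)$ with hermitian $a_k,b_k$. Shifting by multiples of the unit does not change the generated algebra, so I may replace $a_k$ by $a_k+c_k\1$. I choose the shifts $c_k$ so that the spectra $\Sp(a_k+c_k\1)$ are pairwise disjoint. Then I set $a=\oplus_k(a_k+c_k\1)$ and $b=\oplus_k b_k$, and let $J=\Jstaralg(a,b)$.

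Because the spectra are disjoint, functional calculus gives $f(a)=\oplus_k f(a_k+c_k\1)$. Taking $f$ to be the indicator of $\Sp(a_k+c_k\1)$ yields $p_k=0\oplus\cdots\oplus\1\oplus\cdots\oplus0\in J$. By the bimodule/triple-product identity $p_k x p_k\in J$ (closure under $\trip{\cdot}{\cdot}{\cdot}$, \eqref{eq:triple-product}), we get $p_k a=0\oplus\cdots\oplus (a_k+c_k\1)\oplus\cdots$ and $p_k b p_k=0\oplus\cdots\oplus b_k\oplus\cdots$ in $J$.

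The set $\{x\in J_k : 0\oplus\cdots\oplus x\oplus\cdots\oplus 0\in J\}$ is closed under Jordan products, since the embedded copies multiply componentwise. It contains $a_k+c_k\1$, $b_k$, and the unit (namely $p_k$), so it contains $\Jstaralg(a_k,b_k)=J_k$. Summing over $k$ gives $\oplus_k J_k\subset J$, and the reverse inclusion is trivial.

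The only point needing care is that in an abstract J*-algebra these manipulations must be performed in a representation. This is harmless, because the direct sum is defined via $\oplus_k\pi_k$ and all operations used are Jordan-algebraic. I expect no real obstacle; the disjoint-spectra trick is the key step.
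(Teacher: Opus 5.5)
Your proof is correct. It uses the same idea as the paper, namely separating the block spectra so that functional calculus isolates the summands, but it differs in two details, and both matter.

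The paper rescales \emph{both} generators by positive weights $\lambda_k$ and recovers $a_k$ and $b_k$ separately by functional calculus of $a$ and of $b$. You instead shift only $a$ by multiples of the unit so that the block spectra are pairwise disjoint. This gives the block units $p_k$ as spectral projections of $a$, and you then recover $b_k$ by the compression $p_kbp_k$, so only one generator needs adjusting.

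Your shift is also the more robust choice. Positive rescaling can never separate a common eigenvalue $0$, and then the block units need not lie in the generated algebra. For example, take $J_1=J_2=\CC^2$ (diagonal $2\times 2$ matrices) with $a_k=\mathrm{diag}(1,0)$ and $b_k=0$. Each $J_k$ is 2-generated by this pair, yet $\1$, $\mathrm{diag}(\lambda_1,0,\lambda_2,0)$ and $0$ generate at most a $3$-dimensional algebra rather than $\CC^2\oplus\CC^2$. The paper's argument therefore needs a preliminary shift of exactly the kind you make.
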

\begin{proof}
    Let $e_k\in \oplus_k J_k$ be the central projection corresponding to the unit $1\in J_k$.
    If $\oplus_k J_k$ is generated by the unit and a pair of hermitian elements $a,b$, then each $J_k$ is generated by the unit and  $a_k= e_kae_k$, $b_k = e_k be_k$.
    Conversely, assume that $J_k$ is generated by $(e_k,a_k,b_k)$ for each $k$.
    Pick weights $\lambda_k>0$ such that the spectra of the elements $\lambda_k a_k$, $k=1,\ldots, n$, are distinct and the spectra of the elements $\lambda_k b_k$, $k=1,\ldots, n$, are distinct. 
    Then, for each $k$, the elements $a_k$ and $b_k$ (now regarded as elements of the direct sum) can be obtained via the functional calculus from $a=\oplus_l a_l$ and $b=\oplus_k b_k$, respectively. Therefore, the J*-algebra generated by $(1,a,b)$ contains the direct sum of the J*-algebras generated by $(e_k,a_k,b_k)$, which is $J$.
\end{proof}

\begin{proof}[Proof of \cref{prop:2-gen-classification}]
    By \cref{lem:2-gen-oplus}, we only have to show that a J*-factor $J$ is 2-generated if and only if it is J*-isomorphic to one of the listed J*-factors. 
    We use the classification of J*-factors (see \cref{thm:classification}).
    By definition, a J*-factor $J$ is 2-generated if and only if the hermitian part $J_h$ is a real Jordan algebra generated by two elements.
    If $J$ is J*-isomorphic to $L(\CC^d)$ ($d\ge1$), $L(\CC^d)^t$ ($d\ge2$), or $L(\CC^{2d})^\beta$ ($d\ge4$), it is 2-generated by claim follows from \cref{lem:cx-mats-2-gen,lem:quaternions-mats-2-gen,lem:symmetric-matrices-are-2-gen}.
    If $J$ is J*-isomorphic to $L(\CC^{2d})^\beta$, it is 2-generated by \cref{lem:H3-not-2-gen}.
    In all other cases, $J$ is J*-isomorphic to a spin factor $V_n$ with $n=4$ or $n\ge 6$. 
    By \cref{prop:reversible}, this entails that $J$ is irreversible, which contradicts 2-generatedness.
    This finishes the proof.
\end{proof}

\section{Proof of Frenkel's integral formula for approximately finite-dimensional von Neumann algebras}
\label{app:frenkel}

For normal states $\omega,\varphi$ on a von Neumann algebra $M$, the Hockey stick divergence is defined as
\begin{equation}
    E_t(\omega\|\varphi) =(\omega-t\varphi)^+(1) = \frac12 \big( \norm{\omega-t\phi} +1-t \big), \qquad t>1,
\end{equation}
where $\chi^+$ denotes the positive part of a normal linear functional $\chi\in M_*$.

\begin{proposition}\label{prop:frenkel-vna}
    Let $M$ be an approximately finite-dimensional von Neumann algebra and let $\omega, \varphi$ be normal states on $M$.
    Then
    \begin{equation}\label{eq:vN-Frenkel}
        D(\omega \| \varphi) = \int_1^\oo \left(\frac1t E_t(\omega\|\varphi) + \frac1{t^2} E_t(\varphi\|\omega) \right) dt.
    \end{equation}
\end{proposition}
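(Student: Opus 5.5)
The strategy is to approximate by finite-dimensional subalgebras and pass to the limit on both sides of \eqref{eq:vN-Frenkel}. Since $M$ is approximately finite-dimensional, there is an increasing net (a sequence if $M_*$ is separable) of finite-dimensional *-subalgebras $M_n\subset M$ containing the unit, with $\bigcup_n M_n$ $\sigma$-weakly dense in $M$. Write $\omega_n=\omega|_{M_n}$ and $\varphi_n=\varphi|_{M_n}$. On each $M_n\cong\oplus_j L(\H_j)$, Frenkel's formula \eqref{eq:Frenkel} holds. The finite-dimensional formula is stated for $L(\H)$, but a direct sum of matrix algebras embeds into $L(\oplus_j\H_j)$, and both sides are compatible with the trace-preserving conditional expectation onto the block-diagonal algebra. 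Hence
\begin{equation*}
D(\omega_n\|\varphi_n)=\int_1^\oo\Big(\tfrac1t E_t(\omega_n\|\varphi_n)+\tfrac1{t^2}E_t(\varphi_n\|\omega_n)\Big)dt .
\end{equation*}

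\textbf{Left-hand side.} I would invoke the martingale convergence property of Araki's relative entropy: for an increasing net of subalgebras generating $M$, $D(\omega_n\|\varphi_n)\nearrow D(\omega\|\varphi)$. This follows from monotonicity under restriction together with $\sigma$-weak lower semicontinuity, a standard result of Araki and Petz that I would cite rather than reprove.

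\textbf{Right-hand side.} The hockey stick divergence satisfies $E_t(\omega_n\|\varphi_n)=\tfrac12(\norm{(\omega-t\varphi)|_{M_n}}+1-t)$, which increases in $n$ because restriction can only decrease norms. For convergence I would use Kaplansky density: the unit ball of the $\sigma$-weakly dense *-subalgebra $\bigcup_n M_n$ is $\sigma$-strongly* dense in the unit ball of $M$. A normal functional is $\sigma$-weakly continuous, so its norm is the supremum over that dense ball, which gives $\norm{\chi|_{M_n}}\to\norm{\chi}$ for every $\chi\in M_*$. The integrands are nonnegative and increase pointwise in $n$ to the corresponding integrand for $M$. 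The monotone convergence theorem then yields convergence of the integrals, including the case where both sides equal $+\infty$.

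\textbf{Assembling.} Equating the two limits gives \eqref{eq:vN-Frenkel}. For a non-sequential net, monotone convergence applies along a cofinal sequence: choose $n_k$ such that the integrals converge to their supremum, and use monotonicity of each side in $n$.

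\textbf{Main obstacle.} The delicate step is the relative entropy convergence $D(\omega_n\|\varphi_n)\to D(\omega\|\varphi)$. If I could not cite it, I would prove the lower bound $\liminf_n D(\omega_n\|\varphi_n)\ge D(\omega\|\varphi)$ via a variational formula, for instance Kosaki's or the Donsker--Varadhan-type formula. That formula expresses $D$ as a supremum over elements of $M$, and those elements can be approximated from $\bigcup_n M_n$ by density. The upper bound follows from monotonicity.
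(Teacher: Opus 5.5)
Your proposal is correct and follows the paper's proof essentially step for step. The ingredients are the same: Frenkel's formula on the finite-dimensional subalgebras, the martingale property of Araki's relative entropy (the paper cites Ohya--Petz, Cor.~II.5.12), monotonicity plus density for the hockey-stick terms, and monotone convergence. Your Kaplansky argument for $\norm{(\omega-t\varphi)|_{M_n}}\to\norm{\omega-t\varphi}$ makes explicit a step the paper only asserts.

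One small repair is needed in the non-sequential case. Choose the increasing sequence $(n_k)$ so that the \emph{integrands} converge pointwise, not merely so that the integrals approach their supremum; this works on rational $t$ by directedness, and then for all $t\ge1$ because the integrands are uniformly Lipschitz in $t$. Monotone convergence along $(n_k)$ then gives that the right-hand side equals $\lim_k D(\omega_{n_k}\|\varphi_{n_k})\le D(\omega\|\varphi)$. The reverse inequality is immediate from $E_t(\omega_n\|\varphi_n)\le E_t(\omega\|\varphi)$ together with the martingale property.
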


In the particular case $M = L(\H)$ with $\dim \H=\oo$, the statement was shown in \cite{jencova_recoverability_2024}, albeit with a more complicated proof. 

\begin{proof}
    Frenkel showed his formula for density operators on finite dimensional Hilbert spaces \cite{frenkel_integral_2023}.
    Algebraically phrased, this means that the formula holds for states on matrix algebras.
    By standard procedure, it extends to direct sums of matrix algebras $M\cong L(\CC^n)$.
    Since both sides of \eqref{eq:vN-Frenkel} are invariant under *-isomorphism, the formula holds for all states on finite-dimensional unital *-algebras.

    If $M$ is approximately finite-dimensional, there is an increasing net $(M_\alpha)_\alpha$ of finite-dimensional *-subalgebras $M_\alpha \subset M$ with
    \begin{equation}
        M = \overline{\bigcup_\alpha\, M_\alpha}^{\mathrm{uw}}.
    \end{equation}
    We define projective nets of states $(\omega_\alpha)_\alpha$, $(\varphi_\alpha)_\alpha$ via $\omega_\alpha = \omega|_{M_\alpha}$, $\varphi_\alpha=\varphi|_{M_\alpha}$.
    By the approximation property of the relative entropy \cite[Cor.~II.5.12]{ohya_quantum_1993}, we have
    \begin{equation}\label{eq:sdasd}
        D(\omega\|\varphi) = \lim_\alpha D(\omega_\alpha\|\varphi_\alpha).
    \end{equation}
    Using $\omega_\alpha = \omega_\beta|_{M_\alpha}$, $\varphi_\alpha = \varphi_{\beta}|_{M_\alpha}$ for $\beta>\alpha$ and the monotonicity (data processing inequality) of the hockey stick divergence, we see that 
    \begin{equation}\label{eq:2i3ueh2}
        \frac1t E_\alpha(\omega_\alpha\|\varphi_\alpha) + \frac1{t^2} E_t(\varphi_\alpha\|\omega_\alpha) \qquad \text{is non-decreasing in $\alpha$.}
    \end{equation}
    The density of $\bigcup_\alpha M_\alpha$ in $M$ implies
    \begin{equation}\label{eq:sdu213}
        E_t(\omega\|\varphi) = \frac12( \norm{\omega-t\phi}-1-t ) =  \lim_\alpha \frac12( \norm{\omega_\alpha-t\phi_\alpha}-1-t ) = \lim_\alpha E_t(\omega_\alpha\|\varphi_\alpha).
    \end{equation}
    Using the monotone convergence theorem, Frenkel's integral formula for states on finite-dimensional unital *-algebras, and equations \eqref{eq:sdasd}, \eqref{eq:2i3ueh2}, and \eqref{eq:sdu213}, we find
    \begin{align*}
        \int_1^\oo \left(\frac1t E_t(\omega\|\varphi) + \frac1{t^2} E_t(\varphi\|\omega) \right) dt
        &= \int_1^\oo \lim_\alpha \left(\frac1t E_t(\omega_\alpha\|\varphi_\alpha) + \frac1{t^2} E_t(\varphi\|\omega) \right) dt\\
        & =\lim_\alpha \int_1^\oo \left(\frac1t E_t(\omega_\alpha\|\varphi_\alpha) + \frac1{t^2} E_t(\varphi\|\omega) \right) dt\\
        & = \lim_\alpha D(\omega_\alpha\|\varphi_\alpha) = D(\omega\|\varphi).
    \end{align*}
\end{proof}

\printbibliography

\end{document}